\newif\iftechreport
\techreporttrue

\iftechreport
\documentclass[sigconf,screen,nonacm]{acmart}
\else
\documentclass[sigconf,screen]{acmart}
\fi

\usepackage{paper}

\unless\iftechreport
\copyrightyear{2026}
\acmYear{2026}
\setcopyright{cc}
\setcctype{by}
\acmConference[CCS '26]{Proceedings of the 2026 ACM SIGSAC Conference on Computer and Communications Security}{November 15--19, 2026}{The Hague, Netherlands}
\acmBooktitle{Proceedings of the 2026 ACM SIGSAC Conference on Computer and Communications Security (CCS '26), November 15--19, 2026, The Hague, Netherlands}
\acmDOI{10.1145/3830454.3846718}
\acmISBN{979-8-4007-2871-6/2026/11}
\fi

\begin{document}


\iftechreport
\title{SCHERI: Provably Secure Speculation Under the Constant-Time Policy for CHERI (Extended Version)}
\else
\title{SCHERI: Provably Secure Speculation Under the Constant-Time Policy for CHERI}
\fi



\author{Shixin Song}
\email{shixins@mit.edu}
\orcid{0009-0007-5638-5164}
\affiliation{%
  \institution{Massachusetts Institute of Technology}
  \city{Cambridge}
  \country{United States}
}
\author{Davide Davoli}
\email{davide.davoli@mpi-sp.org}
\orcid{0009-0009-2981-2962}
\affiliation{%
  \institution{MPI-SP}
  \city{Bochum}
  \country{Germany}
}
\author{Elias Storme}
\email{elias.storme@kuleuven.be}
\orcid{0000-0002-1202-0688}
\affiliation{%
  \institution{DistriNet, KU Leuven}
  \city{Leuven}
  \country{Belgium}
}
\author{Marton Bognar}
\email{marton.bognar@kuleuven.be}
\orcid{0000-0002-8641-7549}
\affiliation{%
  \institution{DistriNet, KU Leuven}
  \city{Leuven}
  \country{Belgium}
}
\author{Dominique Devriese}
\email{dominique.devriese@kuleuven.be}
\orcid{0000-0002-3862-6856}
\affiliation{%
  \institution{DistriNet, KU Leuven}
  \city{Leuven}
  \country{Belgium}
}
\author{Frank Piessens}
\email{frank.piessens@kuleuven.be}
\orcid{0000-0001-5438-153X}
\affiliation{%
  \institution{DistriNet, KU Leuven}
  \city{Leuven}
  \country{Belgium}
}
\author{Tamara Rezk}
\email{tamara.rezk@inria.fr}
\orcid{0000-0003-3744-0248}
\affiliation{%
  \institution{Inria}
  \city{Sophia Antipolis}
  \country{France}
}

\renewcommand{\shortauthors}{Song et al.}

\begin{abstract} 
    Capability-based architectures such as \cheri provide strong support for the architectural isolation of software components.
To additionally protect against microarchitectural leakage, software can be written in a constant-time fashion.
Modern processors, however, rely heavily on speculative execution, which can invalidate the constant-time guarantees and leak isolated secrets transiently.

In this work, we show that providing secure speculation for \cheri is non-trivial, and that existing proposals fail to preserve the confidentiality guarantees.
We develop a formal framework for reasoning jointly about capability safety, speculative execution, and information-flow security, and use it to demonstrate potential leaks.
We then present \sys{}, a new processor design within this framework, and formally prove that it provides end-to-end secure speculation guarantees for the constant-time policy.

Our results provide formal foundations and practical guidance for building future capability-based processors, which are resilient to Spectre attacks for constant-time programs.

\end{abstract}

\begin{CCSXML}
<ccs2012>
   <concept>
       <concept_id>10002978.10002986.10002989</concept_id>
       <concept_desc>Security and privacy~Formal security models</concept_desc>
       <concept_significance>500</concept_significance>
       </concept>
 </ccs2012>
\end{CCSXML}

\ccsdesc[500]{Security and privacy~Formal security models}

\keywords{CHERI, formal execution models, Spectre, microarchitectural security, constant-time programming, information flow} 


\maketitle

\section{Introduction}
\label{sec:intro}

The ongoing prevalence of memory-safety vulnerabilities has motivated  capability-based hardware architectures that deliver \emph{enforceable guarantees by construction}.
The \cheri (Capability Hardware Enhanced RISC Instructions) architecture emerged in 2010~\cite{cheriFirst}  as a  candidate for this vision, offering fine-grained memory protection and principled reasoning about pointer provenance. At the architectural level, \cheri enforces strong spatial memory-safety guarantees and enables robust compartmentalization~\cite{cheriTree}.
The \cheri architecture has gained momentum over the last decade with the development of CHERI-RISC-V~\cite{riscvcheri} and the ARM Morello project~\cite{morello} that implements \cheri in ARM.

However, \cheri memory safety assumes a non-speculative execution model. Modern high-performance processors rely heavily on speculation, and, since the discovery of transient-execution attacks~\cite{spectre}, it has become clear that speculation can violate architectural guarantees by exposing microarchitectural side effects. While misspeculated instructions are rolled back at the architectural level, their effects on microarchitectural state (e.g., caches) may persist, enabling attackers to leak secrets through side channels~\cite{guarnieri2021hardware}.

\paragraph{CSC} Prior work has demonstrated that \cheri is not immune to this problem~\cite{fuchsMasterThesis}. In particular, Fuchs~et~al.~\cite{fuchs2024csc}  introduced the \emph{Capability Speculation Contract} (CSC), which constrains speculative execution to preserve \cheri's capability invariants.
They showed that violations of these invariants in speculative implementations (e.g., CHERI-Toooba) lead to concrete attacks such as Meltdown-CF, which break CHERI's memory-safety guarantees in practice. CSC can be understood as an instance of the broader class of hardware-software contracts~\cite{guarnieri2021hardware} for secure speculation, which aim to provide principled co-design between hardware mechanisms and software reasoning.
CSC  restores
speculative sandboxing by ensuring that speculative memory accesses respect capability permissions and bounds. While effective
for preserving speculative sandboxing, CSC does not provide confidentiality guarantees when legitimately accessible secrets can still
leak through timing, cache, or other microarchitectural channels.
We show that achieving provably secure speculation for the constant-time policy in CHERI
requires moving beyond sandboxing, toward confidentiality guarantees.

\paragraph{BLACKOUT} An important step toward this goal was taken by \blackout \cite{blackout}, which extends \cheri with hardware-supported taint tracking via \emph{blinded capabilities} to enable data-oblivious computation.  \blackout raises faults when secret data influences control flow or memory accesses, thereby aiming to prevent both conventional and speculative side-channel leakage.
At a high level, \blackout moves beyond sandboxing toward non-interference-style guarantees by tracking the flow of secret data through computation. However, as we will show, \blackout does not fully achieve secure speculation for the constant-time policy.

These limitations highlight a deeper issue: existing approaches lack a unified formal framework capable of expressing and comparing memory safety, sandboxing, and secure speculation guarantees.

\paragraph{Our contributions.}
In this paper, we address this gap by developing a formal framework for reasoning about \cheri-like architectures under speculative execution and strong leakage models.
Within this framework, we:
\begin{itemize}
  [leftmargin=*]
    \item define a speculative semantics for \cheri-like processors with a strong attacker model and a notion of speculative constant time;

    \item show that the CSC does not guarantee confidentiality under leakage models stronger than speculative sandboxing;

    \item analyze \blackout under these stronger leakage models,
    identifying sources of speculative information leakage;

    \item propose \sys{}, eliminating leaks while preserving practicality;

    \item formally prove that \sys{} satisfies end-to-end secure speculation guarantees in our model.
\end{itemize}

Our results provide both formal foundations and practical guidance for building \cheri processors resilient to speculative side-channel attacks.
\iftechreport
This version of the paper is extended from the conference version~\cite{song2026scheri}, containing the full hardware semantics (\Cref{sec:apprules}) and proofs (\Cref{sec:proofs}).
\else
An extended version of this paper with full hardware semantics and proofs is available~\cite{scheri-proof}.
\fi

\section{Background}
\label{sec:background}

\paragraph{\cheri}\label{sec:cheri}
\cheri{} is a capability architecture that extends conventional ISAs with hardware-enforced memory protection. Instead of representing pointers as plain integers, \cheri{} uses \emph{capabilities}: unforgeable references that carry both an address and metadata such as bounds, permissions, and validity information. Memory accesses through capabilities are checked in hardware, preventing out-of-bounds dereferences and unauthorized use of pointers. Capabilities can only be derived from existing capabilities, ensuring monotonicity: derived capabilities cannot gain permissions or exceed the bounds of their parent capability except at controlled security boundary crossings. These mechanisms provide strong spatial memory safety and support fine-grained compartmentalization. Practical realizations include CHERI-RISC-V~\cite{riscvcheri}, and ARM Morello~\cite{morello}.
While \cheri{} substantially strengthens architectural memory safety, these guarantees are defined at the ISA level and therefore assume architecturally correct execution. They do not automatically extend to (speculative) microarchitectural behavior.

\paragraph{Speculative execution and attacks}
Modern processors improve performance through out-of-order and speculative execution. Instructions may execute before prior branches, permissions checks, or data dependencies are fully resolved. If speculation is later found incorrect, architectural state is rolled back, but transient effects on microarchitectural state, such as cache contents, predictor state, or execution timing, may remain observable.

Transient-execution attacks exploit this behavior to leak secrets via persistent microarchitectural effects.
Spectre-style attacks~\cite{spectre} mistrain predictors to transiently execute attacker-chosen instruction sequences, Meltdown-style attacks~\cite{meltdown} exploit incorrect forwarding from faulting or unauthorized operations. These attacks demonstrate that architectural safety alone is insufficient when secrets can influence transient execution.

For \cheri{}, this means that even capability-safe software may leak information if speculative execution bypasses intended checks or transiently exposes secret-dependent behavior.

\paragraph{Security properties: sandboxing and constant-time}

Mitigations for speculative execution attacks and their desired security properties can be described using a {\em hardware-software contract}~\cite{guarnieri2021hardware}.
It is useful to distinguish two types of security properties:
\begin{itemize}
  [leftmargin=*]
  \item  {\em Sandboxing} or {\em compartmentalization}, ensuring that a potentially malicious program cannot speculatively read outside of its architecturally designated sandbox, and
  \item {\em Constant-time } or {\em data-obliviousness}, ensuring that a benign program does not leak secrets through microarchitectural channels.
\end{itemize}
These are very different properties: sandboxing considers code that does not architecturally compute on secrets but tries to get access speculatively, whereas constant-time considers code that does architecturally compute on secrets, and should not leak them through side channels, even speculatively. Hence, they typically require different mitigations under speculation.
CSC aims to prevent leakage in the sandboxing model, while BLACKOUT targets the speculative constant-time property.

\section{Threat Model}
\label{sec:threatmodel}

We consider an attacker with the goal of inferring secret information from a \cheri{} system through microarchitectural side-channel observations.
The attacker cannot directly read privileged state, break cryptographic primitives, or violate the architectural memory protection guarantees of \cheri{}, but they can execute untrusted code on the same processor (concurrently or across time-sharing contexts) and trigger victim execution with chosen public inputs.
As is standard in the Spectre literature, the attacker may mistrain branch predictors, influence microarchitectural state to cause contention, and observe timing differences caused by victim execution.

We consider a broad class of microarchitectural side channels, including the state of microarchitectural structures (e.g., caches, branch predictors, the reorder buffer), speculation success and rollback behavior, and contention for execution ports and other timing-visible shared resources.
This follows prior secure speculation work that treats attacker-observable microarchitectural state abstractly rather than committing to one specific channel~\cite{prospect}.
Our threat model is intentionally stronger than the one captured by the Capability Speculation Contract (CSC): in addition to unauthorized speculative memory accesses, we consider leakage of legitimately accessed secrets through timing-visible microarchitectural effects.
We do \emph{not} consider physical side channels such as power analysis, electromagnetic emanations, or fault injection.

Our goal is therefore to enforce relative speculative constant time for victim code (formalized in \Cref{thm:sct}): secrets that are not leaked architecturally do not influence the attacker's observations.
In particular, we seek guarantees stronger than CSC-style speculative sandboxing: speculative execution should preserve confidentiality, not only prevent capability forgery.

\section{Prior Approaches and Shortcomings}
\label{sec:motivating-examples}

The security property we aim to achieve is that programs that are constant-time in the absence of speculation remain constant-time under speculation.
In this section, we show that existing Spectre mitigations on the \cheri architecture do not achieve this goal. We discuss two recently published approaches: \emph{Capability Speculation Contracts} (CSC)~\cite{fuchs2024csc} and \blackout~\cite{blackout}.

\subsection{Capability Speculation Contract}
\label{sec:csc}
\cheri provides architectural support for software sandboxing~\cite{cheriFirst},
and recently Fuchs et al.~\cite{fuchs2024csc} proposed an approach to
preserving these architectural sandboxing properties under speculation.
Their \emph{Capability Speculation Contract} (CSC) expresses that the
hardware should not perform speculative violations of security
boundaries enforced by capabilities. More precisely, it states:

\begin{contract}[Capability Speculation Contract~\cite{fuchs2024csc}]
  All instruction and data-memory accesses issued in speculation must be authorized by capabilities either (1) in the committed register file; or (2) in memory transitively reachable through (1).
\end{contract}

Hence, CSC has a different goal than our design: it does not aim to prevent side-channel leaks of accessible secrets through speculation. It is straightforward to construct examples of programs that are secure in the absence of speculation and that comply with the CSC contract, but that leak secrets through side channels when speculation is allowed.
Consider the following code:
\begin{minted}{asm}
f_load_sec:
  cld a1, 0(csp) # secret data is loaded into a1
  ccall g # after executing g, speculation starts
...
exploit_gadget: # call to g speculatively returns here
  cinoffset cs0, csp, a1 # cs0's offset advances by secret a1
  cld a1, 0(cs0) # secret is leaked
\end{minted}
Here, \texttt{csp} is the stack capability register,
\texttt{cs0} is another capability register,
and \texttt{a1} is a general purpose register.
If secret data is stored on the stack referenced by \texttt{csp},
the \texttt{cld} instruction at line 2 loads it in \texttt{a1}.
If the called \texttt{g} function speculatively returns to \texttt{exploit\_gadget},
the secret is used to set the offset of \texttt{cs0}.
Then, the load at line 7 uses the address of \texttt{cs0}, leaking the secret.
Note that the secret could also be leaked through other side channels.

\subsection{\blackout}
\label{sec:blackout}
\blackout~\cite{blackout} extends \cheri with {\em blinded capabilities} for enforcing data-oblivious execution, also during speculation.
\blackout incorporates the ideas of oblivious instruction set architectures~\cite{OISA,blime} into the \cheri architecture by repurposing an unused bit in capability metadata to mark a capability as {\em blinded}.
When data is loaded through a blinded capability into a register, the register is also marked as blinded, and the hardware implements taint-tracking to propagate blindedness.
Blinded values are treated as secrets, and the hardware prevents them from affecting timing behavior by faulting on e.g., load/store instructions with blinded addresses or branches with blinded conditions.
During speculation, the faults are not raised architecturally, but they still prevent the leakage of blinded data.
Building on these primitives, \blackout allows programmers to write data-oblivious code,
relying on the hardware to prevent side-channel leakage.

The security argument of \blackout relies on the maintenance of five key invariants, but in order to efficiently support the call stack and register spills, \blackout allows some of these invariants to be violated in a controlled way.
One invariant (I3~\cite{blackout}) says that the bounds of valid blinded and non-blinded capabilities must not simultaneously overlap. However, this complicates storing both blinded and non-blinded data on the stack. Hence, \blackout allows this invariant to be violated for the stack capability: a local blinded variable will architecturally be accessed through a blinded capability that overlaps with the non-blinded stack capability.
This is not a problem in the absence of speculation if the software stack can enforce that blinded data is only accessed through the blinded capability.
However, this opens the door to speculative execution attacks that can load and leak blinded data through the non-blinded stack capability.
Another invariant (I1) says that blinded data cannot be stored into memory using non-blinded capabilities. This is a problem for compiler-induced register spills: blinded registers cannot be spilled to the stack using the non-blinded stack capability. To address this, \blackout repurposes the capability validity tag to indicate blindedness of secret register spills.
Specifically, when the program spills a blinded register to the stack (using a \texttt{csc} instruction that stores the whole 16 bytes in the register), the \blackout processor emits a special structure named \emph{Blinded Register Record (BRR)}, which consists of the 8-byte spilled blinded value and an 8-byte marker (see \Cref{fig:background-cap}).
\blackout also sets the capability validity tag corresponding to the spill slot to $1$.
When restoring the blinded register spill from the stack using the \texttt{clc} instruction, the validity tag and the marker indicate that the restored value is a BRR rather than a capability or other public data on the stack, so the processor can properly mark the loaded data as blinded.

\subsubsection{Security Analysis}

At its core, \blackout maintains its security guarantees as long as the hardware correctly tracks when secret data is accessed.
Data can be marked as secret in two ways: either by setting the blinded bit in its associated capability, or, in the case of spilled registers on the stack, with a Blinded Register Record (BRR), which stores a magic constant right next to the spilled register value in memory.
As long as the data is accessed through the blinded capability (in the case of blinded memory) or through a capability load through the stack capability (in the case of BRRs), the processor correctly sets and propagates the taint bit.
In the following, we systematically explore how blinded data could be accessed in other ways that deviate from the above, leading to the loaded secret data incorrectly not being tainted, enabling leakage.

\begin{figure}[t]
  \centering
\newcommand{\cellwidth}{2}
\newcommand{\memheight}{3.6}
\newcommand{\margin}{.3}
\newcommand{\BRRheight}{.8}
\newcommand{\blindedheight}{.8}
\DeclareRobustCommand{\legendsquare}[1]{%
  \textcolor{#1}{\rule{1ex}{1ex}}%
}
\begin{adjustbox}{max width=0.85\linewidth}
\begin{tikzpicture}
\tikzstyle{every node}=[font=\small]
  \node at (.5*\cellwidth,\memheight+\margin) {Memory};
  \fill[gray!70] (0,0) rectangle ++(\cellwidth, \memheight);
  \coordinate (topstack) at (0, 8/9*\memheight);
  \coordinate (basestack) at (0, 1/8*\memheight);
  \draw[fill=gray!30] (topstack) rectangle ($ (basestack) + (\cellwidth,0) $);
  \coordinate (brr) at ($ (topstack) - (0,1.2) $);
  \fill[magenta!10] (brr) rectangle ++(\cellwidth,.5*\BRRheight);
  \draw (brr) rectangle ++(\cellwidth,\BRRheight);
  \draw[dotted] ($ (brr) + (0,.5*\BRRheight) $) -- ++(\cellwidth,0);
  \node at ($ (brr) + (.5*\cellwidth,.70*\BRRheight) $) {\texttt{magic}};
  \node at ($ (brr) + (.5*\cellwidth,.25*\BRRheight) $) {\texttt{secret}};
  \draw [decorate,decoration={brace,amplitude=5pt}]
  (brr) -- ++(0,\BRRheight) node[align=right,midway,xshift=-2em]{BRR};
  \coordinate (blinded) at ($ (topstack) - (0,2.4) $);
  \draw[fill=magenta!10] (blinded) rectangle ++(\cellwidth,\blindedheight);
  \node at ($ (blinded) + (.5*\cellwidth,.5*\blindedheight) $) {\texttt{secret}};
  \draw[thick,->,>=stealth] (\cellwidth+\margin,.28*\memheight)  -- (\cellwidth+\margin,.71*\memheight) node[pos=.5, xshift=.3cm, rotate=90] {{\scriptsize Higher memory}};
  \draw [decorate,decoration={brace,mirror,amplitude=10pt}]
  ($ (topstack) + (-1.3,0) $) -- ($ (basestack) + (-1.3,0) $) node[midway,xshift=-2em]{\texttt{csp}};
  \draw [decorate,decoration={brace,amplitude=5pt}]
  (blinded) -- ++(0,\blindedheight) node[align=right,midway,xshift=-2.5em]{blinded\\capability};
  \draw[thick] (0,0) -- ++(0,\memheight);
  \draw[thick] (\cellwidth,0) -- ++(0,\memheight);
\end{tikzpicture}
\end{adjustbox}
    \caption{In \blackout, memory referenced by blinded capabilities and Blinded Register Records (BRR) on the stack are considered secret (areas colored in \textcolor{magenta!10}{\rule{2ex}{2ex}}). However, the non-blinded stack capability (\texttt{csp}) can also access this secret data.}%
\label{fig:background-cap}
\end{figure}

\begin{enumerate}
[leftmargin=*]
  \item \label{issue:brr-all}
  Improperly accessing a BRR.
  \begin{enumerate}
  [leftmargin=*]
  \item \label{issue:brr}
    The blinded bit is only set when BRRs are loaded via full capability load instructions (\texttt{clc}). Accessing a spilled value, potentially speculatively, through a narrower load instruction (e.g., \texttt{cld}) via the stack pointer (or any non-blinded capability) will not set the blinded bit.
  \item \label{issue:magic}
    The blinded bit is only set when the BRR is tagged as a capability and its metadata corresponds to the magic constant. Even if the BRR is accessed via the stack pointer using a capability load, the blinded bit will not be set if the magic value has been corrupted or the capability tag cleared.
    This corruption could also happen transiently, e.g., as a result of incorrect store-to-load forwarding.
  \end{enumerate}
  \item \label{issue:overlap}
  Improperly accessing secret data through a non-blinded capability. In \blackout, this is always possible for function-local blinded data, as the stack capability is not blinded (cf. \Cref{fig:background-cap}).
  \begin{enumerate}
  [leftmargin=*]
    \item \label{issue:stack} Access through a load via the stack capability, bypassing compiler enforcement of exclusive access through the associated blinded capability. This can easily happen during speculation; e.g., if the stack capability has a different offset than the compiler expects.
    \item \label{issue:stalecap} Stale data on the stack might be accessed by a newly derived capability in a later function call. Even if accesses to uninitialized memory are not allowed by the compiler, this could happen transiently if the store performing the initialization is speculatively skipped before a load.
    \item \label{issue:influence} Blinded data could also be accessed through a non-blinded capability derived from the stack pointer during speculation. The CSC (see \Cref{sec:csc}) does not curb this issue, as it allows all legal capability modifications to be performed speculatively as well.
  \end{enumerate}
\end{enumerate}

\subsubsection{Concrete Attacks}
To provide evidence for the validity of the security concerns, we demonstrate proof-of-concept attacks (released publicly in \Cref{sec:openscience}) exploiting Issues~\ref{issue:brr} and \ref{issue:stack} on the \blackout prototype.
Issue~\ref{issue:stalecap} and Issue~\ref{issue:influence} can likely be exploited with additional engineering effort.
The current prototype lacks the hardware speculation features required to exploit Issue~\ref{issue:magic}, but we believe it could be exploited on future implementations of \blackout.
Below, we briefly describe our exploits for Issues~\ref{issue:brr} and \ref{issue:stack}, which rely on return address speculation~\cite{koruyeh2018spectre}: the attacker causes a return instruction from a victim function to transiently jump to attacker code by poisoning the return stack buffer.

\paragraph{Exploit for Issue~\ref{issue:brr}}\label{par:exploit-brr}
Consider the following code:

\begin{minted}{asm}
f_write_brr:
  # csp is the stack pointer capability
  # ca0 contains secret data
  csc ca0, -16(csp) # spill the secret to the stack as a BRR
  ccall g # after executing g, speculation starts
...
exploit_gadget: # call to g speculatively returns here
  cld a1, -16(csp) # secret data is loaded to a1 through csp
  cinoffset cs0, csp, a1 # cs0's offset advances by secret a1
  cld a1, 0(cs0) # secret is leaked
\end{minted}

Here, the victim function spills a secret register (using the capability store instruction \texttt{csc}) onto the stack in a BRR.
During speculation, this BRR is accessed via \texttt{cld}, a regular load instruction, which does not check whether the loaded data is a BRR, and as a result, does not set the blinded bit.
Consequently, the loaded secret data in \texttt{a1} can be leaked by using it as an offset for a load.

\paragraph{Exploit for Issue~\ref{issue:stack}}\label{par:exploit-stack}

This attack is structurally very similar to the previous one, and is exemplified below:

\begin{minted}{asm}
f_write_sec:
  # csp is the stack pointer capability
  # a0 contains secret data
  # cs0 is a blinded capability referencing [csp-8, csp)
  csd a0, 0(cs0) # store secret data to the stack through cs0
  ccall g # after executing g, speculation starts
...
exploit_gadget:  # call to g speculatively returns here
  cld a1, -8(csp) # secret data is loaded to a1 through csp
  cinoffset cs0, csp, a1 # cs0's offset advances by secret a1
  cld a1, 0(cs0) # secret is leaked
\end{minted}

The main difference with the previous attack is that in this case, secret data is stored on the stack using a blinded capability, and the compiler enforces that this data is never accessed using the (non-blinded) stack capability during normal execution.
However, this property can be broken during speculation, and the attacker can induce a transient load instruction that accesses the secret data on the stack using the stack pointer.

The leakage in CSC and \blackout shows that reasoning jointly about capability safety, speculative execution, and information leakage is subtle and deserves rigorous formal treatment. Next, we present an alternative design addressing these issues, formally verified to provide strong security guarantees under speculation.

\section{\sys}
\label{sec:semantics}

In this section, we introduce a formal execution model to study
the security of CHERI-like architectures under speculative
execution. To the best of our knowledge, our design is the first to formalize a taint-tracking mechanism where capabilities
keep track of the security level of the data they reference.
We treat
the stack capability as always tainted and mark spills of
\emph{public} values using a distinguished encoding, which we call
\emph{Untainted Register Record} (URR). URRs set the capability tag
and use a reserved marker value in the metadata field.
When the capability tag is set, capability-sized loads consult the metadata field to
distinguish data from capabilities: if the metadata matches the
reserved marker, the value is interpreted as untainted data;
otherwise, it is treated as a capability.

A key challenge of this design is that using a tainted stack capability would cause restored capabilities to be conservatively classified as secret, unnecessarily constraining speculation.
This would cause a
performance degradation, especially in constant-time software, where
memory access patterns and control flow must be public. To address
this, we encode taint information explicitly in capability
metadata. Concretely, when a capability $c$ with address
$\addr$, metadata $\stdCapMeta{\addr}$, and taint status $t$ is stored to
memory, the pair of words
$(\addr, \stdPackTaint{\stdCapMeta{\addr}, t})$ is stored instead,
where $\stdPackTaint{\cdot}$ is a function that embeds the taint bit
into reserved metadata bits. Upon loading, the hardware reconstructs
both the original metadata and the taint of the destination register.

Our design solves the issues of \blackout identified in the previous section.
Since the initial stack capability is tainted, it will remain tainted during execution
and there will be no public aliasing capability addressing the same region if public data is always stored in URRs.
This prevents speculative accesses to secrets on the stack from producing untainted values.

Throughout the remainder of this
section, we illustrate \sys's semantics and the rationale behind its
security guarantees through a running example: we formalize the
attack based on Issue~\ref{issue:stack} (\Cref{sec:blackout}) in
\sys, and use it to contrast \sys's design with alternative
approaches.
\sys is flexible enough to model realistic capability-specific speculation
mechanisms, presented in~\Cref{sec:comparison}.

\subsection{Preliminaries}
\label{sec:prelims}

\paragraph{Capabilities} In \sys, memory is accessed via
capabilities. In turn, capabilities are modeled as pairs
$
  \Cap \ni c \Coloneqq (\addr, \stdCapMeta{\addr}),
$
composed of an address $\addr \in \Addr \defsym \Nat$, and metadata
\(
  \stdCapMeta{\addr} = (p, b, e, t),
\)
where the permission $\perm\in \Perm$ constrains
memory accesses using the pointer. Permissions\footnote{For simplicity, we
  only model CHERI permissions that are required for our formal
  analysis of speculative constant-time. In particular, we do not
  model sealing and unsealing.} can either be
$\noperm$ (no access), $\readonly$ (read-only), $\readwrite$
(read-write) or $\exec$ (read+execute), with the order:
\begin{align*}
  \noperm &\pord \readonly & \readonly &\pord \readwrite & \readonly &\pord \exec
\end{align*}
The addresses $\rbase$ and $\rend$ express the range of memory
$\addrBound{\rbase, \rend}$ that is allowed to be accessed by the
capability; the bit $\taint\in \Taint$ is a \emph{taint} indicating
whether data in such range is considered secret (if
$\taint = \tainted$) or not (if $\taint = \untainted$). Following the
usual convention, we assume that $\Taint$ is a lattice where
$\untainted \le \tainted$. We use the shorthand
$\Meta \defsym \Perm \times \Addr^2 \times \Taint$ for the set of all
capability metadata, and we write $\Cap \defsym \Addr \times \Meta$ for
the set of capability data.

\paragraph{Register files and memories} In \sys, \emph{total register
  files} are modeled as functions
$\reg: \Reg \to \Val^* \times \Taint\times \Tag$ that associate each
register to:
\begin{itemize}
[leftmargin=*]
\item a sequence of \emph{values};
\item a \emph{taint}, which is equal to $\tainted$ if the register content is tainted and to $\untainted$ otherwise;
\item a \emph{capability tag}, which is equal to $\tcap$ when the
  register stores a capability, and to $\tval$, otherwise.
\end{itemize}
For simplicity, we leave the set $\Val$ of values abstract, and we assume
$\mathbb Z\cup \Bool \cup \Addr \cup \Meta \cup \Instrs\cup \Taint \cup \Tag \subseteq \Val$.
Following most CHERI
implementations, registers can hold multiple words, and their
content is considered a \emph{capability} only if the
\emph{capability tag} is $\tcap$, i.e., registers holding values in
$\Cap$, but with tag set to $\tval$ will not be considered
capabilities.

In our semantics, due to out-of-order execution, the content of some
registers may be unavailable. To model this, we introduce
\emph{partial register files} with type
$\reg: \Reg \to (\Val^* \times \Taint\times \Tag)_\bot$.  In such
register files, a register can now also be mapped to $\bot$ when its
value is not available.

\emph{Data memories} are modeled as maps from
addresses $\addr \in \Addr$ to values. We assume that each address
designates one word, i.e., one memory cell holds a single value
$v \in\Val$.
Analogously, $\pc$ increments by one word (corresponding to one instruction) per step.
As other CHERI-like systems do,
\sys keeps track of whether a certain memory region stores a
capability in a dedicated \emph{tag memory}, which we model with a mapping
$\memTag: \Addr \to \Tag$.
Recall that each capability $c\in \Cap = \Addr \times \Meta$ occupies $\szCap$ words,
so $\memTag[k]$ stores the tag that indicates whether the words in the region $[\szCap k, \szCap k + \szCap)$ are considered a capability.
Equivalently, whether the word at address $\addr$ is part of a capability is recorded in $\memTag[\lfloor \addr/\szCap \rfloor]$, which we abbreviate as $\memTag[\addr/\szCap]$.

When dealing with memories, we will often use the
notation $\mem[\addr \mapsto \val]$, to indicate the memory $\mem$
with address $\addr$ updated to $\val$, and
$\mem[[\addr, \addr+\nat) \mapsto \vec \val]$ as a shorthand for
$\mem[\addr \mapsto \val_1][\addr +1 \mapsto \val_2] \ldots
[\addr+\nat -1 \mapsto \val_{\nat}]$. Dually, the notation
$m[\addr, \addr+n)$ denotes the tuple
$(\mem(\addr), \ldots, \mem(\addr+n-1))$.

\paragraph{Running Example}
The first step to model
Issue~\ref{issue:stack} in \sys is to model the stack capability that is
exploited by the attack, which we assume is stored in a dedicated
register $s$:
\begin{equation}
  \label{eq:iss1excap}
  \reg(s) \defsym \big((42, (\readwrite, 0, 256, \tainted)), \untainted, \tcap\big).
\end{equation}
Here $\reg$ maps $s$ to a 2-word value, which is a capability (due to
tag $\tcap$) with address $42$, permissions  $\readwrite$, and bounds
$\addrBound{0,256}$. Its outer taint is $\untainted$, because the
capability, seen as a pointer, is public. Its inner taint is instead
$\tainted$ in our design, to avoid transient leaks of tainted data
(as discussed at the start of this section)---as opposed to \blackout,
where it would be $\untainted$.

\begin{figure}
\centering
\begin{equation*}
\begin{array}{rcll}
  \Exprs \ni e & \Coloneqq & v \mid x \mid \op (e^*) & \emph{Expression}\\
  \Instrs\ni \instr & \Coloneqq & x \gets e \mid \load{x}{e}{\sz} & \emph{Instruction}\\
  & & \mid \store{x}{e}{\sz} \mid \jmp{e}\\
  & & \mid \beqz{x}{\addr}
\end{array}
\end{equation*}
\caption{\sysisa syntax.}
\label{fig:sys-syntax}
\end{figure}

\subsection{Syntax}
\label{sec:syntax}

We model \sys using a simple ISA presented in
\Cref{fig:sys-syntax}, called \sysisa, which extends
\uasm~\cite{guarnieri2020spectector}. An \emph{expression} is either a value $\val$, a register
name $\vx$, or an application of an operator
$\op \in \Op$ to a sequence of expressions.  Each operator has an
associated arity, and we only consider expressions where the arity
constraints of operators are respected.
For the moment we leave the set of operators abstract; it will be further specified in \Cref{sec:exprsem}.
\emph{Instructions} include assignments $x \gets e$, which update $x$
with the value obtained by evaluating $e$, and load instructions
$\load{x}{e}{\sz}$, which evaluate $e$ to a capability and use it to
load $\sz$ consecutive memory words into $x$, represented as a
tuple.  For demonstration purposes, we only allow
$\sz \in \{1, 2\}$---i.e., memory operations over one or two
words---but the model can be extended with arbitrary sizes. For
simplicity, we assume that neither assignments nor load
instructions can modify the value of a dedicated register $\pc$ which
holds the program counter capability. The instruction
$\store{x}{e}{\sz}$ evaluates $e$ to a capability and uses it to store
the value of $\vx$ in memory.
Finally, $\jmp{e}$ is an indirect jump that
evaluates $e$ to an executable capability used as the jump target,
while $\beqz{x}{\addr}$ is a conditional branch that adds the offset
$\addr$ to the program counter if the value of $\vx$ is $0$,
and increments it by $1$ otherwise.

\paragraph{Running Example}
\attackref{par:exploit-stack}{issue:stack} can be modeled by the
following sequence of \sysisa instructions:
  \begin{equation}
    \label{eq:iss1ex}
    \mathtt P \defsym \store {x} {\mathit{t}} 1; \jmp g; \quad \mathtt Q \defsym  \load {y} {\mathit{s}} 1; \load {y} {\mathit{s}\stdCapOff{y}} 1.
\end{equation}
The program $\mathtt P$ models function \texttt{f\_write\_sec}:
it writes a secret contained in register $x$ to memory
through the tainted capability in register $t$, and jumps to the
instruction pointed to by register $g$ via the instruction $\jmp g$.  The
program $\mathtt Q$ models the leak gadget \texttt{exploit\_gadget}. The first load instruction
uses the stack capability in register $s$ to load one word of data
into register $\mathit{y}$. The second load instruction performs the
secret-dependent memory access by loading one word of data from the
capability that is obtained by summing the value of $y$ to the stack
pointer.

\subsection{Expression Semantics}
\label{sec:exprsem}

To
ensure correct taint and resolution propagation, we assume that each
operator is equipped with an interpretation $\overline{\op}$ that is:
\begin{itemize}
[leftmargin=*]
\item \emph{monotone with respect to taints}, i.e.,
  the taint of the output is greater than or equal to that of the
  inputs;
\item \emph{eager with respect to $\bot$}, meaning that if any of the
  inputs is $\bot$, the output must also be $\bot$.
\end{itemize}

To enforce monotonicity with respect to capability manipulation, we
assume that $\Op = \OpArith \uplus \OpCap$ consists of
\begin{itemize}
[leftmargin=*]
\item a set of \emph{arithmetic} operators $\OpArith$,
  modeling ordinary arithmetical operations and always returning values
  tagged with $\tval$;
\item a set of \emph{capability} operators $\OpCap$,
  modeling capability operations, such as shrinking and address
  modifications. These operators are the only ones that can
    return capabilities, i.e., values in $\Cap$ with tag $\tcap$.
\end{itemize}
We require that $n$-ary \emph{arithmetic}
operators $\op \in \OpArith$ have interpretation
$\overline \op: (\Val^* \times \Taint \times \Tag)^n_\bot \to (\Val^* \times
\Taint \times \{\tval\})_\bot$, and since these operators do not output capabilities, we do not impose further constraints.
To ensure that \emph{capability} operators $\op \in \OpCap$ are
well-behaved, we assume that they have the interpretation
$
\overline \op: (\Val^* \times \Taint \times \Tag)_\bot, (\Val^* \times \Taint \times \Tag)_\bot \to (\Val^* \times
  \Taint \times \Tag)_\bot
$.

We also require:

\begin{enumerate}
[leftmargin=*]
\item \label{ax:capop1}\emph{Capability in first position:}
the output has tag $\tcap$ iff only the first input has tag $\tcap$.
\item \label{ax:capop2}\emph{Monotonicity with respect to capabilities}: if the output
  is a capability, then its permissions and bounds are no greater than
  those of the input capability.
\item \label{ax:capop3}\emph{Propagation of capability taint}: if the input capability
  has inner taint $t$, then the output capability must also
  have inner taint $t$.
\end{enumerate}

\paragraph{Expression semantics} Given a \emph{partial register file}
$\reg$, the denotational semantics of expressions is a function
$\sem \cdot_\reg: \Exprs \to (\Val^*\times \Taint\times \Tag)_\bot$
that computes the expression value together with its taint and
capability tag, or $\bot$ if the value of any of the registers used in
the expression is not available. Observe that when an expression
evaluates to a capability, it carries two distinct taints: the
\emph{inner taint}, which describes the security level of the data
referenced by the capability, and the \emph{outer taint}, which
expresses the security level of the capability itself as a value.
The denotational semantics of expressions is defined by structural
recursion:
\begin{align*}
  \sem \val_\reg &\defsym (v, \untainted, \tval) &
                                                   \sem x_\reg &\defsym \reg(x) &
                                                                                                    \sem {\op (e^*)}_\reg &\defsym \overline \op (\sem {e^*} _\reg).
\end{align*}
For convenience, we assume that the set of operators includes an
operator $\stdCapOff$ for adding an integer offset to the pointer of a
capability, with the following interpretation:

\small
\begin{align*}
\sem {e_1 \stdCapOff e_2}_\reg&\defsym
\begin{cases}
\bot & \text{if }\sem{e_1}_\reg = \bot \lor \sem{e_2}_\reg = \bot \\
((\addr + v, \stdCapMeta{\addr}), t_1 \sqcup t_2, \tcap)
  &\text{if }\sem{e_{1}}_\reg = ((\addr, \stdCapMeta{\addr}), t_1, \tcap)\\%
  & \land\sem{e_{2}}_\reg = (v, t_2, \tval)\\%
(0, \untainted, \tval) & \text{otherwise.}
\end{cases}
\end{align*}
\normalsize

\paragraph{Running Example}
We continue on \attackref{par:exploit-stack}{issue:stack}, focusing on the capability
$\mathit{s}\stdCapOff{y}$ used for the secret-dependent load. Here, $s$
has outer taint $\untainted$ (see \eqref{eq:iss1excap}) while $y$,
loaded through the stack capability, is tainted (\Cref{sec:memop}):
$\sem y_\reg = (v, \tainted, \tval)$. Hence
$\sem{\mathit{s}\stdCapOff{y}}_\reg=((42 + v, (\readwrite, 0, 256, \tainted)),
\tainted, \tcap)$, whose outer taint is the supremum of the taints of
$y$ and $s$, i.e., $\tainted \sqcup \untainted = \tainted$. \sys therefore blocks the
secret-dependent load under speculation. In \blackout, by contrast,
the inner taint of $s$ would be $\untainted$, giving
$\sem y_\reg = (v, \untainted, \tval)$ and
$\sem{\mathit{s}\stdCapOff{y}}_\reg=((42 + v, (\readwrite, 0, 256, \untainted)),
\untainted, \tcap)$, so the load would not be prevented.

\subsection{Architectural Semantics}
\label{sec:isasem}

\begin{figure}
  \centering
  \small
  \begin{rules}[Architectural Semantics]
      \Infer[isa][isa-assign]{
        \stdReadMemInst{\mem, \sem{\pc}_\reg} = (x \gets e)
        &
        \sem{e}_{\reg} = (z, t, c)
        &
        x \neq \pc
      }{
        \cf{\mem, \reg}
        \step{\onone}
        \cf{\mem, \reg[\pc\mapsto \sem{\stdNextPc{\pc}}_\reg][x \mapsto \sem{e}_{\reg}]}
      }
      \\
      \Infer[isa][isa-jmp]{
        \stdReadMemInst{\mem, \sem{\pc}_\reg} = \jmp{e} &
        (\stdCap{\addr}, t, \tcap) = \sem{e}_{\reg}
      }{
        \cf{\mem, \reg}
        \step{\stdBr{\addr, \stdCapMeta{\addr}}}
        \cf{\mem, \reg[\pc \mapsto (\stdCap{\addr}, \textcolor{red}{\untainted}, \tcap)]}
      }
      \\
      \Infer[isa][isa-beqz]{
        \begin{array}[b]{c}
        \stdReadMemInst{\mem, \sem{\pc}_\reg} = \beqz{x}{\addr'}\quad
        (v, \_, \_) = \sem{x}_\reg\\
        (\stdCap{\addr}, t, \tcap) = (v = 0) \;?\; \sem{\pc \stdCapOff \addr'}_{\reg} : \sem{\stdNextPc{\pc}}_\reg\\
        \end{array}
      }{
        \cf{\mem, \reg}
        \step{\stdBr{v = 0}}
        \cf{\mem, \reg[\pc \mapsto (\stdCap{\addr}, t, \tcap)]}
      }
      \\
      \hspace{-4mm}
      \Infer[isa][isa-load]{
        \begin{array}[b]{c}
        \stdReadMemInst{\mem, \sem{\pc}_\reg} = \load{x}{e}{\sz} \\
        \stdReadMem{\mem, \sem{e}_{\reg}, \sz} =  (v, t, c) \quad
        \sem{e}_{\reg} = (\stdCap{\addr}, \_, \_)
        \end{array}
      }{
        \cf{\mem, \reg}
        \step{\stdLoad{\addr, \stdCapMeta{\addr}}}
        \cf{\mem, \reg[\pc \mapsto \sem{\stdNextPc{\pc}}_\reg][x \mapsto (v, t, c)]}
      }
      \\
      \hspace{-4mm}
      \Infer[isa][isa-store]{
        \begin{array}[b]{c}
          \stdWriteMem{\mem, \sem{e}_{\reg}, \sz, \sem{x}_{\reg}} = \mem'
          \quad
          v = (\taint = \untainted) \;?\; \sem{x}_{\reg} : \bot\\
          \stdReadMemInst{\mem, \sem{\pc}_\reg} = \store{x}{e}{\sz}\quad
          \sem{e}_{\reg} = ((\addr, (\perm, \rbase, \rend, \taint)), \_, \_)
        \end{array}
      }{
        \cf{\mem, \reg}
        \step{\stdStore{v, (\addr, (\perm, \rbase, \rend, \taint))}}
        \cf{\mem', \reg[\pc \mapsto \sem{\stdNextPc{\pc}}_\reg]}
      }
    \end{rules}
  \begin{rules}[Core rules of memory operations]
    \Infer[memory][check-cap]{
      b \leq l &
      l + \sz \le e &
      \perm' \pord \perm &
      \sz = \szCap \Rightarrow \addr \% \szCap = 0 &
      (\perm, \rbase, \rend, \taint)\neq \stdMagic
    }{
      \stdCheckCap{(\addr, (\perm, \rbase, \rend, \taint)), \sz, \perm'}
    }
    \\
    \Infer[memory][read-mem-inst]{
      \memData(\addr) \in \Instrs &
      \stdCheckCap{(\addr, (\perm, \rbase, \rend, \untainted)), \szInst, \exec}
    }{
      \stdReadMemInst{(\memData, \memTag), ((\addr, (\perm, \rbase, \rend, \untainted)), \untainted, \tcap)} =  \memData(\addr)
    }\\
    \Infer[memory][read-mem]{
      \begin{array}[b]{c}
        \stdCheckCap{(\addr, (\perm, \rbase, \rend, \taint_c)), \sz, \readonly}\\
        \stdReadRes{\memData[\addr, \addr + \sz), \taint_c, \memTag[\addr / \szCap], \sz} = (v, t, c)
      \end{array}
    }{
      \stdReadMem{(\memData, \memTag), ((\addr, (\perm, \rbase, \rend, \taint_c)), \_, \tcap), \sz} = (v, t, c)
    }
    \\
    \Infer[memory][write-mem]{
      \begin{array}[b]{c}
        \stdCheckCap{(\addr, (\perm, \rbase, \rend, \taint_c)), \sz, \readwrite}\\
        \stdWriteRes{(v, t, c), \taint_c, \sz} = (v', c') \\
        \mem' = (\memData[[\addr, \addr + \sz) \mapsto v'], \memTag[\addr/\szCap \mapsto c'])
      \end{array}
    }{
      \stdWriteMem{(\memData, \memTag), ((\addr, (\perm, \rbase, \rend, \taint_c)), \_, \tcap), \sz, (v, t, c)} = \mem'
    }
  \end{rules}
  \caption{\sysisa architectural semantics and memory operation semantics.}
  \label{fig:sys-isa-semantics}
\end{figure}

In this section, we equip \sys with an architectural operational
semantics, formalizing its behavior and, in
particular, the interpretation of memory operations.

Configurations are tuples $((\memData, \memTag), \reg)$, composed by a
data memory, a tag memory and a register file, and ranged over by the
meta-variable $S$.  Following a common approach in modeling
side-channel leaks~\cite{HighAssurance,CTfoundations,guarnieri2020spectector,guarnieri2021hardware}, the architectural semantics of \sys is defined as
a small-step operational semantics, where transitions
$S_0 \step o S_1$ are labeled with observations $o$ revealing the
information that the program leaks to the
adversaries.
Observations are defined by the following BNF:
\[
  \Obs \ni \obs \Coloneqq \onone \mid \stdLoad c \mid \stdStore {\val, c}\mid \stdBr c \mid \stdBr b
\]
where $c\in \Cap$, $b \in \Bool$, and $v \in (\Val^*\times \Taint\times \Tag)_\bot$. The
observation $\onone$ labels transitions that do not leak. The
observations $\stdLoad c$ and $\stdStore {v, c}$ are produced by load
and store operations from and to capability $c$, respectively. In the
$\stdStore {v, c}$ observation, $v$ is equal to the stored value when
the store targets the public region, and to $\bot$ otherwise. This
ensures that the resulting trace reveals declassified values or values
that are already public.  The observation $\stdBr c$ is produced when
an indirect jump to $c$ executes, and $\stdBr b $ is produced by
direct branches whose guard evaluates to $b$.  The branch target does
not need to be included in the observation because attackers can
derive it from their knowledge of the victim program and the value of
$b$.

The $n$-step transition relation is defined as follows:
\[
  \Infer{}{S\step \epsilon\!{}^0\, S} \quad \Infer{S_0 \step o S_1 & S_1 \step O\!{}^n\, S_2}{S_0 \step{o:O}\!{}^{n+1}\, S_2}
\]
Two ISA states $S_0$ and $S_1$ produce the same leakage
(written $S_0 \equiv_{\mathit{ISA}} S_1$) when for every $n \in \Nat$
and observation $O \in \Obs^*$, we have:
\[
  \exists S_0'. S_0 \step{O}\!{}^{n}\, S_0' \ \Leftrightarrow\  \exists S_1'. S_1 \step{O}\!{}^{n}\, S_1'.
\]

The semantics is defined in \Cref{fig:sys-isa-semantics}.  Each rule
evaluates the program counter in the register file $\reg$, and uses
the resulting capability to fetch the current instruction.  In turn,
fetching is performed via the $\stdReadMemInst{\mem, \sem{\pc}_\reg}$
function, which is formally defined by
Rule~\ref{memory:read-mem-inst}.
In this rule, the premise on the left
ensures that the memory location pointed to by the program counter
is an instruction, and the premise on the right validates the
capability that is used to fetch the next instruction via the
predicate $\stdCheckCap{\cdot, \cdot, \cdot}$, defined by
Rule~\ref{memory:check-cap}. Essentially, this rule ensures that the
capability offset is in bounds, and it has sufficient
permissions. When the size of the load is $\szCap$, the rule also ensures
that the accessed address is aligned. This requirement is necessary
for correct handling of capability tags, as $\memTag(\addr/\szCap)$ keeps
track of the tag associated with the data stored in
$\memData[\addr, \addr+\szCap)$ (see Rules~\ref{memory:write-mem}
and \ref{memory:read-mem}).
Consequently, the alignment condition
enforces consistency of that data and tag. Finally, the last
premise ensures that the metadata field of the capability is not a
dedicated value $\stdMagic$, which marks spills of untainted values; we
discuss this mechanism in more detail when we describe the semantics
of memory operations in \Cref{sec:memop}.

We now turn to the semantics of instructions. Assignments $x\gets e$ are
evaluated via Rule~\ref{isa:isa-assign}, which evaluates expression
$e$ and updates $x$ accordingly.  The $\pc$ register is updated to
point to the next instruction, while its metadata, taint and capability tag
are not changed. Rule~\ref{isa:isa-jmp} executes indirect jump instructions
$\jmp{e}$ by updating the program counter with the jump target
$\sem e_\reg$, which is leaked via the observation. Conditional branches
$\beqz{x}{\addr'}$ are executed with Rule~\ref{isa:isa-beqz}, which
evaluates the branch condition $x$ to a value $v$, and sums the offset
$\addr'$ to the program counter if $v=0$, and increments the program counter otherwise.

Load instructions $\load{x}{e}{\sz}$ are evaluated via
Rule~\ref{isa:isa-load}.  The load capability is obtained by
evaluating $\sem{e}_{\reg}$.  The destination register $x$ is then
updated with the ISA-level value produced by
$\stdReadMem{\mem, \sem{e}_{\reg}, \sz}$, together with its associated
taint and capability tag. The value of
$\stdReadMem{\mem, \sem{e}_{\reg}, \sz}$ is provided by
Rule~\ref{memory:read-mem}, which is also responsible for ensuring
that the load capability has sufficient permissions, and for providing
the correct ISA-level representation of unspills and loaded
capabilities, given by the helper function
$\stdReadRes{\cdot, \cdot, \cdot, \cdot}$.
Leaving the helper function unspecified allows
our semantics to represent different memory designs. The one of \sys
is described in detail in \Cref{sec:memop}, the one of \blackout
in \Cref{sec:modelingbo}.
Store instructions $\store{x}{e}{\sz}$ are handled by
Rule~\ref{isa:isa-store}. The store capability is obtained by
evaluating $\sem{e}_{\reg}$, and the value is taken from register
$x$. Memory is updated via
$\stdWriteMem{\mem, \sem{e}_{\reg}, \sz, \sem{x}_{\reg}}$, whose value
is provided by Rule~\ref{memory:write-mem}. This rule is responsible
for ensuring that the store capability has sufficient permissions to
write and for computing the ISA-level representation $(v', c')$ of
the value and capability tag to store, given by the helper function
$\stdWriteRes{\cdot, \cdot, \cdot}$. Again, leaving this helper
function unspecified allows our semantics to represent different
designs. The updated memory is obtained by writing the value $v'$ and
tag $c'$ at the addresses spanned by the store capability.  Note that
when the capability's inner taint is untainted, Rule~\ref{isa:isa-store} leaks the
stored value to include declassified values in the leakage trace.
\paragraph{Running Example} We illustrate \sys's architectural
semantics on~\eqref{eq:iss1ex}, assuming $g$ points to a benign,
skip-like target at $\addr'$ performing $w \gets w$; this is
\emph{not} the \attackref{par:exploit-stack}{issue:stack} behavior, where $\mathtt Q$
is reached via misspeculation (covered in \Cref{sec:sem}). The
architectural execution of
$\mathtt P \defsym \store {x} {\mathit{t}} 1; \jmp g$ is
\begin{align*}
  \cf{\mem, \reg}
  &\step{\stdStore{\bot,\, \sem{t}_\reg}}
    \cf{\mem', \reg[\pc \mapsto \sem{\stdNextPc\pc}_\reg]}\\
  &\step{\stdBr{\sem{g}_\reg}}
    \cf{\mem', \reg[\pc \mapsto (\stdCap{\addr'}, \untainted, \tcap)]}\\
  &\step\onone
    \cf{\mem', \reg[\pc \mapsto (\addr'+1,\stdCapMeta{\addr'}, \untainted, \tcap)]}.
\end{align*}
The store executes via Rule~\ref{isa:isa-store}: since $t$ has inner
taint $\tainted$, the leaked observation carries $\bot$ rather than
the declassified value, while the secret
$\reg(x) = (\val, \tainted, \tval)$ is written to $\mem$, giving
$\mem'$. The jump (Rule~\ref{isa:isa-jmp}) leaks the target
$\sem{g}_\reg = (\stdCap{\addr'}, \untainted, \tcap)$ and transfers
control to the benign $\addr'$. Finally, $w \gets w$ fires
Rule~\ref{isa:isa-assign}, producing the empty observation $\onone$
and advancing the program counter.

\subsection{Memory Operation Semantics}
\label{sec:memop}
\begin{figure*}
  \begin{rules}
      \Infer[helpersys][write-data]{
        \sz = \szAddr
      }{
        \stdWriteRes{(v, \taint, c), \taint_c, \sz} = (v, \tval)
      }
      \Infer[helpersys][write-cap]{
        \stdPackTaint{\stdCapMeta{v}, \taint} = v_m
      }{
        \stdWriteRes{(\stdCap{v}, \taint, \tcap), \taint_c, \szCap} = ((v, v_m), \tcap)
      }
      \Infer[helpersys][spill-value]{
        (v_m', c') = (v_m = 0 \land \taint = \untainted) \;?\; (\stdMagic, \tcap) : (v_m, \tval)
      }{
        \stdWriteRes{((v, v_m), \taint, \tval), \taint_c, \szCap} = ((v, v_m'), c')
      }\\
      \Infer[helpersys][read-data]{
        \sz = \szAddr \lor c = \tval\\
      }{
        \stdReadRes{v, \taint_c, c, \sz} =  (v, \taint_c, \tval)
      }
      \Infer[helpersys][unspill-value]{
        v_m = \stdMagic
      }{
        \stdReadRes{(v, v_m), \taint_c, \tcap, \szCap} = ((v, 0), \untainted, \tval)
      }
      \Infer[helpersys][read-cap]{
        v_m \neq \stdMagic &
        \stdUnpackTaint{v_m} = (\stdCapMeta{v}, \taint)
      }{
         \stdReadRes{(v, v_m), \taint_c, \tcap, \szCap} = ((v, \stdCapMeta{v}), \taint \sqcap \taint_c, \tcap)
      }
    \end{rules}
  \caption{\sys's memory operation semantics.}
\label{fig:mem-op-semantics}
\end{figure*}

At a high level, our memory semantics combines three mechanisms:
\begin{enumerate}
[leftmargin=*]
\item On reads, the capability used for the access determines the taint
  of the loaded value. Hence, values stored via a capability to
  untainted data are treated as untainted by any following load, and
  thereby declassified.

\item 2-word-wide public values with high-order bits equal to zero are
  represented in memory by \emph{Untainted Register Records} (URRs),
  i.e., pairs $(v, v_{\mathit{magic}})$ tagged as capabilities, where
  $v_{\mathit{magic}}$ is a magic value that does not collide with any
  valid capability metadata. This allows URRs to be distinguished from
  genuine capabilities. Unlike \blackout, where the BRR mechanism only
  applies to register spills, the URR mechanism
  applies to every untainted value store in \sys.
  To improve memory efficiency, future work could refine this mechanism.

\item When a capability is written to memory, its \emph{outer} taint
  is saved as part of its stored metadata and later recovered when the
  capability is read back. If this encoded taint is $\untainted$,
  the outer taint is restored to $\untainted$
  regardless of the taint of the loading capability.  This mechanism
  avoids the performance degradation caused by conservatively
  tainting originally untainted capabilities.
\end{enumerate}

The semantics of memory accesses is defined in
\Cref{fig:mem-op-semantics}, via two helper functions that convert
between the ISA-level representation of values and their in-memory
representation. The function
$\stdWriteRes{(v, \taint, c), \taint_c, \sz}$ takes as arguments the
value $(v, \taint, c)$ being stored---i.e., the ISA-level value $v$,
together with its taint $\taint$ and capability tag $c$---the inner taint
$\taint_c$ of the capability used to perform the store, and the store
size $\sz$; it returns the pair $(v', c')$ to be written respectively
to $\memData$ and $\memTag$. Dually, the function
$\stdReadRes{v, \taint_c, c, \sz}$ takes as arguments the raw value
$v$ read from $\memData$, the inner taint $\taint_c$ of the capability
used to perform the load, the tag $c$ read from $\memTag$ at the
corresponding address, and the load size $\sz$; it returns the ISA-level value,
taint, and capability tag.

For writes, operations of size $\sz=1$ are handled by
Rule~\ref{helpersys:write-data}, which tags them as non-capabilities,
since capabilities have size 2. Capability writes are handled via
Rule~\ref{helpersys:write-cap}, which encodes the taint bit into the
metadata via the function $\stdPackTaint\cdot$ before storing it in
memory. This way the capability's taint can be restored at load time
(Rule~\ref{helpersys:read-cap}) via function
$\stdUnpackTaint\cdot$. For generality, we do not further specify
these two functions: we only assume that
$\stdUnpackTaint {\stdPackTaint\cdot}$ is the identity.  Finally,
Rule~\ref{helpersys:spill-value} ensures that untainted values are spilled
to memory using URRs, as previously discussed. The restriction to
pairs whose second element is zero ensures no information loss.

For reads, Rule~\ref{helpersys:read-data} specifies that data of size
$1$, or data not tagged as a capability, is forwarded to the
ISA level as a non-capability: its value is unchanged, and its taint is
set to $\taint_c$, i.e., to the taint of the
capability used for loading, rather than to any taint recorded in
memory. If the loaded data is tagged as a capability and the access
size is $2$, then two cases arise, distinguished by
the value $v_m$ of the second memory word: if $v_m = \stdMagic$,
i.e., the loaded data is a URR,
Rule~\ref{helpersys:unspill-value} returns the
spilled value and marks it as untainted.
Otherwise, Rule~\ref{helpersys:read-cap} applies, and the value is
reconstructed as a capability, using $\stdUnpackTaint\cdot$ to recover
its metadata and stored taint from $v_m$. The
two taints are combined via $\sqcap$, so the resulting capability is
untainted if the stored taint is $\untainted$---meaning that the
capability was untainted before being stored---or if the accessing
capability is untainted.

\paragraph{Running Example} We illustrate memory operations with
two stores. First, when $\mathtt P$ stores the tainted $x$,
Rule~\ref{helpersys:write-data} applies, storing the value as-is and
tagged as a non-capability; when $\mathtt Q$ later loads it via
$\load {y} {\mathit{s}} 1$, the tag in $\memTag$ is $\tval$, so
Rule~\ref{helpersys:read-data} returns $(v, \tainted, \tval)$---the
observed taint is entirely determined by that of $s$. Second, if $x$
holds a public 2-word value $(v, 0)$, it is spilled via
Rule~\ref{helpersys:spill-value} as a URR:
$\stdWriteRes{((v, 0), \untainted, \tval), \taint, \szCap} = ((v,
\stdMagic), \tcap)$, i.e., tagged as a \emph{capability} in $\memTag$
with second word $\stdMagic$, \emph{independently} of the storing
capability's taint. When the value is read back the tag is $\tcap$ and the second
word is $\stdMagic$, so Rule~\ref{helpersys:unspill-value} applies and
the value is read back untainted, \emph{independently} of the loading
capability's taint.

\subsection{Hardware Semantics}
\label{sec:sem}

\paragraph{Reorder buffers}
Following a common approach~\cite{CTfoundations,guarnieri2021hardware,prospect},
out-of-order and speculative execution are modeled by using a
\emph{reorder buffer} (ROB): a partial map
$\buf: \Nat \rightharpoonup \Peinstr$, expressing the position of a partially
evaluated instruction in the buffer. In turn, partially evaluated instructions are
described by the following BNF:
\begin{align*}
  \Tg \ni \tg &\coloneqq \varepsilon \mid \stdCap{\addr} \mid (\stdCap{\addr}, \sz, z)\\
  \Peinstr \ni \peinstr &\Coloneqq x \gets e \specat{\tg}\mid x \gets (v, t, c) \specat{\tg} \mid \load{x}{e}{\sz} \specat{\tg} \mid \\
  & \quad \store{x}{e}{\sz} \specat{\tg} \mid \store{(v, t, c)}{\stdCap{\addr}}{\sz} \specat{\tg}
\end{align*}
where $e\in \Exprs$ is an expression that has not yet been evaluated,
$(v, t, c)\in \Val^*\times \Taint\times \Tag$ represents an evaluated
expression for assignments, or a load/store value,
and $\stdCap{\addr}\in \Cap$ is an evaluated capability for memory access.
In order to resolve speculative choices,
partially evaluated instructions are tagged with a speculation tag $\tg \in \Tg$.
In turn, the speculation tag can be:
\begin{itemize}
[leftmargin=*]
\item the constant $\varepsilon$, if the partial execution of the corresponding instruction has so far made no unresolved speculative choice;
\item the $\pc$ capability from which the branch instruction was
  fetched, in the case of control-flow speculation;
\item a triple
  $(\stdCap\addr, \sz, z)\in \Cap\times \{1, 2\}\times \Nat_\bot$,
  which is used to tag assignments resulting from loads, in the
  presence of speculative store-to-load forwarding. Precisely, $\stdCap\addr$
  is the load capability, $\sz$ is the load size, and $z$ is the index
  in the ROB of the aliasing store that forwarded the
  data, or $\bot$ if the value is fetched from the memory.
\end{itemize}

Given a ROB $\buf$ and an index $i$, the ROB
$\buf \setminus i$ is pointwise identical to $\buf$, but is undefined
on $i$. Similarly, the ROB $\pref i \buf$ is pointwise identical to
$\buf$,
but is undefined at indices greater than or equal to $i$.
More formally, their domains are:
\begin{align*}
  \dom{\buf \setminus i} &\defsym \dom{\buf} \setminus \{i\},\\
  \dom{\pref i \buf} &\defsym \dom{\buf} \cap \{0, \ldots, i-1\},
\end{align*}
and, for every $j$ in the respective domain:
\begin{align*}
  (\buf \setminus i)(j) &\defsym \buf(j), &
  (\pref i \buf)(j) &\defsym \buf(j).
\end{align*}
\paragraph{Buffer application} Instructions in the ROB
cannot be evaluated directly under the architectural register file, as
it may be stale with respect to updates performed by earlier buffered
instructions. Instead, evaluation must use a register file that
reflects the updates of the preceding portion of the buffer. This is
captured by the function $\apl(\buf, \reg)$, defined as follows:
\begin{equation*}
\apl(\buf, \reg) \defsym
\begin{cases}
  \reg & \dom{\buf} = \emptyset\\
  \apl(\buf\setminus i, \aplinst(\buf(i), r)) & i = \min \dom{\buf},
\end{cases}
\end{equation*}
\noindent
where:
\begin{align*}
\aplinst(x \gets (v, t, c) \specat{\tg}, \reg) & \defsym r[\pc \mapsto \sem{\stdNextPc{\pc}}_\reg][x \mapsto (v, t, c)] \\
\aplinst(x \gets e \specat{\tg}, \reg) & \defsym r[\pc \mapsto \sem{\stdNextPc{\pc}}_\reg][x \mapsto \bot] \\
\aplinst(\load{x}{e}{\sz} \specat{\tg}, \reg) & \defsym r[\pc \mapsto \sem{\stdNextPc{\pc}}_\reg][x \mapsto \bot] \\
\aplinst(\store{x}{e}{\sz} \specat{\tg}, \reg) & \defsym r[\pc \mapsto \sem{\stdNextPc{\pc}}_\reg].
\end{align*}
This function applies the updates of $\buf$ to $\reg$ by processing
instructions in program order, starting from the smallest index in the
buffer.
It also updates the program counter for each instruction.
Note that the result of a buffer application in general is a \emph{partial
  register file}. Using \emph{partial} register files is also
convenient to model how our design prevents the leak of tainted data
during speculative execution. Following~\cite{prospect}, this is done
by evaluating instructions using exclusively untainted values during
speculative execution. We model this via an \emph{untainted projection},
 $\lowproj \cdot$, defined as follows:
\[
  \lowproj{w} \defsym
  \begin{cases}
    w & \text{if } w = (v, \untainted, c) \text{ for some } v \in \Val^*, c \in \Tag,\\
    \bot & \text{otherwise.}
  \end{cases}
\]
The untainted projection is extended pointwise to register files by stipulating:
\(
  \lowproj{\reg}(\vx) \defsym \lowproj{\reg(\vx)}.
\)
The untainted projection of a register file can be used to sanitize
speculatively tainted values, as follows:
\begin{equation*}
\aplsan(\buf, r) =
\begin{cases}
  r & \text{if }\dom{\buf} = \emptyset \\
  \lowproj{\apl(\buf, r)} & \text{otherwise.}
\end{cases}
\end{equation*}
During out-of-order execution, when the value of an expression at the $i$-th entry of a ROB $\buf$ might be leaked through side channels,
that expression is evaluated with the register file
$\aplsan(\pref i \buf, \reg)$.
The case $\dom{\pref i \buf} = \emptyset$ captures instructions at the
very beginning of the ROB: such instructions are non-speculative,
evaluated directly against the architectural register file
$r$. All other instructions have at least one predecessor in the
buffer, and are evaluated against $\lowproj{\apl(\pref i \buf, \reg)}$, i.e.,
against the sanitized register file obtained by applying their
predecessors' updates and discarding any value that is tainted.
This prevents leaking tainted values during speculation.

For instance, take a ROB $\buf$ where index $0$
holds $x \gets (\mathit{secret}, \tainted, c)$, and index $1$ holds
a subsequent instruction $\peinstr$ that references and leaks $x$. When $\peinstr$
is evaluated, it is evaluated in
$\aplsan(\pref 1 \buf, \reg) = \aplsan(\{0 \mapsto x \gets
(\mathit{secret}, \tainted, c)\}, \reg)$. Since $\pref 1 \buf$ is
not empty, this reduces to $\lowproj{\apl(\pref 1 \buf, \reg)}$:
first, $\apl$ propagates the update to $x$; then, $\lowproj \cdot$
maps the tainted value of $x$ to $\bot$. Therefore, when $\peinstr$
is evaluated, $x$ is unavailable, preventing its transient leak.
Only once the first instruction retires, can $\peinstr$
 access $x$'s value.

\paragraph{Microarchitectural contexts} Having discussed how our
hardware model supports speculative and out-of-order execution, we
turn to the speculative semantics itself.
Following~\cite{guarnieri2021hardware,prospect}, we model the attacker by including in our
configurations a \emph{microarchitectural context}, which abstracts
both the observations available to the adversary and its influence on
execution.

\begin{definition}[Microarchitectural context]
  A microarchitectural context is a structure
  $(\Ctx, \stdUpdateMu{\cdot, \cdot},\stdPredPc{\cdot},
  \stdNextMu{\cdot})$ with non-empty carrier $\Ctx$, where:
  \begin{enumerate}
  [leftmargin=*]
  \item $\stdUpdateMu {\cdot, \cdot} : \Ctx \times A \to \Ctx$ returns
    a microarchitectural context that is obtained by updating the
    input context with information leaked during evaluation,
    taken from a set $A$ which we do not further specify;
  \item $\stdPredPc {\cdot} : \Ctx \to \Addr$ takes the
    current microarchitectural context and returns a jump target (address)
    prediction;
  \item $\stdNextMu {\cdot} : \Ctx \to \Dir$ takes the
    current context and returns a directive
    from $\Dir \Coloneqq \dfetch \mid \dexecute i \mid \dcommit$, where $i \in \Nat$.
  \end{enumerate}
\end{definition}

The semantics interacts with the microarchitectural context through
the operations $\stdUpdateMu {\cdot, \cdot}, \stdPredPc\cdot$ and
$\stdNextMu\cdot$. First, all observable microarchitectural effects
are reported via $\stdUpdateMu{\cdot,\cdot}$, ensuring that any
information that may leak is reflected in the context. Second,
speculative choices are generated using $\stdPredPc{\cdot}$, which is
invoked to predict the target of jump instructions. Finally, the
execution order is governed by $\stdNextMu{\cdot}$, whose output
determines whether the processor fetches a new instruction (directive
$\dfetch$), executes the $i$-th instruction in the ROB
(directive $\dexecute i$), or commits the oldest instruction
($\dcommit$).  In the following, we use a fixed arbitrary
microarchitectural context. This ensures that all of our results hold
independently of the specific adversary.

\begin{figure*}[t]
  \small
  \begin{rules}
\Infer[hw][step]{
  \begin{array}[b]{c}
  \mu'= \stdUpdateMu{\mu, \lowproj{\buf}} \quad
  d = \stdNextMu{\mu'}\\
  \cf{\mem, \reg, \buf, \mu'} \sstep{d}{} \cf{\mem', \reg', \buf', \mu''}
  \end{array}
}{
  \cf{\mem, \reg, \buf, \mu} \sstep{}{} \cf{\mem', \reg', \buf', \mu''}
}
\Infer[hw][fetch-branch-predict-pc]{
  \begin{array}[b]{c}
  (\stdCap{\addr}, t, c)=\sem{\pc}_{\aplsan (\buf, \reg)}\quad
  \instr \in \qty{\beqz x \addr'', \jmp e}\quad
  \instr = \stdReadMemInst{\mem, (\stdCap{\addr}, t, c)}\\
  i = \sup\dom{\buf}\quad
  \buf' = \buf[i + 1\mapsto \pc\gets((\stdPredPc{\mu}, \stdCapMeta{\addr}), t, c)\specat{\stdCap{\addr}}]
  \end{array}
}{
  \cf{\mem, \reg, \buf, \mu}
  \sstep{\dfetch}{}
  \cf{\mem, \reg,
  \buf',
  \stdUpdateMu{\mu, \sem{\pc}_{\aplsan (\buf, \reg)}}}
}
\\
\Infer[hw][fetch-other]{
  \begin{array}[b]{c}
  \instr = \stdReadMemInst{\mem, \sem{\pc}_{\aplsan (\buf, \reg)}}\\
  \instr \not\in \qty{\beqz x \addr'', \jmp e}\quad
  \buf' = \buf[\sup\dom{\buf} + 1\mapsto \instr\specat{\varepsilon}]
  \end{array}
}{
  \cf{\mem, \reg, \buf, \mu}
  \sstep{\dfetch}{}
  \cf{\mem, \reg, \buf', \stdUpdateMu{\mu, \sem{\pc}_{\aplsan (\buf, \reg)}}}
}
\Infer[hw][execute-assign]{
  \buf(i) = x\gets e \specat{\varepsilon}&
  x \neq \pc&
  (v, t, c) = \sem{e}_{\apl (\pref i \buf, \reg)}
}{
  \cf{\mem, \reg, \buf, \mu}
  \sstep{\dexecute{i}}{}
  \cf{\mem, \reg, \buf[i \mapsto x \gets (v, t, c) \specat{\varepsilon}], \mu}
}\\
\Infer[hw][execute-jmp-ok]{
  \begin{array}[b]{c}
  \buf(i) = \pc\gets \addr' \specat{\stdCap{\addr}}\hfill
  \buf' = \buf[i\mapsto \pc\gets \addr' \specat{\varepsilon}]~~
  \addr' = (\addr_0, \_, \tcap)\\
  \jmp{e} = \stdReadMemInst{m, \sem{\pc}_{\aplsan(\pref i \buf, \reg)}}~~
  (\addr_0, \_, \tcap) = \sem{e}_{\aplsan (\pref i \buf, r)}\\
  \end{array}
}{
  \cf{\mem, \reg, \buf, \mu}
  \sstep{\dexecute{i}}{}
  \cf{\mem, \reg, \buf', \stdUpdateMu{\mu, \addr_0}}
}
\Infer[hw][execute-jmp-hazard]{
  \begin{array}[b]{c}
  \buf(i) = \pc\gets (\addr_1, \_, \_)\specat{\stdCap{\addr}}~~
    \jmp{e} = \stdReadMemInst{m, \sem{\pc}_{\aplsan(\pref i \buf, \reg)}} \\
    \addr_1 \neq \addr_0 \hfill (\addr_0, \_, \tcap) = \sem{e}_{\aplsan (\pref i \buf, r)} \hfill
  \buf' = \pref {i + 1} \buf[i\mapsto \pc\gets (\addr_0, \textcolor{red}{\untainted}, \tcap) \specat{\varepsilon}]
  \end{array}
}{
  \cf{\mem, \reg, \buf, \mu}
  \sstep{\dexecute{i}}{}
  \cf{\mem, \reg, \buf', \stdUpdateMu{\mu, \addr_0}}
}\\
\Infer[hw][execute-load-fwd]{
  \begin{array}[b]{c}
    \buf(i) = \load{x}{e}{\sz} \specat{\varepsilon}\hfill
    x\neq \pc \hfill
    [\addr, \addr + \sz) = [\addr', \addr' + \sz')  \hfill
    ((\addr, (\perm, \rbase, \rend, t_c)), \_, \tcap) = \sem{e}_{\aplsan (\pref i \buf, \reg)} \hfill
    \stdCheckCap{(\addr, (\perm, \rbase, \rend, t_c)), \sz, \readonly}\\
    (j = \max \lbrace
    j < i: \buf(j) = \store{\_}{\stdCap{\addr'}}{\sz'}\specat{\tg}
    \land [\addr, \addr + \sz) \cap [\addr', \addr' + \sz') \neq \emptyset
  \rbrace)
    \quad
    t_0 \sqsubseteq t_c \quad
  \buf(j) = \store{(v_0, t_0, c_0)}{\stdCap{\addr'}}{\sz'}\specat{\tg}
    \\
  \end{array}
}{
  \cf{\mem, \reg, \buf, \mu}
  \sstep{\dexecute{i}}{}
  \cf{\mem, \reg, \buf[i \mapsto (x\gets \stdReadRes{v_0, t_c, c_0, \sz}\specat{(\stdCap{\addr}, \sz, j)})], \stdUpdateMu{\mu, (\addr, (\perm, \rbase, \rend, t_c))}}
}\\
\Infer[hw][execute-store-ok]{
  \begin{array}[b]{c}
  \buf(i) = \store{x}{e}{\sz}\specat{\varepsilon}\hfill
  ((\addr, (\perm, \rbase, \rend, \taint_c)), \_, \tcap) = \sem{e}_{\aplsan (\pref i \buf, \reg)} \hfill
  \stdCheckCap{(\addr, (\perm, \rbase, \rend, \taint_c)), \sz, \readwrite}\hfill
  (v, t, c) = \sem{x}_{\apl(\pref i \buf, \reg)}\\
  (\forall j > i, \buf(j) = x \gets v \specat{(\stdCap{\addr'}, \sz', k)} \land k \neq i \Rightarrow
  (k > i \lor [\addr, \addr + \sz) \cap [\addr', \addr + \sz') = \emptyset))\quad
  (v', c') = \stdWriteRes{(v, t, c), \taint_c, \sz}\\
  \end{array}
}{
  \cf{\mem, \reg, \buf, \mu}
  \sstep{\dexecute{i}}{}
  \cf{\mem, \reg, \buf[i \mapsto (\store{(v', t, c')}{(\addr, (\perm, \rbase, \rend, \taint_c))}{\sz})\specat{\varepsilon}], \stdUpdateMu{\mu, (\addr, (\perm, \rbase, \rend, \taint_c))}}
}
\\
\Infer[hw][commit-assign]{
i = \min\dom{\buf}&
    x = \pc \Rightarrow \tg= \varepsilon &
  \buf(i) = x \gets v \specat{\tg}&
}{
  \cf{\mem, \reg, \buf, \mu}
  \sstep{\dcommit}{}
  \cf{\mem, \reg[\pc \mapsto \sem{\stdNextPc{\pc}}_\reg][x\mapsto v], \buf \setminus i, \mu}
}\\
\Infer[hw][commit-store]{
  i = \min\dom{\buf}\quad
  \buf(i) = \store{(v, t, c)}{\stdCap{\addr}}{\sz} \specat{\varepsilon}\quad
  \mem' = (\memData[[\addr, \addr + \sz) \mapsto v], \memTag[\addr/\szCap \mapsto c])
}{
  \cf{(\memData, \memTag), \reg, \buf, \mu}
  \sstep{\dcommit}{}
  \cf{\mem', \reg[\pc \mapsto \sem{\stdNextPc{\pc}}_\reg], \buf \setminus i, \stdUpdateMu{\mu, \stdCap{\addr}}}
}
\end{rules}
  \caption{Hardware semantics of \sys, excerpt.}
  \label{fig:hwsem}
\end{figure*}

\paragraph{Hardware semantics} Hardware configurations are
tuples of the form $(\mem, \reg, \buf, \ctx)$, where $\mem$ is a pair of a
data and a tag memory, $\reg$ is a register file, $\buf$ is a ROB,
and $\ctx$ is a microarchitectural context. Our hardware-level
semantics of \sys is a small-step operational semantics. An excerpt of
the transition rules is in \Cref{fig:hwsem}.
\iftechreport
The complete set of rules is deferred to \Cref{sec:apprules}.
\else
The complete set of rules is in \cite{scheri-proof}.
\fi
The hardware semantics proceeds at two levels. Externally, the
behavior is described by Rule~\ref{hw:step}: at each step, the
microarchitectural context is updated with the untainted projection
of the ROB, ensuring that scheduling and prediction
decisions depend only on untainted information. The context then
produces a directive via $\stdNextMu{\cdot}$, which determines the
next directive $\dir$, governing an internal transition.

If the
directive is $\dfetch$, and the program counter points to a branch
instruction ($\instr \in \qty{\beqz x \addr'', \jmp e}$),
Rule~\ref{hw:fetch-branch-predict-pc} applies. It uses
$\stdPredPc{\cdot}$ to obtain a predicted target, and appends a
speculative update of the program counter to the ROB by setting:
\[
  \buf' = \buf[i + 1\mapsto \pc\gets((\stdPredPc{\mu}, \stdCapMeta{\addr}), t, c)\specat{\stdCap{\addr}}].
\]
Here, the branch instruction is tagged by its $\pc$ value,
indicating the speculation on its target, which is resolved at
execute time.
Finally,
the rule leaks the current program counter to the microarchitectural
context $\mu$ via
$\stdUpdateMu{\mu, \sem{\pc}_{\aplsan (\buf, \reg)}}$.  If the program
counter points to a non-control-flow instruction,
Rule~\ref{hw:fetch-other} applies. It appends the fetched instruction
to the ROB with an empty speculation tag, and leaks the
current program counter to the context.

When the issued directive is $\dexecute i$, different rules can be
applied, depending on the $i$-th entry of the reorder
buffer. Rule~\ref{hw:execute-assign} evaluates assignment instructions
$\vx \leftarrow e$ in the buffer by computing the value of $e$ under
the partially applied register file ${\apl (\pref i \buf, \reg)}$, and
storing it in the buffer. Rule~\ref{hw:execute-jmp-ok} resolves
indirect jumps. The target $e$ is evaluated under the sanitized
register file $\aplsan (\pref i \buf, r)$. This ensures that the value
of the jump target is not stale and that the leaked jump target
depends only on untainted values. If the target resolves as correct,
the speculative annotation is cleared and the resulting target is
reported to the context. If instead the resolved target $\addr_0$ does
not match the predicted one, Rule~\ref{hw:execute-jmp-hazard} applies:
the corresponding entry of the ROB is updated with the
correct target $\addr_0$ (retagged as $\untainted$), the speculation
tag is cleared, and all younger instructions are dropped from the
buffer.  Conditional branches are executed analogously to indirect
jumps by Rules~\rlref{hw}{execute-beqz-ok}
and \rlref{hw}{execute-beqz-hazard}, whose details are in
\iftechreport
\Cref{sec:apprules}.
\else
\cite{scheri-proof}.
\fi

Rule~\ref{hw:execute-load-fwd} evaluates load instructions. It first checks that the target register is not the
program counter, as this can only be updated via
$\kwd{beqz}$ and $\kwd{jmp}$ instructions, then it evaluates the load
address to a capability
$(\stdCap{\addr}, \_, \tcap) = \sem{e}_{\aplsan (\pref i \buf,
  \reg)}$.  Using $\aplsan$ ensures that the address only depends on
untainted values in speculative execution.  The rule then checks
whether the resulting capability has sufficient permissions for
performing the load in the
$\stdCheckCap{\stdCap{\addr}, \sz, \readonly}$ premise. Then, if
an aliasing evaluated store exists at index $j < i$, the stored value
is forwarded by setting the $i$-th entry in the ROB to an
assignment of that value to $x$. Note that this forwarding is a form
of SSB speculation, as the ROB may contain a not-yet-evaluated
aliasing store at some index $j'$ with $j < j' < i$.
Thus, the
speculation tag of the assignment carries the load
capability $\stdCap{\addr}$, the load size $\sz$, and the index $j$ in
the ROB of the store instruction that forwarded its
value.

Rule~\ref{hw:execute-store-ok} executes store instructions and
resolves store-to-load dependency speculation. As for the load rule,
the store address is first evaluated under the sanitized register file
to ensure that no tainted values are leaked during speculative
execution. The resulting capability is then checked for sufficient
permissions via $\stdCheckCap{\cdot, \sz, \readwrite}$. The value to
be stored is obtained by evaluating $x$ under the partially applied
register file $\apl(\pref i \buf, \reg)$.
The value to be written is then computed via $\stdWriteRes{\cdot}$, the
corresponding entry in the ROB is updated, and the accessed
address is leaked to the microarchitectural context.
Crucially, the rule additionally enforces consistency with prior
store-to-load forwarding.
Recall that Rule~\ref{hw:execute-load-fwd} may speculatively forward a
value from an older store to a younger load. To ensure that such
forwarding is sound, the store rule checks that no younger load has
speculated on a conflicting store. This is captured by the side
condition on indices $j > i$: if a younger load has been resolved by
forwarding from a store at position $k \neq i$, then either that
store is younger than the current one ($k > i$), or the accessed
memory regions do not overlap (otherwise, the pipeline is flushed).

Rule~\ref{hw:commit-assign} commits assignments by updating the
architectural register file with the computed value and removing the
instruction from the buffer. This rule explicitly checks that if the
target register is $\pc$, the speculation tag has to be $\varepsilon$,
to ensure that speculative jumps have been validated at execution
time. Rule~\ref{hw:commit-store} commits store instructions by
updating the data and tag memories, leaking the accessed
address to the context a second time and removing the instruction from the buffer.
\begin{figure*}[t]
  \centering
  \small
  \[
    \begin{array}{l}
      \cf{\emptyset,\mu_0}
      \sstep{\dfetch}{} \cf{\{1 \mapsto \store x t 1 \specat{\varepsilon}\},\mu_1}
      \sstep{\dexecute 1}{} \cf{\{1 \mapsto \store{(\val,\tainted,\tval)}{\stdCap{z}}1 \specat{\varepsilon}\},\mu_2}\\
      \sstep{\dfetch}{} \cf{\{1 \mapsto \store{(\val,\tainted,\tval)}{\stdCap{z}}1 \specat{\varepsilon},\ 2 \mapsto \pc \gets ((\addr',\stdCapMeta{\addr}), \untainted, \tcap) \specat{\stdCap{\addr}}\},\mu_3}\\
      \sstep{\dfetch}{} \cf{\{1 \mapsto \store{(\val,\tainted,\tval)}{\stdCap{z}}1 \specat{\varepsilon},\ 2 \mapsto \pc \gets ((\addr',\stdCapMeta{\addr}), \untainted, \tcap) \specat{\stdCap{\addr}},\ 3 \mapsto w \gets w \specat{\varepsilon}\},\mu_4}\\
      \sstep{\dexecute 2}{} \cf{\{1 \mapsto \store{(\val,\tainted,\tval)}{\stdCap{z}}1 \specat{\varepsilon},\ 2 \mapsto \pc \gets ((\addr',\stdCapMeta{\addr}), \untainted, \tcap) \specat{\varepsilon},\ 3 \mapsto w \gets w \specat{\varepsilon}\},\mu_5}
    \end{array}
\]

\[
  \begin{array}{l}
    \cf{\{1 \mapsto \store{(\val,\tainted,\tval)}{\stdCap{z}}1 \specat{\varepsilon},\ 2 \mapsto \pc \gets ((\mathtt Q,\stdCapMeta{\addr}), \untainted, \tcap) \specat{\stdCap{\addr}}\},\nu_3}\\
    \sstep{\dfetch}{} \cf{\{1 \mapsto \store{(\val,\tainted,\tval)}{\stdCap{z}}1 \specat{\varepsilon},\ 2 \mapsto \pc \gets ((\mathtt Q,\stdCapMeta{\addr}), \untainted, \tcap) \specat{\stdCap{\addr}},\ 3 \mapsto \load y s 1 \specat{\varepsilon}\},\nu_4}\\
    \sstep{\dexecute 3}{} \cf{\{1 \mapsto \store{(\val,\tainted,\tval)}{\stdCap{z}}1 \specat{\varepsilon},\ 2 \mapsto \pc \gets ((\mathtt Q,\stdCapMeta{\addr}), \untainted, \tcap) \specat{\stdCap{\addr}},\ 3 \mapsto y \gets (\val, \tainted, \tval) \specat{(\stdCap{z},1,1)}\},\nu_5}\\
    \sstep{\dfetch}{} \cf{\{1 \mapsto \store{(\val,\tainted,\tval)}{\stdCap{z}}1 \specat{\varepsilon},\ 2 \mapsto \pc \gets ((\mathtt Q,\stdCapMeta{\addr}), \untainted, \tcap) \specat{\stdCap{\addr}},\ 3 \mapsto y \gets (\val, \tainted, \tval) \specat{(\stdCap{z},1,1)},\ 4 \mapsto \load y {s \stdCapOff y} 1 \specat{\varepsilon}\},\nu_6}
  \end{array}
\]
\caption{Two hardware executions of \eqref{eq:iss1ex} from the same
  initial configuration: execution under a correct prediction (top), which matches the architectural execution
  of \Cref{sec:isasem}; execution under the attacker's
  misprediction reaches $\mathtt Q$ (bottom).}
\label{fig:runningex}
\end{figure*}
\paragraph{Running example}
In \Cref{fig:runningex}, we
illustrate the execution in \sys of the vulnerable
program~\eqref{eq:iss1ex}, modeling \attackref{par:exploit-stack}{issue:stack}.
First, the store
$\store x t 1$ is fetched into entry $1$ by
Rule~\ref{hw:fetch-other} and executed by
Rule~\ref{hw:execute-store-ok}, resolving its store capability to $\stdCap{z}$ and value to $(\val, \tainted, \tval)$ without yet updating memory.
Then, the jump $\jmp g$ is fetched by Rule~\ref{hw:fetch-branch-predict-pc}.
Recall that we assume $g$ points to a benign, skip-like target at $\addr'$ performing $w \gets w$.
Under the correct prediction (top), the predicted target is $\stdPredPc\mu = \addr'$, so entry $2$ is speculatively set to $\pc \gets ((\addr', \stdCapMeta{\addr}), \untainted, \tcap)$, tagged with the branch's $\pc$ value $\stdCap\addr$.
Next, the benign $w \gets w$ is fetched into entry $3$ by Rule~\ref{hw:fetch-other}.
Finally, $\dexecute 2$ resolves the jump
(Rule~\ref{hw:execute-jmp-ok}): the evaluated target matches the
prediction, so entry $2$'s speculation tag is cleared, and the run
reproduces the architectural execution.

Under the attacker's misprediction
$\stdPredPc\mu = \mathtt Q$
(bottom), the store and jump execute as above
except that entry $2$ records the mispredicted target, steering
speculation into the gadget $\mathtt Q$. The first load $\load y s 1$
is fetched into entry $3$ by Rule~\ref{hw:fetch-other} and executed
by Rule~\ref{hw:execute-load-fwd}.
The load capability is resolved to $\stdCap{z}$, aliasing the store capability at entry $1$,
so the store value is forwarded to the load, yielding
$y \gets (\val, {\color{red}\tainted}, \tval)$.
The instruction $\load y {s \stdCapOff y} 1$ is fetched into entry $4$ by
Rule~\ref{hw:fetch-other}, but $\dexecute 4$ is \emph{disabled}: its
address evaluates to
$\sem{s \stdCapOff y}_{\aplsan (\pref 4 \buf, \reg)} = \bot$ because
$\aplsan$ drops the tainted $y$.

\section{Correctness and Security of \sys}
\label{sec:security}

In this section, we establish the two main properties of \sys, its correctness and security.
\emph{Correctness} ensures that every evaluation that is performed at the hardware level can be simulated at the ISA level.
\emph{Security} states that \sys is \emph{relative speculative constant time}. Intuitively, this property guarantees
that if a program does not leak secret data at the ISA level, it also
does not leak secret data under the hardware semantics of \sys.  For
brevity's sake, proofs and formal definitions are deferred to
\iftechreport
\Cref{sec:proofs}.
\else
\cite{scheri-proof}.
\fi

\paragraph{Correctness of \sys} The following result
establishes that \sys is correct with respect to its ISA-level
semantics.

\begin{theorem}[Functional Correctness]
  \label{thm:fc}
  Let $S = (\mem, \reg)$ and $C =(\mem, \reg, \buf, \mu)$
  be initial ISA and HW configurations, respectively.
  For every $n\in \Nat$, and configuration $C' = (\mem', \reg', \buf', \mu')$ such that
  \(
    C \step {}^n C',
  \)
  there exist $t\in \Nat$ and  $O \in \Obs^*$ such that $S \step{O}\!{}^{t} (\mem', \reg')$.
\end{theorem}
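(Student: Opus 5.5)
The plan is a simulation argument by induction on $n$, using an invariant that ties each reachable hardware configuration to an ISA configuration. An initial HW configuration has an empty ROB, so $\buf = \emptyset$ initially. For a HW configuration $(\mem', \reg', \buf', \mu')$, the invariant $\mathcal I$ will say three things. First, $(\mem,\reg) \step{O}\!{}^{t} (\mem',\reg')$ for some $t$ and $O$; committed state is always reachable architecturally. Second, every entry of $\buf'$ is \emph{well-formed relative to a speculative prefix}. Third, every entry that could still commit matches the ISA step taken from the state obtained by committing all earlier entries. The theorem only asks for the first component, but the other two are needed so that the commit rules can be simulated.

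Concretely, for each index $i$ in $\dom{\buf'}$ I would record the following facts.
\begin{itemize}
\item If $\buf'(i)$ carries tag $\varepsilon$ and every earlier entry with a control-flow tag $\stdCap\addr$ or forwarding tag $(\stdCap\addr,\sz,z)$ has been validated, then the instruction at $i$ is the one fetched by $\stdReadMemInst{\mem',\sem{\pc}_{\apl(\pref i {\buf'}, \reg')}}$.
\item A resolved value $x \gets (v,t,c)$ equals $\sem e_{\apl(\pref i{\buf'},\reg')}$ or the result of $\stdReadMem{}$ on the memory obtained by applying all earlier stores to $\mem'$.
\item An evaluated store $\store{(v',t,c')}{\stdCap\addr}{\sz}$ equals the $\stdWriteMem{}$ representation.
\end{itemize}
One subtlety needs explicit treatment here. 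Evaluation uses $\apl$ for the value but $\aplsan$ for the address. I would prove a small lemma: whenever $\sem{e}_{\lowproj{\reg}}$ is defined, it equals $\sem e_\reg$. This follows because the operators are eager in $\bot$. The lemma lets every sanitized evaluation be replaced by the unsanitized one. A second lemma is needed for the forwarding rule: if the forwarding store index $j$ is the latest aliasing store and no unresolved store lies in between, then $\stdReadRes{v_0,t_c,c_0,\sz}$ computed from the forwarded store agrees with the result of reading memory after the write. This needs a read-after-write identity for $\stdReadRes{}/\stdWriteRes{}$, which reduces to $\stdUnpackTaint{\stdPackTaint\cdot}$ being the identity plus case analysis on the rules of \Cref{fig:mem-op-semantics}. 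The exact-match premise $[\addr,\addr+\sz)=[\addr',\addr'+\sz')$ is what makes this case analysis go through.

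The inductive step then goes by case analysis on the directive.
\begin{itemize}
\item Fetch and execute rules leave $(\mem',\reg')$ unchanged, so the first component of $\mathcal I$ is immediate with the same $t$ and $O$. One must only re-establish the per-entry facts.
\item A hazard rule truncates with $\pref{i+1}{}$, which only removes entries and so preserves $\mathcal I$.
\item The commit rules are where actual ISA steps are produced. A commit of $x\gets v$ with $x\neq\pc$ corresponds to \ref{isa:isa-assign} or \ref{isa:isa-load}. A commit of $\pc\gets\ldots$ with tag $\varepsilon$ corresponds to \ref{isa:isa-jmp} or \ref{isa:isa-beqz}; the premise $x=\pc\Rightarrow\tg=\varepsilon$ guarantees that the target was validated against the real jump expression. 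A commit of a store corresponds to \ref{isa:isa-store}.
\end{itemize}
In each commit case, the oldest entry is evaluated against $\apl(\emptyset,\reg')=\reg'$. The invariant therefore gives exactly the premises of the matching ISA rule, and we extend the witness to $t+1$ steps with $O$ extended by the appropriate observation.

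The main obstacle I expect is committing a load whose value was obtained by speculative store-to-load forwarding, tag $(\stdCap\addr,\sz,j)$. The commit rule for assignments does not forbid a non-$\varepsilon$ tag when $x\neq\pc$, so correctness must come from the fact that any conflicting older store has already been validated. Rule \ref{hw:execute-store-ok} enforces this via its side condition; otherwise the store flushes, by a hazard rule in the appendix. Stores also must be executed before commit, since \ref{hw:commit-store} requires an evaluated store. I would strengthen $\mathcal I$ with the following statement: for every forwarded entry with tag $(\stdCap\addr,\sz,j)$, each store at index $k$ with $j<k<i$ is either unevaluated or non-aliasing. Then, at the moment the load reaches the head of the ROB, all such stores have committed. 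This forces the forwarded value to coincide with $\stdReadMem{\mem',\ldots}$ in the current memory. Getting this invariant exactly right against the full rule set in the appendix is where most of the effort would go.
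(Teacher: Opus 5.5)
Your overall shape matches the paper's proof. Both argue by induction on $n$ with a per-entry ROB well-formedness invariant stated relative to $\apl(\pref i \buf, \reg)$. In both, fetch and execute steps leave the committed state unchanged, each commit is simulated by one ISA step, and a forwarded load carries a clause that no evaluated aliasing store lies between it and its forwarding store. However, your argument breaks at Rule~\ref{hw:commit-store}. The fetch rules read instructions from the \emph{committed} memory, and Rule~\ref{hw:commit-assign} never re-fetches. So a younger entry fetched from an address that an older store later overwrites keeps the stale instruction.

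Concretely, let address $0$ hold a one-word store that writes the value $y \gets 1$ (instructions are values) through a $\readwrite$ capability to address $1$, which holds $y \gets 0$. Let the $\exec$ $\pc$ capability cover both addresses, and let $y$ initially be $5$. The hardware can fetch and execute both entries, commit the store, then commit $y \gets 0$. It reaches the new memory with $y = 0$, which no ISA run reaches, because the ISA executes $y \gets 1$ after the store. Hence your invariant that the instruction at $i$ is $\stdReadMemInst{\mem', \sem{\pc}_{\apl(\pref i {\buf'}, \reg')}}$ is not preserved. In fact the theorem holds only because the paper's initial configurations satisfy $\wf{\mem, \reg}$, in particular $\permAddr{\reg, \mem, \readwrite} \cap \permAddr{\reg, \mem, \exec} = \emptyset$, and your proof never invokes such a hypothesis.

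Repairing this needs three ingredients your plan lacks:
\begin{itemize}
\item The no-overlap hypothesis on writable and executable memory.
\item Its preservation along ISA steps, which is where the reachable-capability machinery enters. Capability operators only shrink metadata, capabilities loaded from tagged memory are reachable through a reachable capability, and stored capabilities were already reachable. Hence the reachable $\readwrite$ and $\exec$ regions never grow.
\item Reachability from the committed state of every capability in a speculative register file $\apl(\pref j \buf, \reg)$. This includes every in-flight $\pc$, in particular predicted and speculatively loaded jump targets.
\end{itemize}
With these, the committed store's range lies in the writable region and every in-flight fetch address in the executable one, so they are disjoint and the fetch fact survives the commit.

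Separately, your first bullet is not inductive as written. Forwarding tags are never cleared before commit, so ``every earlier forwarding tag has been validated'' fails whenever an older load is in flight. The invariant then says nothing about entry $i$ when it reaches the head. State the fetch fact unconditionally instead. That is true, because fetch evaluates $\pc$ over exactly the current prefix and later executions only make $\apl(\pref i \buf, \reg)$ more defined.
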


The proof of \Cref{thm:fc} is in
\iftechreport
\Cref{sec:fc-proof},
\else
\cite{scheri-proof},
\fi
and is carried out by induction on the length of the execution trace.

\paragraph{Security of \sys}
Our main result shows that the hardware semantics preserves ISA-level
security guarantees in the presence of speculation. Formally, we
require that hardware configurations that contain the same public data
and produce the same ISA-level leakage also produce the same leakage at
the hardware level.
We first define two ISA states $(\mem_0, \reg_0)$ and $(\mem_1, \reg_1)$ containing the same public values,
written $(\mem_0, \reg_0)\pubeq (\mem_1, \reg_1)$:
untainted values coincide in the two states.
Next, we define how two hardware configurations can leak
the same data.  Since the hardware-level semantics updates the
microarchitectural contexts at each transition with leaked values, it
suffices to require the identity of the target microarchitectural
contexts at each transition step. More formally, we say that two
hardware states
$C_0 = (\mem_0, \reg_0, \buf_0, \mu), C_1 = (\mem_1, \reg_1, \buf_1,
\mu)$ produce the same leakage traces (written
$C_0 \equiv_{\mathit{HW}} C_1$) whenever for every $n \in \Nat$ and
adversarial context $\mu'$, we have

\footnotesize
\[
  \exists \mem_0', \reg_0', \buf_0'. C_0 \sstep{}\!{}^{n}\, (\mem_0', \reg_0', \buf_0', \mu') \
  \Leftrightarrow\ \exists \mem_1', \reg_1', \buf_1'. C_1 \sstep{}\!{}^{n}\,
  (\mem_1', \reg_1', \buf_1', \mu').
\]
\normalsize

Note that this condition imposes the identity
of the target microarchitectural states that are updated at each step
with microarchitectural leakage.  With these definitions at hand,
security of \sys can now be stated as follows:

\begin{theorem*}[Relative Speculative Constant Time]{thm:sct}
Let
\begin{align*}
  S_0 & = \cf{\mem_0, \reg_0} & C_0 & = \cf{\mem_0, \reg_0, \bufinit, \mu} \\
  S_0' & = \cf{\mem_0', \reg_0'} & C_0' & = \cf{\mem_0', \reg_0', \bufinit, \mu}
\end{align*}
be initial ISA and HW configurations where $\dom \bufinit = \emptyset$.
If $S_0 \pubeq S_0'$ and $S_0 \equiv_{\mathit{ISA}}S_0'$, then we
have $C_0 \equiv_{\mathit{HW}} C_0'$.
\end{theorem*}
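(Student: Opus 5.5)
The plan is a lockstep unwinding argument, in the style of the ProSpeCT proof~\cite{prospect}. I would define a relation $\approx$ on pairs of hardware configurations with the same context $\mu$ and show that it is preserved by every step of Rule~\ref{hw:step}, with both sides reaching the same $\mu''$. Concretely, $(\mem_0,\reg_0,\buf_0,\mu)\approx(\mem_1,\reg_1,\buf_1,\mu)$ should hold when four conditions are met.
\begin{enumerate}
\item The ISA projections (memory plus committed registers) are $\pubeq$ and are $\equiv_{\mathit{ISA}}$-related. Equivalently, they are reachable from $S_0,S_0'$ by ISA runs with identical observation traces, so their future ISA leakage coincides.
\item The ROBs have the same domain, the same instruction shapes and the same speculation tags.
\item The untainted projections agree pointwise: $\lowproj{\buf_0}=\lowproj{\buf_1}$, and $\lowproj{\apl(\pref i{\buf_0},\reg_0)}=\lowproj{\apl(\pref i{\buf_1},\reg_1)}$ for every $i$.
\item Every memory location or tagged pair that a load could classify as untainted (by Rules~\ref{helpersys:read-data} with an untainted capability, or \ref{helpersys:unspill-value}, or the stored taint in Rule~\ref{helpersys:read-cap}) holds equal contents.
\end{enumerate}
The initial pair $(C_0,C_0')$ satisfies $\approx$ trivially, since $\bufinit$ is empty. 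Induction on $n$ then gives both directions of $C_0\equiv_{\mathit{HW}}C_0'$, because the relation is symmetric.

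For the step lemma, condition 3 gives $\stdUpdateMu{\mu,\lowproj{\buf_0}}=\stdUpdateMu{\mu,\lowproj{\buf_1}}$. Hence the directive $d$ and the prediction $\stdPredPc{\cdot}$ coincide on both sides, and I would do a case analysis on $d$ and the rule that applies.

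Fetch leaks $\sem{\pc}_{\aplsan(\buf,\reg)}$, which depends only on untainted values. The instruction fetched is the same because code is read through an untainted $\pc$ capability (Rule~\ref{memory:read-mem-inst}), so by condition 4 it is public.

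For an execute directive at index $i>\min\dom\buf$, every address, jump target or branch guard that gets leaked is evaluated under $\aplsan$, so it is equal on both sides. Enabledness is also equal: the premise is either $\bot$ on both sides or defined and identical on both. Loads need more care. A value forwarded by Rule~\ref{hw:execute-load-fwd}, or read from memory, gets taint $\taint_c$ or the stored taint. Whenever the result is untainted, it comes from an untainted store value (forwarding requires $t_0\sqsubseteq t_c$) or from a public memory cell, and both are equal by conditions 3 and 4. The \textsf{spill} and \textsf{write} helpers preserve condition 4 because URRs and $\stdPackTaint{\cdot}$ are written only for untainted values or capabilities, and a tainted store through an untainted-inner capability would be visible as an ISA $\stdStore{v,c}$ observation.

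The non-speculative cases need separate treatment: execution at the head of the ROB, where $\aplsan$ is the plain $\reg$, and the commit rules. There the leaked quantities (store addresses, jump targets, the $\beqz$ outcome) may depend on tainted values. These are exactly the cases where I would invoke $\equiv_{\mathit{ISA}}$. By Theorem~\ref{thm:fc} and a strengthening of its proof, the oldest ROB entry corresponds to the next ISA instruction from the current architectural state. Its hardware leakage is a function of its ISA observation: $\stdLoad c$, $\stdStore{v,c}$ or $\stdBr{\cdot}$. Equality of the ISA traces therefore gives equality of the leaks, and also keeps condition 1 after a commit.

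The main obstacle is exactly this bridge between ISA and hardware. I need an invariant relating each committed or head-of-ROB entry to an ISA step. The hard part is ruling out the possibility that a tainted value reaches a non-head instruction through $\apl$ (not $\aplsan$) in a way that is later leaked. Examples are the stored value in Rule~\ref{hw:execute-store-ok}, and a mispredicted target whose resolution in Rule~\ref{hw:execute-jmp-hazard} flushes differently on the two sides. I would handle this as follows. First, show that flush and hazard decisions are taken only on $\aplsan$-evaluated targets, so the rollback is identical on both sides. Second, show that stored values affect the context only through addresses, with values reaching it only via $\lowproj{\buf}$. Third, check that the retagging of $\pc$ as $\untainted$ in jump resolution is justified, because the target itself was already untainted or architecturally observed. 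Beyond this, the remaining rules in \Cref{sec:apprules} are routine case checks against the invariant.
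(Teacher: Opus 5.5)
Your lockstep architecture is essentially the paper's. Your relation is \Cref{def:pub-eq} together with the ISA correspondence the paper obtains from functional correctness; condition 2 and the $\apl$ part of condition 3 follow from $\lowproj{\buf}=\lowproj{\buf'}$ and \Cref{lemma:aplsaneqaux}. As in the paper, speculative leaks are neutralised by $\aplsan$, and head-of-ROB and commit leaks are matched by transporting $\equiv_{\mathit{ISA}}$ through functional correctness.

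The genuine gap is that you never assume or maintain the paper's second key invariant, ISA well-formedness (\Cref{def:isa-wf}). It says that memory reachable through capabilities with inner taint $\tainted$ is disjoint from memory reachable through capabilities with inner taint $\untainted$. Your argument that stores preserve condition~4 covers URR and capability writes, and tainted values stored through untainted-inner capabilities. It does not cover a store through a \emph{tainted}-inner capability to a cell that some untainted-inner capability also covers. There, Rule~\ref{helpersys:write-data} writes the secret with tag $\tval$, and the ISA observation $\stdStore{\bot, c}$ reveals nothing. A later load through the untainted capability then returns the secret tagged $\untainted$ by Rule~\ref{helpersys:read-data}. Condition~4 fails, and your induction breaks at the commit of such a store.

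Without this hypothesis, which the theorem statement leaves implicit, the claim is in fact false. Take:
\begin{itemize}
\item registers $t$ and $u$ holding capabilities that point to $\addr$ with bounds $[\addr,\addr+2)$, with inner taints $\tainted$ and $\untainted$ respectively;
\item a register $x$ holding a secret $s\in\{0,1\}$;
\item the program $\store{x}{t}{1}$, then $\beqz{z}{\cdot}$ with a public $z$ that architecturally branches into a loop, with $\load{y}{u}{1}$ and $\load{y}{u \stdCapOff y}{1}$ on the fall-through path.
\end{itemize}
The ISA traces coincide and $S_0\pubeq S_0'$ holds. Yet a context that mispredicts the branch and commits the store before executing the loads obtains $y\gets(s,\untainted,\tval)$ by Rule~\ref{hw:execute-load-mem}. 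This entry is exposed through $\lowproj{\buf}$ and then leaked as the address of the second load.

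The paper's proof relies on well-formed initial states; \Cref{lemma:isa-pubeq} and \Cref{lemma:hw-wf-invariant} require them. It preserves well-formedness because the set of reachable capabilities never grows (\Cref{lemma:isa-wf}, via the capability-operator axioms and \Cref{lemma:storemono1,lemma:storemono2}). It uses \Cref{lemma:exp-cap-mono} to show that speculative accesses stay within that set. Disjointness is what lets public memory equivalence be stated per region taint, as in \Cref{def:pub-eq-isa}. \Cref{lemma:addr-same-taint} then confines any access through a capability with inner taint $\taint_c$ to the $\taint_c$ region, which closes both your memory-load case and your store case.

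The other half of \Cref{def:isa-wf}, that writable and executable memory are disjoint, is equally needed for your strengthened functional-correctness bridge. Fetch reads committed memory while stores are still in flight, so without it the head ROB entry need not be the next ISA instruction.
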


The proof is carried out by induction on
the length of the execution trace and relies on key invariants.
The main one is the preservation of public equivalence at the hardware level,
established via the functional correctness of \sys (\Cref{thm:fc}),
which helps relate leakage at the ISA level and the HW level.
Another is the absence of memory
regions referenced simultaneously by tainted and untainted
capabilities, which prevents using untainted capabilities to access
and leak tainted data.

\section{Comparison with Other Designs}
\label{sec:comparison}

In this section, we compare \sys with different designs. To this end,
we start in \Cref{sec:modelingtooba} by modeling CHERI-Toooba's design
issues outlined
in~\cite{fuchsMasterThesis,fuchs2024csc,fuchsPhdThesis}, and showing
that the resulting system does not satisfy the CSC contract. We then
model \blackout~\cite{blackout} formally in
\Cref{sec:modelingbo}, and show that the
resulting system can leak secrets through microarchitectural
side channels.

\subsection{Modeling CHERI-Toooba's CSC Violations}
\label{sec:modelingtooba}

Fuchs et al.~\cite{fuchsMasterThesis,fuchs2024csc,fuchsPhdThesis} describe several
violations of the CSC contract in CHERI-Toooba.
These vulnerabilities fall into two categories:
violations caused by control-flow speculation crossing the program counter capability boundaries,
and violations arising from capability manipulation,
where a modified capability can be used transiently before the corresponding monotonicity checks are enforced.
In both cases, transient execution may perform memory accesses that violate capability bounds or permissions.
In the following, we demonstrate that our formal semantics can capture these classes of vulnerabilities with only minimal modifications.

\paragraph{Speculative control-flow violation}
This vulnerability~\cite{fuchsMasterThesis} occurs when the
branch target predictor predicts a jump target that is not allowed by
the current program counter capability, violating the CSC contract. Such a
vulnerability can be modeled by modifying
Rule~\ref{hw:fetch-branch-predict-pc} as follows:
\[
\Infer[hw][fetch-branch-predict-pc-cap]{
  \begin{array}[b]{c}
  (\stdCap{\addr}, t, c)=\sem{\pc}_{\aplsan (\buf, \reg)}\quad
  \instr \in \qty{\beqz x \addr'', \jmp e}\\
  \instr = \stdReadMemInst{\mem, (\stdCap{\addr}, t, c)}\hfill
  i = \sup\dom{\buf}\\
  \buf' = \buf[i + 1\mapsto \pc\gets\issuemath{\stdPredPcCap{\mu}} \specat{\stdCap{\addr}}]\\
  \end{array}
}{
  \cf{\mem, \reg, \buf, \mu}
  \sstep{\dfetch}{}
  \cf{\mem, \reg, \buf', \stdUpdateMu{\mu, \sem{\pc}_{\aplsan (\buf, \reg)}}}
}
\]
The main difference between this rule and
Rule~\ref{hw:fetch-branch-predict-pc} lies in the
$\stdPredPcCap{\mu}$ function highlighted in red: while in our system, the function
predicts the offset in the $\pc$ register, here it can predict the
whole $\pc$ capability, potentially violating the architectural
boundaries of the $\pc$, and thereby the CSC contract (and also our security property).

\paragraph{Speculative capability manipulation} This class of
vulnerability~\cite{fuchs2024csc,fuchsPhdThesis} is caused
by race conditions in the hardware between capability usage and validation that can
occur with operations modifying capability offsets
or, more generally, constructing capabilities from existing ones. Specifically,
the constructed capability can be used during transient execution
before the hardware ensures that such a capability has smaller range
and permissions than an existing one, thus violating the CSC contract.
To model these vulnerabilities,
we extend the language of expressions with an additional operator $\stdBuildCap{\cdot, \cdot}$
corresponding to CHERI's \texttt{cbuildcap} instruction.
It has the following interpretation:

\small
\begin{align*}
\sem{\stdBuildCap{e_1, e_2}}_\reg & \defsym
\begin{cases}
\bot & \text{if } \sem{e_1}_\reg = \bot \lor \sem{e_2}_\reg = \bot \\
(v_2, \taint_1 \sqcup \taint_2, \tcap) &
\text{if }\sem{e_1}_\reg = (v_1, \taint_1, \tcap)\\
& \land \sem{e_2}_\reg = (v_2, \taint_2, \tval)\\
& \land v_2 \sqsubseteq v_1\\
(0, \untainted, \tval) & \text{otherwise}.
\end{cases}
\end{align*}
\normalsize

The primitive satisfies the axioms for capability operators in \Cref{sec:exprsem}.
First, the interpretation above satisfies Axiom~(\ref{ax:capop1}) by construction.
Second, Axioms~(\ref{ax:capop2}) and (\ref{ax:capop3}) are satisfied given the following definition of
\iftechreport
$v_2\sqsubseteq v_1$, for $v_1, v_2\in \Cap$ (\Cref{def:cap-mono}):
\else
$v_2\sqsubseteq v_1$, for $v_1, v_2\in \Cap$~\cite{scheri-proof}:
\fi
it requires monotonicity of permissions and bounds, as well as equality of the inner taints of $v_1$ and $v_2$.

The primitive is sufficiently expressive to capture arbitrary monotone capability manipulations:
$e_1$ is a valid capability derived in accordance with the capability monotonicity requirement,
which we call the \emph{guarantor} capability;
$e_2$ expresses the arithmetic operations to build the binary representation of the new capability.

To model the race conditions that occur in CHERI-Toooba with capability manipulation,
we extend the HW semantics with a rule for executing an instruction that builds a capability.
\[
\Infer[hw][execute-forge]{
  \begin{array}[b]{c}
  \buf(i) = x\gets \stdBuildCap{e_1, e_2} \specat{\varepsilon}\quad
  x \neq \pc\\
  (v_1, \taint_1, \tcap) = \sem{e_1}_{\apl (\pref i \buf, \reg)}\quad
  (v_2, \taint_2, \tval) = \sem{e_2}_{\apl (\pref i \buf, \reg)}
  \end{array}
}{
  \cf{\mem, \reg, \buf, \mu}
  \sstep{\dexecute{i}}{}
  \cf{\mem, \reg, \buf[i \mapsto x \gets (v_2, \taint_1 \sqcup \taint_2, \tcap) \specat{(v_1, v_2)}], \mu}
}
\]
In this rule, the value of $e_2$ is promoted to a capability
without enforcing the monotonicity constraints between $e_1$ and $e_2$;
these are instead deferred to commit time, as specified by the next rule:

\small
\[
  \Infer[hw][commit-forge]{
    i = \min\dom{\buf}&
    x\neq\pc&
    \buf(i) = x \gets (v_2, \taint, \tcap) \specat{(v_1, v_2)}&
    v_2 \sqsubseteq v_1
  }{
    \cf{\mem, \reg, \buf, \mu}
    \sstep{\dcommit}{}
    \cf{\mem, \reg[\pc \mapsto \sem{\stdNextPc{\pc}}_\reg][x\mapsto (v_2, \taint, \tcap)], \buf \setminus i, \mu}
  }
\]
\normalsize

Here, the rule uses the speculation tag $(v_1, v_2)$---corresponding to the values of the \emph{guarantor} and the new capability---to check whether the new capability satisfies the monotonicity constraints.

By leveraging Rule~\ref{hw:execute-forge}, an attacker can issue a
$\stdBuildCap$ instruction during speculation, and use the resulting
capability to perform memory accesses that violate the CSC contract.
Such problematic memory accesses would also violate our relative speculative constant time property, since the resulting capability may allow accesses to memory locations containing secrets.

These violations also illustrate how much control an attacker
has over \emph{which} secrets are leaked. Because CHERI-Toooba
admits transient accesses that violate capability bounds, an
attacker can steer speculation towards secrets that are never
accessed architecturally, exactly as in classical
Spectre attacks on non-capability architectures.
A CSC-compliant design reduces
this control: the attacker can only dereference
capabilities reachable from the committed architectural state and
leak the memory they authorize. \sys pushes this even
further, by preventing leakage of \emph{any} secret data.

\subsection{Modeling \blackout}
\label{sec:modelingbo}

To model \blackout \cite{blackout}, there are two largely orthogonal differences to consider.
First, \blackout attempts to eagerly detect violations of the data-oblivious programming model and will generate faults when tainted values are stored to memory through non-blinded capabilities (except in the case of spilling tainted registers into BRRs, see below).
In our formal model, this corresponds to an alternative version of the rules in \Cref{fig:blackout-mem-op-semantics}.
The additional condition $t \sqsubseteq \taint_c$ in rule \ref{helpersys:blackout-write-data} (highlighted) will make execution get stuck when a tainted value is written to memory through a non-blinded capability, modeling the exception that \blackout{} would generate.
Additionally, the extra conditions in rules \ref{memory:blackout-read-mem} and \ref{memory:blackout-write-mem} (highlighted) will make execution get stuck when memory is accessed through a tainted capability.

This eager detection of data-obliviousness violations is a consequence of the different threat model targeted by both systems.
As explained in \Cref{sec:blackout}, \blackout{} strives to enforce speculative constant time under five assumptions on user code.
The check $t \sqsubseteq \taint_c$
in \Cref{fig:blackout-mem-op-semantics} enforces the first (I1) of these assumptions, while other assumptions only hold for code generated by a trusted compiler.
Our threat model is different: for assembly code that is constant time (i.e., data oblivious) in the architectural semantics, we ensure that speculation will not create new leaks and preserves constant-time behavior.
As such, we make no attempt to enforce architectural security of the code and restrict ourselves to correctly propagating but not enforcing taint in the architectural semantics.

\begin{figure}
  \centering
  \small
  \begin{rules}[Eager detection of data-obliviousness violations]
    \Infer[helpersys][blackout-write-data]{
      \sz = \szAddr & \issuemath{\taint \sqsubseteq \taint_c}
    }{
      \stdWriteRes{(v, \taint, c), \taint_c, \sz} = (v, \tval)
    }\\
    \Infer[memory][blackout-read-mem]{
      \begin{array}[b]{c}
        \stdCheckCap{(\addr, (\perm, \rbase, \rend, \taint_c)), \sz, \readonly}\\
        \stdReadRes{\memData[\addr, \addr + \sz), \taint_c, \memTag[\addr / \szCap], \sz} = (v, t, c)
      \end{array}
    }{
      \stdReadMem{(\memData, \memTag), ((\addr, (\perm, \rbase, \rend, \taint_c)), \issuemath{\untainted}, \tcap), \sz} = (v, t, c)
    }\\
    \Infer[memory][blackout-write-mem]{
      \begin{array}[b]{c}
        \stdCheckCap{(\addr, (\perm, \rbase, \rend, \taint_c)), \sz, \readwrite}\\
        \stdWriteRes{(v, t, c), \taint_c, \sz} = (v', c')\\
        \mem' = (\memData[[\addr, \addr + \sz) \mapsto v'], \memTag[\addr/\szCap \mapsto c'])
      \end{array}
    }{
      \stdWriteMem{(\memData, \memTag), ((\addr, (\perm, \rbase, \rend, \taint_c)), \issuemath{\untainted}, \tcap), \sz, (v, t, c)} = \mem'
    }
  \end{rules}
  \begin{rules}[Treatment of BRRs]
    \Infer[helpersys][blackout-unspill-value]{
      v_m = \stdMagic &
      c = \tcap
    }{
      \stdReadRes{(v, v_m), \taint_c, c, \szCap} = ((v, v_m), \issuemath{\tainted}, \tval)
    }\\
    \Infer[helpersys][blackout-spill-value]{
      (v_m', c') = (\taint = \issuemath{\tainted}) \;?\; (\stdMagic, \tcap) : (v_m, c)
    }{
      \stdWriteRes{((v, v_m), \taint, c), \taint_c, \szCap} = ((v, v_m'), c')
    }
  \end{rules}
  \caption{Rules modeling eager detection of data-obliviousness violations and BRRs in \blackout.}
  \label{fig:blackout-mem-op-semantics}
\end{figure}

The second important change with respect to \blackout{} is that we adopt a treatment of the stack that is \emph{dual} to \blackout{}'s.
We treat the stack capability as always tainted,
and mark spills of \emph{public} values using a distinguished encoding, URR,
that sets the capability tag and uses a reserved marker value in the metadata field.
\Cref{fig:blackout-mem-op-semantics} shows modified rules that revert to \blackout{}'s original treatment of BRRs.
A consequence of \blackout{}'s design is that secrets are stored in memory that is accessible through non-blinded capabilities, only guarded by the BRR's capability tag and magic value.
To preserve the tainted status of such values, it is crucial to enforce atomicity of BRR writes (the magic value or tag must not be overwritten without erasing the secret value) and to ensure all reads (including partial reads) of the BRR secret value taint the result.
\blackout{} does not enforce these two properties in hardware, but instead relies on invariants of software.
\Cref{sec:motivating-examples} shows that constructing software that satisfies these invariants is highly non-trivial, especially because the invariants must hold in transient execution as well as architecturally.

By reversing the treatment of BRRs, atomicity of URR writes becomes less important: if the magic value or the tag is overwritten, then an untainted value will be treated as tainted, which only risks degrading performance.
Similarly, if a URR value is read using an instruction (e.g., a partial read) that does not correctly take into account the URR, this will also only result in unnecessarily
marking the value that has been read as tainted.
Moreover, URRs are unforgeable, just like \cheri{}'s capabilities.
A URR is created only when the hardware sets the capability tag and the magic value while spilling a public value.
Since software cannot set capability tags itself (a property inherited from \cheri{}), writing the magic value alone leaves the tag clear and yields no URR.
Conversely, writing a secret over an existing URR clears its tag and thereby destroys it.
Hence, \sys{} guarantees that URRs hold public values, and achieves sound taint tracking without any assumption on software,
preventing \blackout{}'s Issue~\ref{issue:brr-all}.

\blackout narrows the attacker's control on speculative memory accesses compared to CSC-compliant systems, but less
than \sys. In \blackout, only secrets that are reachable through a
\emph{non-blinded} capability can leak.
However, \blackout keeps the stack capability non-blinded,
while still allowing programs to create secret stack variables,
as long as these variables are accessed only through blinded capabilities during non-speculative execution.
As a result, an attacker who controls the stack capability's offset can speculatively reach a leak gadget (such as the one in Issue~\ref{issue:stack}) and leak \emph{arbitrary} stack secrets.
Such leakage is impossible in \sys{}, since it avoids any overlap between blinded and non-blinded capabilities, preventing \blackout{}'s Issue~\ref{issue:overlap}.

\section{Related Work}
\label{sec:rw}

Since CSC~\cite{fuchs2024csc} and BLACKOUT \cite{blackout} are discussed extensively throughout the paper, we focus this section on other work.

\paragraph{Formal models of transient execution.}
A large body of prior work develops formal semantics for speculative processors to detect or prove absence of transient leaks. These cover variants~\cite{surveytransient,machineclears} such as Spectre-PHT, Spectre-BTB, Spectre-STL, and have enabled tools for leak detection, symbolic analysis, and formal noninterference proofs~\cite{guarnieri2020spectector,CTfoundations,catspectre,binsechaunted,HighAssurance,survey}.
Our semantics builds on this line of work, but extends it with capability state, capability provenance, and capability-specific authorization checks that are central in CHERI-like systems.
A foundational direction~\cite{guarnieri2021hardware} formalizes security guarantees as contracts between hardware and software. Several prior works characterize the attacker-visible behavior of speculative processors and show how software guarantees must be matched to hardware behavior to obtain end-to-end confidentiality. This contract-based view is closely aligned with our approach.

\paragraph{Relative security}
Prior works~\cite{cheang2019formal,guanciale2020inspectre,dongol2024relative} introduce general security notions to model speculative execution vulnerabilities, which relate execution traces produced with and without speculation to determine whether speculation introduces additional leakage.
We compare our security property (\Cref{thm:sct}) against the state-of-the-art notion in \cite{dongol2024relative}.
First, our threat model is similar to theirs: both include control-flow and memory-access addresses in the attacker's observation.
We additionally leak the value written by a store whose capability has untainted inner taint, i.e., when storing to public memory regions.
We adopt a simpler setup,
not modeling secret input introduced during execution or interactive attacker actions.
Second, we require an additional assumption of public equivalence on initial states,
which states that all secrets in the initial state are already properly tracked by taint bits on registers and capabilities.
Third, the security property of \cite{dongol2024relative}, instantiated to our setting, reads:
if the attacker observes a difference between a pair of HW traces, then there must exist a pair of ISA traces that carry the same secrets as the HW traces (respectively) and are also attacker-distinguishable.
\Cref{thm:sct} is a stronger, less general version of this statement: rather than asserting the existence of such ISA traces, we construct them by executing the ISA semantics from the same initial ISA state as the corresponding HW trace.

\paragraph{Provably secure speculation for constant time.}
Several works study how hardware support can maintain the classic constant-time programming model during speculative execution. ProSpeCT~\cite{prospect} develops a processor model with secrecy tracking in the pipeline and proves that constant-time software remains secure under speculative and out-of-order execution across a range of predictors and speculation policies.
Importantly, it assumes a fixed partitioning of memory into secret and public regions,  which forces the stack to reside entirely in secret or public memory, with complementary efficiency-security trade-offs. \sys instead integrates secrecy into CHERI's capability system: capabilities keep track of the data's secrecy level, which is propagated through capability derivation, and preserved across spills and restores. This design choice enables secret and public objects to coexist within the same stack, but requires a new design of capability metadata, stack management (URR), and information-flow issues that do not arise in ProSpeCT.

More broadly, taint-based and delayed-transmitter processor designs demonstrate that dynamic secrecy tracking or speculative shadow structures can preserve performance while blocking Spectre-style leaks~\cite{nda,context,stt,spt,prospect}. Our work shares this goal, but addresses the additional challenges introduced by capabilities, capability-derived pointers, and capability-mediated memory accesses.
Relative to this line of work, our approach is closer to BLACKOUT, but BLACKOUT does not provide security proofs. To the best of our knowledge, our work is the first one to provide provably secure speculation for constant-time in CHERI.

\paragraph{Machine-level enforcement after compilation.}
A related line of work observes that source-level security guarantees may be invalidated during compilation, register allocation, and stack layout. Recent work such as SecSep~\cite{secsep} addresses this problem for speculative side-channel defenses by rewriting compiled assembly code to separate secret and public data after compilation, thereby accounting for register spills, stack slots, and other low-level effects that are difficult to control at the source level.
SecSep improves  how ProSpeCT~\cite{prospect}  handles secrets in memory.
 This perspective closely aligns with our emphasis on machine-level reasoning: in capability systems, speculative leaks may similarly arise from transient interactions with spilled values, stale stack contents, or lowered memory accesses that are invisible in higher-level models. Our work differs in targeting CHERI-like capability machines and providing formal secure-speculation guarantees for capability-aware hardware semantics rather than compiler rewriting alone.

\section{Conclusion}
\label{sec:conclusion}

Architectural isolation in CHERI does not guarantee confidentiality under speculation. We showed that secure speculation for capability machines requires reasoning jointly about authorization and information flow, and presented SCHERI with formal guarantees for preserving constant-time behavior under speculation.
Our work provides a foundation for future capability systems that  should remain secure not only architecturally, but also microarchitecturally.

\begin{acks}
This research is partially funded by the Air Force Office of Scientific Research (AFOSR) under grant FA9550-22-1-0511, the Internal Funds KU Leuven, the Cybersecurity Research Program Flanders, the Research Foundation -- Flanders (FWO) under grant number G081322N,
and a European Research Council (ERC) Starting Grant (UniversalContracts; 101040088) funded by the European Union.
Views and opinions expressed are those of the authors only and do not necessarily reflect those of the European Union or the European Research Council.
\end{acks}

\bibliographystyle{ACM-Reference-Format}
\bibliography{paper}

\appendix 

\crefalias{section}{appendix}
\crefalias{subsection}{appendix}
\crefalias{subsubsection}{appendix}

\section{Open Science} 
\label{sec:openscience}

The code of our exploits for the \blackout prototype is released at \url{https://doi.org/10.5281/zenodo.22755930}.
We additionally reported issues in the hardware,\footnote{\url{https://github.com/blindedcapabilities/blinded-toooba/issues/1}} allowing BRR writes through unblinded capabilities, and in the compiler,\footnote{\url{https://github.com/blindedcapabilities/blinded-cheri-llvm/issues/1}} which does not set blindedness properly for struct fields.



\section{Ethical Considerations} 


We responsibly disclosed all identified issues to the authors of \blackout prior to submission.
As it is a research prototype, no users are directly affected by this research.
We believe that our work improves the understanding and security of speculative execution in capability systems and contributes to the design of more secure capability-based processors.
These considerations motivated our decision to submit this work.

\iftechreport
\section{Full hardware semantics}
\label{sec:apprules}

\begin{figure*}
  \small
  \columnwidth=\linewidth
\centering
\begin{rules}[Step rule]
\vspace{-2mm}
  \Infer*[hw][step]{
  \mu'= \stdUpdateMu{\mu, \lowproj{\buf}}&
  d = \stdNextMu{\mu'}&
  \cf{\mem, \reg, \buf, \mu'} \sstep{d}{} \cf{\mem', \reg', \buf', \mu''}
}{
  \cf{\mem, \reg, \buf, \mu} \sstep{}{} \cf{\mem', \reg', \buf', \mu''}
}
\end{rules}
\vspace{-2mm}
\begin{rules}[Rules for fetch steps]
\Infer*[hw][fetch-branch-predict-pc]{
  \begin{array}[b]{c}
  (\stdCap{\addr}, t, c)=\sem{\pc}_{\aplsan (\buf, \reg)}\hfill
  \instr \in \qty{\beqz x \addr'', \jmp e}\\
  \instr = \stdReadMemInst{\mem, (\stdCap{\addr}, t, c)}\hfill
  i = \sup\dom{\buf}\\
  \buf' = \buf[i + 1\mapsto \pc\gets((\stdPredPc{\mu}, \stdCapMeta{\addr}), t, c)\specat{\stdCap{\addr}}]\\
  \end{array}
}{
  \cf{\mem, \reg, \buf, \mu}
  \sstep{\dfetch}{}
  \cf{\mem, \reg, \buf', \stdUpdateMu{\mu, \sem{\pc}_{\aplsan (\buf, \reg)}}}
}
\Infer*[hw][fetch-other]{
  \begin{array}[b]{c}
  \instr = \stdReadMemInst{\mem, \sem{\pc}_{\aplsan (\buf, \reg)}}\\
  \instr \not\in \qty{\beqz x \addr'', \jmp e}\\
  \buf' = \buf[\sup\dom{\buf} + 1\mapsto \instr\specat{\varepsilon}]
  \end{array}
}{
  \cf{\mem, \reg, \buf, \mu}
  \sstep{\dfetch}{}
  \cf{\mem, \reg, \buf', \stdUpdateMu{\mu, \sem{\pc}_{\aplsan (\buf, \reg)}}}
}
\end{rules}
\vspace{-2mm}
\begin{rules}[Rules for assignments]
  \Infer*[hw][execute-assign]{
  \buf(i) = x\gets e \specat{\varepsilon}&
  x \neq \pc&
  (v, t, c) = \sem{e}_{\apl (\pref i \buf, \reg)}
}{
  \cf{\mem, \reg, \buf, \mu}
  \sstep{\dexecute{i}}{}
  \cf{\mem, \reg, \buf[i \mapsto x \gets (v, t, c) \specat{\varepsilon}], \mu}
}
\\
\Infer*[hw][execute-jmp-ok]{
  \begin{array}[b]{c}
  \buf(i) = \pc\gets \addr' \specat{\stdCap{\addr}}\hfill
  \buf' = \buf[i\mapsto \pc\gets \addr' \specat{\varepsilon}]~~
  \addr' = (\addr_0, \_, \tcap)\\
  \jmp{e} = \stdReadMemInst{m, \sem{\pc}_{\aplsan(\pref i \buf, \reg)}}~~
  (\addr_0, \_, \tcap) = \sem{e}_{\aplsan (\pref i \buf, r)}\\
  \end{array}
}{
  \cf{\mem, \reg, \buf, \mu}
  \sstep{\dexecute{i}}{}
  \cf{\mem, \reg, \buf', \stdUpdateMu{\mu, \addr_0}}
}
\Infer*[hw][execute-jmp-hazard]{
  \begin{array}[b]{c}
  \buf(i) = \pc\gets (\addr_1, \_, \_)\specat{\stdCap{\addr}}~~
    \jmp{e} = \stdReadMemInst{m, \sem{\pc}_{\aplsan(\pref i \buf, \reg)}} \\
    \addr_1 \neq \addr_0 \hfill (\addr_0, \_, \tcap) = \sem{e}_{\aplsan (\pref i \buf, r)} \hfill
  \buf' = \pref {i + 1} \buf[i\mapsto \pc\gets (\addr_0, \textcolor{red}{\untainted}, \tcap) \specat{\varepsilon}]
  \end{array}
}{
  \cf{\mem, \reg, \buf, \mu}
  \sstep{\dexecute{i}}{}
  \cf{\mem, \reg, \buf', \stdUpdateMu{\mu, \addr_0}}
}
\\
\Infer[hw][execute-beqz-ok]{
  \begin{array}[b]{c}
    \beqz{x}{\addr''} = \stdReadMemInst{m, \sem{\pc}_{\aplsan(\pref i \buf, \reg)}}\\
    \buf(i) = \pc\gets \addr' \specat{\stdCap{\addr}}\hfill \mathit{z}{} = (v = 0) \;?\; \addr'': 1 \\
     (v, t, c) = \sem{x}_{\aplsan (\pref i \buf, r)}
    \quad
  \addr' = \sem{\pc \stdCapOff {\mathit{z}}}_{\apl(\pref i \buf, \reg)}
  \end{array}
}{
  \cf{\mem, \reg, \buf, \mu}
  \sstep{\dexecute{i}}{}
  \cf{\mem, \reg, \buf[i\mapsto \pc\gets \addr' \specat{\varepsilon}] , \stdUpdateMu{\mu, (v=0)}}
}
\Infer[hw][execute-beqz-hazard]{
  \begin{array}[b]{c}
    \beqz{x}{\addr''} = \stdReadMemInst{m, \sem{\pc}_{\aplsan(\pref i \buf, \reg)}}\\
    \buf(i) = \pc\gets \addr' \specat{\stdCap{\addr}}
    \hfill \mathit{z}{} = (v = 0) \;?\; \addr'': 1 \\
     (v, t, c) = \sem{x}_{\aplsan (\pref i \buf, r)} \quad
  \addr' \neq \addr_0 = \sem{\pc \stdCapOff {\mathit{z}}}_{\apl(\pref i \buf, \reg)}
  \end{array}
}{
  \cf{\mem, \reg, \buf, \mu}
  \sstep{\dexecute{i}}{}
  \cf{\mem, \reg, \pref {i + 1} \buf[i\mapsto \pc\gets \addr_0 \specat{\varepsilon}], \stdUpdateMu{\mu, (v=0)}}
}
\end{rules}
\vspace{-2mm}
\begin{rules}[Rules for executing loads and stores]
\Infer[hw][execute-load-mem]{
  \begin{array}[b]{c}
    \buf(i) = \load{x}{e}{\sz} \specat{\varepsilon}\hfill
    x\neq \pc
    \hfill
    ((\addr, (\perm, \rbase, \rend, \taint_c)), \_, \tcap) = \sem{e}_{\aplsan (\pref i \buf, \reg)}
    \hfill \stdCheckCap{(\addr, (\perm, \rbase, \rend, \taint_c)), \sz, \readonly} \\
    (\forall j < i,\; \buf(j) = \store{x'}{\stdCap{\addr'}}{\sz'}\specat{\tg}
    \Rightarrow [\addr, \addr + \sz) \cap [\addr', \addr' + \sz') = \emptyset)\quad
    (v, t, c) = \stdReadRes{\memData[\addr, \addr + \sz], \taint_c, \memTag[\addr / \szCap], \sz}
    \\
  \end{array}
}{
  \cf{(\memData, \memTag), \reg, \buf, \mu}
  \sstep{\dexecute{i}}{}
  \cf{(\memData, \memTag), \reg, \buf[i \mapsto (x\gets (v, t, c)\specat{((\addr, (\perm, \rbase, \rend, \taint_c)), \sz, \bot)})], \stdUpdateMu{\mu, (\addr, (\perm, \rbase, \rend, \taint_c))}}
}\\
\Infer*[hw][execute-load-fwd]{
  \begin{array}[b]{c}
    \buf(i) = \load{x}{e}{\sz} \specat{\varepsilon}\hfill
    x\neq \pc \hfill
    [\addr, \addr + \sz) = [\addr', \addr' + \sz')  \hfill
    ((\addr, (\perm, \rbase, \rend, t_c)), \_, \tcap) = \sem{e}_{\aplsan (\pref i \buf, \reg)} \hfill
    \stdCheckCap{(\addr, (\perm, \rbase, \rend, t_c)), \sz, \readonly}\\
    (j = \max \lbrace
    j < i: \buf(j) = \store{\_}{\stdCap{\addr'}}{\sz'}\specat{\tg}
    \land [\addr, \addr + \sz) \cap [\addr', \addr' + \sz') \neq \emptyset
  \rbrace)
    \quad
    t_0 \sqsubseteq t_c \quad
  \buf(j) = \store{(v_0, t_0, c_0)}{\stdCap{\addr'}}{\sz'}\specat{\tg}
    \\
  \end{array}
}{
  \cf{\mem, \reg, \buf, \mu}
  \sstep{\dexecute{i}}{}
  \cf{\mem, \reg, \buf[i \mapsto (x\gets \stdReadRes{v_0, t_c, c_0, \sz}\specat{(\stdCap{\addr}, \sz, j)})], \stdUpdateMu{\mu, (\addr, (\perm, \rbase, \rend, t_c))}}
}\\
\Infer*[hw][execute-store-ok]{
  \begin{array}[b]{c}
  \buf(i) = \store{x}{e}{\sz}\specat{\varepsilon}\hfill
  ((\addr, (\perm, \rbase, \rend, \taint_c)), \_, \tcap) = \sem{e}_{\aplsan (\pref i \buf, \reg)} \hfill
  \stdCheckCap{(\addr, (\perm, \rbase, \rend, \taint_c)), \sz, \readwrite}\hfill
  (v, t, c) = \sem{x}_{\apl(\pref i \buf, \reg)}\\
  (\forall j > i, \buf(j) = x \gets v \specat{(\stdCap{\addr'}, \sz', k)} \land k \neq i \Rightarrow
  (k > i \lor [\addr, \addr + \sz) \cap [\addr', \addr + \sz') = \emptyset))\quad
  (v', c') = \stdWriteRes{(v, t, c), \taint_c, \sz}\\
  \end{array}
}{
  \cf{\mem, \reg, \buf, \mu}
  \sstep{\dexecute{i}}{}
  \cf{\mem, \reg, \buf[i \mapsto (\store{(v', t, c')}{(\addr, (\perm, \rbase, \rend, \taint_c))}{\sz})\specat{\varepsilon}], \stdUpdateMu{\mu, (\addr, (\perm, \rbase, \rend, \taint_c))}}
}
\\
\Infer[hw][execute-store-hazard]{
  \begin{array}[b]{c}
  \buf(i) = \store{x}{e}{\sz}\specat{\varepsilon}\hfill
  ((\addr, (\perm, \rbase, \rend, \taint_c)), \_, \tcap) = \sem{e}_{\aplsan (\pref i \buf, \reg)} \hfill
    \stdCheckCap{(\addr, (\perm, \rbase, \rend, \taint_c)), \sz, \readwrite}\hfill
      (v, t, c) = \sem{x}_{\apl(\pref i \buf, \reg)}\\
  (j = \min \lbrace j > i: \buf(j) = x\gets v \specat{(\stdCap{\addr'}, \sz', k)}
  \land (
    k < i \land [\addr, \addr+\sz) \cap [\addr', \addr'+\sz') \neq \emptyset
  ) \rbrace)\quad
  (v', c') = \stdWriteRes{(v, t, c), \taint_c, \sz}\\
  \end{array}
}{
  \cf{\mem, \reg, \buf, \mu}
  \sstep{\dexecute{i}}{}
  \cf{\mem, \reg, \pref j \buf[i \mapsto (\store{(v', t, c')}{(\addr, (\perm, \rbase, \rend, \taint_c))}{\sz})\specat{\varepsilon}], \stdUpdateMu{\mu, (\addr, (\perm, \rbase, \rend, \taint_c))}}
}
\end{rules}
\vspace{-2mm}
\begin{rules}[Rules for commit steps]
\Infer*[hw][commit-assign]{
i = \min\dom{\buf}&
    x = \pc \Rightarrow \tg= \varepsilon &
  \buf(i) = x \gets v \specat{\tg}&
}{
  \cf{\mem, \reg, \buf, \mu}
  \sstep{\dcommit}{}
  \cf{\mem, \reg[\pc \mapsto \sem{\stdNextPc{\pc}}_\reg][x\mapsto v], \buf \setminus i, \mu}
}\\
\Infer*[hw][commit-store]{
  i = \min\dom{\buf}\quad
  \buf(i) = \store{(v, t, c)}{\stdCap{\addr}}{\sz} \specat{\varepsilon}\quad
  \mem' = (\memData[[\addr, \addr + \sz) \mapsto v], \memTag[\addr/\szCap \mapsto c])
}{
  \cf{(\memData, \memTag), \reg, \buf, \mu}
  \sstep{\dcommit}{}
  \cf{\mem', \reg[\pc \mapsto \sem{\stdNextPc{\pc}}_\reg], \buf \setminus i, \stdUpdateMu{\mu, \stdCap{\addr}}}
}
\end{rules}
\vspace{-3mm}
\caption{\sys hardware semantics}
\label{fig:fullhwsem}
\end{figure*}

The full set of rules of \sys's hardware semantics is in \Cref{fig:fullhwsem}.
Many of the rules were already discussed in \Cref{sec:sem}, so we only
comment on those that were not already described.

Rule~\ref{hw:fetch-other} evaluates instructions involving the
$\dfetch$ directive when the program counter does not point to a jump
instruction. The target configuration extends the input buffer by
adding a new entry corresponding to the instruction
\[
  \instr = \stdReadMemInst{\mem, \sem{\pc}_{\aplsan (\buf,
      \reg)}}.
\]
Note that the target microarchitectural context is
updated with the leakage $\sem{\pc}_{\aplsan (\buf, \reg)}$, revealing
to the attacker the current value of the program counter.

Rule~\ref{hw:execute-jmp-hazard} is very similar to
Rule~\ref{hw:execute-jmp-ok}: the target $e$ is evaluated using
register file $\aplsan (\pref i \buf, r)$, and the result is compared
with the speculation tag, recording the jump target in the speculation
tag. This rule applies when the two do not match, and the reorder
buffer of the target configuration is set to
$\buf' = \pref {i + 1} \buf[i\mapsto \pc\gets (\addr_0,
\textcolor{red}{\untainted}, c) \specat{\varepsilon}]$, where the
$\pc$ is updated with the correct jump target and all the following
instructions are dropped.

The rules for evaluating conditional branches are similar to those for
direct jumps. Rules~\ref{hw:execute-beqz-ok}
and \ref{hw:execute-beqz-hazard} evaluate the branch condition
$(v, t, c) = \sem{x}_{\aplsan (\pref i \buf, r)}$ using the sanitized
buffer to prevent transient leakages of secret data. The rules compute
the jump offset in $z$ and the actual jump target
\( \sem{\pc \stdCapOff {\mathit{z}}}_{\apl(\pref i \buf, \reg)} \).
If the jump target matches the one stored in the buffer,
Rule~\ref{hw:execute-beqz-ok} applies and the speculation tag is
cleared. If the two targets do not match,
Rule~\ref{hw:execute-beqz-hazard} applies, which sets the correct jump
target in the ROB, and drops the stale entries. Both rules leak the
branch condition to the microarchitectural context via the
$\stdUpdateMu\cdot$ function.

Rule~\ref{hw:execute-load-mem} is similar to
Rule~\ref{hw:execute-load-fwd} and describes how load instructions are
executed when no aliasing store is found in the ROB. The rule checks
that the target register is not the program counter to ensure that
such register can only be updated via $\kwd{beqz}$ and $\kwd{jmp}$
instructions. Then the load address
$(\stdCap{\addr}, \_, \tcap) = \sem{e}_{\aplsan (\pref i \buf, \reg)}$
is evaluated and capability permissions are checked in the
$\stdCheckCap{\stdCap{\addr}, \sz, \readonly}$ premise. The rule
applies when for every $j < i$, the corresponding entry in the ROB
is not an aliasing store.  If this is the case, the value is fetched
from central memory in premise:
\[
  (v, t, c) = \stdReadRes{\memData[\addr, \addr + \sz], \taint_c, \memTag[\addr / \szCap], \sz}
\]
and the speculation tag is updated to record the loaded region, in
order to resolve aliasing, when executing store instructions.

Rule~\ref{hw:execute-store-hazard} applies if store-to-load aliasing
is detected when resolving a store instruction.
It computes the store capability using the sanitized register
file $\aplsan(\pref i \buf, \reg)$, to ensure that no tainted value
can leak to the microarchitectural context, and checks its permission
via $\stdCheckCap{\cdot, \sz, \readwrite}$. The value to be stored is
obtained by evaluating $x$ under the partially applied register file
$\apl(\pref i \buf, \reg)$, and is recorded in the store buffer.
The rule then computes the index of the first
aliasing load instruction:
\begin{multline*}
  j = \min \lbrace j > i: \buf(j) = x\gets v \specat{(\stdCap{\addr'}, \sz', k)}
  \land \\(  k < i \land [\addr, \addr+\sz) \cap [\addr', \addr'+\sz') \neq \emptyset
  ) \rbrace,
\end{multline*}
and drops all the entries in the reorder buffer at positions equal to or
greater than $j$ by setting the buffer of the target configuration to:
\[
  \pref j \buf[i \mapsto (\store{(v', t, c')}{(\addr, (\perm, \rbase,
    \rend, \taint_c))}{\sz})\specat{\varepsilon}].
\]

Finally, Rule~\ref{hw:commit-store} commits store instructions by
updating the memory with the value taken from the ROB.

\section{Proofs}
\label{sec:proofs}
This section is devoted to proving the main results presented in this
paper, namely functional correctness and security of \sys.

We start in \Cref{sec:isa-proof}, by providing lemmas about architectural (ISA-level) invariants that are helpful to reason about correctness and security of \sys's HW semantics.

First, \sys's HW semantics does not handle self-modifying code,
so we assume that in the initial configuration, there is no overlapping between memory with $\exec$ and $\readwrite$ permission. This property will be maintained as an invariant by monotonicity of capability operations.
Another key invariant of \sys necessary to its security property is that there is no overlapping between tainted and untainted memory.
This is needed since we should not allow using untainted capabilities to access tainted data and leak it, or using tainted capabilities to write secret data to the untainted region while losing protection on it.
These two invariants are formalized in \Cref{def:isa-wf} and proved in \Cref{lemma:isa-wf}.

Second, we formally define public equivalence of architectural states (\Cref{def:pub-eq-isa}) and prove that it is maintained by the ISA semantics (\Cref{lemma:isa-pubeq}).

We then turn our attention to proving functional correctness of \sys's HW semantics,
i.e., that \sys is correct with respect to its ISA-level semantics.
The proof is in \Cref{sec:fc-proof},
and is carried out by induction on the number of HW steps being executed.
The key insight is to define inductive invariants to constrain well-formedness of reorder buffer $\buf$ (\Cref{def:well-formed-buf}).
Specifically, it constrains the expected correlation between each in-flight instruction's execution outcome and its instruction type/operands.
\Cref{lemma:hw-wf-invariant} proves that the invariant holds for every step.

In \Cref{sec:sec-proof}, we prove \sys satisfies relative speculative constant time.
The proof is also done by induction on the number of HW steps being executed.
The key inductive invariant is defined via public equivalence of HW configurations (\Cref{def:pub-eq}), which requires public equivalence on both architectural states ($\mem$, $\reg$) and microarchitectural contexts ($\buf$, $\mu$).
Specifically, proving this requires us to apply \Cref{thm:fc} (functional correctness of \sys) to correlate ISA leakage traces and HW observation traces.

\subsection{ISA-level Proof}
\label{sec:isa-proof}

\subsubsection{ISA Maintains Well-formedness}

\begin{definition}[Capability Monotonicity]
We define capability monotonicity as follows:
\begin{equation*}
  (p,b,e,t) \sqsubseteq (p',b',e',t') \defsym p \sqsubseteq p' \land [b, e) \subseteq [b', e') \land t = t'.
\end{equation*}
We also employ the notation $(\addr, (p, b, e, t))\sqsubseteq (\addr', (p', b', e', t'))$ to represent the case where $(p,b,e,t) \sqsubseteq (p',b',e',t')$.
\label{def:cap-mono}
\end{definition}

\begin{definition}[Reachable Capabilities]
Given a register file $\reg$ and a memory $\mem=(\memData, \memTag)$,
the set of reachable capabilities is defined as follows:
\begin{mathpar}
\Infer[rc][base]{
  r(x) = (\stdCap{\addr}, t, \tcap)
}{
  \stdCap{\addr} \in \rc{\reg, \mem, \vx}
}\\
\Infer[rc][ind]{
  \begin{array}[b]{c}
    (\_, (p, b, e, t)) \in \rc{\reg, \mem, x} \quad [\addr', \addr'+2) \subseteq [b, e) \\
    \memData[\addr', \addr' + \szCap] = (\addr, \addr_m) \quad \memTag[\addr' / 2] = \tcap \quad \addr' \% 2 = 0\\
    \stdUnpackTaint{\addr_m} = (\stdCapMeta{\addr}, \_) \quad \addr_m \neq \stdMagic \\
  \end{array}
}{
  \stdCap{\addr} \in \rc{\reg, \mem, \vx}
}\\
\Infer[rc][shrink]{
  \stdCap{\addr'} \in \rc{\reg, \mem, \vx} \quad
  \stdCap{\addr} \sqsubseteq \stdCap{\addr'}
}{
  \stdCap{\addr} \in \rc{\reg, \mem, \vx}
}
\end{mathpar}
  We also employ the notation $\rc{\reg, \mem}$ for
  \(
  \bigcup_{\vx \in \Reg} \rc{\reg, \mem, \vx}
  \).
\label{def:reach-cap}
\end{definition}

\begin{definition}[Reachable Memory]
  Given a set of capabilities $C$, and a taint $t$, we define the set
  of reachable addresses with taint $t$ as:
  \[
    \taintAddr{C, t} \defsym \bigcup_{(\addr, (p, b, e, t))\in C}{[b, e)}
  \]
  we then define $\reachAddr{C}$ as the whole set of reachable addresses,
  \[
    \reachAddr C \defsym \taintAddr{C, \tainted} \cup \taintAddr{C, \untainted}.
  \]
  For convenience, we use \( \taintAddr{\reg, \mem, t} \) as a
  shorthand for $\taintAddr{\rc{\reg, \mem}, t}$, and
  $\reachAddr{\reg, \mem}$ as a shorthand for
  $\reachAddr{\rc{\reg, \mem}}$.  By slight abuse of notation, given a
  permission $p \in \Perm$, we also define
  \[
    \permAddr{C, p} \defsym \bigcup_{(\addr, (p, b, e, t))\in C}{[b, e)}
  \]
  and the notation $\permAddr{\reg, m, p}$ in the expected way.
  \label{def:reach-mem}
\end{definition}

\begin{definition}[Well-formed ISA State]
ISA state $\cf{\mem, \reg}$ is \emph{well-formed} (denoted as $\wf{\mem, \reg}$) if its tainted memory and untainted memory do not overlap, i.e.,
\begin{equation*}
  \taintAddr{\reg, \mem, \tainted} \cap \taintAddr{\reg, \mem, \untainted} = \emptyset
\end{equation*}
and its writable and executable memory do not overlap, i.e.,
\begin{equation*}
  \permAddr{\reg, \mem, \readwrite} \cap \permAddr{\reg, \mem, \exec} = \emptyset.
\end{equation*}
\label{def:isa-wf}
\end{definition}

\begin{lemma}[ISA Model Maintains ISA Well-formedness]
For all $\mem$, $\reg$, $\mem'$, $\reg'$, $o$, if
\begin{align*}
  & \wf{\mem, \reg} & & \cf{\mem, \reg} \step{o} \cf{\mem', \reg'},
\end{align*}
then
\begin{equation*}
  \wf{\mem', \reg'}.
\end{equation*}
\label{lemma:isa-wf}
\end{lemma}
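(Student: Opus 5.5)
The natural approach is to show that one ISA step never enlarges the set of reachable capabilities, up to the capability being derived from ones already reachable. Concretely, I would prove the \emph{key inclusion}
\[
\rc{\reg', \mem'} \subseteq \rc{\reg, \mem},
\]
meaning every capability reachable in the post-state was already reachable in the pre-state. Once this holds, both disjointness conditions of \Cref{def:isa-wf} transfer immediately, because $\taintAddr{\cdot,t}$ and $\permAddr{\cdot,p}$ are monotone in the capability set. So the whole proof reduces to checking this inclusion.

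To establish the inclusion, I would do a case analysis on the rule deriving $\cf{\mem,\reg}\step{o}\cf{\mem',\reg'}$, followed by an inner induction on the derivation of $\stdCap{\addr}\in\rc{\reg',\mem'}$. Rule~\ref{rc:shrink} is immediate, since $\sqsubseteq$ is transitive. For Rule~\ref{rc:base} and Rule~\ref{rc:ind}, the argument depends on the step:
\begin{itemize}
\item For Rule~\ref{isa:isa-assign}, memory is unchanged and only $x$ and $\pc$ change. The new $\pc$ comes from $\stdCapOff$, which keeps the metadata. A capability value $\sem e_\reg$ tagged $\tcap$ is, by structural induction on $e$ together with Axioms~(\ref{ax:capop1})--(\ref{ax:capop3}), either a register's capability or obtained from a first-position capability argument. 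The result is therefore $\sqsubseteq$ some capability in $\rc{\reg,\mem}$. Axiom~(\ref{ax:capop3}) is what ensures the inner taint is preserved, which \Cref{def:cap-mono} demands.
\item For Rule~\ref{isa:isa-jmp} and Rule~\ref{isa:isa-beqz}, the new $\pc$ is a register-derived capability, or $\pc$ with a changed address and the same metadata. These are handled as in the assignment case; note that $\rc{}$ ignores outer taints.
\item For Rule~\ref{isa:isa-load}, only the case of a capability-sized load via Rule~\ref{helpersys:read-cap} yields a $\tcap$ value. Its metadata is recovered through $\stdUnpackTaint{}$ from an aligned, tagged, non-$\stdMagic$ slot inside the bounds of the load capability, which is itself reachable. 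That is exactly an instance of Rule~\ref{rc:ind}. The other read rules return $\tval$ and so contribute nothing.
\end{itemize}

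The main obstacle is Rule~\ref{isa:isa-store}, since it changes memory and so can alter what Rule~\ref{rc:ind} reaches. Here I would show that the new memory contents at tagged, non-magic slots are either unchanged or equal to a capability that was reachable before.
\begin{itemize}
\item Rule~\ref{helpersys:write-data} writes a single word and sets the tag of its slot to $\tval$. It can therefore only destroy capabilities, never create them.
\item Rule~\ref{helpersys:spill-value} produces either the tag $\tval$ or a $\stdMagic$ record, and Rule~\ref{rc:ind} excludes both.
\item Rule~\ref{helpersys:write-cap} stores $(v,\stdPackTaint{\stdCapMeta v,\taint})$ with tag $\tcap$, where $\sem x_\reg$ is a register capability. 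By the assumption that $\stdUnpackTaint{\stdPackTaint\cdot}$ is the identity, this decodes back to $\stdCap v\in\rc{\reg,\mem}$.
\end{itemize}
The alignment premise of Rule~\ref{memory:check-cap} guarantees that a store of size $\szCap$ aligns with exactly one tag slot. For a store of size $1$, the single affected slot has its tag cleared.

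One subtle point remains. The inner induction on Rule~\ref{rc:ind} in the new memory must be phrased as follows: any chain through the modified slot passes through a capability already in $\rc{\reg,\mem}$, and from there on the remaining chain is valid in the old memory. I would handle this by generalizing the induction hypothesis to ``every capability in $\rc{\reg',\mem',x}$ is in $\rc{\reg,\mem}$''. The case where the modified slot is traversed then closes because the value found there is itself in $\rc{\reg,\mem}$, and every capability reachable from it in $\mem'$ is reachable from it through an inductively justified path.
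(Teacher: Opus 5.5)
Your proposal is correct and is essentially the paper's argument. The paper derives this lemma from \Cref{lemma:reach-mem}, which it proves by showing $\rc{\reg',\mem'}\subseteq\rc{\reg,\mem}$ by case analysis on the ISA rule: \Cref{lemma:exprmono} handles assignments and jumps, Rule~\ref{rc:ind} handles capability loads, and an induction on the reachability derivation (\Cref{lemma:storemono1,lemma:storemono2}) handles stores.

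Your choice to state the inner induction hypothesis against the global set $\rc{\reg,\mem}$, rather than per register, is exactly what the store case needs. The paper's per-register form of \Cref{lemma:storemono2} can fail: when a capability from register $x$ is stored through a capability reachable from $z$, Rule~\ref{rc:ind} then reaches capabilities lying only in $\rc{\reg,\mem,x}$. Your union version goes through.
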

\begin{proof}
This lemma is a direct consequence of \Cref{lemma:reach-mem},
which establishes monotonicity of $\taintAddr\cdot$ and $\permAddr\cdot$.
\end{proof}
\begin{lemma}
For all $\mem$, $\reg$, $\mem'$, $\reg'$, $o$ such that
\begin{align*}
  & \wf{\mem, \reg} \\
  & \cf{\mem, \reg} \step{o} \cf{\mem', \reg'},
\end{align*}
there should be
\begin{align*}
  \forall \taint,&&& \taintAddr{\reg', \mem', \taint} \subseteq \taintAddr{\reg, \mem, \taint} \\
  \forall \perm, &&& \permAddr{\reg', \mem', \perm} \subseteq \permAddr{\reg, \mem, \perm}.
\end{align*}
\label{lemma:reach-mem}
\end{lemma}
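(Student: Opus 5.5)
The plan is to show that every capability reachable after the step was already reachable before it, i.e.\ $\rc{\reg', \mem'} \subseteq \rc{\reg, \mem}$. Both inclusions of the statement then follow at once from \Cref{def:reach-mem}, since $\taintAddr{\cdot, \taint}$ and $\permAddr{\cdot, \perm}$ are monotone in the capability set.

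The inclusion is proved by induction on the derivation of $c \in \rc{\reg', \mem', x}$, with a case split on which ISA rule produced the step.
\begin{itemize}
\item \emph{Rule \rlref{rc}{shrink}.} The induction hypothesis gives $c' \in \rc{\reg,\mem}$, and \rlref{rc}{shrink} in the old state then yields $c \in \rc{\reg,\mem}$.
\item \emph{Rule \rlref{rc}{base}.} Here $c$ sits in a register of $\reg'$, and we split on the ISA rule.
  \begin{itemize}
  \item For \rlref{isa}{isa-assign}, a capability result can only come from a capability operator. By Axiom~(\ref{ax:capop1}) its first argument is a capability from $\reg$, and by Axioms~(\ref{ax:capop2})--(\ref{ax:capop3}) and \Cref{def:cap-mono} the result is $\sqsubseteq$ that argument, with equal inner taint. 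So $c$ is reachable via \rlref{rc}{shrink}. A sub-induction on expressions is needed, since the argument may itself be a nested operator application.
  \item For the $\pc$ updates in \rlref{isa}{isa-jmp} and \rlref{isa}{isa-beqz}, the target is $\sem{e}_\reg$, an evaluated capability expression (outer taint reset, metadata unchanged). Alternatively it is $\pc$ moved by $\stdCapOff$, which keeps the metadata. Either way it is reachable by the same argument.
  \item For \rlref{isa}{isa-load}, a capability-tagged result arises only via \rlref{helpersys}{read-cap}. The loaded bytes lie in $[b,e)$ of the load capability, which is reachable, and they are aligned, tagged and non-$\stdMagic$ by \rlref{memory}{check-cap}. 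This is exactly \rlref{rc}{ind} in the old memory. The output metadata is $\stdCapMeta{\addr}$ recovered by $\stdUnpackTaint{\cdot}$; only the outer taint changes via $\sqcap$, and $\rc{}$ ignores outer taint.
  \end{itemize}
\item \emph{Rule \rlref{rc}{ind}.} This case only differs from the old state for \rlref{isa}{isa-store}, since the other rules leave memory unchanged.
\end{itemize}

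The store case is the main obstacle. Suppose $c$ is read from $\mem'$ at an address $\addr'$ inside a reachable capability. If the tag word $\addr'/\szCap$ was not overwritten, the cell is unchanged and $c \in \rc{\reg,\mem}$ by the old \rlref{rc}{ind}. Otherwise the tag was set by the helper $\stdWriteRes{}$, and there are three sub-cases:
\begin{itemize}
\item \rlref{helpersys}{write-data} sets the tag to $\tval$, so no capability is read.
\item \rlref{helpersys}{spill-value} sets either $\tval$ or $\stdMagic$, both excluded by \rlref{rc}{ind}.
\item \rlref{helpersys}{write-cap} stores $\stdPackTaint{\stdCapMeta{v},\taint}$ of a capability held in register $x$. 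Since $\stdUnpackTaint{\stdPackTaint{\cdot}}$ is the identity, the capability read back is exactly the register's capability, which is in $\rc{\reg,\mem}$ by \rlref{rc}{base}.
\end{itemize}
One subtlety must be checked: the stored value may be a capability only when $\sz=\szCap$. Alignment from \rlref{memory}{check-cap} guarantees that the tag and both data words are written together. Consequently no stale half-capability can be paired with a fresh tag; a size-1 store clears the tag.

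The hypothesis $\wf{\mem,\reg}$ is in fact not needed for this monotonicity argument. I would note that it is harmless, and keep it only because \Cref{lemma:isa-wf} uses the lemma in that context.
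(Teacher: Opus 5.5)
Your proof is correct and follows essentially the paper's route: it too establishes $\rc{\reg',\mem'} \subseteq \rc{\reg,\mem}$ by case analysis on the ISA rule. It uses monotonicity of capability operators (\Cref{lemma:exprmono}) for assignments, jumps and the load capability. It handles the store case through \Cref{lemma:storemono1,lemma:storemono2}, whose proofs are exactly your induction on the \rlref{rc}{ind} derivation with the aligned-tag case split. You are also right that $\wf{\mem,\reg}$ is never used. One small slip: that the loaded block is tagged $\tcap$ with metadata word $\neq \stdMagic$ comes from the premises of \rlref{helpersys}{read-cap}, not from \rlref{memory}{check-cap}, which only supplies bounds and alignment.
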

\begin{proof}
The proof proceeds by case analysis on the rule used for $\cf{\mem, \reg} \step{o} \cf{\mem', \reg'}$.
For each case, the key idea is to utilize the monotonicity of capability operations.
We omit the details here.
\end{proof}
\begin{proof}
  The proof goes by case analysis on our transition relation
  \begin{itemize}
  \item If the transition was established by applying
    Rule~\ref{isa:isa-assign}, then we have $\mem'= \mem$ and
    $\reg' = \reg[\pc\mapsto
    \sem{\stdNextPc{\pc}}_\reg][x \mapsto \sem{e}_{\reg}]$, therefore it suffices to observe
    that if $\sem{e}_{\reg}$ evaluates to a capability
    $c = \stdCap \addr$, by \Cref{lemma:exprmono}, there exists a
    register $z$ such that $\reg(\vz) = (\stdCap {\addr'}, t, \tcap)$,
    and $\stdCapMeta \addr \sqsubseteq \stdCapMeta \addr'$. Therefore, we have
    $\rc{\reg[\vx\mapsto c], m, \vx}\subseteq \rc{\reg, m}\cup\{c\}$,
    and thereby, by \Cref{lemma:rcupd}, we have
    \[
      \rc{\reg[\vx\mapsto c], m} \subseteq \rc{\reg, m} \cup\{c\}.
    \]
    By definition of $\taintAddr \cdot$, and
    $\stdCapMeta \addr \sqsubseteq \stdCapMeta \addr'$, we conclude
    $\taintAddr{\reg[\vx\mapsto c], m, t} \subseteq \taintAddr{\reg,
      m, t}$. By reiterating this reasoning to also cover the $\pc$
    update, we prove
    \[
      \taintAddr{\reg[\pc\mapsto
      \sem{\stdNextPc{\pc}}_\reg][x \mapsto \sem{e}_{\reg}], m, t}\subseteq \taintAddr{\reg,
      m, t}.
  \]
  This establishes the first invariant. The second invariant follows analogously.
\item If the transition was established by applying
  Rule~\ref{isa:isa-load}, then we have $\mem'= \mem$ and
  $\reg' = \reg[\pc\mapsto \sem{\stdNextPc{\pc}}_\reg][x \mapsto (w, t, c)]$,
  where $c$ is either $\tcap$ or $\tval$. In the former case, it is
  easy to observe that
  \[
    \rc{\reg[\vx\mapsto (w, t, c)], m} \subseteq \rc{\reg, m}.
  \]
  By definition of $\taintAddr \cdot$, we conclude
  \[
    \taintAddr{\reg[\pc\mapsto \sem{\stdNextPc{\pc}}_\reg][\vx\mapsto (w, t, c)], m, t'} \subseteq \taintAddr{\reg,
      m, t'},
  \]
  and similarly for $\permAddr\cdot$.  On the other hand, if
  $c=\tcap$, by introspection of the rules for memory operation, we
  observe that the capability value must have been loaded via rules
  \ref{memory:read-mem} and \ref{helpersys:read-cap}. More precisely,
  said $\mem=(\memData, \memTag)$, and $\sem e_\reg = ((\addr, \stdCapMeta \addr), \taint_c, \tcap)$ we have:
    \begin{align*}
      (w, t, \tcap) &= \stdReadMem{(\memData, \memTag), \sem e_\reg, 2}\\
                    &= \stdReadRes{\memData[\addr, \addr + 2), \taint_c, \memTag[\addr / \szCap], 2}
    \end{align*}
    so we deduce $\tcap= \memTag[\addr / \szCap]$, and said $\memData[\addr, \addr + 2) = (z_0, z_1)$,
    and $(z_{\text{meta}}, t_z)=\stdUnpackTaint{z_m}$, we deduce that $w=(z_0, z_{\text{meta}})$ and $z_{\text{meta}}\neq \stdMagic$. Furthermore, $\addr \%2 = 0$ holds by introspection of Rule~\ref{memory:read-mem}. Observe that if we are able to find a second capability in $\rc{\reg, \mem}$ such that it spans over $[\addr, \addr+2)$, we can conclude that $w=(z_0, z_{\text{meta}}) \in \rc{\reg, \mem}$ by Rule~\ref{rc:ind}, since all the other premises of the rules can be discharged by using the intermediate observations on $(z_0, z_{\text{meta}})$. The existence of such a capability spanning over $[\addr, \addr+2)$ is guaranteed by \Cref{lemma:exprmono}. This proves:
    \[
      \rc{\reg[\vx\mapsto (w, t, c)], m} \subseteq \rc{\reg, m},
    \]
    and we can conclude this sub-derivation by reasoning identically
    to the case where the loaded value was not a capability.
\item If the transition was established by applying
  Rule~\ref{isa:isa-store}, then we call
    \begin{align*}
    \sem e_\reg &= ((\addr_e, (p_e, b_e, g_e, t_e)), \taint,\tcap)\\
    \sem \vx_\reg &= (\val, \taint_x, c_x),
    \end{align*}
  and we have
  \begin{align*}
    \reg'&= \reg\\
    \mem' & = \stdWriteMem{\mem, \sem e_\reg, sz, \sem \vx_\reg}\\
         & = \stdWriteMem{\mem, \sem e_\reg, sz, \sem \vx_\reg}\\
          & = (\memData[[\addr, \addr+\sz) \mapsto \val', \memTag[\addr/2 \mapsto c']])
  \end{align*}
  where $(v', c')= \stdWriteRes{(v, t_x, c_x), \taint_e, \sz}$.
  The proof then proceeds by case analysis.
  By introspection of the rules for $\stdWriteRes{}$, either of the following cases holds:
  \begin{itemize}
    \item $c' = \tval \lor c_x = \tval$: By \Cref{lemma:storemono1}, we can easily conclude that
    \begin{equation*}
      \rc{\reg', \mem'}\subseteq\rc{\reg, \mem}.
    \end{equation*}
    \item $c_x = \tcap \land \sz = \szCap$:
    By rules for memory operations, there should be $\addr \% \szCap = 0$.
    By \Cref{lemma:storemono2}, we have
    \begin{equation*}
      \rc{\reg', \mem'} \subseteq \rc{\reg, \mem} \cup \qty{\stdCap{\addr'}: \stdCap{\addr'} \sqsubseteq v}.
    \end{equation*}
    By the definition of $v$ and \Cref{lemma:exprmono}, $v \in \rc{\reg, \mem}$.
    Then, by \ref{rc:shrink}, $\qty{\stdCap{\addr'}: \stdCap{\addr'} \sqsubseteq v} \subseteq \rc{\reg, \mem}$.
    Therefore, there should be $\rc{\reg', \mem'}\subseteq\rc{\reg, \mem}$.
  \end{itemize}
  \end{itemize}
\end{proof}

\begin{lemma}
  \label{lemma:rcupd}
  For an ISA configuration $(m, r)$ and a capability $c$, we have
  $\rc{\reg[\vx\mapsto c], m} \subseteq \rc{\reg, m}\cup \rc{\reg[\vx\mapsto c], m, \vx}$
\end{lemma}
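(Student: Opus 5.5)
The plan is to argue by induction on the derivation of membership in $\rc{\reg[\vx\mapsto c], m}$, i.e., on the rules \ref{rc:base}, \ref{rc:ind} and \ref{rc:shrink} of \Cref{def:reach-cap}. Since $\rc{\reg[\vx\mapsto c], m} = \bigcup_{\vz\in\Reg}\rc{\reg[\vx\mapsto c], m, \vz}$, it suffices to show that for every register $\vz$,
\[
  \rc{\reg[\vx\mapsto c], m, \vz} \subseteq \rc{\reg, m}\cup \rc{\reg[\vx\mapsto c], m, \vx}.
\]
The case $\vz = \vx$ is immediate, because the set then appears verbatim on the right-hand side.

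For $\vz \neq \vx$, I would prove the stronger claim $\rc{\reg[\vx\mapsto c], m, \vz} = \rc{\reg, m, \vz}$. This claim is itself an induction on derivations.

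In the base rule \ref{rc:base}, the premise reads $(\reg[\vx\mapsto c])(\vz)$, which equals $\reg(\vz)$ because $\vz\neq\vx$. The same base capability is therefore derivable in $\rc{\reg, m, \vz}$.

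In the inductive rule \ref{rc:ind}, the memory premises mention only $m$, which is unchanged. The remaining premise is membership of the authorizing capability in $\rc{\cdot, m, \vz}$, and the induction hypothesis transports it directly.

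The rule \ref{rc:shrink} is likewise closed under the induction hypothesis, since $\sqsubseteq$ does not depend on the register file.

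The key observation behind all three cases is that the reachability sets are indexed by the root register. A derivation rooted at $\vz$ consults the register file only once, at its \ref{rc:base} leaf.

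I expect the only mild obstacle to be bookkeeping. One must make sure that \ref{rc:ind} never refers to a capability rooted at a different register, and indeed its premise explicitly uses the same index $\vx$. One must also handle the case where $c$ is not a capability (tag $\tval$): then $\rc{\reg[\vx\mapsto c], m, \vx}=\emptyset$ and the inclusion still holds by the argument above. No appeal to earlier lemmas such as \Cref{lemma:exprmono} is needed, since the statement is purely structural.
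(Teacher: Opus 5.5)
Your proof is correct and follows the paper's argument. The paper also reduces the claim to a per-register inclusion and splits on whether $z = x$. Your induction over the three derivation rules only spells out why the set reachable from $z \neq x$ is unchanged by the update, a step the paper leaves implicit.
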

\begin{proof}
  Direct consequence of
  \begin{equation*}
    \forall z,\; \rc{\reg[\vx\mapsto c], m, z} \subseteq \rc{\reg, m}\cup \rc{\reg[\vx\mapsto c], m, \vx},
  \end{equation*}
  which is established by case analysis on whether $z=x$.
\end{proof}

\begin{lemma}
For all $\mem$, $\reg$, $e$ such that
$\sem{e}_{\reg} = (\stdCap{\addr}, \taint, \tcap)$,
there should be $\stdCap{\addr} \in \rc{\reg, \mem}$.
\label{lemma:isa-exp-cap-mono}
\end{lemma}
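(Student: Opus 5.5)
We argue by structural induction on the expression $e$, proving a slightly stronger statement that copes with the intermediate values produced by operators. Namely, if $\sem{e}_\reg = (\stdCap{\addr}, \taint, \tcap)$, then there is a register $z$ with $\reg(z) = (\stdCap{\addr'}, \taint', \tcap)$ and $\stdCap{\addr} \sqsubseteq \stdCap{\addr'}$ in the sense of \Cref{def:cap-mono}. Once this is established, the lemma follows immediately. Rule~\ref{rc:base} gives $\stdCap{\addr'} \in \rc{\reg, \mem, z}$, and Rule~\ref{rc:shrink} then gives $\stdCap{\addr} \in \rc{\reg, \mem, z} \subseteq \rc{\reg, \mem}$. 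The memory $\mem$ plays no role beyond this final step.

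\textbf{Base cases.} If $e = \val$, then $\sem{\val}_\reg = (\val, \untainted, \tval)$. Its tag is $\tval$, so the hypothesis cannot hold and the case is vacuous. If $e = x$, then $\sem{x}_\reg = \reg(x)$, so we take $z = x$. The required relation $\stdCap{\addr} \sqsubseteq \stdCap{\addr}$ holds by reflexivity, since $\pord$ is a preorder, interval inclusion is reflexive, and taints are equal.

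\textbf{Inductive case.} Let $e = \op(e_1, \ldots, e_n)$, so that $\sem{e}_\reg = \overline{\op}(\sem{e_1}_\reg, \ldots, \sem{e_n}_\reg)$.
\begin{itemize}
\item If $\op \in \OpArith$, the output always carries tag $\tval$, so this case is vacuous.
\item If $\op \in \OpCap$, then $\op$ is binary. By Axiom~(\ref{ax:capop1}), an output tagged $\tcap$ forces $\sem{e_1}_\reg = (\stdCap{\addr_1}, \taint_1, \tcap)$. Axioms~(\ref{ax:capop2}) and (\ref{ax:capop3}) then give $\stdCap{\addr} \sqsubseteq \stdCap{\addr_1}$: permissions and bounds do not grow, and the inner taint is preserved. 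The induction hypothesis applied to $e_1$ yields a register $z$ with $\stdCap{\addr_1} \sqsubseteq \reg(z)$. Transitivity of $\sqsubseteq$, which holds componentwise, concludes the case.
\end{itemize}
The concrete operators fit this scheme. For $\stdCapOff$, the metadata are unchanged. For $\stdBuildCap{\cdot,\cdot}$, the premise $v_2 \sqsubseteq v_1$ is exactly what is needed.

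The main subtlety is not computational. It is making sure the axioms on capability operators really deliver $\sqsubseteq$ as defined in \Cref{def:cap-mono}, including equality of inner taints, so that Rule~\ref{rc:shrink} can be applied. Axiom~(\ref{ax:capop3}) supplies the taint equality and Axiom~(\ref{ax:capop2}) the bounds and permissions. I would also state transitivity of $\sqsubseteq$ as a small auxiliary fact before running the induction. The register-level form of this statement is the one cited as \Cref{lemma:exprmono} in the proof of \Cref{lemma:reach-mem}, and the present lemma can be derived from it in one line.
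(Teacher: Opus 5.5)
Your proposal is correct and matches the paper's argument: the paper derives this lemma in one line from \Cref{lemma:exprmono}, followed by Rules~\ref{rc:base} and \ref{rc:shrink}. \Cref{lemma:exprmono} is the register-level monotonicity statement, proved by structural induction on $e$ with the three capability-operator axioms, exactly as you do. You have simply inlined that induction, and you make explicit the reflexivity and transitivity of $\sqsubseteq$ that the paper leaves implicit.
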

\begin{proof}
By \Cref{lemma:exprmono}, there exists $\vx$ such that
\begin{align*}
  & \reg(\vx) = (\stdCap {\addr'}, \_, \tcap) && \stdCapMeta \addr \sqsubseteq \stdCapMeta \addr'.
\end{align*}
By \Cref{def:reach-cap}, $\stdCap{\addr'} \in \rc{\reg, \mem}$, so there should also be
$\stdCap{\addr} \in \rc{\reg, \mem}$ by Rule~\ref{rc:shrink}.
\end{proof}

\begin{lemma}[Monotonicity of expression evaluation]
\label{lemma:exprmono}
For every register file $\reg$ and expression $e$, if
$\sem{e}_\reg = (\stdCap \addr, \_, \tcap)$,
there exists a register $\vx$ such that
$\reg(\vx) = (\stdCap {\addr'}, \_, \tcap)$, and $\stdCapMeta \addr \sqsubseteq \stdCapMeta \addr'$.
\end{lemma}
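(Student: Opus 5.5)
We prove \Cref{lemma:exprmono} by structural induction on the expression $e$, with the register file $\reg$ fixed. The claim is that whenever $\sem{e}_\reg = (\stdCap\addr, \_, \tcap)$, some register $\vx$ holds a capability whose metadata dominates $\stdCapMeta\addr$ in the order of \Cref{def:cap-mono}. That order is reflexive and transitive, since it is built from the partial order $\pord$, set inclusion of bounds, and equality of taints. We use reflexivity in the base case and transitivity in the inductive step.

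\emph{Base cases.} If $e = \val$ is a value, then $\sem{\val}_\reg = (\val, \untainted, \tval)$, whose tag is $\tval \neq \tcap$. The hypothesis therefore fails and the case is vacuous. If $e = \vx$ is a register, then $\sem{\vx}_\reg = \reg(\vx)$, so $\reg(\vx) = (\stdCap\addr, \_, \tcap)$. We choose this same register $\vx$, and reflexivity gives $\stdCapMeta\addr \sqsubseteq \stdCapMeta\addr$.

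\emph{Inductive case $e = \op(e_1, \ldots, e_n)$.} Here $\sem{e}_\reg = \overline\op(\sem{e_1}_\reg, \ldots, \sem{e_n}_\reg)$. If $\op \in \OpArith$, the interpretation always returns tag $\tval$, so the case is vacuous again. If $\op \in \OpCap$, then $\overline\op$ is binary. By Axiom~(\ref{ax:capop1}), an output tagged $\tcap$ forces the first argument $\sem{e_1}_\reg$ to be tagged $\tcap$. The eagerness of $\overline\op$ with respect to $\bot$ ensures that $\sem{e_1}_\reg$ is a genuine triple, say $(\stdCap{\addr_1}, \_, \tcap)$, rather than $\bot$.

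Axiom~(\ref{ax:capop2}) gives that the output's permissions and bounds are no greater than those of $\stdCap{\addr_1}$. Axiom~(\ref{ax:capop3}) gives that the output's inner taint equals that of $\stdCap{\addr_1}$. Together these yield $\stdCapMeta\addr \sqsubseteq \stdCapMeta{\addr_1}$. Applying the induction hypothesis to $e_1$ produces a register $\vx$ with $\reg(\vx) = (\stdCap{\addr'}, \_, \tcap)$ and $\stdCapMeta{\addr_1} \sqsubseteq \stdCapMeta{\addr'}$. Transitivity then closes the case.

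The only step needing care is this capability-operator case. There we must check that Axioms~(\ref{ax:capop1})--(\ref{ax:capop3}) together match exactly the three components of \Cref{def:cap-mono}: permission order, bounds inclusion, and taint \emph{equality}, not merely $\le$. We must also handle $\bot$ via eagerness. Derived operators such as $\stdCapOff$ fit this scheme directly, since they keep the metadata unchanged.
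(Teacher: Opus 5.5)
Your proof is correct and follows the paper's structural induction. Value and arithmetic-operator cases are vacuous and the register case is trivial. For a capability operator, Axiom~(\ref{ax:capop1}) locates the capability argument, the induction hypothesis gives the register, and Axioms~(\ref{ax:capop2})--(\ref{ax:capop3}) with transitivity finish. The differences are cosmetic: you read Axiom~(\ref{ax:capop1}) as forcing the \emph{first} argument to be the capability and use eagerness to exclude $\bot$, where the paper does a symmetric case analysis over both arguments.
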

\begin{proof}
  The proof is by induction on the syntax of the expression. If the
  expression is a register, or a value, the conclusion is trivial, and
  vacuously satisfied, respectively. We now consider the case where the
  expression is given by a composition of operators. In this case we
  have $e = \op(e_0, e_1)$.  If the outermost operator is a value
  operator, the claim is vacuously satisfied.  If the outermost
  operator is a capability operator, then the proof proceeds by case
  analysis on $\sem {e_0}_\reg$ and $\sem {e_1}_\reg$. If they are
  both tagged as values or as capabilities, then we reach a
  contradiction because by our assumption on capability operators
  (Axiom~\eqref{ax:capop1} of capability operators)
  the output $\sem{e}_\reg$ is tagged as a regular value
  while, according to our assumption, it has to be tagged as a
  capability.
  Finally, by inductive hypothesis, if $\sem {e_i}_\reg$ evaluates to a value and
  $\sem {e_{1-i}}_\reg$ evaluates to a capability $\stdCap a$, then
  there exists a register $\vx$ such that
  $\reg(\vx) = (\stdCap {\addr'}, \_, \tcap)$, and
  $\stdCapMeta a \sqsubseteq \stdCapMeta \addr'$. By monotonicity of
  capability operators (Axiom~\eqref{ax:capop2} of capability
  operators), the resulting capability $\stdCap \addr$ satisfies
  $\stdCapMeta \addr \sqsubseteq \stdCapMeta a \sqsubseteq \stdCapMeta
  \addr'$.  Finally, observe that in this case, the inner taint $t$ of
  the resulting capability $\stdCapMeta \addr$ coincides with the one
  of $\stdCapMeta a$ by assumption on operators by
  Axiom~\eqref{ax:capop3} on capability operators.
\end{proof}

\begin{lemma}[Monotonicity of reachable capabilities after store value]
  \label{lemma:storemono1}
  For every register $\reg$ and memories $m =(\memData, \memTag)$, and
  $\mem' = (\memData', \memTag') = (\memData[[\addr_0, \addr_0+\sz) \mapsto \val],
  \memTag[\addr_0/2 \mapsto c])$, where $(v, c) = \stdWriteRes{((v_0, v_m), t_0, c_0), t_c, \sz}$, $\sz = \szCap \Rightarrow \addr_0 \% \szCap = 0$, and $c=\tval \lor c_0 =\tval$, we
  have
  \[
    \rc{\reg, \mem'}
    \subseteq \rc {\reg, \mem}.
  \]
\end{lemma}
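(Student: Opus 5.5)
Since $\rc{\reg, \mem'}$ is the least set closed under Rules~\ref{rc:base}, \ref{rc:ind} and \ref{rc:shrink} over $(\reg, \mem')$, I will prove the inclusion by rule induction on the derivation of $\stdCap{\addr} \in \rc{\reg, \mem', \vx}$. For each $\vx$ I show $\stdCap{\addr} \in \rc{\reg, \mem, \vx}$. The register file is unchanged, so the base case via Rule~\ref{rc:base} is immediate. In the Rule~\ref{rc:shrink} case, the premise capability lies in $\rc{\reg, \mem, \vx}$ by the induction hypothesis, and Rule~\ref{rc:shrink} over $\mem$ gives the conclusion.

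The substantive case is Rule~\ref{rc:ind}. Here there is a capability $(\_, (p, b, e, t)) \in \rc{\reg, \mem', \vx}$, which lies in $\rc{\reg,\mem,\vx}$ by the induction hypothesis, together with an even address $\addr'$ such that $[\addr', \addr'+2) \subseteq [b,e)$, $\memTag'[\addr'/2] = \tcap$, $\memData'[\addr', \addr'+2) = (\addr, \addr_m)$, and $\addr_m \neq \stdMagic$ with $\addr_m$ unpacking to the relevant metadata. I split on whether the capability granule $\addr'/2$ coincides with the written granule $\addr_0/2$. If it does not, then $\memTag'[\addr'/2] = \memTag[\addr'/2]$. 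I also need $\memData'$ and $\memData$ to agree on $[\addr', \addr'+2)$. When $\sz = \szCap$ this follows from alignment of both $\addr_0$ and $\addr'$. When $\sz = 1$, the write touches only the word $\addr_0$, which lies in granule $\addr_0/2$. So the premises transfer to $\mem$, and Rule~\ref{rc:ind} over $\mem$ concludes.

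If the granules coincide, then $\memTag'[\addr'/2] = c$, so the premise $\memTag'[\addr'/2] = \tcap$ forces $c = \tcap$. I will show this is only possible with a magic second word, which contradicts $\addr_m \neq \stdMagic$. I inspect the rules defining $\stdWriteRes{}$ under the hypothesis $c = \tval \lor c_0 = \tval$. Rule~\ref{helpersys:write-data} outputs tag $\tval$, which is a contradiction. Rule~\ref{helpersys:write-cap} requires $c_0 = \tcap$, so the hypothesis forces $c = \tval$, again a contradiction. Under Rule~\ref{helpersys:spill-value}, an output tag of $\tcap$ happens only in the branch writing $(v_0, \stdMagic)$ with $\sz = \szCap$ and $\addr_0$ aligned. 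Then $\addr' = \addr_0$ and $\memData'[\addr'+1] = \stdMagic$, contradicting $\addr_m \neq \stdMagic$. Hence this case cannot occur.

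I expect the main obstacle to be the bookkeeping in the coinciding-granule case when $\sz = 1$. Rule~\ref{helpersys:write-data} sets the tag of granule $\addr_0/2$ to $\tval$, even though it overwrites only one of its two words. I must check that the stale word left behind never matters, and it does not, because the tag alone blocks Rule~\ref{rc:ind}. I must also check that the alignment premise $\addr' \% 2 = 0$ in Rule~\ref{rc:ind}, together with the hypothesis $\sz=\szCap \Rightarrow \addr_0\%\szCap=0$, really makes granule equality imply address equality in the $\sz = \szCap$ case.
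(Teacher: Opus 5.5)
Your proof is correct and follows essentially the same route as the paper: rule induction over the three rules defining reachable capabilities. As in the paper, the only substantive case is the inductive memory-read rule, where the aligned two-word slot supporting the derivation cannot have been touched by the write, since the written tag is $\tval$ except in the URR branch, and that branch writes the magic marker. The differences are cosmetic: you organize that case by granule equality rather than by case analysis on the written value, and you explicitly discharge the capability-write branch using the hypothesis $c = \tval \lor c_0 = \tval$, which the paper leaves implicit.
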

\begin{proof}
It suffices to prove that for every register $z$, $\rc{\reg, \mem', z} \subseteq \rc{\reg, \mem, z}$.
So it suffices to prove that for all $\stdCap{\addr}$ such that $\stdCap{\addr} \in \rc{\reg, \mem', z}$, there must also be $\stdCap{\addr} \in \rc{\reg, \mem, z}$.

The proof is by induction on the proof derivation.
Consider the following cases:
\begin{itemize}
  \item $\stdCap{\addr} \in \rc{\reg, \mem', z}$ was proved with \ref{rc:base}: in this case, the conclusion is trivial.
  \item $\stdCap{\addr} \in \rc{\reg, \mem', z}$ was proved with \ref{rc:ind}:
  By \ref{rc:ind}, there must exist $(\addr_1, (p, b, e, t))$ such that $(\addr_1, (p, b, e, t)) \in \rc{\reg, \mem', z}$, and by induction hypothesis, there should also be $(\addr_1, (p, b, e, t)) \in \rc{\reg, \mem, z}$.
  By \ref{rc:ind}, $\stdCap{\addr} \in \rc{\reg, \mem', z}$ implies that there exists $\addr'$ such that
  \begin{equation}
    \begin{aligned}
    & [\addr', \addr' + \szCap) \in [b, e) \\
    & \memData'[\addr', \addr' + \szCap] = (\addr, \addr_m)
    \quad \memTag'[\addr' / \szCap] = \tcap \quad \addr' \% 2 = 0 \\
    & \stdUnpackTaint{\addr_m} = (\stdCapMeta{\addr}, \_) \quad \addr_m \neq \stdMagic.
    \end{aligned}
    \label{eq:mono-reach-cap-hyp-1}
  \end{equation}
  Furthermore, there should be
  \begin{equation}
    [\addr', \addr' + \szCap) \cap [\addr_0, \addr_0 + \sz) = \emptyset.
    \label{eq:mono-reach-cap-non-overlap}
  \end{equation}
  This can be proved by considering the following cases:
  \begin{itemize}
    \item $\sz = \szAddr$: by \ref{helpersys:write-data}, $\memTag'[\addr_0/\szCap] = c = \tval$. Note that $\memTag'[\addr / \szCap] = \tcap$ and $\addr' \% \szCap = 0$, so $ \addr_0 \not\in[\addr', \addr' + \szCap)$. Thus, \eqref{eq:mono-reach-cap-non-overlap} holds.
    \item $\sz = \szCap$, $v_m = 0 \land \taint_0 = \untainted$:
    in this case, there should be $\addr_0 \% \szCap = 0$.
    Furthermore, by \ref{helpersys:spill-value}, there should be
    \begin{align*}
      & \memData'[\addr_0, \addr_0 + \szCap] = (v_0, \stdMagic) &
      & \memTag'[\addr_0/\szCap] = \tcap.
    \end{align*}
    Note that $\memData'[\addr', \addr' + \szCap] = (\addr, \addr_m)$ and $\addr_m \neq \stdMagic$, so $\addr' \neq \addr_0$.
    Furthermore, $\addr' \% \szCap = \addr_0 \% \szCap = 0$, so \eqref{eq:mono-reach-cap-non-overlap} holds.
    \item $\sz = \szCap$, $v_m \neq 0 \lor \taint_0 = \tainted$:
    in this case, there should be $\addr_0 \% \szCap = 0$.
    By \ref{helpersys:spill-value}, $\memTag'[\addr_0/\szCap] = c = \tval$.
    Note that $\memTag'[\addr' / \szCap] = \tcap$ and $\addr' \% \szCap = 0$, so \eqref{eq:mono-reach-cap-non-overlap} holds.
  \end{itemize}
  By \eqref{eq:mono-reach-cap-non-overlap}, $\memData[\addr', \addr' + \szCap] = \memData'[\addr', \addr' + \szCap]$ and $\memTag[\addr' / \szCap] = \memTag'[\addr' / \szCap]$.
  This implies that \eqref{eq:mono-reach-cap-hyp-1} still holds if we replace $\memData'$ with $\memData$, and $\memTag'$ with $\memTag$. Therefore, by \ref{rc:ind}, there should be $\stdCap{\addr} \in \rc{\reg, \mem, z}$.
  \item $\stdCap{\addr} \in \rc{\reg, \mem', z}$ was proved with \ref{rc:shrink}:
  By \ref{rc:shrink}, there exists $\stdCap{\addr'}$ such that $\stdCap{\addr'} \in \rc{\reg, \mem', z}$ and $\stdCap{\addr} \sqsubseteq \stdCap{\addr'}$.
  By induction hypothesis, there should also be $\stdCap{\addr'} \in \rc{\reg, \mem, z}$.
  Therefore, by \ref{rc:shrink}, $\stdCap{\addr} \in \rc{\reg, \mem, z}$ holds.
\end{itemize}
\end{proof}
\begin{lemma}[Monotonicity of reachable capabilities after store capability]
  \label{lemma:storemono2}
  For every register $\reg$ and memories $m =(\memData, \memTag)$, and
  $\mem' = (\memData', \memTag') = (\memData[[\addr_0, \addr_0+2) \mapsto \val], \memTag[\addr_0/2 \mapsto c])$, where $\addr_0 \%2 = 0$ and
  $(v, c)= \stdWriteRes{((v_0, \stdCapMeta{v}), t_0, \tcap), \taint_c, \szCap}$, we
  have
  \[
    \rc{\reg, \mem'}
    \subseteq \rc {\reg, \mem} \cup
    \qty{\stdCap{\addr}: \stdCap{\addr} \sqsubseteq (v_0, \stdCapMeta{v})}.
  \]
\end{lemma}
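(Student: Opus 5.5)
The plan mirrors the proof of \Cref{lemma:storemono1}. Fix a register $\vz$; it suffices to show that every $\stdCap{\addr} \in \rc{\reg, \mem', \vz}$ lies in $\rc{\reg, \mem, \vz} \cup D$, where $D \defsym \qty{\stdCap{\addr} : \stdCap{\addr} \sqsubseteq (v_0, \stdCapMeta{v})}$. I argue by induction on the derivation of $\stdCap{\addr} \in \rc{\reg, \mem', \vz}$.

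Two preliminary facts about the write. By Rule~\ref{helpersys:write-cap}, $v = (v_0, \stdPackTaint{\stdCapMeta{v}, t_0})$ and $c = \tcap$. Since $\stdUnpackTaint{\stdPackTaint\cdot}$ is the identity, unpacking the second word written at $\addr_0$ returns exactly $\stdCapMeta{v}$. Moreover, $\sqsubseteq$ is reflexive and transitive by \Cref{def:cap-mono}, so $D$ is downward closed.

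\emph{Case \ref{rc:base}.} The conclusion is immediate, because the register file is unchanged.

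\emph{Case \ref{rc:shrink}.} We have $\stdCap{\addr} \sqsubseteq \stdCap{\addr'}$ with $\stdCap{\addr'} \in \rc{\reg, \mem', \vz}$. By the induction hypothesis, $\stdCap{\addr'}$ lies in $\rc{\reg, \mem, \vz}$ or in $D$. In the first case \ref{rc:shrink} over $\mem$ places $\stdCap{\addr}$ in $\rc{\reg, \mem, \vz}$. In the second case downward closure places $\stdCap{\addr}$ in $D$.

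\emph{Case \ref{rc:ind}.} A parent $(\_, (p, b, e, t)) \in \rc{\reg, \mem', \vz}$ covers an aligned slot $[\addr', \addr'+2) \subseteq [b, e)$ whose contents in $\mem'$ decode to $\stdCap{\addr}$. Both $\addr'$ and $\addr_0$ are multiples of $2$, so either $\addr' = \addr_0$ or the two slots are disjoint; I split on this.
\begin{itemize}
\item If $\addr' = \addr_0$, then by the preliminary facts the decoded capability is exactly $(v_0, \stdCapMeta{v})$. It therefore lies in $D$ by reflexivity, independently of where the parent came from.
\item If the slots are disjoint, then $\memData'$ and $\memTag'$ agree with $\memData$ and $\memTag$ on $[\addr', \addr'+2)$, so all premises of \ref{rc:ind} except the parent's membership transfer to $\mem$. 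If the induction hypothesis puts the parent in $\rc{\reg, \mem, \vz}$, then \ref{rc:ind} over $\mem$ concludes.
\end{itemize}

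The main obstacle is the remaining subcase. Here the parent lies only in $D$, that is, it is a capability derived from the stored one, and the slot it reads is an old capability in unchanged memory. What we get is $[\addr', \addr'+2) \subseteq [b, e) \subseteq$ the bounds of $(v_0, \stdCapMeta{v})$. Re-applying \ref{rc:ind} over $\mem$ would then need a parent in $\rc{\reg, \mem, \vz}$ whose bounds contain this slot. The natural candidate is $(v_0, \stdCapMeta{v})$ itself, since its bounds contain the slot by the inclusion just noted. The statement as worded gives no such membership, so this subcase does not close on the stated hypotheses.

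The first thing I would try is to strengthen the induction so that it carries along the register $\vz$ from which the parent descends. This would let me locate a capability in $\rc{\reg, \mem, \vz}$ with bounds containing $[b, e)$ without assuming anything about the stored value. I expect this to succeed only if $(v_0, \stdCapMeta{v})$ is itself reachable in $\rc{\reg, \mem}$. At the lemma's sole use site, in the store case of \Cref{lemma:isa-wf}, this reachability is available from \Cref{lemma:exprmono} together with \ref{rc:shrink}, because $(v_0, \stdCapMeta{v})$ is the value of a register expression. Making the argument go through for the lemma exactly as stated is the step I expect to fail without that extra fact.
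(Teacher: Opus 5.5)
Your case analysis is correct, and the subcase you could not close is a defect of the statement, not of your method: the lemma is false as written. The paper's proof runs the same induction. In both the \ref{rc:ind} and \ref{rc:shrink} cases it asserts that the induction hypothesis places the parent in $\rc{\reg, \mem, \vz}$. That step is carried over from \Cref{lemma:storemono1}, but here the hypothesis only gives membership in $\rc{\reg, \mem, \vz} \cup D$. Your downward-closure argument legitimately repairs the \ref{rc:shrink} case. The \ref{rc:ind} case with the parent only in $D$ and an unchanged slot cannot be repaired, as the following counterexample shows.

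Let $\vz$ be the only register holding a capability, with $\reg(\vz) = ((0, (\readwrite, 0, 2, \untainted)), \untainted, \tcap)$. Giving $\pc$ an execute capability with one-word bounds elsewhere changes nothing, since \ref{rc:ind} needs a two-word slot inside the parent's bounds. Let $\memTag$ be $\tval$ everywhere except $\memTag[100/\szCap] = \tcap$, with $\memData[100, 102) = (7, \stdPackTaint{(\readwrite, 200, 300, \untainted), \untainted})$; packed metadata is never $\stdMagic$. The only aligned slot inside $[0, 2)$ is untagged, so every capability in $\rc{\reg, \mem}$ has bounds inside $[0, 2)$.

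Now store $(v_0, \stdCapMeta{v}) = (100, (\readwrite, 100, 102, \untainted))$ at $\addr_0 = 0$. In $\mem'$, two applications of \ref{rc:ind} starting from $\reg(\vz)$ reach first the stored capability and then $(7, (\readwrite, 200, 300, \untainted))$. That capability is not in $\rc{\reg, \mem}$, and it is not below the stored capability, because $[200, 300) \not\subseteq [100, 102)$. The problem is that $D$ records the shrinkings of the stored capability but not what becomes transitively reachable through it.

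The fix is the one you anticipate: add the hypothesis $(v_0, \stdCapMeta{v}) \in \rc{\reg, \mem}$. This holds in the store case of the proof of \Cref{lemma:reach-mem}, where the lemma is actually consumed, because the stored value is a register's content. There it follows by \ref{rc:base}, or by \Cref{lemma:exprmono} with \ref{rc:shrink} as the paper argues.

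With this hypothesis, $D \subseteq \rc{\reg, \mem}$ by \ref{rc:shrink}, and it is cleaner to prove $\rc{\reg, \mem', \vz} \subseteq \rc{\reg, \mem}$ directly for every $\vz$:
\begin{itemize}
\item The induction hypothesis now always puts the parent in some $\rc{\reg, \mem, y}$.
\item Slot $\addr_0$ decodes to the stored capability, which is reachable by assumption.
\item Every other aligned slot is unchanged, so \ref{rc:ind} and \ref{rc:shrink} apply within $\rc{\reg, \mem, y}$.
\end{itemize}
The right-hand side must be the full union $\rc{\reg, \mem}$, not $\rc{\reg, \mem, \vz}$ as in your proposed strengthening. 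The stored capability may be reachable in $\mem$ only through a register $y \neq \vz$, while $\vz$ merely covers $\addr_0$. In that case no witness in $\rc{\reg, \mem, \vz}$ need exist, even under the extra hypothesis.
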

\begin{proof}
Denote $R = \qty{\stdCap{\addr}: \stdCap{\addr} \sqsubseteq (v_0, \stdCapMeta{v})}$.
It suffices to prove that for every register $z$, $\rc{\reg, \mem', z} \subseteq \rc{\reg, \mem, z} \cup R$.
So it suffices to prove that for all $\stdCap{\addr}$ such that $\stdCap{\addr} \in \rc{\reg, \mem', z}$, there must also be $\stdCap{\addr} \in \rc{\reg, \mem, z} \cup R$.

The proof is by induction on the proof derivation. Consider the following cases:
\begin{itemize}
  \item $\stdCap{\addr} \in \rc{\reg, \mem', z}$ was proved with \ref{rc:base}: in this case, the conclusion is trivial.
  \item $\stdCap{\addr} \in \rc{\reg, \mem', z}$ was proved with \ref{rc:ind}:
  By \ref{rc:ind}, there must exist $(\addr_1, (p, b, e, t))$ such that $(\addr_1, (p, b, e, t)) \in \rc{\reg, \mem', z}$, and by induction hypothesis, there should also be $(\addr_1, (p, b, e, t)) \in \rc{\reg, \mem, z}$.
  By \ref{rc:ind}, $\stdCap{\addr} \in \rc{\reg, \mem', z}$ implies that there exists $\addr'$ such that
  \begin{equation}
    \begin{aligned}
    & [\addr', \addr' + \szCap) \in [b, e) \\
    & \memData'[\addr', \addr' + \szCap] = (\addr, \addr_m)
    \quad \memTag'[\addr' / \szCap] = \tcap \quad \addr' \% 2 = 0 \\
    & \stdUnpackTaint{\addr_m} = (\stdCapMeta{\addr}, \_) \quad \addr_m \neq \stdMagic.
    \end{aligned}
    \label{eq:mono-reach-cap-hyp-2}
  \end{equation}
  Consider the following cases:
  \begin{itemize}
    \item $\addr' = \addr_0$: in this case, there should be
    \begin{equation*}
      ((\addr, \addr_m), \tcap) = (v, c) = \stdWriteRes{((v_0, \stdCapMeta{v}), \taint_0, \tcap), \taint_c, \szCap}.
    \end{equation*}
    By \ref{helpersys:write-cap}, this implies that
    \begin{align*}
      & \addr = v_0 && \addr_m = \stdPackTaint{\stdCapMeta{v}, \taint_0}.
    \end{align*}
    Hence,
    \begin{equation*}
      (\stdCapMeta{v}, \taint_0) = \stdUnpackTaint{\addr_m} = (\stdCapMeta{\addr}, \_).
    \end{equation*}
    In short, the above analysis shows that $\stdCap{\addr} = (v_0, \stdCapMeta{v})$. Thus, $\stdCap{\addr} \in R \subseteq \rc{\reg, \mem, z} \cup R$.
    \item $\addr' \neq \addr_0$: Since $\addr' \% \szCap = \addr_0 \% \szCap = 0$, then $[\addr', \addr' + \szCap) \cap [\addr_0, \addr_0 + \szCap) = \emptyset$.
    Thus, $\memData[\addr', \addr' + \szCap] = \memData'[\addr', \addr' + \szCap]$ and $\memTag[\addr' / \szCap] = \memTag'[\addr' / \szCap]$.
    This implies that \eqref{eq:mono-reach-cap-hyp-2} still holds if we replace $\memData'$ with $\memData$, and $\memTag'$ with $\memTag$. Therefore, by \ref{rc:ind}, there should be $\stdCap{\addr} \in \rc{\reg, \mem, z} \subseteq \rc{\reg, \mem, z} \cup R$.
  \end{itemize}
  \item $\stdCap{\addr} \in \rc{\reg, \mem', z}$ was proved with \ref{rc:shrink}:
  By \ref{rc:shrink}, there exists $\stdCap{\addr'}$ such that $\stdCap{\addr'} \in \rc{\reg, \mem', z}$ and $\stdCap{\addr} \sqsubseteq \stdCap{\addr'}$.
  By induction hypothesis, there should also be $\stdCap{\addr'} \in \rc{\reg, \mem, z}$.
  Therefore, by \ref{rc:shrink}, $\stdCap{\addr} \in \rc{\reg, \mem, z}$ holds.
\end{itemize}
\end{proof}
\subsubsection{ISA Maintains Public Equivalence}
\label{sec:isa-sec-proof}
\begin{definition}[Untainted Projection]
We define untainted projection of a value associated with taint and capability tag, i.e., $(v, \taint, c)$ as
\begin{equation*}
  \lowproj{(v, \taint, c)} =
  \begin{cases}
    (v, \untainted, c) & \taint = \untainted \\
    \bot & \taint = \tainted.
  \end{cases}
\end{equation*}
\label{def:low-proj-val}
\end{definition}

\begin{definition}[Public Equivalence for ISA States]
ISA states
$S = \cf{\mem, \reg}$, $S' = \cf{\mem', \reg'}$
are \emph{publicly equivalent}, i.e., $S \pubeq S'$, if
\begin{align*}
\forall x \in \Reg, &&&
\lowproj{\reg(x)} = \lowproj{\reg'(x)}.
\end{align*}
and
\begin{multline*}
\forall \addr, \sz \in \qty{\szAddr, \szCap}, \taint,\\
\begin{aligned}
& ((\sz = \szCap \Rightarrow \addr \% \szCap = 0) \\
& \land [\addr, \addr + \sz) \subseteq \taintAddr{\reg, \mem, \taint} \cap \taintAddr{\reg', \mem', \taint}) \\
\Rightarrow \quad &
(\lowproj{\stdReadRes{\memData[\addr, \addr + \sz], t, \memTag[\addr / \szCap], \sz}} \\
& = \lowproj{\stdReadRes{\memData'[\addr, \addr + \sz], t, \memTag'[\addr / \szCap], \sz}})
\end{aligned}
\end{multline*}
where $\mem = (\memData, \memTag)$, $\mem' = (\memData', \memTag')$.
\label{def:pub-eq-isa}
\end{definition}

\begin{lemma}[ISA Model Maintains Public Equivalence]
Let $S_0$, $S_0'$ be well-formed initial ISA states such that $S_0\pubeq S_0'$ and $S_0 \equiv_{\mathit{ISA}}S_0'$.
Then, for all $n$ such that
\begin{equation*}
  S_0 \step{o_1} S_1 \step{o_2} S_2 \dots \step{o_n} S_n,
\end{equation*}
there must exist $S_1', \dots, S_n'$ such that
\begin{equation*}
  S_0' \step{o_1} S_1 \step{o_2} S_2' \dots \step{o_n} S_n',
\end{equation*}
and $S_n \pubeq S_n'$.
\label{lemma:isa-pubeq}
\end{lemma}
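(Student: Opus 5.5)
We argue by induction on $n$, proving the stronger statement that the primed run exists with the same observations, that $S_k \pubeq S_k'$, and that $\wf{S_k}$ and $\wf{S_k'}$ hold at every $k \le n$. Well-formedness is preserved by \Cref{lemma:isa-wf}. For existence of the primed step, $S_0 \equiv_{\mathit{ISA}} S_0'$ gives directly that $S_0'$ can perform an $n$-step run emitting $o_1 \dots o_n$. The ISA semantics is deterministic: the instruction is fixed by the $\pc$, and each rule is determined by its premises. Hence the matching run is unique, and the prefix $S_0' \step{o_1\dots o_k} S_k'$ is the one induced at every stage. The real content is therefore the single-step claim: if $S \pubeq S'$, both are well-formed, $S \step{o} T$ and $S' \step{o} T'$ with the same $o$, then $T \pubeq T'$.

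We first prove an auxiliary fact: the public-equivalence part on registers is preserved by expression evaluation. If $\lowproj{\reg(x)} = \lowproj{\reg'(x)}$ for all $x$, then $\lowproj{\sem e_\reg} = \lowproj{\sem e_{\reg'}}$. The proof is by induction on $e$. Monotonicity of operator taints means that an untainted result only depends on untainted inputs, provided we view $\overline{\op}$ as a function of the projected inputs. If that reading of the axioms is too weak, we would additionally assume that untainted outputs are determined by the projected inputs; in that case we state this as an explicit requirement on operators. We also need that the reachable sets $\taintAddr{\reg,\mem,t}$ agree on the two sides. This follows because untainted capabilities in registers coincide, and capabilities reached through memory are governed by the memory clause.

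The core of the argument is a case analysis on the rule fired.

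For \ref{isa:isa-assign}, the auxiliary fact gives equivalence of the updated registers, and $\pc$ is untainted and equal on both sides, since the $\pc$ must be untainted to fetch.

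For \ref{isa:isa-jmp} and \ref{isa:isa-beqz}, the common observation fixes the jump target, respectively the branch outcome. The new $\pc$ therefore coincides on both sides.

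For \ref{isa:isa-load}, the observation $\stdLoad{c}$ fixes the same capability $c$ on both sides, so in particular the same inner taint $t_c$ and the same address. The loaded value is $\stdReadRes{\cdot, t_c, \cdot, \sz}$ on a range inside $\taintAddr{\cdot,t_c}$ on both sides. The memory clause of $\pubeq$ then gives exactly equality of the projections. Memory is unchanged in this case.

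For \ref{isa:isa-store}, the observation fixes the capability, and for $t_c = \untainted$ it also fixes the stored value; for $t_c=\tainted$ the stored value is $\bot$.

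The store case is where I expect the main obstacle. We must show that, for every aligned range inside a region of taint $t$ that is reachable in both post-states, the $\lowproj{\stdReadRes{}}$ values agree. There are three situations to handle.

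First, ranges disjoint from the written cells are unchanged. We must still check that reachability does not grow, which \Cref{lemma:reach-mem} gives.

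Second, consider overlapping ranges with $t = t_c$. If $t_c=\untainted$ the written value is identical on both sides. If $t_c = \tainted$, then \ref{helpersys:read-data} yields a tainted result, whose projection is $\bot$ on both sides. The exception is the URR and read-cap paths of \ref{helpersys:unspill-value} and \ref{helpersys:read-cap}, whose output taint depends on the stored content. For these we use that the stored value's projection $\lowproj{\sem x}$ agrees on both sides. A URR is produced exactly when the value is untainted with a zero high word, so the tag and magic choice coincide whenever either side's projection is defined. Likewise, a stored capability's packed outer taint is untainted only if the value itself is public and hence equal on both sides.

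Third, overlapping ranges with $t \neq t_c$ are impossible: they would make the region lie in both tainted and untainted reachable memory, contradicting $\wf{}$. This is exactly where well-formedness is used, and it must be carried in the induction.

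The delicate point is the partial-overlap sub-case: a one-word store into a two-word cell, which clears the tag. Here we check alignment and tag-clearing consistency on both sides by case analysis on the rules of \Cref{fig:mem-op-semantics}.
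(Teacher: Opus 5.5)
Your plan matches the paper's proof: induction on $n$, well-formedness carried via \Cref{lemma:isa-wf}, a one-step argument by cases on the executed instruction, and \Cref{lemma:reach-mem} to show that reachable memory only shrinks. In the store case you also use well-formedness to force $t = t_c$ on overlapping ranges, plus the write/read round trip that the paper packages as \Cref{lemma:read-write-lowproj}. Your determinism remark for obtaining the primed run is a useful addition the paper leaves implicit. Two steps need repair, however.

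First, the claim that $\taintAddr{\reg,\mem,t}$ and $\taintAddr{\reg',\mem',t}$ coincide is false. A register may hold tainted capabilities with different bounds on the two sides, since both project to $\bot$. Memory is constrained only inside the intersection of the two reachable sets, which is exactly why \Cref{def:pub-eq-isa} quantifies over that intersection. You do not need the claim. The observation fixes the accessed capability and each side evaluates its own $e$ to it, so \Cref{lemma:addr-same-taint} places the accessed range in each side's own $\taintAddr{\cdot, t_c}$. \Cref{lemma:reach-mem} then shows that the post-step intersection is contained in the pre-step one. You need this last fact in every case, not only for stores: assignments, loads, jumps and branches update registers, and hence the reachable set, even though memory is unchanged.

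Second, you never show that both runs execute the same instruction. Equal observations identify the rule but not its operands, and your per-case arguments (same destination $x$, same expression $e$) depend on it. The missing step is the paper's \Cref{lemma:read-mem-inst}:
\begin{itemize}
\item $\sem{\pc}$ must be untainted to fetch, so it is equal on both sides;
\item by \Cref{lemma:isa-exp-cap-mono} it is reachable with inner taint $\untainted$ on each side, so the fetched address lies in the untainted intersection;
\item the memory clause with $\sz = \szAddr$, together with \ref{helpersys:read-data}, then yields the same instruction word.
\end{itemize}

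Finally, your hedge on operators is unnecessary under the stated axioms. If $\overline{\op}$ yields an untainted result on one side, $\bot$-eagerness and taint monotonicity force all inputs there to be untainted. Those inputs therefore equal the other side's inputs, and the outputs coincide. Your caution is not entirely misplaced, though: the fallback branch of the paper's concrete $\stdCapOff$ returns $(0,\untainted,\tval)$ on ill-typed inputs of any taint, which violates that monotonicity axiom.
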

\begin{proof}
The proof of the inductive claim is by induction on $n$.
As the base case is trivial, we directly go to the inductive case.
Assuming $S_n\pubeq S_n'$, we must prove that the target configurations $S_{n+1}$ and $S_{n+1}'$ satisfy $S_{n+1} \pubeq S_{n+1}'$.
Recall that we denote $S_n \step{o_{n+1}} S_{n+1}$ and $S_n' \step{o_{n+1}} S_{n+1}'$.
We also denote $S_n=\cf{\mem_n, \reg_n}$, $S_n' = \cf{\mem_n', \reg_n'}$.
By repeatedly applying \Cref{lemma:isa-wf}, we can get $\wf{S_n}$ and $\wf{S_n'}$.

By rules for ISA semantics and \ref{memory:read-mem-inst}, register $\pc$ should be untainted in both configurations $S_n$ and $S_n'$, i.e.,
\begin{align*}
  \sem{\pc}_{\reg_n} & = (\_, \untainted, \tcap) & \sem{\pc}_{\reg_n'} & = (\_, \untainted, \tcap).
\end{align*}
Since $S_n \pubeq S_n'$, then by \Cref{def:pub-eq-isa} there should be $\sem{\pc}_{\reg_n} = \sem{\pc}_{\reg_n'}$.
Furthermore, by \Cref{lemma:isa-exp-cap-mono}, we can get
\begin{align*}
  \sem{\pc}_{\reg_n} & \in \rc{\reg_n. \mem_n} & \sem{\pc}_{\reg_n'} & \in \rc{\reg_n', \mem_n'}.
\end{align*}
Then, by \Cref{lemma:read-mem-inst}, there should be an $\instr$ such that
\begin{align*}
  \instr & = \stdReadMemInst{\mem_n, \sem{\pc}_{\reg_n}} \\
  & = \stdReadMemInst{\mem_n', \sem{\pc}_{\reg_n'}}
\end{align*}
The proof proceeds by case analysis on $\instr$.
We consider the following cases:

\pgheading{Case $\instr = x \gets e$}
By \ref{isa:isa-assign}, the next states are
\begin{align*}
  \reg_{n+1} & = \reg_n[\pc \mapsto \sem{\stdNextPc{\pc}}_{\reg_n}][x \mapsto \sem{e}_{\reg_n}] \\
  \reg_{n+1}' & = \reg_n'[\pc \mapsto \sem{\stdNextPc{\pc}}_{\reg_n'}][x \mapsto \sem{e}_{\reg_n'}]\\
  S_{n+1} & = \cf{\mem_n, \reg_{n+1}} \\
  S_{n+1}' & = \cf{\mem_n', \reg_{n+1}'}.
\end{align*}
By \Cref{def:pub-eq-isa}, $S_n \pubeq S_n'$ implies that
\begin{align*}
  \lowproj{\sem{e}_{\reg_n}} & = \lowproj{\sem{e}_{\reg_n'}} &
  \lowproj{\sem{\stdNextPc{\pc}}_{\reg_n}} & = \lowproj{\sem{\stdNextPc{\pc}}_{\reg_n'}}.
\end{align*}
Hence,
\begin{align*}
\forall x \in \Reg, &&&
\lowproj{\reg_{n+1}(x)} = \lowproj{\reg_{n+1}'(x)}.
\end{align*}
Denote $\mem_n = (\memData, \memTag)$, $\mem_n' = (\memData', \memTag')$.
To prove $S_{n+1} \pubeq S_{n+1}'$, it suffices to prove that
for all $\addr$, $\sz \in \qty{\szAddr, \szCap}$, $\taint$ such that
\begin{equation*}
\begin{aligned}
  & (\sz = \szCap \Rightarrow \addr \% \szCap = 0) \\
  \land \; & [\addr, \addr + \sz) \subseteq \taintAddr{\reg_{n+1}, \mem_n, \taint} \cap \taintAddr{\reg_{n+1}', \mem_n', \taint},
\end{aligned}
\label{eq:isa-assign-assume}
\end{equation*}
there must be
\begin{equation}
\begin{aligned}
  & \lowproj{\stdReadRes{\memData[\addr, \addr + \sz], t, \memTag[\addr / \szCap], \sz}} \\
  = \; & \lowproj{\stdReadRes{\memData'[\addr, \addr + \sz], t, \memTag'[\addr / \szCap], \sz}}.
\end{aligned}
\label{eq:isa-assign-goal}
\end{equation}
Fix $\addr$, $\sz$, $\taint$.
By \Cref{lemma:reach-mem}, for all $\taint$,
\begin{equation}
\begin{aligned}
[\addr, \addr + \sz]
& \subseteq (\taintAddr{\reg_{n+1}, \mem_n, \taint} \cap \taintAddr{\reg_{n+1}', \mem_n', \taint})\\
& \subseteq
(\taintAddr{\reg_n, \mem_n, \taint} \cap \taintAddr{\reg_n', \mem_n', \taint})
\end{aligned}
\end{equation}
By induction hypothesis, i.e., $S_n \pubeq S_n'$, \eqref{eq:isa-assign-goal} holds.

\pgheading{Case $\instr = \load{x}{e}{\sz}$}
Denote
\begin{align*}
  & (\stdCap{\addr}, \_, \tcap) = \sem{e}_{\reg_n} &
  & (\stdCap{\addr'}, \_, \tcap) = \sem{e}_{\reg_n'}.
\end{align*}
By \ref{isa:isa-load}, there should be
\begin{equation*}
  o_{n+1} = \stdLoad{\addr, \stdCapMeta{\addr}} = \stdLoad{\addr', \stdCapMeta{\addr'}}.
\end{equation*}
By \ref{isa:isa-load}, we can denote
\begin{align*}
  (v, t, c) & = \stdReadMem{\mem_n, (\stdCap{\addr}, \_, \tcap), \sz} \\
  (v', t', c') & = \stdReadMem{\mem_n', (\stdCap{\addr'}, \_, \tcap), \sz}.
\end{align*}
and get
\begin{align*}
  S_{n+1} & = \cf{\mem_n, \reg_n[\pc \mapsto \sem{\stdNextPc{\pc}}_{\reg_n}][x \mapsto (v, t, c)]}\\
  S_{n+1}' & = \cf{\mem_n', \reg_n'[\pc \mapsto \sem{\stdNextPc{\pc}}_{\reg_n'}][x \mapsto (v', t', c')]}.
\end{align*}
Hence, it suffices to prove that $\lowproj{(v, t, c)} = \lowproj{(v', t', c')}$.
Denote
\begin{align*}
& \mem_n = (\memData, \memTag) \qquad \mem_n' = (\memData', \memTag') \\
& \stdCapMeta{\addr} = \stdCapMeta{\addr'} = (\perm, \rbase, \rend, t_c).
\end{align*}
By \Cref{lemma:addr-same-taint} and \ref{memory:check-cap},
\begin{align*}
& \sz = \szCap \Rightarrow \addr \% \szCap = 0 \\
& [\addr, \addr + \sz) \subseteq \taintAddr{\reg_n, \mem_n, \taint_c} \cap \taintAddr{\reg_n', \mem_n', \taint_c}.
\end{align*}

Thus, by \Cref{def:pub-eq-isa} and \ref{memory:read-mem},
we have
\begin{align*}
& \lowproj{(v, t, c)} = \lowproj{\stdReadRes{\memData[\addr, \addr + \sz], \taint_c, \memTag[\addr / \szCap], \sz}}\\
= \; & \lowproj{(v', t', c')} = \lowproj{\stdReadRes{\memData'[\addr, \addr + \sz], \taint_c, \memTag'[\addr / \szCap], \sz}}.
\end{align*}

\pgheading{Case $\instr = \store{x}{e}{\sz}$}
Denote
\begin{align*}
  & (\stdCap{\addr}, \_, \tcap) = \sem{e}_{\reg_n} &
  & (\stdCap{\addr'}, \_, \tcap) = \sem{e}_{\reg_n'},
\end{align*}
where $\stdCapMeta{\addr} = (\perm, \rbase, \rend, \taint_c)$ and
$\stdCapMeta{\addr'} = (\perm', \rbase', \rend', \taint_c')$.
Since the two configurations generate the same observation,
then by \ref{isa:isa-store},
\begin{align*}
& o_{n+1} = \stdStore{v, (\addr, \stdCapMeta{\addr})} = \stdStore{v', (\addr', \stdCapMeta{\addr'})}\\
& v = (\taint_c = \untainted) \;?\; \sem{x}_{\reg_n} : \bot \qquad
  v' = (\taint_c' = \untainted) \;?\; \sem{x}_{\reg_n'} : \bot.
\end{align*}
Hence, $\stdCap{\addr} = \stdCap{\addr'}$ and $v = v'$; in particular, $\taint_c = \taint_c'$.
Denote
\begin{align*}
\sem{x}_{\reg_n} & = (v_x, \taint_x, c_x)
& \sem{x}_{\reg_n'} & = (v_x', \taint_x', c_x').
\end{align*}
Then, by discussing the value of $\taint_c$, it is straightforward to derive that
\begin{equation*}
\lowproj{(v_x, \taint_c, c_x)} = \lowproj{(v_x', \taint_c, c_x')}.
\end{equation*}
By \Cref{def:pub-eq-isa}, $S_n \pubeq S_n'$ implies that
\begin{equation*}
\lowproj{(v_x, \taint_x, c_x)} = \lowproj{(v_x', \taint_x', c_x')}.
\end{equation*}

By \ref{isa:isa-store}, the next states can be derived as
\begin{align*}
\mem_{n+1} & = \stdWriteMem{\mem_n, ((\addr, (\perm, \rbase, \rend, \taint_c)), \_, \tcap), \sz, \sem{x}_{\reg_n}} \\
\mem_{n+1}' & = \stdWriteMem{\mem_n', ((\addr, (\perm, \rbase, \rend, \taint_c)), \_, \tcap), \sz, \sem{x}_{\reg_n'}} \\
S_{n+1} & = \cf{\mem_{n+1}, \reg_n[\pc \mapsto \sem{\stdNextPc{\pc}}_{\reg_n}]}\\
S_{n+1}' & = \cf{\mem_{n+1}', \reg_n'[\pc \mapsto \sem{\stdNextPc{\pc}}_{\reg_n'}]}.
\end{align*}
It is straightforward to derive that
\begin{equation*}
  \lowproj{\reg_n[\pc \mapsto \sem{\stdNextPc{\pc}}_{\reg_n}]} =
  \lowproj{\reg_n'[\pc \mapsto \sem{\stdNextPc{\pc}}_{\reg_n'}]}.
\end{equation*}
Hence, we focus on proving the statement about memory in \Cref{def:pub-eq-isa} holds for $\mem_{n+1}$, $\mem_{n+1}'$.
Denote
\begin{align*}
\mem_n & = (\memData, \memTag) & \mem_n' & = (\memData', \memTag')\\
\mem_{n+1} & = (\memDataN{1}, \memTagN{1}) & \mem_{n+1}' & = (\memDataN{1}', \memTagN{1}').
\end{align*}
It suffices to prove that for all $\addr_0$, $\sz_0\in \qty{\szAddr, \szCap}$, $\taint_0$ where
\begin{align*}
  & (\sz_0 = \szCap \Rightarrow \addr_0 \% \szCap = 0)\\
  \land \; &
  [\addr_0, \addr_0 + \sz_0) \subseteq \taintAddr{\reg_{n+1}, \mem_{n+1}, \taint_0} \cap \taintAddr{\reg_{n+1}', \mem_{n+1}', \taint_0}.
\end{align*}
the following statement holds:
\begin{equation}
\begin{aligned}
& \lowproj{\stdReadRes{\memDataN{1}[\addr_0, \addr_0 + \sz_0], \taint_0, \memTagN{1}[\addr_0 / \szCap], \sz_0}} \\
= \; &\lowproj{\stdReadRes{\memDataN{1}'[\addr_0, \addr_0 + \sz_0], \taint_0, \memTagN{1}'[\addr_0 / \szCap], \sz_0}}.
\end{aligned}
\label{eq:isa-store-goal}
\end{equation}
Fix $\addr_0$, $\sz_0\in \qty{\szAddr, \szCap}$, $\taint_0$.

By \Cref{lemma:reach-mem}, for all $\taint$,
\begin{equation}
\begin{aligned}
& (\taintAddr{\reg_{n+1}, \mem_{n+1}, \taint} \cap \taintAddr{\reg_{n+1}', \mem_{n+1}', \taint})\\
\subseteq \; &
(\taintAddr{\reg_n, \mem_n, \taint} \cap \taintAddr{\reg_n', \mem_n', \taint})
\end{aligned}
\end{equation}
By inductive hypothesis,
\begin{equation}
\begin{aligned}
& \lowproj{\stdReadRes{\memData[\addr_0, \addr_0 + \sz_0], \taint_0, \memTag[\addr_0 / \szCap], \sz_0}} \\
= \; &\lowproj{\stdReadRes{\memData'[\addr_0, \addr_0 + \sz_0], \taint_0, \memTag'[\addr_0 / \szCap], \sz_0}}.
\end{aligned}
\label{eq:isa-store-hyp}
\end{equation}

Our proof goes by case discussion on $\addr$, $\sz$, $\addr_0$, and $\sz_0$.
By \ref{memory:check-cap}, $\sz \% 0 \Rightarrow \addr \% 2 = 0$.
Consider the following cases. Note that we use $\addr / \szCap$ as a shortcut for $\lfloor \addr / \szCap \rfloor$.

\begin{itemize}
\item $\sz = \sz_0 = \szCap$, $\addr = \addr_0$:
By assumption, in this case there should be $\addr \% \szCap = \addr_0 \% \szCap = 0$.
By \Cref{lemma:addr-same-taint}, $\taint_0 = \taint_c$.
By \ref{memory:write-mem},
\begin{align*}
& (\memDataN{1}[\addr_0, \addr_0 + \sz_0], \memTagN{1}[\addr_0 / \szCap])\\
= \; & (\memDataN{1}[\addr, \addr + \szCap], \memTagN{1}[\addr / \szCap])\\
= \; & \stdWriteRes{\sem{x}_{r_n}, \taint_c, \szCap}\\
= \; & \stdWriteRes{(v_x, \taint_x, c_x), \taint_c, \szCap}.
\end{align*}
Similarly,
\begin{align*}
& (\memDataN{1}'[\addr_0, \addr_0 + \sz_0], \memTagN{1}'[\addr_0 / \szCap])\\
= \; & \stdWriteRes{(v_x', \taint_x', c_x'), \taint_c, \szCap}.
\end{align*}
By \Cref{lemma:read-write-lowproj}, \eqref{eq:isa-store-goal} holds.

\item $\sz = \sz_0 = \szCap$, $\addr \neq \addr_0$:
Similarly, $\addr \% \szCap = \addr_0 \% \szCap = 0$.
Then, $[\addr, \addr + \szCap) \cap [\addr_0, \addr + \szCap) = \emptyset$ and $\addr/\szCap \neq \addr_0/\szCap$.
Thus, by \ref{memory:write-mem}, \eqref{eq:isa-store-hyp} implies \eqref{eq:isa-store-goal}.

\item $\sz = \szAddr$, $\sz_0 = \szCap$, $\addr \in [\addr_0, \addr_0 + \szCap)$:
Note that in this case, $\addr_0 \% \szCap = 0$, so $\addr / \szCap = \addr_0 / \szCap$.
By \ref{memory:write-mem} and \ref{helpersys:write-data},
\begin{align*}
  (\memDataN{1}[\addr], \memTagN{1}[\addr / \szCap])
  = \stdWriteRes{(v_x, \taint_x, c_x), \taint_c, \szAddr}
  = (v_x, \tval).
\end{align*}
Similarly,
\begin{equation*}
  (\memDataN{1}'[\addr], \memTagN{1}'[\addr / \szCap]) = (v_x', \tval).
\end{equation*}
Then, by \ref{helpersys:read-data},
\begin{align*}
& \stdReadRes{\memDataN{1}[\addr_0, \addr_0 + \sz_0], \taint_0, \memTagN{1}[\addr_0 / \szCap], \sz_0}\\
= \; & \stdReadRes{\memDataN{1}[\addr_0, \addr_0 + \sz_0], \taint_0, \memTag[\addr_0 / \szCap], \sz_0} \\
= \; & \stdReadRes{\memDataN{1}[\addr_0, \addr_0 + \sz_0], \taint_0, \tval, \sz_0} \\
= \; & (\memDataN{1}[\addr_0, \addr_0 + \szCap], \taint_0, \tval).
\end{align*}
Similarly,
\begin{align*}
& \stdReadRes{\memDataN{1}'[\addr_0, \addr_0 + \sz_0], \taint_0, \memTagN{1}'[\addr_0 / \szCap], \sz_0}\\
= \; & (\memDataN{1}'[\addr_0, \addr_0 + \szCap], \taint_0, \tval).
\end{align*}
It suffices to prove
\begin{equation*}
  \lowproj{(\memDataN{1}[\addr_0, \addr_0 + \szCap], \taint_0, \tval)} = \lowproj{(\memDataN{1}'[\addr_0, \addr_0 + \szCap], \taint_0, \tval)}
\end{equation*}
Since $\addr \in [\addr_0, \addr_0 + \szCap)$, there should be either $\addr = \addr_0$ or $\addr = \addr_0 + \szAddr$.
Here we only prove the first case; the second one can be proved similarly.
When $\addr = \addr_0$,
\begin{align*}
  \memDataN{1}[\addr_0, \addr_0 + \szCap] & = (v_x, \memData[\addr_0 + \szAddr]) \\
  \memDataN{1}'[\addr_0, \addr_0 + \szCap] & = (v_x', \memData'[\addr_0 + \szAddr]).
\end{align*}
By inductive hypothesis,
\begin{align*}
  & \lowproj{\stdReadRes{\memData[\addr_0 + \szAddr], \taint_0, \memTag[\addr_0  / \szAddr], \szAddr}} \\
  = \;
  & \lowproj{\stdReadRes{\memData'[\addr_0 + \szAddr], \taint_0, \memTag'[\addr_0  / \szAddr], \szAddr}}.
\end{align*}
By \ref{helpersys:read-data}, this implies that
\begin{equation*}
  \lowproj{(\memData[\addr_0 + \szAddr], \taint_0, \tval)} = \lowproj{(\memData'[\addr_0 + \szAddr], \taint_0, \tval)}.
\end{equation*}
Furthermore, recall that $\lowproj{(v_x, \taint_c, c_x)} = \lowproj{(v_x', \taint_c, c_x')}$.
And since $\addr \in [\addr_0, \addr_0 + \szCap)$, it is straightforward to derive that $\taint_0 =\taint_c$ using \Cref{lemma:addr-same-taint}.
Therefore,
\begin{align*}
  & \lowproj{(\memDataN{1}[\addr_0, \addr_0 + \szCap], \taint_0, \tval)} \\
  = \; & \lowproj{((v_x, \memData[\addr_0 + \szAddr]), \taint_0, \tval)}
  = \lowproj{((v_x', \memData'[\addr_0 + \szAddr]), \taint_0, \tval)} \\
  = \; & \lowproj{(\memDataN{1}'[\addr_0, \addr_0 + \szCap], \taint_0, \tval)}
\end{align*}

\item $\sz = \szAddr$, $\sz_0 = \szCap$, $\addr \not\in [\addr_0, \addr_0 + \szCap)$:
Note that in this case $\addr_0 \% \szCap = 0$, so $\addr / \szCap \neq \addr_0 / \szCap$.
By \ref{memory:write-mem},
\begin{align*}
& \stdReadRes{\memDataN{1}[\addr_0, \addr_0 + \sz_0], \taint_0, \memTagN{1}[\addr_0 / \szCap], \sz_0}\\
= \; & \stdReadRes{\memData[\addr_0, \addr_0 + \sz_0], \taint_0, \memTag[\addr_0 / \szCap], \sz_0}\\
& \stdReadRes{\memDataN{1}'[\addr_0, \addr_0 + \sz_0], \taint_0, \memTagN{1}'[\addr_0 / \szCap], \sz_0}\\
= \; & \stdReadRes{\memData'[\addr_0, \addr_0 + \sz_0], \taint_0, \memTag'[\addr_0 / \szCap], \sz_0}.
\end{align*}
Therefore, \eqref{eq:isa-store-hyp} implies \eqref{eq:isa-store-goal}.

\item $\sz_0 = \szAddr$:
By \ref{helpersys:read-data},
\begin{align*}
  & \stdReadRes{\memDataN{1}[\addr_0, \addr_0 + \sz_0], \taint_0, \memTagN{1}[\addr_0 / \szCap], \sz_0}\\
  =\; & (\memDataN{1}[\addr_0], \taint_0, \tval) \\
  & \stdReadRes{\memDataN{1}'[\addr_0, \addr_0 + \sz_0], \taint_0, \memTagN{1}'[\addr_0 / \szCap], \sz_0}\\
  =\; & (\memDataN{1}'[\addr_0], \taint_0, \tval).
\end{align*}
Consider the following two cases:
\begin{itemize}
  \item $\addr_0 \in [\addr, \addr + \szCap)$:
  Note that by \ref{memory:write-mem},
  \begin{align*}
    (\memDataN{1}[\addr, \addr + \szCap], \memTagN{1}[\addr / \szCap])
    & = \stdWriteRes{(v_x, \taint_x, c_x), \taint_c, \szCap}\\
    (\memDataN{1}'[\addr, \addr + \szCap], \memTagN{1}'[\addr / \szCap])
    & = \stdWriteRes{(v_x', \taint_x', c_x'), \taint_c, \szCap}.
  \end{align*}
  By \Cref{lemma:read-write-lowproj}, this implies that
  \begin{equation*}
    \lowproj{(\memDataN{1}[\addr, \addr + \szCap], \taint_c, \memTagN{1}[\addr / \szCap])} =
    \lowproj{(\memDataN{1}'[\addr, \addr + \szCap], \taint_c, \memTagN{1}'[\addr / \szCap])}.
  \end{equation*}
  Note that $\addr_0 \in [\addr, \addr + \szCap)$ and by \Cref{lemma:addr-same-taint} there should be $\taint_0 = \taint_c$. Thus, $\lowproj{(\memDataN{1}[\addr_0], \taint_0, \tval)} = \lowproj{(\memDataN{1}'[\addr_0], \taint_0, \tval)}$, so the statement holds.
  \item $\addr_0 \not \in [\addr, \addr + \szCap)$:
  In this case,
  \begin{align*}
  & \stdReadRes{\memDataN{1}[\addr_0, \addr_0 + \sz_0], \taint_0, \memTagN{1}[\addr_0 / \szCap], \sz_0}\\
  = \; & (\memDataN{1}[\addr_0], \taint_0, \tval) = (\memData[\addr_0], \taint_0, \tval)\\
  = \; & \stdReadRes{\memData[\addr_0, \addr_0 + \sz_0], \taint_0, \memTag[\addr_0 / \szCap], \sz_0}.
  \end{align*}
  Similarly,
  \begin{align*}
  & \stdReadRes{\memDataN{1}'[\addr_0, \addr_0 + \sz_0], \taint_0, \memTagN{1}'[\addr_0 / \szCap], \sz_0}\\
  = \; & \stdReadRes{\memData'[\addr_0, \addr_0 + \sz_0], \taint_0, \memTag'[\addr_0 / \szCap], \sz_0}.
  \end{align*}
  Therefore, \eqref{eq:isa-store-hyp} implies \eqref{eq:isa-store-goal}.
\end{itemize}
\end{itemize}

\pgheading{Case $\instr = \jmp{e}$ or $\instr = \beqz{x}{\addr'}$}
The two cases can be proved similarly to the case where $\instr = x \gets e$.
\end{proof}

\begin{lemma}
For all $\mem$, $\reg$, $e$, $\sz$, $\perm'$ such that
\begin{align*}
  & \wf{\mem, \reg} \\
  & \sem{e}_{\reg} = ((\addr, (\perm, \rbase, \rend, \taint_c)), \_, \tcap) \\
  & \stdCheckCap{(\addr, (\perm, \rbase, \rend, \taint_c)), \sz, \perm'},
\end{align*}
then $[\addr, \addr + \sz) \subseteq \taintAddr{\reg, \mem, \taint_c}$.
Furthermore, if for some $\taint$,
\begin{equation*}
  [\addr, \addr + \sz) \cap \taintAddr{\reg, \mem, \taint} \neq \emptyset,
\end{equation*}
then there must be $\taint = \taint_c$.
\label{lemma:addr-same-taint}
\end{lemma}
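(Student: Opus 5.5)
The first claim follows directly from the reachability machinery already developed. I would invoke \Cref{lemma:isa-exp-cap-mono} on $\sem{e}_{\reg} = ((\addr, (\perm, \rbase, \rend, \taint_c)), \_, \tcap)$ to conclude that $(\addr, (\perm, \rbase, \rend, \taint_c)) \in \rc{\reg, \mem}$. Next, I would unfold \ref{memory:check-cap}: its premises $\rbase \le \addr$ and $\addr + \sz \le \rend$ give $[\addr, \addr+\sz) \subseteq [\rbase, \rend)$. By \Cref{def:reach-mem}, $[\rbase, \rend) \subseteq \taintAddr{\rc{\reg,\mem}, \taint_c} = \taintAddr{\reg, \mem, \taint_c}$, since the union in that definition ranges over exactly the reachable capabilities whose inner taint is $\taint_c$. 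This establishes $[\addr, \addr+\sz) \subseteq \taintAddr{\reg,\mem,\taint_c}$.

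For the second claim, I would fix $\taint$ with $[\addr, \addr+\sz) \cap \taintAddr{\reg,\mem,\taint} \neq \emptyset$ and pick a witness address $a$ in the intersection. By the first part, $a \in \taintAddr{\reg,\mem,\taint_c}$ as well. Since $\Taint = \{\untainted, \tainted\}$, suppose for contradiction that $\taint \neq \taint_c$. Then $\{\taint, \taint_c\} = \{\tainted, \untainted\}$, so $a \in \taintAddr{\reg,\mem,\tainted} \cap \taintAddr{\reg,\mem,\untainted}$. This contradicts the first conjunct of $\wf{\mem, \reg}$ in \Cref{def:isa-wf}, and hence $\taint = \taint_c$.

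There is no real obstacle here; the argument is a short unfolding of definitions. The only point needing care is confirming that \Cref{lemma:isa-exp-cap-mono} does not depend on $\mem$, which holds because it only uses \ref{rc:base} and \ref{rc:shrink}. It is also worth noting that the interval in the statement is nonempty whenever $\sz \ge 1$, which is guaranteed by $\sz \in \{1,2\}$. This nonemptiness is not actually needed for the argument above, since the second claim assumes a nonempty intersection.
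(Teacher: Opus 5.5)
Your proposal is correct and matches the paper's proof step for step: \Cref{lemma:isa-exp-cap-mono} gives reachability of the capability, the bounds premise of the capability check gives $[\addr, \addr+\sz) \subseteq [\rbase, \rend)$, \Cref{def:reach-mem} puts $[\rbase,\rend)$ inside $\taintAddr{\reg,\mem,\taint_c}$, and the disjointness conjunct of \Cref{def:isa-wf} forces $\taint = \taint_c$. Your witness-based argument for the second claim simply spells out what the paper compresses into a single appeal to well-formedness.
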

\begin{proof}
By \Cref{lemma:isa-exp-cap-mono},
$(\addr, (\perm, \rbase, \rend, \taint_c)) \in \rc{\reg, \mem}$.
Then, by \Cref{def:reach-mem}, $[\rbase, \rend) \subseteq \taintAddr{\reg, \mem, \taint_c}$.
Furthermore, by \ref{memory:check-cap}, $\stdCheckCap{(\addr, (\perm, \rbase, \rend, \taint_c)), \sz, \perm'}$ implies that $[\addr, \addr + \sz) \subseteq [\rbase, \rend)$.
Therefore, $[\addr, \addr + \sz) \subseteq \taintAddr{\reg, \mem, \taint_c}$.
By \Cref{def:isa-wf}, there should be $\taint = \taint_c$.
\end{proof}

\begin{lemma}
For all $v, v_m, \taint, c, \taint_c$, denote
\begin{equation*}
((v', v_m'), c') = \stdWriteRes{((v, v_m), \taint, c), \taint_c, \szCap}.
\end{equation*}
Then,
\begin{equation*}
((v, v_m), \taint', c) = \stdReadRes{(v', v_m'), \taint_c, c', \szCap},
\end{equation*}
where
\begin{equation*}
\taint' =
\begin{cases}
\taint_c & c = \tval \land v_m \neq 0 \land \taint = \untainted\\
\taint \sqcap \taint_c & \text{Otherwise}
\end{cases}
\end{equation*}
\label{lemma:read-write-res-szcap}
\end{lemma}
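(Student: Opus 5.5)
The proof is a direct case analysis on which rule of \Cref{fig:mem-op-semantics} produces $\stdWriteRes{((v, v_m), \taint, c), \taint_c, \szCap}$. We then check that the corresponding $\stdReadRes{}$ rule returns the claimed triple. Since $\sz = \szCap$, Rule~\ref{helpersys:write-data} cannot apply. The applicable write rule is therefore determined by the tag $c$: Rule~\ref{helpersys:write-cap} when $c = \tcap$, and Rule~\ref{helpersys:spill-value} when $c = \tval$.

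\emph{Case $c = \tcap$.} Here $v_m$ is the capability metadata. Rule~\ref{helpersys:write-cap} yields $v' = v$, $v_m' = \stdPackTaint{v_m, \taint}$ and $c' = \tcap$. On the read side the tag is $\tcap$ and the size is $\szCap$, so either Rule~\ref{helpersys:unspill-value} or Rule~\ref{helpersys:read-cap} applies. To rule out the first we need $\stdPackTaint{v_m, \taint} \neq \stdMagic$. This is exactly the standing assumption that the magic value does not collide with any (packed) capability metadata; I would make this explicit at this point. Rule~\ref{helpersys:read-cap} then fires. Using $\stdUnpackTaint{\stdPackTaint{\cdot}} = \mathrm{id}$, it returns $((v, v_m), \taint \sqcap \taint_c, \tcap)$. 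This matches the second case of $\taint'$, since the first case requires $c = \tval$.

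\emph{Case $c = \tval$, with $v_m = 0$ and $\taint = \untainted$.} Rule~\ref{helpersys:spill-value} yields $((v, \stdMagic), \tcap)$. Reading it back, the tag is $\tcap$, the size is $\szCap$ and the second word is $\stdMagic$, so Rule~\ref{helpersys:unspill-value} gives $((v, 0), \untainted, \tval) = ((v, v_m), \untainted, c)$. Because $\taint = \untainted$, we have $\taint \sqcap \taint_c = \untainted$, as required by the ``otherwise'' branch.

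\emph{Case $c = \tval$, with $v_m \neq 0$ or $\taint = \tainted$.} Rule~\ref{helpersys:spill-value} yields $((v, v_m), \tval)$. Since the tag is $\tval$, Rule~\ref{helpersys:read-data} gives $((v, v_m), \taint_c, \tval)$. It remains to show that $\taint_c$ equals the claimed $\taint'$:
\begin{itemize}
\item If $v_m \neq 0$ and $\taint = \untainted$, the first branch applies and gives $\taint' = \taint_c$ directly.
\item Otherwise $\taint = \tainted$, and then $\taint \sqcap \taint_c = \taint_c$ in the two-point lattice.
\end{itemize}

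The only non-mechanical step is the first case, where we must exclude the URR interpretation of a stored genuine capability. The argument depends on the non-collision of $\stdMagic$ with the image of $\stdPackTaint{\cdot}$. If this is not literally among the stated assumptions, I would add it as a hypothesis on $\stdPackTaint{\cdot}$, matching the informal requirement in \Cref{sec:memop}. Every other step is rule inspection.
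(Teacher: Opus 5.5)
Your proposal is correct and follows the paper's proof essentially verbatim. It uses the same three-way case split ($c = \tcap$; $c = \tval$ with $v_m = 0 \land \taint = \untainted$; the remaining $c = \tval$ case) and the same write/read rule pairings in each case. The paper also simply asserts $\stdPackTaint{v_m, \taint} \neq \stdMagic$ in the capability case, so you are right to flag this non-collision property as an assumption that does not follow from $\stdUnpackTaint{\stdPackTaint{\cdot}} = \mathrm{id}$ alone.
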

\begin{proof}
Consider the following cases:
\begin{itemize}
\item $c = \tcap$:
Note that $\stdPackTaint{v_m, \taint} \neq \stdMagic$ and
$(v_m, \taint) = \stdUnpackTaint{\stdPackTaint{v_m, \taint}}$.
Then, by \ref{helpersys:write-cap} and \ref{helpersys:read-cap},
\begin{align*}
& \stdReadRes{(v', v_m'), \taint_c, c', \szCap} \\
= \; & \stdReadRes{(v, \stdPackTaint{v_m, \taint}), \taint_c, \tcap, \szCap} \\
= \; & ((v, v_m), \taint \sqcap \taint_c, \tcap) = ((v, v_m), \taint \sqcap \taint_c, c).
\end{align*}
\item $c = \tval \land v_m = 0 \land \taint = \untainted$:
By \ref{helpersys:spill-value},
\begin{equation*}
((v', v_m'), c') = ((v, \stdMagic), \tcap).
\end{equation*}
Then, by \ref{helpersys:unspill-value},
\begin{align*}
& \stdReadRes{(v', v_m'), \taint_c, c', \szCap} \\
=\; & \stdReadRes{(v, \stdMagic), \taint_c, \tcap, \szCap} \\
=\; & ((v, 0), \untainted, \tval) = ((v, v_m), t \sqcap \taint_c, c).
\end{align*}
\item $c = \tval \land (v_m \neq 0 \lor \taint = \tainted)$:
By \ref{helpersys:spill-value},
\begin{equation*}
((v', v_m'), c') = ((v, v_m), \tval).
\end{equation*}
Then, by \ref{helpersys:read-data},
\begin{align*}
& \stdReadRes{(v', v_m'), \taint_c, c', \szCap} \\
=\; & \stdReadRes{(v, v_m), \taint_c, \tval, \szCap} \\
=\; & ((v, v_m), \taint_c, \tval).
\end{align*}
If $\taint = \tainted$, then $\taint_c = \taint \sqcap \taint_c$, so the statement holds.

If $\taint = \untainted$, then by assumption $v_m \neq 0$, thus the statement also holds.
\end{itemize}
\end{proof}

\begin{lemma}
For all $v, v', \taint, \taint', c, c', \taint_c, \sz$ such that
\begin{align*}
  \lowproj{(v, \taint, c)} & = \lowproj{(v', \taint', c')}\\
  \lowproj{(v, \taint_c, c)} & = \lowproj{(v', \taint_c, c')},
\end{align*}
denoting
\begin{align*}
  (v_1, c_1) & = \stdWriteRes{(v, \taint, c), \taint_c, \sz} \\
  (v_1', c_1') & = \stdWriteRes{(v', \taint', c'), \taint_c, \sz},
\end{align*}
then,
\begin{align}
\lowproj{(v_1, \taint_c, c_1)} & = \lowproj{(v_1', \taint_c, c_1')} \label{eq:read-write-goal-1}\\
\lowproj{\stdReadRes{v_1, \taint_c, c_1, \sz}} & =
\lowproj{\stdReadRes{v_1', \taint_c, c_1', \sz}}. \label{eq:read-write-goal-2}
\end{align}
\label{lemma:read-write-lowproj}
\end{lemma}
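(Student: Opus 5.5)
The plan is a case analysis on the inner taint $\taint_c$ of the accessing capability, in the same style as \Cref{lemma:read-write-res-szcap}. The key preliminary observation comes from the first hypothesis $\lowproj{(v, \taint, c)} = \lowproj{(v', \taint', c')}$: it forces $\taint = \taint'$. If exactly one of $\taint, \taint'$ were $\untainted$, one side would be a triple and the other $\bot$. Moreover, if $\taint = \taint' = \untainted$, it also gives $(v, c) = (v', c')$. So there are only two situations: \emph{identical inputs}, or \emph{both inputs tainted}.

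\emph{Case $\taint_c = \untainted$.} The second hypothesis yields $(v, c) = (v', c')$, and with $\taint = \taint'$ the two calls to $\stdWriteRes{\cdot}$ receive identical arguments. Rules~\ref{helpersys:write-data}, \ref{helpersys:write-cap} and \ref{helpersys:spill-value} are deterministic, so $(v_1, c_1) = (v_1', c_1')$. Then \eqref{eq:read-write-goal-1} holds trivially, and \eqref{eq:read-write-goal-2} follows because $\stdReadRes{\cdot}$ is also a function of its arguments.

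\emph{Case $\taint_c = \tainted$.} Here \eqref{eq:read-write-goal-1} is immediate, since both sides project to $\bot$. For \eqref{eq:read-write-goal-2}, the subcase $\taint = \taint' = \untainted$ again has identical inputs and is handled by determinism as above. The remaining subcase is $\taint = \taint' = \tainted$, where I will show that both reads project to $\bot$.
\begin{itemize}
\item If $\sz = \szAddr$, Rule~\ref{helpersys:read-data} returns taint $\taint_c = \tainted$ on both sides.
\item If $\sz = \szCap$, write $v = (v_0, v_m)$ and apply \Cref{lemma:read-write-res-szcap} to each side. The read-back taint is either $\taint_c$ or $\taint \sqcap \taint_c$, and both are $\tainted$. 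This holds regardless of whether $c$ and $c'$ agree, so $\lowproj{\cdot}$ sends both results to $\bot$.
\end{itemize}

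I expect the main obstacle to be this last subcase, $\taint_c = \tainted$ with tainted values. The tags $c, c'$ and the high words may differ there, so the two writes can follow different rules; for instance, one may go through Rule~\ref{helpersys:write-cap} and the other through Rule~\ref{helpersys:spill-value}. The URR path, Rule~\ref{helpersys:unspill-value}, is the only one that can produce an untainted read through a tainted capability. The argument rests on checking that this path is reachable only when the stored value was untainted, which Rule~\ref{helpersys:spill-value} guarantees through its guard $\taint = \untainted$. Using \Cref{lemma:read-write-res-szcap} packages exactly this fact, so the step reduces to reading off the returned taint.
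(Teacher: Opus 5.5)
Your proof is correct and is essentially the paper's argument: both derive $\taint = \taint'$ (plus $(v,c) = (v',c')$ in the untainted case), discharge the identical-input cases, and invoke \Cref{lemma:read-write-res-szcap} for the tainted capability-size case; you only split on $\taint_c$ first and use determinism of the helper functions where the paper re-derives the read-back through \Cref{lemma:read-write-res-szcap}. One small correction to your closing commentary: Rule~\ref{helpersys:unspill-value} is not the only path to an untainted value through a tainted capability, since Rule~\ref{helpersys:read-cap} also yields one whenever the packed taint is $\untainted$; it is harmless here only because Rule~\ref{helpersys:write-cap} packs the value's own taint $\taint = \tainted$ and $\stdUnpackTaint{\stdPackTaint{\cdot}}$ is the identity, which is exactly what \Cref{lemma:read-write-res-szcap} encapsulates.
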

\begin{proof}
By \Cref{def:low-proj-val}, $\lowproj{(v, \taint, c)} = \lowproj{(v', \taint', c')}$ implies that $\taint = \taint'$.
Consider the following cases:
\begin{itemize}
\item $\sz = \szAddr$:
By \ref{helpersys:write-data},
\begin{align*}
  (v_1, c_1) & = (v, \tval) & (v_1', c_1') & = (v', \tval).
\end{align*}
Thus, $\lowproj{(v, \taint_c, c)} = \lowproj{(v', \taint_c, c')}$ implies \eqref{eq:read-write-goal-1}.

By \ref{helpersys:read-data},
\begin{align*}
  \stdReadRes{v_1, \taint_c, c_1, \sz} & = (v_1, \taint_c, \tval) = (v, \taint_c, \tval)\\
  \stdReadRes{v_1', \taint_c, c_1', \sz} & = (v_1', \taint_c, \tval) = (v', \taint_c, \tval).
\end{align*}
Thus, \eqref{eq:read-write-goal-2} holds.
\item $\sz = \szCap$ and $\taint = \taint' = \untainted$:
In this case, $\lowproj{(v, \taint, c)} = \lowproj{(v', \taint', c')}$ implies that $v = v'$, and $c = c'$.
Then, there should be $v_1 = v_1'$ and $c_1 = c_1'$, which implies \eqref{eq:read-write-goal-1}.

Denote $v = v' = (v_0, v_m)$.
Consider the following two cases:
\begin{itemize}
\item $c = c' = \tval$ and $v_m \neq 0$:
By \Cref{lemma:read-write-res-szcap},
\begin{align*}
(v, \taint_c, c) & = \stdReadRes{v_1, \taint_c, c_1, \sz}\\
(v', \taint_c, c') & = \stdReadRes{v_1', \taint_c, c_1', \sz}.
\end{align*}
Since $\lowproj{(v, \taint_c, c)} = \lowproj{(v', \taint_c, c')}$, \eqref{eq:read-write-goal-2} holds.
\item $c = c' = \tcap$ or $v_m = 0$:
By \Cref{lemma:read-write-res-szcap},
\begin{align*}
(v, \taint \sqcap \taint_c, c) & = \stdReadRes{v_1, \taint_c, c_1, \sz}\\
(v', \taint' \sqcap \taint_c, c') & = \stdReadRes{v_1', \taint_c, c_1', \sz}.
\end{align*}
Since $\lowproj{(v, \taint, c)} = \lowproj{(v', \taint', c')}$ and
$\taint = \taint' = \untainted = \taint \sqcap \taint_c = \taint' \sqcap \taint_c$, \eqref{eq:read-write-goal-2} holds.
\end{itemize}
\item $\sz = \szCap$ and $\taint = \taint' = \tainted$:
First, we prove \eqref{eq:read-write-goal-1}.
If $\taint_c = \tainted$, by the definition of low projection, the statement holds.
If $\taint_c = \untainted$, then $\lowproj{(v, \taint_c, c)} = \lowproj{(v', \taint_c, c)}$ implies that $v=v'$ and $c=c'$. Then, $v_1 = v_1'$, $c_1 = c_1'$, so \eqref{eq:read-write-goal-1} holds.

Second, we prove \eqref{eq:read-write-goal-2}.
By \Cref{lemma:read-write-res-szcap},
\begin{align*}
(v, \taint \sqcap \taint_c, c) & = \stdReadRes{v_1, \taint_c, c_1, \sz}\\
(v', \taint' \sqcap \taint_c, c') & = \stdReadRes{v_1', \taint_c, c_1', \sz}.
\end{align*}
Since $\lowproj{(v, \taint_c, c)} = \lowproj{(v', \taint_c, c')}$ and
$\taint \sqcap \taint_c = \taint' \sqcap \taint_c = \tainted \sqcap \taint_c = \taint_c$, \eqref{eq:read-write-goal-2} holds.
\end{itemize}
\end{proof}

\begin{lemma}[Lemma for $\stdReadMemInst{\cdot}$]
Let $\cf{\mem, \reg}$ and $\cf{\mem', \reg'}$ be two ISA states such that $\cf{\mem, \reg}\pubeq \cf{\mem', \reg'}$.
Let $(\stdCap{\addr}, t, c)$ be a capability (associated with taint and capability tag) such that $\stdCap{\addr}\in \rc{\reg, \mem}$, $\stdCap{\addr} \in \rc{\reg', \mem'}$, and there exists $\instr$ such that
\begin{equation*}
  \instr = \stdReadMemInst{\mem, (\stdCap{\addr}, t, c)}.
\end{equation*}
Then, there must be
\begin{equation*}
  \instr = \stdReadMemInst{\mem', (\stdCap{\addr}, t, c)}.
\end{equation*}
\label{lemma:read-mem-inst}
\end{lemma}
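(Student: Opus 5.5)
The plan is to unfold Rule~\ref{memory:read-mem-inst}, observe that its only memory-dependent premise is the content of the single word at the fetched address, and then transfer that word from $\mem$ to $\mem'$ using public equivalence (\Cref{def:pub-eq-isa}) instantiated at size $\szAddr$ and taint $\untainted$.

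\textbf{Step 1: invert the rule.} By inverting Rule~\ref{memory:read-mem-inst} on $\instr = \stdReadMemInst{\mem, (\stdCap{\addr}, t, c)}$, the capability must have the shape $(\addr, (\perm, \rbase, \rend, \untainted))$ with $t = \untainted$ and $c = \tcap$. Moreover $\memData(\addr) = \instr \in \Instrs$ and $\stdCheckCap{(\addr, (\perm, \rbase, \rend, \untainted)), \szInst, \exec}$ hold. The check-cap premise depends only on the capability, not on memory, so it holds verbatim for $\mem'$. What remains is to show $\memData'(\addr) = \memData(\addr)$.

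\textbf{Step 2: place $\addr$ in untainted reachable memory on both sides.} From Rule~\ref{memory:check-cap} we get $[\addr, \addr+\szInst) \subseteq [\rbase, \rend)$. Since $\stdCap{\addr} \in \rc{\reg, \mem}$ and its inner taint is $\untainted$, \Cref{def:reach-mem} gives $[\rbase, \rend) \subseteq \taintAddr{\reg, \mem, \untainted}$. Symmetrically, from $\stdCap{\addr} \in \rc{\reg', \mem'}$ we get $[\rbase, \rend) \subseteq \taintAddr{\reg', \mem', \untainted}$. Hence $[\addr, \addr+\szAddr) \subseteq \taintAddr{\reg, \mem, \untainted} \cap \taintAddr{\reg', \mem', \untainted}$. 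Here I take $\szInst = \szAddr$, since an instruction occupies one word and $\pc$ advances by one word per step. Because the size is $\szAddr$, the alignment side condition in \Cref{def:pub-eq-isa} is vacuous.

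\textbf{Step 3: apply public equivalence.} Instantiating \Cref{def:pub-eq-isa} with address $\addr$, size $\szAddr$ and taint $\untainted$ gives
\[
\lowproj{\stdReadRes{\memData[\addr,\addr+1), \untainted, \memTag[\addr/\szCap], \szAddr}} = \lowproj{\stdReadRes{\memData'[\addr,\addr+1), \untainted, \memTag'[\addr/\szCap], \szAddr}}.
\]
Rule~\ref{helpersys:read-data} applies on both sides because the size is $\szAddr$, independently of the tags. Both sides therefore reduce to $(\memData(\addr), \untainted, \tval)$ and $(\memData'(\addr), \untainted, \tval)$. The untainted projection is the identity on these, so $\memData'(\addr) = \memData(\addr) = \instr$. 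Rule~\ref{memory:read-mem-inst} then yields $\stdReadMemInst{\mem', (\stdCap{\addr}, t, c)} = \instr$.

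\textbf{Expected obstacle.} The only delicate point is Step 2. It relies on $\szInst$ being one word, so that the size-$\szAddr$ clause of public equivalence applies without any alignment or tag reasoning, and on the inner taint being forced to $\untainted$ by the form of Rule~\ref{memory:read-mem-inst}. If instructions could span two words, I would instead use the $\szCap$ clause of \Cref{def:pub-eq-isa}. That would require a case split on the tag via Rules~\ref{helpersys:unspill-value} and \ref{helpersys:read-cap}, which can rewrite the second word or the taint. I expect that to be avoidable under the paper's one-word-per-instruction convention.
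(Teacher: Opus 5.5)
Your proposal is correct and follows essentially the same route as the paper's proof. Both invert Rule~\ref{memory:read-mem-inst} to force an $\untainted$ inner taint, then use the check-cap premise with \Cref{def:reach-mem} to place the fetched word in untainted reachable memory of both states. Both finish by instantiating \Cref{def:pub-eq-isa} at one-word size and taint $\untainted$, so that Rule~\ref{helpersys:read-data} shows the two memory words coincide.
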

\begin{proof}
By \ref{memory:read-mem-inst}, there should be
\begin{align*}
  & \stdCapMeta{\addr} = (\perm, \rbase, \rend, \untainted) \quad
  t = \untainted \quad
  c = \tcap \\
  & \stdCheckCap{\stdCap{\addr}, \szInst, \exec}
\end{align*}
By \ref{memory:check-cap}, this implies that $\addr \in [\rbase, \rend)$.
Furthermore, by \Cref{def:reach-mem}, $\stdCap{\addr}\in \rc{\reg, \mem}$ implies that $[\rbase, \rend) \subseteq \taintAddr{\reg, \mem, \untainted}$, so $\addr \in \taintAddr{\reg, \mem, \untainted}$.
Similarly, since $\stdCap{\addr}\in \rc{\reg', \mem'}$, there should also be $\addr \in \taintAddr{\reg', \mem', \untainted}$.

Denote $\mem = (\memData, \memTag)$, $\mem' = (\memData', \memTag')$.
Since $\cf{\mem, \reg} \pubeq \cf{\mem', \reg'}$, by \Cref{def:pub-eq-isa}, there should be.
\begin{align*}
  & \stdReadRes{\memData[\addr], \untainted, \memTag[\addr / \szCap], \szAddr} \\
  =\; & \stdReadRes{\memData'[\addr], \untainted, \memTag'[\addr / \szCap], \szAddr}.
\end{align*}
By \ref{helpersys:read-data}, this implies that $\memData[\addr] = \memData'[\addr]$.
Therefore, there should be $\instr = \stdReadMemInst{\mem', (\stdCap{\addr}, t, c)}$.
\end{proof}

\subsection{HW Functional Correctness}
\label{sec:fc-proof}

\begin{definition}[Well-formed HW Configuration]
A HW configuration $C=\cf{\mem, \reg, \buf, \mu}$ is \emph{well-formed} if for all $i \in \dom{\buf}$, entry $\buf(i)$ is well-formed (denoted as $\wf{\mem, \reg, \buf, i}$) so that the following statement holds:

Let $\reg_i = \apl(\pref i \buf, \reg)$.
There must exist $\instr\in \Instrs$ such that $\instr = \stdReadMemInst{\mem, \sem{\pc}_{\reg_i}}$ such that one of the following cases holds:
\begin{itemize}
\item $\instr \not\in \qty{\beqz{x}{\addr}, \jmp{e}}$ and $\buf(i) = \instr\specat{\varepsilon}$.
\item $\instr = x\gets e$, $x\neq \pc$, and $\buf(i) = x\gets v\specat{\varepsilon}$ and $v =\sem{e}_{\reg_i} \neq \bot$.
\item $\instr = \load{x}{e}{\sz}$, $x\neq \pc$, then
\begin{align*}
  & \buf(i) = x\gets (v, t, c)\specat{(\stdCap{\addr}, \sz, j)} \\
  & (\stdCap{\addr}, \_, \tcap) = \sem{e}_{\reg_i} \\
  & \stdCheckCap{\stdCap{\addr}, \sz, \readonly},
\end{align*}
and either
\begin{align*}
  & \buf(j) = \store{(v', t', c')}{\stdCap{\addr'}}{\sz'}\specat{\varepsilon} \\
  & [\addr, \addr + \sz) = [\addr', \addr' + \sz') \\
  & (v, t, c) = \stdReadRes{v', t_c, c', \sz} \\
  & \stdCapMeta{\addr} = (\_, \_, \_, t_c) \\
  & \forall j < k < i,\; \buf(k) = \store{\_}{\stdCap{\addr'}}{\sz'}\specat{\tg} \Rightarrow \\
  & \qquad \qquad \qquad \qquad \qquad [\addr, \addr + \sz)\cap [\addr', \addr' + \sz') = \emptyset
\end{align*}
or
\begin{align*}
  & j = \bot \lor j < \min\dom{\buf} \\
  & (v, t, c) = \stdReadRes{\memData[\addr, \addr + \sz], t_c, \memTag[\addr/ 2], \sz} \\
  & \stdCapMeta{\addr} = (\_, \_, \_, t_c) \\
  & \forall k < i,\; \buf(k) = \store{\_}{\stdCap{\addr'}}{\sz'}\specat{\tg} \Rightarrow \\
  & \qquad \qquad \qquad \qquad \qquad [\addr, \addr + \sz)\cap [\addr', \addr' + \sz') = \emptyset\\
  & \mem = (\memData, \memTag).
\end{align*}
\item $\instr = \store{x}{e}{\sz}$, then
\begin{align*}
  & \buf(i) = \store{(v', t, c')}{\stdCap{\addr}}{\sz}\specat{\varepsilon} \\
  & (\stdCap{\addr}, \_, \tcap) = \sem{e}_{\reg_i} \\
  & \stdCheckCap{\stdCap{\addr}, \sz, \readwrite} \\
  & (v, t, c) = \sem{x}_{\reg_i} \\
  & (v', c') = \stdWriteRes{(v, t, c), t_c, \sz} \\
  & \stdCapMeta{\addr} = (\_, \_, \_, t_c).
\end{align*}
\item $\instr = \jmp{e}$, $\buf(i) = \pc \gets (\addr', \untainted, \tcap)\specat{\tg}$, and $\tg=\varepsilon \Rightarrow \sem{e}_{\reg_i} = (\addr', \_, \tcap)$.
\item $\instr = \beqz{x}{\addr''}$, $\buf(i) = \pc \gets (\addr', t, c)\specat{\tg}$, and $\tg = \varepsilon$ implies that
\begin{align*}
  & (v, \_, \_) = \sem{x}_{\reg_i}
  & & z = (v = 0) \;?\; \addr'' : 1
  & & (\addr', t, c) = \sem{\pc \stdCapOff z}_{\reg_i}.
\end{align*}
\end{itemize}
\label{def:well-formed-buf}
\end{definition}

\begin{definition}[Initial HW State]
An initial HW state is of the form $(\mem, \reg, \bufinit, \muinit)$ where
\begin{itemize}
  \item $\mem$ and $\muinit$ are arbitrary memory and microarchitectural contexts
  \item $\bufinit$ is empty, i.e., $\dom{\buf} = \emptyset$
  \item $\reg$ is a \emph{total} register map (i.e., for all registers $x$, $\reg(x)\neq \bot$),  and $\reg(\pc) = (\texttt{ep}, \untainted, \tcap)$ where $\texttt{ep}$ is the program capability pointing to the entrypoint of the program.
  \item The state is well-formed in the sense of \Cref{def:well-formed-buf}.
  \end{itemize}
\end{definition}

\begin{lemma}[$\apl(\pref i \buf, \reg)$ invariant]
Let $C_i = \cf{\mem_i, \reg_i, \buf_i, \mu_i}$, $i\in \qty{0, 1}$, be two HW configurations such that $C_0$ is well-formed and $C_0 \sstep{}{}C_1$.
For all $i\in \dom{\buf_0}\cap\dom{\buf_1}$, denote $\reg_{i0} = \apl(\pref i {\buf_0}, \reg_0)$, $\reg_{i1} = \apl(\pref i {\buf_1}, \reg_1)$, then
\begin{equation*}
  \forall x,\; \reg_{i0}[x] \neq \bot \Rightarrow \reg_{i0}[x] = \reg_{i1}[x].
\end{equation*}
For convenience, we denote $\reg_{i0} \nextreg \reg_{i1}$ if the above statement holds.
\label{lemma:apl-invariant}
\end{lemma}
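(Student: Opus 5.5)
Each $\apl(\pref i {\buf}, \reg)$ is obtained by folding $\aplinst$ over the entries of index $<i$, starting from the architectural register file. So I will proceed by case analysis on the internal rule applied after Rule~\ref{hw:step}. The step only changes $\mu$ besides the affected component, so it suffices to ask how $\reg$ and $\buf$ change. For each rule I will show one of two things. Either the fold producing $\reg_{i1}$ is literally the same fold as for $\reg_{i0}$, or it differs only by replacing some $\bot$ with a defined value.

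First I will record a monotonicity fact about $\aplinst$: if $\reg \nextreg \reg'$ (pointwise, defined entries preserved), then $\aplinst(\peinstr,\reg)\nextreg\aplinst(\peinstr',\reg')$. This holds whenever $\peinstr'$ is $\peinstr$ or a ``more evaluated'' version of it with the same destination register. Examples are $x\gets e$ becoming $x\gets(v,t,c)$, $\load{x}{e}{\sz}$ becoming $x\gets(v,t,c)$, and an unevaluated store becoming an evaluated one. The $\pc$ update $\sem{\stdNextPc{\pc}}$ is an expression evaluation, so by eagerness of operators on $\bot$ it too is preserved when the input is defined. Also, $\pc\gets\addr'$ entries keep their $\pc$ value, except in the hazard rules. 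By induction along the fold, this yields $\reg_{i0}\nextreg\reg_{i1}$ for every rule that only modifies one entry $\buf(k)$ in this way.

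The cases are then as follows. Fetch rules append at $\sup\dom\buf+1$, so they do not touch $\pref i \buf$ for $i$ in the old domain. Execute-assign, execute-load-mem/fwd and execute-store-ok refine a single entry, and the fact above applies. Execute-jmp-ok and execute-beqz-ok only change the speculation tag, and $\aplinst$ ignores tags. The hazard rules (jmp, beqz, store) truncate the buffer to $\pref{i+1}\buf$ or $\pref j\buf$, so only indices at most the flush point survive in $\dom{\buf_1}$. For those indices the prefix is unchanged, except possibly at the flushing entry itself. Because that entry has index $<$ any surviving later index only in the store-hazard case, where it is a refinement, the fact applies again.

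The commit rules need the most care. Here $\reg_1=\reg_0[\pc\mapsto\sem{\stdNextPc\pc}_{\reg_0}][x\mapsto v]$, and the head entry $m=\min\dom{\buf_0}$ is removed. I will show $\apl(\pref i{\buf_1},\reg_1)=\apl(\pref i{\buf_0},\reg_0)$ by unfolding one step of $\apl$ on $\buf_0$. This reduces to $\aplinst(\buf_0(m),\reg_0)=\reg_1$. For commit-assign this is immediate from the definition of $\aplinst$ on evaluated assignments. For commit-store both sides update only $\pc$.

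The main obstacle is the hazard case where the corrected $\pc$ value is written into entry $i$. Here $\aplinst$ assigns $\pc$ a different defined value, so the $\nextreg$ relation could fail for indices after $i$. The plan is to note that all such indices are dropped from $\dom{\buf_1}$. The statement only quantifies over $i\in\dom{\buf_0}\cap\dom{\buf_1}$, so the mismatch only affects the fold up to and including the modified entry, never a surviving prefix that contains it. I expect to use well-formedness of $C_0$ (\Cref{def:well-formed-buf}) only to guarantee that referenced values like $\sem{\pc}$ are defined where the rules require it.
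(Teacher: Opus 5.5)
Your proposal is correct and follows essentially the same route as the paper's proof. The paper also splits on the directive: fetch leaves every surviving prefix unchanged; execute propagates $\nextreg$ along the fold of $\aplinst$, where your refinement/monotonicity fact is what the paper establishes inline in its base case $j=i+1$ and induction step, and the jmp/beqz hazards are disposed of because no index beyond the flushed entry survives; and commit reduces to $\aplinst(\buf_0(m),\reg_0)=\reg_1$. As a minor remark, well-formedness of $C_0$ is not actually needed (the paper's argument never uses it), since $\nextreg$ only constrains registers already defined on the left.
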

\begin{proof}
By \ref{hw:step}, there should be
\begin{mathpar}
\inferrule{
  \mu' = \stdUpdateMu{\mu_0, \lowproj{\buf_0}} \\
  d = \stdNextMu{\mu'} \\
  \cf{\mem_0, \reg_0, \buf_0, \mu'} \sstep{d}{} \cf{\mem_1, \reg_1, \buf_1, \mu_1}
}{
  \cf{\mem_0, \reg_0, \buf_0, \mu_0} \sstep{d}{} \cf{\mem_1, \reg_1, \buf_1, \mu_1}
}
\end{mathpar}
Consider the following cases for $d$:

\pgheading{Case $d = \dfetch$}
By rules for fetch steps, there should be
\begin{align*}
  & \mem_1 = \mem_0 \qquad \reg_1 = \reg_0 \\
  & \buf_1 = \buf_0[\sup\dom{\buf_0} + 1 \mapsto \instr \specat{\tg}].
\end{align*}
Note that $\dom{\buf_1} = \dom{\buf_0} \cup \qty{\sup\dom{\buf_0} + 1}$.
For all $i\in \dom{\buf_0} \cap \dom{\buf_1}$, there should be $i < \sup\dom{\buf_0} + 1$.
Furthermore, for all $j < i$, $\buf_0(j) = \buf_1(j)$.
By the definition of $\apl$, there should be $\apl(\pref i {\buf_0}, \reg_0) = \apl(\pref i {\buf_1}, \reg_1)$. Thus, $\apl(\pref i {\buf_0}, \reg_0) \nextreg \apl(\pref i {\buf_1}, \reg_1)$.

\pgheading{Case $d = \dexecute{i}$}
By rules for execute steps, there should be
\begin{equation*}
  \mem_1 = \mem_0 \qquad \reg_1 = \reg_0.
\end{equation*}
We prove a stronger statement: for all $j \leq \max(\dom{\buf_0} \cap \dom{\buf_1})$, $\apl(\pref j {\buf_0}, \reg_0) \prec \apl(\pref j {\buf_1}, \reg_1)$.

If $j \leq i$, then for all $k < j \leq i$, $\buf_0(k) = \buf_1(k)$.
Then, $\apl(\pref j {\buf_0}, \reg_0) = \apl(\pref j {\buf_1}, \reg_1)$. Thus, there should be $\apl(\pref j {\buf_0}, \reg_0) \nextreg \apl(\pref j {\buf_1}, \reg_1)$.

Next, we use induction to prove that the statement holds for all $j$ where $i < j \leq \max(\dom{\buf_0} \cap \dom{\buf_1})$.

\pgheading{Base case}
Consider $j = i + 1$. by rules for execute steps, there should be $i \in \dom{\buf_0} $ and $i \in \dom{\buf_1}$. Then, by unfolding the recursive definition of $\apl$, we can get
\begin{align*}
  \apl(\pref {i + 1} {\buf_0}, \reg_0) & = \aplinst(\buf_0(i), \apl(\pref i {\buf_0}, \reg_0)) \\
  \apl(\pref {i + 1} {\buf_1}, \reg_1) & = \aplinst(\buf_1(i), \apl(\pref i {\buf_1}, \reg_1)).
\end{align*}
Denote
\begin{align*}
  \reg_{i0} & = \apl(\pref i {\buf_0}, \reg_0) & \reg_{i0}' & = \apl(\pref {i + 1} {\buf_0}, \reg_0) \\
  \reg_{i1} & = \apl(\pref i {\buf_1}, \reg_1) & \reg_{i1}' & = \apl(\pref {i + 1} {\buf_1}, \reg_1).
\end{align*}
According to what we have proved, $\reg_{i0}\nextreg \reg_{i1}$. We aim to show that $\reg_{i0}' \nextreg \reg_{i1}'$.

By rules for execute steps, we just need to consider the following subcases:

\pgheading{Subcase $\buf_0(i) = x \gets e\specat{\varepsilon}$ where $x \neq \pc$}
By \ref{hw:execute-assign}, $\buf_1(i) = x \gets (v, t, c)\specat{\varepsilon}$, where $(v, t, c) = \sem{e}_{\apl(\pref i {\buf_0}, \reg_0)}$.
Thus,
\begin{align*}
  \reg_{i0}' & = \aplinst(\buf_0(i), \reg_{i0}) = \reg_{i0}[\pc \mapsto \sem{\stdNextPc{\pc}}_{\reg_{i0}}][x \mapsto \bot] \\
  \reg_{i1}' & = \aplinst(\buf_1(i), \reg_{i1}) = \reg_{i1}[\pc \mapsto \sem{\stdNextPc{\pc}}_{\reg_{i1}}][x \mapsto (v, t, c)].
\end{align*}
We aim to prove that for all $x_0$, $\reg_{i0}'[x_0] \neq \bot \Rightarrow \reg_{i0}'[x_0] = \reg_{i1}'[x_0]$.
For all registers $x_0$, consider the following cases
\begin{itemize}
\item $x_0 = \pc$: Note that $\reg_{i0}'[\pc] = \sem{\stdNextPc{\pc}}_{\reg_{i0}}=\bot$ iff $\reg_{i0}[\pc] = \sem{\pc}_{\reg_{i0}} = \bot$.
If $\reg_{i0}'[\pc] \neq \bot$, then $\reg_{i0}[\pc] \neq \bot$. Thus, $\reg_{i0}\nextreg \reg_{i1}$ implies that $\reg_{i0}[\pc] = \reg_{i1}[\pc]$. Thus, by the definition of $\sem{\cdot}$, $\sem{\stdNextPc{\pc}}_{\reg_{i0}} = \sem{\stdNextPc{\pc}}_{\reg_{i1}}$. Therefore, $\reg_{i0}'[\pc] = \reg_{i1}'[\pc]$.
\item $x_0 = x$: Note that $\reg_{i0}'[x] = \bot$, so the statement holds.
\item $x_0 \not\in \qty{\pc, x}$: In this case, $\reg_{i0}'[x_0] = \reg_{i0}[x_0]$ and $\reg_{i1}'[x_0] = \reg_{i1}[x_0]$. Note that $\reg_{i0} \nextreg \reg_{i1}$ implies that $\reg_{i0}[x_0] \neq \bot \Rightarrow \reg_{i0}[x_0] = \reg_{i1}[x_0]$. Thus, the statement holds.
\end{itemize}
Therefore, $\reg_{i0}' \nextreg \reg_{i1}'$.

\pgheading{Subcase $\buf_0(i) = \pc \gets \addr'\specat{\stdCap{\addr}}$}
By \ref{hw:execute-jmp-ok}, \ref{hw:execute-jmp-hazard}, \ref{hw:execute-beqz-ok}, and \ref{hw:execute-beqz-hazard}, either of the following cases should hold:
\begin{itemize}
\item $\buf_1 = \buf_0[i \mapsto \pc \gets \addr'\specat{\varepsilon}]$ (branch is not mispredicted): in this case, there should be
\begin{align*}
  \reg_{i0}' & = \aplinst(\buf_0(i), \reg_{i0}) = \reg_{i0}[\pc \mapsto \addr'] \\
  \reg_{i1}' & = \aplinst(\buf_1(i), \reg_{i1}) = \reg_{i1}[\pc \mapsto \addr'].
\end{align*}
We aim to prove that for all $x_0$, $\reg_{i0}'[x_0] \neq \bot \Rightarrow \reg_{i0}'[x_0] = \reg_{i1}'[x_0]$.
For all registers $x_0$, consider the following cases
\begin{itemize}
\item $x_0 = \pc$: Note that $\reg_{i0}'[\pc] = \reg_{i1}'[\pc] = \addr'$, so the statement holds.
\item $x_0 \neq \pc$: Note that $\reg_{i0}'[x_0] = \reg_{i0}[x_0]$, $\reg_{i1}'[x_0] = \reg_{i1}[x_0]$. Furthermore, $\reg_{i0} \nextreg \reg_{i1}$ implies that $\reg_{i0}[x_0]\neq \bot \Rightarrow \reg_{i0}[x_0] = \reg_{i1}[x_0]$. Thus, the statement holds.
\end{itemize}
\item $\buf_1 = \pref {i + 1} {\buf_0}[i \mapsto \pc \gets \addr_0 \specat{\varepsilon}]$ where $\addr_0 \neq \addr'$ (branch is mispredicted):
Note that in this case, $\max(\dom{\buf_0} \cap \dom{\buf_1}) = \max\dom{\buf_1} = i < j$. As we only aim to prove the statement for $i < j \leq \max(\dom{\buf_0} \cap \dom{\buf_1})$, we do not need to reason about this case.
\end{itemize}

\pgheading{Subcase $\buf_0(i) = \load{x}{e}{\sz}\specat{\varepsilon}$}
By \ref{hw:execute-load-fwd} and \ref{hw:execute-load-mem}, there always exists $(v, t, c)$ and $\tg$ such that
$\buf_1 = \buf_0[i \mapsto (x \gets (v, t, c))\specat{\tg}]$.
Thus,
\begin{align*}
  \reg_{i0}' & = \aplinst(\buf_0(i), \reg_{i0}) = \reg_{i0}[\pc \mapsto \sem{\stdNextPc{\pc}}_{\reg_{i0}}][x \mapsto \bot] \\
  \reg_{i1}' & = \aplinst(\buf_1(i), \reg_{i1}) = \reg_{i1}[\pc \mapsto \sem{\stdNextPc{\pc}}_{\reg_{i1}}][x \mapsto (v, t, c)].
\end{align*}
Thus, similar to the subcase for execute an assign instruction, we can derive that $\reg_{i0}' \nextreg \reg_{i1}'$.

\pgheading{Subcase $\buf_0(i) = \store{x}{e}{\sz}\specat{\varepsilon}$}
By \ref{hw:execute-store-ok} and \ref{hw:execute-store-hazard}, either of the following cases will hold:
\begin{itemize}
\item $\buf_1 = \buf_0[i \mapsto (\store{v}{\stdCap{\addr}}{\sz})\specat{\varepsilon}]$ (no load needs to be squashed): in this case, there should be
\begin{align*}
  \reg_{i0}' & = \aplinst(\buf_0(i), \reg_{i0}) = \reg_{i0}[\pc \mapsto \sem{\stdNextPc{\pc}}_{\reg_{i0}}] \\
  \reg_{i1}' & = \aplinst(\buf_1(i), \reg_{i1}) = \reg_{i1}[\pc \mapsto \sem{\stdNextPc{\pc}}_{\reg_{i1}}].
\end{align*}
We aim to prove that for all $x_0$, $\reg_{i0}'[x_0] \neq \bot \Rightarrow \reg_{i0}'[x_0] = \reg_{i1}'[x_0]$. For all registers $x_0$, consider the following cases:
\begin{itemize}
\item $x_0 = \pc$: Note that $\reg_{i0}'[\pc] = \sem{\stdNextPc{\pc}}_{\reg_{i0}}=\bot$ iff $\reg_{i0}[\pc] = \sem{\pc}_{\reg_{i0}} = \bot$.
If $\reg_{i0}'[\pc] \neq \bot$, then $\reg_{i0}[\pc] \neq \bot$. Thus, $\reg_{i0}\nextreg \reg_{i1}$ implies that $\reg_{i0}[\pc] = \reg_{i1}[\pc]$. Thus, by the definition of $\sem{\cdot}$, $\sem{\stdNextPc{\pc}}_{\reg_{i0}} = \sem{\stdNextPc{\pc}}_{\reg_{i1}}$. Therefore, $\reg_{i0}'[\pc] = \reg_{i1}'[\pc]$.
\item $x_0 \neq \pc$: In this case, $\reg_{i0}'[x_0] = \reg_{i0}[x_0]$ and $\reg_{i1}'[x_0] = \reg_{i1}[x_0]$. Note that $\reg_{i0} \nextreg \reg_{i1}$ implies that $\reg_{i0}[x_0] \neq \bot \Rightarrow \reg_{i0}[x_0] = \reg_{i1}[x_0]$. Thus, the statement holds.
\end{itemize}
\item $\buf_1 = \pref j {\buf_0}[i \mapsto (\store{v}{\stdCap{\addr}}{\sz})\specat{\varepsilon}]$ for some $j > i$ (some load is squashed): Since $j > i$, $\reg_{i1}'$ has the same format as in the last case. Thus, the statement still holds.
\end{itemize}

\pgheading{Induction Step}
Consider $j$ such that $i < j \leq \max(\dom{\buf_0} \cap \dom{\buf_1})$.
Denote $\reg_{j0} = \apl(\pref j {\buf_0}, \reg_0)$, $\reg_{j1} = \apl(\pref j {\buf_1}, \reg_1)$ and assume $\reg_{j0} \nextreg \reg_{j1}$.
We aim to prove that $\apl(\pref {j + 1} {\buf_0}, \reg_0) \nextreg \apl(\pref {j + 1} {\buf_1}, \reg_1)$.
If $j \not \in \dom{\buf_0} = \dom{\buf_1}$, then $\pref{j+1}{\buf_0} = \pref j {\buf_0}$ and $\pref{j+1}{\buf_1} = \pref j {\buf_1}$. Thus, the statement holds.

By rules for execute step, there should be
\begin{itemize}
\item either $\dom{\buf_0} = \dom{\buf_1}$ (if no hazard)
\item or there exists $k$ such that $\qty{j: j < k \land j \in \dom{\buf_0}} = \dom{\buf_1}$ (if hazard).
\end{itemize}
This implies that $j\in \dom{\buf_0} \land j \leq \min(\dom{\buf_0} \cap \dom{\buf_1})$ iff $j \in \dom{\buf_j}$.

Then, if $j \in \dom{\buf_0}$, since we only consider the case where $j \leq \max(\dom{\buf_0} \cap \dom{\buf_1})$, we have $j \in \dom{\buf_1}$. Similarly, by unfolding the recursive definition of $\apl$, we can get
\begin{align*}
  \apl(\pref {j + 1} {\buf_0}, \reg_0) & = \aplinst(\buf_0(j), \apl(\pref j {\buf_0}, \reg_0)) \\
  & = \aplinst(\buf_0(j), \reg_{j0}) \\
  \apl(\pref {j + 1} {\buf_1}, \reg_1) & = \aplinst(\buf_1(j), \apl(\pref j {\buf_1}, \reg_1)) \\
  & = \aplinst(\buf_1(j), \reg_{j1}).
\end{align*}
Since $j > i$, then by rules for execution steps, $\buf_0(j) = \buf_1(j)$.
By discussing the instruction type of $\buf_0(j) = \buf_1(j)$, it is straightforward to derive that $\apl(\pref {j + 1} {\buf_0}, \reg_0) \nextreg \apl(\pref {j + 1} {\buf_1}, \reg_1)$.

\pgheading{Case $d = \dcommit$}
By rules for commit steps, $\buf_0$ is not empty.
Denote $i = \min\dom{\buf_0}$. Then, there should be
$\buf_1 = \buf_0 \backslash i$, which implies that for all $j$, $\pref j {\buf_1} = \pref j {\buf_0} \backslash i$.
Furthermore, there should be
\begin{equation*}
  \dom{\buf_0} \cap \dom{\buf_1} = \dom{\buf_0} \backslash \qty{i} = \dom{\buf_1}.
\end{equation*}
If $\dom{\buf_0} \cap \dom{\buf_1} = \emptyset$, the statement automatically hold.

We consider the case where $\dom{\buf_0} \cap \dom{\buf_1} \neq \emptyset$.
It suffices to prove that for all $j\in \dom{\buf_0} \cap \dom{\buf_1}$,
\begin{equation*}
  \apl(\pref j {\buf_0}, \reg_0) = \apl(\pref j {\buf_1}, \reg_1).
\end{equation*}
Note that there should be $j > i$, so
\begin{align*}
  \apl(\pref j {\buf_0}, \reg_0)
  & = \apl(\pref j {\buf_0} \backslash i, \aplinst(\buf_0(i), \reg_0)) \\
  & = \apl(\pref j {\buf_1}, \aplinst(\buf_0(i), \reg_0)).
\end{align*}
Hence, we just need to prove that $\reg_1 = \aplinst(\buf_0(i), \reg_0)$.
Depending on the type of the committed instruction, there will be the following two cases:

\pgheading{Subcase $\buf_0(i) = x \gets v \specat{\tg}$}
By rule \ref{hw:commit-assign} and the definition of $\aplinst$, there should be
\begin{equation*}
  \reg_1 = \reg_0[\pc \mapsto \sem{\stdNextPc{\pc}}_{\reg_0}][x \mapsto v]
  = \aplinst(\buf_0(i), \reg_0).
\end{equation*}

\pgheading{Subcase $\buf_0(i) = \store{(v, t, c)}{\stdCap{\addr}}{\sz}\specat{\varepsilon}$}
By rule \ref{hw:commit-store} and the definition of $\aplinst$, there should be
\begin{equation*}
  \reg_1 = \reg_0[\pc \mapsto \sem{\stdNextPc{\pc}}_{\reg_0}] = \aplinst(\buf_0(i), \reg_0).
\end{equation*}
\end{proof}

\begin{lemma}
Let $\reg_0$ and $\reg_1$ be two register files such that $\reg_0 \nextreg \reg_1$.
For all $e$, if $\sem{e}_{\reg_0} \neq \bot$, then $\sem{e}_{\reg_0} = \sem{e}_{\reg_1}$.
\label{lemma:sem-nextreg-invariant}
\end{lemma}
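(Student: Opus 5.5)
The plan is a structural induction on the expression $e$, following the recursive definition of $\sem{\cdot}_\reg$. The hypothesis $\reg_0 \nextreg \reg_1$ says that every register defined in $\reg_0$ has the same content in $\reg_1$. The claim is that any expression that is defined under $\reg_0$ takes the same value under $\reg_1$.

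\emph{Base cases.}
\begin{itemize}
\item If $e = \val$ is a value, then $\sem{\val}_{\reg_0} = (\val, \untainted, \tval) = \sem{\val}_{\reg_1}$ regardless of the register files.
\item If $e = \vx$ is a register, then $\sem{\vx}_{\reg_0} = \reg_0(\vx) \neq \bot$ by assumption. The definition of $\reg_0 \nextreg \reg_1$ (\Cref{lemma:apl-invariant}) then gives $\reg_0(\vx) = \reg_1(\vx)$ directly, hence $\sem{\vx}_{\reg_0} = \sem{\vx}_{\reg_1}$.
\end{itemize}

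\emph{Inductive case.} Let $e = \op(e_1, \ldots, e_n)$, so $\sem{e}_{\reg_0} = \overline{\op}(\sem{e_1}_{\reg_0}, \ldots, \sem{e_n}_{\reg_0}) \neq \bot$. The key step is to transfer definedness from $e$ to its subterms. The interpretation $\overline{\op}$ is assumed \emph{eager with respect to $\bot$} (\Cref{sec:exprsem}): if any argument were $\bot$, the result would be $\bot$. Hence every $\sem{e_k}_{\reg_0} \neq \bot$. The induction hypothesis then yields $\sem{e_k}_{\reg_0} = \sem{e_k}_{\reg_1}$ for each $k$. Since $\overline{\op}$ is a function, applying it to identical arguments gives identical results, so $\sem{e}_{\reg_0} = \sem{e}_{\reg_1}$. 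The derived operator $\stdCapOff$ (and $\stdBuildCap{\cdot,\cdot}$ where present) is covered by the same argument, since its case definition explicitly returns $\bot$ whenever an argument is $\bot$.

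I expect no real obstacle here. The only point needing care is the use of eagerness: without it, a defined result could hide an undefined subterm on which $\reg_1$ differs, and the induction would break. I would state that reliance explicitly and then close the proof.
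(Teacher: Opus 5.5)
Your proposal is correct and follows the same route as the paper, whose proof is the one line ``by induction on the syntax of $e$''. Your explicit use of eagerness with respect to $\bot$ to transfer definedness from $e$ to its subexpressions is exactly the detail that one-line proof leaves implicit.
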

\begin{proof}
  By induction on the syntax of $e$.
\end{proof}

\begin{lemma}
For all $\buf$, $\reg$, $e$ such that $\sem{e}_{\aplsan(\buf, \reg)} \neq \bot$, then there must be $\sem{e}_{\aplsan(\buf, \reg)} = \sem{e}_{\apl(\buf, \reg)}$.
\label{lemma:sem-apl-aplsan-eq}
\end{lemma}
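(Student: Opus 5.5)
The plan is to show that $\aplsan(\buf,\reg)$ never disagrees with $\apl(\buf,\reg)$ on any register where it is defined, and then lift this registerwise fact to expressions using \Cref{lemma:sem-nextreg-invariant}. We split on whether $\dom{\buf}$ is empty.

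\emph{Empty buffer.} If $\dom{\buf}=\emptyset$, then by definition $\aplsan(\buf,\reg)=\reg=\apl(\buf,\reg)$, so the two evaluations are literally the same and the claim is immediate.

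\emph{Nonempty buffer.} Otherwise $\aplsan(\buf,\reg)=\lowproj{\apl(\buf,\reg)}$. Write $\reg_a=\apl(\buf,\reg)$ and $\reg_s=\lowproj{\reg_a}$. The registerwise fact needed is
\[
  \reg_s \nextreg \reg_a ,
\]
that is, for every register $x$ with $\reg_s(x)\neq\bot$ we have $\reg_s(x)=\reg_a(x)$. This follows directly from the definition of the untainted projection. If $\reg_s(x)\neq\bot$, then $\reg_a(x)=(v,\untainted,c)$ for some $v$ and $c$, and $\lowproj\cdot$ returns such a triple unchanged. Hence $\reg_s(x)=(v,\untainted,c)=\reg_a(x)$.

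I would check this against the pointwise form used in \Cref{sec:sem}, which returns $w$ itself, rather than against \Cref{def:low-proj-val}. The two forms coincide on untainted inputs, so either one suffices.

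\emph{Conclusion.} With $\reg_s\nextreg\reg_a$ in hand, \Cref{lemma:sem-nextreg-invariant} applies with $\reg_0=\reg_s$ and $\reg_1=\reg_a$. Since $\sem{e}_{\reg_s}\neq\bot$ by hypothesis, we obtain $\sem{e}_{\reg_s}=\sem{e}_{\reg_a}$, which is exactly the statement.

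If one prefers not to rely on \Cref{lemma:sem-nextreg-invariant}, the same result follows by structural induction on $e$. A value $v$ evaluates identically under any register file. A register is handled by the registerwise fact above. For $\op(e^*)$, eagerness of $\overline\op$ with respect to $\bot$ ensures that every argument is non-$\bot$ under $\reg_s$; the induction hypothesis then makes the arguments equal under both files, and hence the results are equal.

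The only real subtlety is the eagerness step. The induction needs $\bot$ propagation: a defined output must come from defined inputs. For arbitrary operators this is guaranteed by the eagerness assumption in \Cref{sec:exprsem}. For the explicitly defined $\stdCapOff$, it holds because the $\bot$ case is listed first in its definition. Beyond this, I expect no obstacle.
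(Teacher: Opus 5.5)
Your proof is correct, and it takes a different route from the paper in the nonempty-buffer case.

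The paper argues by induction on $e$ that $\sem{e}_{\lowproj{\reg_a}} = \lowproj{\sem{e}_{\reg_a}}$, where $\reg_a = \apl(\buf,\reg)$, and then cases on the taint of $\sem{e}_{\reg_a}$. You instead establish only the pointwise ``agreement where defined'' relation $\lowproj{\reg_a}\nextreg\reg_a$. You then transfer it to expressions via \Cref{lemma:sem-nextreg-invariant}, or by the direct induction you sketch.

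Your intermediate claim is weaker, but it needs nothing beyond eagerness of operators with respect to $\bot$. The paper's commutation identity would say more: it would characterize exactly when the sanitized evaluation is defined. However, it tacitly requires operators to propagate taint exactly, and the stated axioms do not guarantee this. Two cases break it:
\begin{itemize}
\item A merely taint-monotone operator may return a tainted result from untainted inputs. Then the sanitized evaluation is tainted, while $\lowproj{\sem{e}_{\reg_a}} = \bot$.
\item The fall-through case of $\stdCapOff$ returns $(0,\untainted,\tval)$ when its first operand is, for instance, a register holding a tainted non-capability. There the sanitized evaluation is $\bot$, while $\lowproj{\sem{e}_{\reg_a}} = (0,\untainted,\tval)$.
\end{itemize}
In both cases the lemma itself still holds. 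Your argument covers them with no extra assumptions on operators, so it is the more robust of the two.
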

\begin{proof}
If $\dom{\buf} = \emptyset$, then by definition, $\aplsan(\buf, \reg) = \apl(\buf, \reg)$. Thus, the statement holds.

If $\dom{\buf} \neq \emptyset$, then $\aplsan(\buf, \reg) = \lowproj{\apl(\buf, \reg)}$.
By induction on $e$, it is straightforward to derive that $\sem{e}_{\aplsan(\buf, \reg)} = \lowproj{\sem{e}_{\apl(\buf, \reg)}}$.
Consider the following cases:
\begin{itemize}
  \item $\sem{e}_{\apl(\buf, \reg)} = (v, \tainted, c)$: Then, $\sem{e}_{\aplsan(\buf, \reg)} = \bot$, and we do not consider this case.
  \item $\sem{e}_{\apl(\buf, \reg)} = (v, \untainted, c)$: Then, $\sem{e}_{\aplsan(\buf, \reg)} = (v, \untainted, c) = \sem{e}_{\apl(\buf, \reg)}$.
  \item $\sem{e}_{\apl(\buf, \reg)} = \bot$: Then, $\sem{e}_{\aplsan(\buf, \reg)} = \bot$, and we do not consider this case.
\end{itemize}
Therefore, the statement holds.
\end{proof}

\begin{lemma}[Well-formed Prefix Buf]
For all $\mem$, $\reg$, $\buf$, $i$,
\begin{equation*}
  \wf{\mem, \reg, \pref {i + 1} \buf, i} \qquad \Leftrightarrow \qquad \wf{\mem, \reg, \buf, i}
\end{equation*}
\label{lemma:wf-prefix}
\end{lemma}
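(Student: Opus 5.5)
The plan is to observe that the well-formedness predicate $\wf{\mem, \reg, \buf, i}$ of \Cref{def:well-formed-buf} only inspects entries of $\buf$ at indices $\le i$ (together with $\mem$ and $\reg$), and that $\pref{i+1}\buf$ agrees with $\buf$ on exactly those indices. Concretely, I would first record two facts. First, for every $k \le i+1$ we have $\pref{k}{(\pref{i+1}\buf)} = \pref{k}\buf$, directly from the definition of $\pref{\cdot}{\cdot}$ via domains. Hence $\apl(\pref i {(\pref{i+1}\buf)}, \reg) = \apl(\pref i \buf, \reg)$, so the register file $\reg_i$ used in the definition is literally the same on both sides, and therefore so are $\sem{\pc}_{\reg_i}$, the fetched instruction $\instr = \stdReadMemInst{\mem, \sem{\pc}_{\reg_i}}$, and every $\sem{e}_{\reg_i}$, $\sem{x}_{\reg_i}$. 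Second, $(\pref{i+1}\buf)(k) = \buf(k)$ for all $k \le i$, and in particular $i \in \dom{\pref{i+1}\buf}$ iff $i \in \dom{\buf}$.

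With these, I would go through the cases of \Cref{def:well-formed-buf} one by one. For the non-branch, assignment, store, $\kwd{jmp}$ and $\kwd{beqz}$ cases, each condition mentions only $\buf(i)$, $\reg_i$ and $\mem$, so both directions are immediate. The only case requiring care is the load case, which additionally refers to $\buf(j)$ for the forwarding store, to $\buf(k)$ for $j<k<i$ or $k<i$ (the no-intervening-aliasing-store conditions), and to $\min\dom{\buf}$. Here $j<i$ and all quantified $k$ are $<i$, so these entries coincide in $\buf$ and $\pref{i+1}\buf$; and $\min\dom{\pref{i+1}\buf} = \min\dom{\buf}$ because $i$ belongs to both domains, so the minimum is $\le i$ and the prefix does not remove it. Thus the disjunct $j=\bot \lor j<\min\dom{\buf}$ is also invariant.

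The main (mild) obstacle is exactly this load case: one must check that no clause of the definition quantifies over indices beyond $i$ and that $\min\dom{\cdot}$ is preserved; I would handle it by the observation that $i$ is in both domains. Everything else is bookkeeping, and the two implications are proved simultaneously since every clause is shown to be literally equal on both sides.
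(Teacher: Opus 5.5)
Your argument is correct and is the unfolding the paper has in mind when it calls the lemma ``straightforward'' from \Cref{def:well-formed-buf}: every clause inspects only $\mem$, $\apl(\pref i \buf, \reg)$, entries at indices $\le i$, and $\min\dom{\buf}$, and none of these change when you pass to $\pref{i+1}{\buf}$. One point should be made explicit. The forwarding disjunct of the definition does not literally state $j<i$; the paper also treats it as implied, e.g.\ in the functional-correctness proof. The direction from $\wf{\mem,\reg,\buf,i}$ to $\wf{\mem,\reg,\pref{i+1}{\buf},i}$ genuinely needs it, because a forwarding store at some $j>i$ would disappear in the prefix while $j<\min\dom{\buf}$ fails. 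So state $j<i$ as part of the definition rather than as an observation.
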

\begin{proof}
It is straightforward to derive this from \Cref{def:well-formed-buf}.
\end{proof}

\begin{lemma}[HW State Well-formed Invariant]
Let $C_0$, $C_1$ be two HW configurations such that $C_0$ is a well-formed HW config and $C_0 \sstep{}{}C_1$.
Furthermore, denote $C_0 = \cf{\mem_0, \reg_0, \buf_0, \mu_0}$, assume there is also $\wf{\mem_0, \reg_0}$.
Then, $C_1$ is also a well-formed HW config.
\label{lemma:hw-wf-invariant}
\end{lemma}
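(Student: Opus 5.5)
The proof goes by case analysis on the directive $d = \stdNextMu{\mu'}$ selected by Rule~\ref{hw:step} and then on the inner rule applied. For each index $i \in \dom{\buf_1}$ we must establish $\wf{\mem_1, \reg_1, \buf_1, i}$ in the sense of \Cref{def:well-formed-buf}. We distinguish two kinds of entries: those whose content is unchanged ($\buf_1(i) = \buf_0(i)$), and the at most one entry that the step creates or modifies. The central tool is \Cref{lemma:apl-invariant}, which gives $\apl(\pref i {\buf_0}, \reg_0) \nextreg \apl(\pref i {\buf_1}, \reg_1)$ for all surviving indices. Combined with \Cref{lemma:sem-nextreg-invariant}, it says that every defined expression value (the $\pc$, the operands $e$, $x$) that witnessed well-formedness of an old entry keeps the same value under the new prefix. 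Hence the fetched instruction $\stdReadMemInst{\mem, \sem{\pc}_{\reg_i}}$ and all recorded equations are preserved for unchanged entries, as long as the memory is unchanged. Squashing rules (hazards) only truncate to a prefix, so \Cref{lemma:wf-prefix} handles the dropped suffix.

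For \textbf{fetch} steps, the new entry sits at index $\sup\dom{\buf_0}+1$, and its prefix is exactly $\buf_0$. Since $\sem{\pc}_{\aplsan(\buf_0,\reg_0)} \neq \bot$, \Cref{lemma:sem-apl-aplsan-eq} identifies it with $\sem{\pc}_{\apl(\buf_0,\reg_0)}$, so the fetched instruction is the one required by the definition. The new entry is then either $\instr\specat{\varepsilon}$ or a $\pc$ assignment with a nonempty tag, and both shapes satisfy their clauses vacuously on the tag-dependent part. We still need the taint $\untainted$ in the $\jmp$ clause; it follows from the $\pc$ having inner taint $\untainted$ by Rule~\ref{memory:read-mem-inst}.

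For \textbf{execute} steps, we check the modified entry directly against its clause, again converting $\aplsan$-evaluations to $\apl$-evaluations via \Cref{lemma:sem-apl-aplsan-eq}. The branch/jump cases are immediate from the rule premises. Rule~\ref{hw:execute-load-mem} yields the memory alternative of the load clause. Rule~\ref{hw:execute-load-fwd} yields the forwarding alternative; the ``no intervening aliasing store'' condition follows from the maximality of $j$. For store execution we must also re-verify the \emph{younger} load entries that forwarded from older stores. The side condition of Rule~\ref{hw:execute-store-ok}, or the truncation at the first offending load in Rule~\ref{hw:execute-store-hazard}, is exactly what ensures that the invariant ``$\forall j<k<i$, no aliasing store'' survives.

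For \textbf{commit} steps, the architectural state absorbs the oldest entry. Prefix register files are unchanged by the commit case of \Cref{lemma:apl-invariant}. For Rule~\ref{hw:commit-store}, the memory changes, and we must show that the load entries relying on memory remain consistent. There are two cases. Loads that read memory had no older aliasing store, so the written range is disjoint from theirs and their read values are unchanged. Loads that forwarded from the committed store $j$ now satisfy $j < \min\dom{\buf_1}$; we then argue that the committed contents produce the same $\stdReadRes$ result, using equality of ranges and \Cref{lemma:read-write-res-szcap}.

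This last point, together with the store-execution interplay, is the main obstacle. It requires showing that $\stdReadRes$ applied to memory after writing $\stdWriteRes{\cdot}$ coincides with $\stdReadRes{v_0, t_c, c_0, \sz}$ applied to the raw forwarded pair. The forwarding premise $t_0 \sqsubseteq t_c$ and the alignment conditions of Rule~\ref{memory:check-cap} make this identity hold. Instruction fetches are also unaffected by the store, because writable and executable memory are disjoint ($\wf{\mem_0,\reg_0}$).
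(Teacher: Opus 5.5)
Your overall structure is the paper's proof: a case analysis on the directive, with \Cref{lemma:apl-invariant}, \Cref{lemma:sem-nextreg-invariant}, \Cref{lemma:sem-apl-aplsan-eq} and \Cref{lemma:wf-prefix} carrying the unchanged entries, the squash argument protecting younger loads when a store executes, and W/X disjointness protecting fetches after a committed store. The problem is the step you single out as the main obstacle. It rests on a misreading of the semantics: a store entry in the ROB does not hold the raw ISA value. Rules~\ref{hw:execute-store-ok} and \ref{hw:execute-store-hazard} already record $(v', c') = \stdWriteRes{(v,t,c), \taint_c, \sz}$. Rule~\ref{hw:execute-load-fwd} applies $\stdReadRes{\cdot}$ to this very pair, and Rule~\ref{hw:commit-store} copies it verbatim into memory. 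The identity you set out to prove (reading back the $\stdWriteRes{\cdot}$ image agrees with $\stdReadRes{\cdot}$ of the raw pair) is false in general, and $t_0 \sqsubseteq t_c$ does not rescue it. Take a public pair $((a,0),\untainted,\tval)$ stored at size $\szCap$ through a tainted capability. It becomes the URR $(a,\stdMagic)$ tagged $\tcap$, which reads back as $((a,0),\untainted,\tval)$. By contrast, $\stdReadRes{(a,0), \tainted, \tval, \szCap} = ((a,0),\tainted,\tval)$, even though $\untainted \sqsubseteq \tainted$. Also, \Cref{lemma:read-write-res-szcap} relates a read-back to the original ISA value; it belongs to the public-equivalence argument, not here. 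So that step, pursued as you describe it, would fail.

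The repair is immediate, and it is what the paper does. The forwarding clause of \Cref{def:well-formed-buf} gives $[\addr,\addr+\sz) = [\addr',\addr'+\sz')$, hence $\addr = \addr'$ and $\sz = \sz'$. After the commit, the load's range holds exactly $v'$ and its tag slot $\addr/\szCap$ holds exactly $c'$. The memory alternative's equation is therefore literally the forwarding clause's $(v,t,c) = \stdReadRes{v', t_c, c', \sz}$. Its remaining conditions also hold: the committed index is now below $\min\dom{\buf_1}$, no store is older than it, and the stores strictly between it and the load were already non-aliasing.

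Two minor corrections. First, in the fetch case the $\jmp{e}$ clause needs the \emph{outer} taint and the tag that Rule~\ref{hw:fetch-branch-predict-pc} copies from $\sem{\pc}$. Rule~\ref{memory:read-mem-inst} pins these to $\untainted$ and $\tcap$; the inner taint is not what is used. Second, when a store executes you must also re-verify younger loads that read memory, since their clause forbids aliasing with \emph{any} older executed store. The side condition of Rule~\ref{hw:execute-store-ok} (with recorded index $\bot$) and the squash of Rule~\ref{hw:execute-store-hazard} guarantee this as well.
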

\begin{proof}
Denote $C_1 = \cf{\mem_1, \reg_1, \buf_1, \mu_1}$.
By \ref{hw:step}, there should be
\begin{mathpar}
\inferrule{
  \mu' = \stdUpdateMu{\mu_0, \lowproj{\buf_0}} \\
  d = \stdNextMu{\mu'} \\
  \cf{\mem_0, \reg_0, \buf_0, \mu'} \sstep{d}{} \cf{\mem_1, \reg_1, \buf_1, \mu_1}
}{
  \cf{\mem_0, \reg_0, \buf_0, \mu_0} \sstep{d}{} \cf{\mem_1, \reg_1, \buf_1, \mu_1}
}
\end{mathpar}
Consider the following cases for $d$:

\pgheading{Case $d = \dfetch$}
By rules for fetch steps, there must exist $\instr$ such that $\instr = \stdReadMemInst{\mem_0, \sem{\pc}_{\aplsan(\buf_0, \reg_0)}}$.
Note that this implies that $\sem{\pc}_{\aplsan(\buf_0, \reg_0)} \neq \bot$.
By \Cref{lemma:sem-apl-aplsan-eq}, there should be
\begin{equation*}
  \instr = \stdReadMemInst{\mem_0, \sem{\pc}_{\apl(\buf_0, \reg_0)}}.
\end{equation*}
Furthermore, there should also be $\mem_1 = \mem_0$, $\reg_1 = \reg_0$.
Denote $i = \sup\dom{\buf}$.
Consider the following two cases:

\pgheading{Subcase $\instr \not\in \qty{\beqz{x}{\addr''}, \jmp{e}}$}
By \ref{hw:fetch-other}, we have
\begin{equation*}
  \buf_1 = \buf_0[i + 1 \mapsto \instr \specat{\varepsilon}].
\end{equation*}
This implies that $\buf_0 = \pref {i + 1} {\buf_1}$.
For all $j \in \dom{\buf_1}$, consider the following two cases:
\begin{itemize}
\item $j \leq i$: By \Cref{def:well-formed-buf}, $S_0$ is well-formed implies that $\wf{\mem_0, \reg_0, \buf_0, j}$. Thus, by the equivalence we have shown above, there should be $\wf{\mem_1, \reg_1, \pref {i + 1} {\buf_1}, j}$.
By \Cref{lemma:wf-prefix}, there should be $\wf{\mem_1, \reg_1, \buf_1, j}$.
\item $j = i$: Since $\buf_1(i) = \instr \specat{\varepsilon}$, then by \Cref{def:well-formed-buf}, there should be $\wf{\mem_1, \reg_1, \buf_1, i}$.
\end{itemize}

\pgheading{Subcase $\instr \in \qty{\beqz{x}{\addr''}, \jmp{e}}$}
By \ref{hw:fetch-branch-predict-pc} and \ref{memory:read-mem-inst}, we have
\begin{align*}
  & \buf_1 = \buf_0[i + 1 \mapsto \pc \gets ((\stdPredPc{\mu'}, \stdCapMeta{\addr}), t, c)\specat{\stdCap{\addr}}] \\
  & (\stdCap{\addr}, t, c) = \sem{\pc}_{\aplsan(\buf_0, \reg_0)} \\
  & t = \untainted.
\end{align*}
This implies that $\buf_0 = \pref {i + 1} {\buf_1}$.
For all $j \in \dom{\buf_1}$, consider the following two cases:
\begin{itemize}
\item $j \leq i$: we can prove $\wf{\mem_1, \reg_1, \buf_1, j}$ using the similar reasoning for the non-branch case.
\item $j = i$: Note that
\begin{align*}
  & \buf_1(i) = \pc \gets ((\stdPredPc{\mu'}, \stdCapMeta{\addr}), t, c)\specat{\stdCap{\addr}} \\
  & t = \untainted.
\end{align*}
By \Cref{def:well-formed-buf}, there should be $\wf{\mem_1, \reg_1, \buf_1, i}$.
\end{itemize}

\pgheading{Case $d = \dexecute{i}$}
We aim to prove that for all $j\in\dom{\buf_1}$, $\wf{\mem_1, \reg_1, \buf_1, j}$.
We discuss based on the range of $j$ ($j < i$, $j = i$, and $j > i$).

\pgheading{Subcase $j < i$}
By rules for execute steps, there should be
\begin{align*}
  \mem_0 & = \mem_1 & \reg_0 & = \reg_1 & \pref i {\buf_0} & = \pref i {\buf_1}.
\end{align*}
Then, for all $j < i$, $j \in \dom{buf_1}$ (naturally, there should also be $j \in \dom{\buf_0}$), we have
\begin{align*}
  \wf{\mem_0, \reg_0, \buf_0, j}
  & \Leftrightarrow \wf{\mem_0, \reg_0, \pref{j+1}{\buf_0}, j} \\
  & \Leftrightarrow  \wf{\mem_1, \reg_1, \pref{j+1}{\buf_1}, j} \\
  & \Leftrightarrow \wf{\mem_1, \reg_1, \buf_1, j}.
\end{align*}
Since $S_0$ is well-formed, then $\wf{\mem_0, \reg_0, \buf_0, j}$. Thus, there should be $\wf{\mem_1, \reg_1, \buf_1, j}$.

\pgheading{Subcase $j = i$}
By rules for execute steps, there should be
\begin{align*}
  & \pref i {\buf_0} = \pref i {\buf_1} && i \in \buf_0,\buf_1 \\
  & \reg_0 = \reg_1 && \mem_0 = \mem_1.
\end{align*}
Hence, we can denote
\begin{equation*}
  \reg_i = \apl(\pref i {\buf_0}, \reg_0) = \apl(\pref i {\buf_1}, \reg_1).
\end{equation*}
Since $S_0$ is well-formed, then $\wf{\mem_0, \reg_0, \buf_0, i}$.
Hence, there exists $\instr \in \Instrs$ such that $\instr = \stdReadMemInst{\mem_0, \sem{\pc}_{\reg_i}} = \stdReadMemInst{\mem_1, \sem{\pc}_{\reg_i}}$.

To prove $\wf{\mem_1, \reg_1, \buf_1, j}$, by rules for execute steps, we just need to consider the following cases:
\begin{itemize}
\item $\buf_0(i) = x \gets e \specat{\varepsilon}$ and $x\neq \pc$:
$\wf{\mem_0, \reg_0, \buf_0, i}$ implies that $\instr = x \gets e$.
By \ref{hw:execute-assign}, there should be
\begin{align*}
  \buf_1(i) & = x \gets (v, t, c) \specat{\varepsilon}
  & (v, t, c) & = \sem{e}_{\apl(\pref i {\buf_0}, \reg_0)} \neq \bot.
\end{align*}
Thus, by \Cref{def:well-formed-buf}, $\wf{\mem_1, \reg_1, \buf_1, i}$.

\item $\buf_0(i) = \pc \gets (\addr', t', c')\specat{\stdCap{\addr}}$:
$\wf{\mem_0, \reg_0, \buf_0, i}$ implies that either of the following cases holds:
\begin{itemize}
  \item $\instr = \jmp{e}$: By \ref{hw:execute-jmp-ok}, \ref{hw:execute-jmp-hazard}, and \Cref{lemma:sem-apl-aplsan-eq}, there should always be
  \begin{align*}
    & \buf_1(i) = \pc \gets (\addr_0, \untainted, c)\specat{\varepsilon} \\
    & (\addr_0, t, c) = \sem{e}_{\aplsan(\pref i {\buf_0}, \reg_0)} = \sem{e}_{\reg_i}.
  \end{align*}
  By \Cref{def:well-formed-buf}, $\wf{\mem_1, \reg_1, \buf_1, i}$ holds.
  \item $\instr = \beqz{x}{\addr''}$: By \ref{hw:execute-beqz-ok}, \ref{hw:execute-beqz-hazard}, and \Cref{lemma:sem-apl-aplsan-eq}, there should always be
  \begin{align*}
    & \buf_1(i) = \pc \gets \addr_0 \specat{\varepsilon} \\
    & (v, \_, \_) = \sem{x}_{\aplsan(\pref i {\buf_0}, \reg_0)}= \sem{x}_{\reg_i} \\
    & z = (v = 0) \;?\; \addr'' : 1 \\
    & \addr_0 = \sem{\pc \stdCapOff z}_{\reg_i}.
  \end{align*}
  By \Cref{def:well-formed-buf}, $\wf{\mem_1, \reg_1, \buf_1, i}$ holds.
\end{itemize}

\item $\buf_0(i) = \load{x}{e}{\sz}\specat{\varepsilon}$
By \ref{hw:execute-load-mem} and \ref{hw:execute-load-fwd}, either of the following cases should hold:
\begin{itemize}
  \item Load data is read from memory (i.e., \ref{hw:execute-load-mem} applies):
  In this case, (also by \Cref{lemma:sem-apl-aplsan-eq}) there should be
  \begin{align*}
    & \buf_1(i) = x \gets (v, t, c)\specat{((\addr, (\perm, \rbase, \rend, t_c)), \sz, \bot)} \\
    & ((\addr, (\perm, \rbase, \rend, t_c)), \_, \tcap) = \sem{e}_{\aplsan(\pref i {\buf_0}, \reg_0)} = \sem{e}_{\reg_i} \\
    & \stdCheckCap{(\addr, (\perm, \rbase, \rend, t_c)), \sz, \readonly} \\
    & (v, t, c) = \stdReadRes{\memData[\addr, \addr + sz], t_c, \memTag[\addr / 2], \sz} \\
    & \forall j < i, \buf_0(j) = \store{\_}{\stdCap{\addr'}}{\sz'}\specat{\tg} \\
    & \qquad \qquad \Rightarrow [\addr, \addr + \sz)\cap [\addr', \addr' + \sz') = \emptyset \\
    & \forall j < i, \buf_1(j) = \buf_0(j) \\
    & (\memData, \memTag) = \mem_0 = \mem_1 \\
  \end{align*}
  By \Cref{def:well-formed-buf}, $\wf{\mem_1, \reg_1, \buf_1, i}$ holds.
  \item Load data is forwarded from a store (i.e., \ref{hw:execute-load-fwd} applies):
  In this case, (also by \Cref{lemma:sem-apl-aplsan-eq}), there should be
  \begin{align*}
    & \buf_1(i) = x \gets (v, t, c)\specat{((\addr, (\perm, \rbase, \rend, t_c)), \sz, \bot)} \\
    & ((\addr, (\perm, \rbase, \rend, t_c)), \_, \tcap) = \sem{e}_{\aplsan(\pref i {\buf_0}, \reg_0)} = \sem{e}_{\reg_i} \\
    & \stdCheckCap{(\addr, (\perm, \rbase, \rend, t_c)), \sz, \readonly} \\
    & \buf(j) = \store{v', t', c'}{\stdCap{\addr'}}{\sz}\specat{\tg} \\
    & [\addr, \addr + \sz) = [\addr', \addr' + \sz') \\
    & (v, t, c) = \stdReadRes{v', t_c, c', \sz} \\
    & j = \max\lbrace j < i: \buf(j)=\store{\_}{\stdCap{\addr'}}{\sz'}\specat{\tg} \\
    & \qquad \qquad \land [\addr, \addr + \sz) \cap [\addr', \addr' + \sz') \neq \emptyset \rbrace
  \end{align*}
  Note that the last line implies that
  \begin{align*}
    & \forall j < k < i, \buf(k) = \store{\_}{\stdCap{\addr'}}{\sz'}\specat{\tg} \Rightarrow \\
    & \qquad \qquad \qquad \qquad [\addr, \addr+ \sz) \land [\addr', \addr' + \sz') = \emptyset.
  \end{align*}
  By \Cref{def:well-formed-buf}, $\wf{\mem_1, \reg_1, \buf_1, i}$ holds.
\end{itemize}

\item $\buf_0(i) = \store{x}{e}{\sz}\specat{\varepsilon}$:
By \ref{hw:execute-store-ok}, \ref{hw:execute-store-hazard}, and \Cref{lemma:sem-apl-aplsan-eq}, there should always be
\begin{align*}
  & \buf_1(i) = \store{v', t, c'}{(\addr, (\perm, \rbase, \rend, t_c))}{\sz}\specat{\varepsilon} \\
  & ((\addr, (\perm, \rbase, \rend, t_c)), \_, \tcap) = \sem{e}_{\aplsan(\pref i {\buf_0}, \reg_0)} = \sem{e}_{\reg_i} \\
  & \stdCheckCap{(\addr, (\perm, \rbase, \rend, t_c)), \sz, \readwrite} \\
  & (v, t, c) = \sem{x}_{\reg_i} \\
  & (v', c') = \stdWriteRes{(v, t, c), t_c, \sz}.
\end{align*}
By \Cref{def:well-formed-buf}, $\wf{\mem_1, \reg_1, \buf_1, i}$ holds.

\end{itemize}

\pgheading{Subcase $j > i$}
By rules for execute steps, $j>i$ and $j\in \dom{\buf_1}$ implies that $\buf_0(j) = \buf_1(j)$.
Since $S_0$ is well-formed, then $\wf{\mem_0, \reg_0, \buf_0, j}$.

Denote
\begin{align*}
  \reg_{j0} & = \apl(\pref j {\buf_0}, \reg_0) & \reg_{j1} & = \apl(\pref j {\buf_1}, \reg_1).
\end{align*}
By rules for execute steps, there should also be $\mem_0 = \mem_1$ and $\reg_0 = \reg_1$.
By \Cref{def:well-formed-buf}, there exists $\instr$ such that $\instr = \stdReadMemInst{\mem_0, \sem{\pc}_{\reg_{j0}}}$.
This implies that $\sem{\pc}_{\reg_{j0}} \neq \bot$.
Furthermore, by \Cref{lemma:apl-invariant}, $\reg_{j0} \nextreg \reg_{j1}$.
Thus, by \Cref{lemma:sem-nextreg-invariant}, $\sem{\pc}_{\reg_{j0}} = \sem{\pc}_{\reg_{j1}}$.
Therefore, $\instr = \stdReadMemInst{\mem_1, \sem{\pc}_{\reg_{j1}}}$.

Then, we just need to consider the following cases
\begin{itemize}
\item $\instr \not\in \qty{\beqz{x}{\addr''}, \jmp{e}}$ and $\buf_0(j) = \instr\specat{\varepsilon}$:
Since $\buf_0(j) = \buf_1(j)$, then $\wf{\mem_1, \reg_1, \buf_1, j}$.
\item $\instr = x \gets e$, $x \neq \pc$, $\buf_0(j) = x \gets v \specat{\varepsilon}$, and $v = \sem{e}_{\reg_{j0}} \neq \bot$:
By \Cref{lemma:sem-nextreg-invariant}, $\sem{e}_{\reg_{j0}} = \sem{e}_{\reg_{j1}}$.
Furthermore, note that $\buf_0(j) = \buf_1(j)$. Thus, $\wf{\mem_1, \reg_1, \buf_1, j}$.
\item $\instr = \load{x}{e}{\sz}$, $x \neq \pc$:
$\wf{\mem_0, \reg_0, \buf_0, j}$ further implies that either of the following cases holds:
\begin{itemize}
\item Load data is forwarded from an in-flight store: in this case, there should be
\begin{equation}
\begin{aligned}
  & \buf_0(j) = x \gets(v, t, c) \specat{(\stdCap{\addr}, \sz, j')} \\
  & (\stdCap{\addr}, \_, \tcap) = \sem{e}_{\reg_{j0}} \\
  & \stdCheckCap{\stdCap{\addr}, \sz, \readonly} \\
  & \buf_0(j') = \store{(v', t', c')}{\stdCap{\addr'}}{\sz'} \specat{\varepsilon} \\
  & [\addr, \addr + \sz) = [\addr', \addr' + \sz') \\
  & (v, t, c) = \stdReadRes{v', t_c, c', \sz} \\
  & \stdCapMeta{\addr} = (\_, \_, \_, t_c) \\
  & \forall j' < k < j,\; \buf_0(k) = \store{\_}{\stdCap{\addr'}}{\sz'}\specat{\tg} \Rightarrow \\
  & \qquad \qquad \qquad [\addr, \addr + \sz) \cap [\addr', \addr' + \sz') = \emptyset.
\end{aligned}
\label{eq:wf-exe-load-hyp-1}
\end{equation}
To prove $\wf{\mem_1, \reg_1, \buf_1, j}$, it suffices to prove that \eqref{eq:wf-exe-load-hyp-1} still holds if we replace $\buf_0$ with $\buf_1$, and $\reg_{j0}$ with $\reg_{j1}$.
Note that $\buf_0(j) = \buf_1(j)$. By rules for execute steps, since $\buf_0(j')$ is an executed store instruction, then $j' \neq i$. Hence, $\buf_0(j') = \buf_1(j')$.
Furthermore, by \Cref{lemma:sem-nextreg-invariant}, there should also be $\sem{e}_{\reg_{j0}} = \sem{e}_{\reg_{j1}}$.
Thus, we just need to prove that for all $j$ such that $j' < k < j$, the following holds.
\begin{equation}
\begin{aligned}
& \buf_1(k) = \store{\_}{\stdCap{\addr'}}{\sz'}\specat{\tg} \Rightarrow \\
& \qquad \qquad \qquad [\addr, \addr + \sz) \cap [\addr', \addr' + \sz') = \emptyset.
\end{aligned}
\label{eq:wf-exe-load-goal-1}
\end{equation}
We prove by contradiction. Suppose there exists $k$, $j' < k < j$ such that \eqref{eq:wf-exe-load-goal-1} does not hold.
Then, there must be $k = i$, otherwise $\buf_1(k) = \buf_0(k)$ and \eqref{eq:wf-exe-load-goal-1} should be implied by \eqref{eq:wf-exe-load-hyp-1}.
Furthermore, by \ref{hw:execute-store-ok} or \ref{hw:execute-store-hazard}, if there is aliasing between the store at entry $i$ and the load at entry $j$, then entry $j$ must be squashed in $\buf_1$, i.e., $j \not \in \dom{\buf_1}$. This contradicts our assumption that $j \in \dom{\buf_1}$.
Thus, \eqref{eq:wf-exe-load-goal-1} holds.
\item Load data is loaded from memory or forwarded from a committed store: in this case, there should be
\begin{equation}
\begin{aligned}
  & \buf_0(j) = x \gets(v, t, c) \specat{(\stdCap{\addr}, \sz, j')} \\
  & (\stdCap{\addr}, \_, \tcap) = \sem{e}_{\reg_{j0}} \\
  & \stdCheckCap{\stdCap{\addr}, \sz, \readonly} \\
  & j' = \bot \lor j' < \min \dom{\buf_0} \\
  & (v, t, c) = \stdReadRes{\memData[\addr, \addr + \sz], t_c, \memTag[\addr / 2], \sz} \\
  & \stdCapMeta{\addr} = (\_, \_, \_, t_c) \\
  & \forall k < j,\; \buf_0(k) = \store{\_}{\stdCap{\addr'}}{\sz'}\specat{\tg} \Rightarrow \\
  & \qquad \qquad \qquad [\addr, \addr + \sz) \cap [\addr', \addr' + \sz') = \emptyset \\
  & \mem_0 = (\memData, \memTag).
\end{aligned}
\label{eq:wf-exe-load-hyp-2}
\end{equation}
To prove $\wf{\mem_1, \reg_1, \buf_1, j}$, it suffices to prove that the above statement still holds if we replace $\buf_0$ with $\buf_1$, $\reg_{j0}$ with $\reg_{j1}$, and $\mem_0$ with $\mem_1$.
By rules for execution steps, there should be $\buf_0(j) = \buf_1(j)$, $\min \dom{\buf_0} = \min\dom{\buf_1}$, and $\mem_0 = \mem_1$.
Furthermore, by \Cref{lemma:sem-nextreg-invariant}, there should also be $\sem{e}_{\reg_{j0}} = \sem{e}_{\reg_{j1}}$.
Thus, we just need to prove that for all $j$ such that $k < j$, the following holds.
\begin{equation}
\begin{aligned}
& \buf_1(k) = \store{\_}{\stdCap{\addr'}}{\sz'}\specat{\tg} \Rightarrow \\
& \qquad \qquad \qquad [\addr, \addr + \sz) \cap [\addr', \addr' + \sz') = \emptyset.
\end{aligned}
\label{eq:wf-exe-load-goal-2}
\end{equation}
Similar to the last case, we prove by contradiction. Suppose there exists $k$, $k < j$ such that \eqref{eq:wf-exe-load-goal-2} does not hold.
Then, there must be $k = i$, otherwise $\buf_1(k) = \buf_0(k)$ and \eqref{eq:wf-exe-load-goal-2} should be implied by \eqref{eq:wf-exe-load-hyp-2}.
Furthermore, by \ref{hw:execute-store-ok} or \ref{hw:execute-store-hazard}, if there is aliasing between the store at entry $i$ and the load at entry $j$, then entry $j$ must be squashed in $\buf_1$, i.e., $j \not \in \dom{\buf_1}$. This contradicts our assumption that $j \in \dom{\buf_1}$.
Thus, \eqref{eq:wf-exe-load-goal-2} holds.
\end{itemize}

\item $\instr = \store{x}{e}{\sz}$:
$\wf{\mem_0, \reg_0, \buf_0, j}$ further implies that
\begin{align*}
  & \buf_0(j) = \store{(v', t, c')}{\stdCap{\addr}}{\sz}\specat{\varepsilon} \\
  & (\stdCap{\addr}, \_, \tcap) = \sem{e}_{\reg_{j0}} \\
  & \stdCheckCap{\stdCap{\addr}, \sz, \readwrite} \\
  & (v, t, c) = \sem{x}_{\reg_{j0}} \\
  & (v', c') = \stdWriteRes{(v, t, c), t_c, \sz} \\
  & \stdCapMeta{\addr} = (\_, \_, \_, t_c).
\end{align*}
Note that $\buf_1(j) = \buf_0(j)$.
By \Cref{lemma:sem-nextreg-invariant}, $\sem{e}_{\reg_{j0}} = \sem{e}_{\reg_{j1}}$ and $\sem{x}_{\reg_{j0}} = \sem{x}_{\reg_{j1}}$.
Furthermore, $\buf_1(j) = \buf_0(j)$.
Thus, the above statement still holds if we replace $\buf_0$ with $\buf_1$, and $\reg_{j0}$ with $\reg_{j1}$.
Therefore, $\wf{\mem_1,\reg_1, \buf_1, j}$.
\item $\instr = \jmp{e}$:
$\wf{\mem_0, \reg_0, \buf_0, j}$ further implies that
\begin{align*}
  & \buf_0(j) = \pc \gets (\addr', \untainted, c)\specat{\tg} \\
  & \tg = \varepsilon \Rightarrow \sem{e}_{\reg_{j0}} = (\addr', \_, c).
\end{align*}
By \Cref{lemma:sem-nextreg-invariant}, $\sem{e}_{\reg_{j0}} = \sem{e}_{\reg_{j1}}$.
Furthermore, note that $\buf_0(j) = \buf_1(j)$.
Thus, the above statement still holds if we replace $\buf_0$ with $\buf_1$, and $\reg_{j0}$ with $\reg_{j1}$.
Therefore, there should be $\wf{\mem_1,\reg_1, \buf_1, j}$.
\item $\instr = \beqz{x}{\addr''}$:
$\wf{\mem_0, \reg_0, \buf_0, j}$ further implies that
\begin{align*}
  & \buf_0(j) = \pc \gets (\addr', t, c)\specat{\tg} \\
  & \tg = \varepsilon \Rightarrow \\
  & \quad ((v, \_, \_) = \sem{x}_{\reg_{j0}} \land (z = (v=0)\;?\; \addr' : 1) \land (\addr', t, c) = \sem{\pc \stdCapOff z}_{\reg_{j0}})
\end{align*}
By \Cref{lemma:sem-nextreg-invariant}, $\sem{x}_{\reg_{j0}} = \sem{x}_{\reg_{j1}}$.
Furthermore, since $\buf_0(j) = \buf_1(j)$,
then the above statement still holds if we replace $\buf_0$ with $\buf_1$, and $\reg_{j0}$ with $\reg_{j1}$.
Therefore, $\wf{\mem_1,\reg_1, \buf_1, j}$.
\end{itemize}

\pgheading{Case $d = \dcommit$}
Denote $i = \min\dom{\buf}$.
By rules for commit steps, there should be $\buf_1 = \buf_0 \backslash i$.
Furthermore, for all $j \in \dom{\buf_1}$, there should be $\buf_1(j) = \buf_0(j)$.
We aim to prove that $\wf{\mem_1, \reg_1, \buf_1, j}$.
Let
\begin{align*}
  \reg_{j0} & = \apl(\pref j {\buf_0}, \reg_0) & \reg_{j1} = \apl(\pref j {\buf_1}, \reg_1).
\end{align*}
By \Cref{lemma:apl-invariant}, there should be $\reg_{j0} \nextreg \reg_{j1}$.

Since $S_0$ is well-formed, there should be $\wf{\mem_0, \reg_0, \buf_0, j}$.
By \Cref{def:well-formed-buf}, there must exist $\instr\in \Instrs$ such that $\instr = \stdReadMemInst{\mem_0, \sem{\pc}_{\reg_{j0}}}$.

We aim to show that $\instr =\stdReadMemInst{\mem_1, \sem{\pc}_{\reg_{j1}}}$.
Note that, by \Cref{lemma:sem-nextreg-invariant}, $\reg_{j0} \nextreg \reg_{j1}$ implies that $\sem{\pc}_{\reg_{j0}} = \sem{\pc}_{\reg_{j1}}$.
Furthermore, consider the following cases regarding the type of the committed instruction in entry $\buf_0(i)$:
\begin{itemize}
\item $\buf_0(i) = x \gets v \specat{\tg}$ is not a store instruction: by \ref{hw:commit-assign}, there should be $\mem_0 = \mem_1$. Thus,
\begin{align*}
  \instr & = \stdReadMemInst{\mem_0, \sem{\pc}_{\reg_{j0}}} \\
  & =\stdReadMemInst{\mem_1, \sem{\pc}_{\reg_{j1}}}
\end{align*}
\item $\buf_0(i) = \store{(v, t, c)}{\stdCap{\addr}}{\sz}\specat{\varepsilon}$ is a store instruction: by \ref{hw:commit-store}, there should be
\begin{align*}
  & \mem_0 = (\memData, \memTag) \\
  & \mem_1 = (\memData[[\addr, \addr + \sz) \mapsto v], \memTag[\addr / 2 \mapsto c]) \\
\end{align*}
Note that $S_0$ is well-formed, so $\wf{\mem_0, \reg_0, \buf_0, i}$. By \Cref{def:well-formed-buf}, this implies that
\begin{align}
  & \store{\_}{\_}{\sz} = \stdReadMemInst{\mem_0, \sem{\pc}_{\reg_0}} \label{eq:apl-commit-store-1} \\
  & \stdCheckCap{\stdCap{\addr}, \sz, \readwrite}. \label{eq:apl-commit-store-2}
\end{align}
Denote $\sem{\pc}_{\reg_0}=(\stdCap{\addr'}, \_, \_)$.
By \ref{memory:read-mem-inst}, there should be $\stdCheckCap{\stdCap{\addr'}, \szInst, \exec}$. This implies that $[\addr', \addr' + \szInst) \in \permAddr{\reg_0, \mem_0, \exec}$.
Furthermore, the above statements also implies that $[\addr, \addr + \sz) \in \permAddr{\reg_0, \mem_0, \readwrite}$.
By \Cref{def:isa-wf}, the ISA state $\cf{\mem_0, \reg_0}$ is well-formed implies that
\begin{equation*}
  \permAddr{\reg_0, \mem_0, \readwrite} \cap \permAddr{\reg_0, \mem_0, \exec} = \emptyset
\end{equation*}
Thus, $[\addr, \addr+\sz) \cap[\addr', \addr'+\szInst)=\emptyset$.
Therefore,
\begin{align*}
  \instr & = \stdReadMemInst{\mem_0, \sem{\pc}_{\reg_{j0}}} \\
  & =\stdReadMemInst{\mem_1, \sem{\pc}_{\reg_{j1}}}.
\end{align*}
\end{itemize}

To prove $\wf{\mem_1, \reg_1, \buf_1, j}$, we just need to consider the following cases:

\pgheading{Subcase $\instr \not \in \qty{\beqz{x}{\addr''}, \jmp{e}}$}
$\wf{\mem_0, \reg_0, \buf_0, j}$ implies that $\buf_0(j) = \instr\specat{\varepsilon}$.
Note that $\buf_1(j) = \buf_0(j) = \instr\specat{\varepsilon}$, so by \Cref{def:well-formed-buf} there should be $\wf{\mem_1, \reg_1, \buf_1, j}$.

\pgheading{Subcase $\instr = x \gets e$}
$\wf{\mem_0, \reg_0, \buf_0, j}$ implies that $x\neq \pc$ and $\buf_0(j) = x \gets v\specat{\varepsilon}$ where $v = \sem{e}_{\reg_{j0}}\neq \bot$.
By \Cref{lemma:sem-nextreg-invariant}, $\reg_{j0} \nextreg \reg_{j1}$ implies that $\sem{e}_{\reg_{j0}} = \sem{e}_{\reg_{j1}}$.
Thus, the above statement still holds if we replace $\buf_0$ with $\buf_1$ and $\reg_{j0}$ with $\reg_{j1}$.
Therefore, there should be $\wf{\mem_1, \reg_1, \buf_1, j}$.

\pgheading{Subcase $\instr = \load{x}{e}{\sz}$}
$\wf{\mem_0, \reg_0, \buf_0, j}$ implies that $x\neq \pc$ and either of the following cases holds:
\begin{itemize}
\item Load data is forwarded from an in-flight store: in this case there should be
\begin{equation}
\begin{aligned}
  & \buf_0(j) = x\gets (v, t, c)\specat{(\stdCap{\addr}, \sz, j')} \\
  & (\stdCap{\addr}, \_, \tcap) = \sem{e}_{\reg_{j0}} \\
  & \stdCheckCap{\stdCap{\addr}, \sz, \readonly} \\
  & \buf_0(j') = \store{(v', t', c')}{\stdCap{\addr'}}{\sz'}\specat{\varepsilon} \\
  & [\addr, \addr + \sz) = [\addr', \addr' + \sz') \\
  & (v, t, c) = \stdReadRes{v', t_c, c', \sz} \\
  & \forall j' < k < j,\; \buf_0(k) = \store{\_}{\stdCap{\addr'}}{\sz'}\specat{\tg} \Rightarrow \\
  & \qquad \qquad \qquad \qquad \qquad [\addr, \addr + \sz)\cap [\addr', \addr' + \sz') = \emptyset
\end{aligned}
\label{eq:wf-commit-load-hyp-1}
\end{equation}
By \Cref{lemma:sem-nextreg-invariant}, $\sem{e}_{\reg_{j0}} = \sem{e}_{\reg_{j1}}$.
Consider the following two cases
\begin{itemize}
  \item $j' = i$, i.e., the store where the load data was forwarded from is committed at this step:
  To prove $\wf{\mem_1, \reg_1, \buf_1, j}$, we aim to prove the following:
  \begin{equation}
  \begin{aligned}
    & \buf_1(j) = x\gets (v, t, c)\specat{(\stdCap{\addr}, \sz, j')} \\
    & (\stdCap{\addr}, \_, \tcap) = \sem{e}_{\reg_{j1}} \\
    & \stdCheckCap{\stdCap{\addr}, \sz, \readonly} \\
    & j' = \bot \lor j' < \min\dom{\buf_1} \\
    & (v, t, c) = \stdReadRes{\memData^1[\addr, \addr + \sz], t_c, \memTag^1[\addr/ 2], \sz} \\
    & \forall k < j,\; \buf_1(k) = \store{\_}{\stdCap{\addr'}}{\sz'}\specat{\tg} \Rightarrow \\
    & \qquad \qquad \qquad \qquad \qquad [\addr, \addr + \sz)\cap [\addr', \addr' + \sz') = \emptyset\\
    & \mem_1 = (\memData^1, \memTag^1)\\
    & \stdCapMeta{\addr} = (\_, \_, \_, t_c).
  \end{aligned}
  \label{eq:wf-commit-load-goal-1}
  \end{equation}
  By \ref{hw:commit-store}, as $\buf_1 = \buf_0 \backslash i$, then $j' = i < \min\dom{\buf_1}$
  Furthermore, for all $k \in \dom{\buf_1}$, $\buf_0(k) = \buf_1(k)$.
  By \Cref{lemma:sem-nextreg-invariant}, $\reg_{j0} \nextreg \reg_{j1}$ implies that $\sem{e}_{\reg_{j0}} = \sem{e}_{\reg_{j1}}$.
  Furthermore, by \ref{hw:commit-store}, we also have
  \begin{align*}
    \memData^1[\addr, \addr + \sz] & = v' & \memTag^1[\addr / 2] & = c'.
  \end{align*}
  By the above reasoning, \eqref{eq:wf-commit-load-hyp-1} implies \eqref{eq:wf-commit-load-goal-1}.
  Therefore, there should be $\wf{\mem_1, \reg_1, \buf_1, j}$.
  \item $j' \neq i$, i.e., the store where the load data was forwarded from is not committed at this step: In this case, since $i = \min\dom{\buf_0}$, there should be $i < j'$.
  Hence, for all $k$ such that $j' \leq k \leq j$, there should be $\buf_0(k) = \buf_1(k)$.
  By \Cref{lemma:sem-nextreg-invariant}, $\reg_{j0} \nextreg \reg_{j1}$ implies that $\sem{e}_{\reg_{j0}} = \sem{e}_{\reg_{j1}}$.
  Thus, the above statement still holds if we replace $\buf_0$ with $\buf_1$ and $\reg_{j0}$ with $\reg_{j1}$.
  Therefore, there should be $\wf{\mem_1, \reg_1, \buf_1, j}$.
\end{itemize}

\item Load data is loaded from memory or forwarded from a committed store: in this case, there should be
\begin{equation}
\begin{aligned}
  & \buf_0(j) = x\gets (v, t, c)\specat{(\stdCap{\addr}, \sz, j')} \\
  & (\stdCap{\addr}, \_, \tcap) = \sem{e}_{\reg_{j0}} \\
  & \stdCheckCap{\stdCap{\addr}, \sz, \readonly} \\
  & j' = \bot \lor j' < \min\dom{\buf_0} \\
  & (v, t, c) = \stdReadRes{\memData^0[\addr, \addr + \sz], t_c, \memTag^0[\addr/ 2], \sz} \\
  & \forall k < j,\; \buf_0(k) = \store{\_}{\stdCap{\addr'}}{\sz'}\specat{\tg} \Rightarrow \\
  & \qquad \qquad \qquad \qquad \qquad [\addr, \addr + \sz)\cap [\addr', \addr' + \sz') = \emptyset\\
  & \mem_0 = (\memData^0, \memTag^0)\\
  & \stdCapMeta{\addr} = (\_, \_, \_, t_c).
\end{aligned}
\label{eq:wf-commit-load-hyp-2}
\end{equation}
We consider the following cases:
\begin{itemize}
  \item The committed instruction is not a store instruction, i.e., $\buf_0(i) = x \gets v \specat{\tg}$. By \ref{hw:commit-assign}, $\mem_1 = \mem_0$.
  By \Cref{lemma:sem-nextreg-invariant}, $\sem{e}_{\reg_{j0}} = \sem{e}_{\reg_{j1}}$.
  Hence, \eqref{eq:wf-commit-load-hyp-2} still holds if replacing $\buf_0$ with $\buf_1$, $\mem_0$ with $\mem_1$, and $\reg_{j0}$ with $\reg_{j1}$.
  Therefore, there should be $\wf{\mem_1, \reg_1, \buf_1, j}$.
  \item The committed instruction is a store instruction, i.e., $\buf_0(i) = \store{\_}{\_}{\sz}\specat{\varepsilon}$.
  By \ref{hw:commit-assign} and \eqref{eq:wf-commit-load-hyp-2}, there should be
  \begin{align*}
    (v, t, c) & = \stdReadRes{\memData^0[\addr, \addr + \sz], t_c, \memTag^0[\addr/ 2], \sz} \\
    & = \stdReadRes{\memData^1[\addr, \addr + \sz], t_c, \memTag^1[\addr/ 2], \sz}.
  \end{align*}
  Specifically, \eqref{eq:wf-commit-load-hyp-2} implies that the committed store should not affect the load result since there is no aliasing between the committed store at $\buf_0(i)$ and the load at $\buf_0(j)$. Thus, the above statement can be deduced by discussing different cases to derive $\stdReadRes{}$.
  Therefore, there should be $\wf{\mem_1, \reg_1, \buf_1, j}$.
\end{itemize}
\end{itemize}

\pgheading{Subcase $\instr = \store{x}{e}{\sz}$}
$\wf{\mem_0, \reg_0, \buf_0, j}$ implies that
\begin{align*}
  & \buf_0(i) = \store{(v', t, c')}{\stdCap{\addr}}{\sz}\specat{\varepsilon} \\
  & (\stdCap{\addr}, \_, \tcap) = \sem{e}_{\reg_{j0}} \\
  & \stdCheckCap{\stdCap{\addr}, \sz, \readwrite} \\
  & (v, t, c) = \sem{x}_{\reg_{j0}} \\
  & (v', c') = \stdWriteRes{(v, t, c), t_c, \sz} \\
  & \stdCapMeta{\addr} = (\_, \_, \_, t_c).
\end{align*}
By \Cref{lemma:sem-nextreg-invariant}, $\reg_{j0} \nextreg \reg_{j1}$ implies that $\sem{e}_{\reg_{j0}} = \sem{e}_{\reg_{j1}}$ and $\sem{x}_{\reg_{j0}} = \sem{x}_{\reg_{j1}}$.
Thus, the above statement still holds if we replace $\buf_0$ with $\buf_1$ and $\reg_{j0}$ with $\reg_{j1}$.
Therefore, there should be $\wf{\mem_1, \reg_1, \buf_1, j}$.

\pgheading{Subcase $\instr = \jmp{e}$}
$\wf{\mem_0, \reg_0, \buf_0, j}$ implies that $\buf_0(j) = \pc \gets (\addr', \untainted, c)\specat{\tg}$, and $\tg=\varepsilon \Rightarrow \sem{e}_{\reg_{j0}} = (\addr', \_, c)$. Furthermore, there should also be $\addr' = (\_, \untainted, \_)$.
If $\tg=\varepsilon$, then $\sem{e}_{\reg_{j0}} \neq \bot$.
By \Cref{lemma:sem-nextreg-invariant}, $\reg_{j0} \nextreg \reg_{j1}$ implies that $\sem{e}_{\reg_{j0}} = \sem{e}_{\reg_{j1}}$.
Therefore, there should be $\wf{\mem_1, \reg_1, \buf_1, j}$.

\pgheading{Subcase $\instr = \beqz{x}{\addr''}$}
$\wf{\mem_0, \reg_0, \buf_0, j}$ implies that $\buf_0(j) = \pc \gets(\addr', t, c)\specat{\tg}$, and if $\tg=\varepsilon$, there should be
\begin{align*}
  & (v, \_, \_) = \sem{x}_{\reg_{j0}}
  & & z = (v = 0) \;?\; \addr'' : 1
  & & (\addr', t, c) = \sem{\pc + z}_{\reg_{j0}}.
\end{align*}
Suppose $\tg=\varepsilon$. Then $\sem{x}_{\reg_{j0}} \neq \bot$ and $\sem{\pc + z}_{\reg_{j0}} \neq \bot$.
By \Cref{lemma:sem-nextreg-invariant}, $\reg_{j0} \nextreg \reg_{j1}$ implies that
\begin{align*}
  \sem{x}_{\reg_{j0}} & = \sem{x}_{\reg_{j1}} & \sem{\pc + z}_{\reg_{j0}} & = \sem{\pc + z}_{\reg_{j1}}.
\end{align*}
Therefore, there should be $\wf{\mem_1, \reg_1, \buf_1, j}$.

\end{proof}

\begin{theorem}[Functional Correctness]
Let $S_0 = \qty{\mem, \reg}$ be the initial ISA state and $C_0 =\qty{\mem, \reg, \bufinit, \muinit}$ be the initial HW configuration.
Assume $S_0$ is a well-formed ISA state, and $C_0$ is a well-formed HW configuration.
Let
\begin{align*}
  \traceIsa(S_0) & = S_0 \step{o_1} S_1 \step{o_2} S_2 \dots \\
  \traceUarch(C_0) & = C_0 \sstep{}{} C_1 \sstep{}{} C_2 \dots
\end{align*}
be the ISA and HW execution traces.
The HW semantics is functionally correct, i.e.,
for all $n\in \mathbb{N}$, $\mem_n$, and $\reg_n$,
\begin{equation*}
  C_n = \cf{\mem_n, \reg_n, \_, \_} \Rightarrow
  S_{f(\qty{C_0, C_1, \dots, C_{n-1}})} = \cf{\mem_n, \reg_n},
\end{equation*}
where
\begin{multline*}
  f(\qty{C_0, \dots, C_{n-1}}) \\
  = \abs{\qty{\cf{\_, \_, \buf, \mu}\in \qty{C_0, \dots, C_{n-1}}: \stdNextMu{\stdUpdateMu{\mu, \lowproj{\buf}}} = \dcommit}}
\end{multline*}
counts the number of previous configurations in set $\qty{C_0, \dots, C_{n-1}}$ that perform a commit operation.
\label{lemma:functional-correctness}
\end{theorem}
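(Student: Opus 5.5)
We prove the statement by induction on $n$, strengthening the claim so that it carries the invariants we need: for every $n$, (i) $C_n$ is a well-formed HW configuration in the sense of \Cref{def:well-formed-buf}; (ii) $\cf{\mem_n, \reg_n} = S_{f(\qty{C_0,\dots,C_{n-1}})}$; and (iii) that ISA state is well-formed in the sense of \Cref{def:isa-wf}. The base case $n=0$ is immediate: $f(\emptyset)=0$, $C_0$ carries the architectural state of $S_0$, and both well-formedness facts are hypotheses. In the inductive step we assume (i)--(iii) for $C_n$. Then (i) for $C_{n+1}$ follows from \Cref{lemma:hw-wf-invariant}, which needs (i) and (iii) at step $n$, and (iii) for the new ISA state follows from \Cref{lemma:isa-wf}, applied to whatever ISA step we exhibit for (ii).

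For (ii) we case on the directive $d=\stdNextMu{\stdUpdateMu{\mu_n,\lowproj{\buf_n}}}$ chosen by Rule~\ref{hw:step}. If $d=\dfetch$ or $d=\dexecute i$, every fetch and execute rule leaves $\mem$ and $\reg$ unchanged. The counter $f$ is also unchanged, so (ii) carries over directly. If $d=\dcommit$, then $f$ increases by one, and we must show $S_{k}\step{o}S_{k+1}$ with $S_{k+1}=\cf{\mem_{n+1},\reg_{n+1}}$, where $k=f(\qty{C_0,\dots,C_{n-1}})$. Let $i=\min\dom{\buf_n}$. Then $\pref i{\buf_n}$ is empty, so $\apl(\pref i{\buf_n},\reg_n)=\reg_n$, and $\aplsan$ coincides with $\reg_n$ on that prefix. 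By well-formedness of entry $i$, there is $\instr=\stdReadMemInst{\mem_n,\sem{\pc}_{\reg_n}}$, and the shape of $\buf_n(i)$ is tied to $\instr$ and evaluated under $\reg_n$ itself. We then case on $\instr$.

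\textbf{Assignment.} Entry $i$ is $x\gets\sem{e}_{\reg_n}\specat{\varepsilon}$, so Rule~\ref{hw:commit-assign} produces exactly the update of Rule~\ref{isa:isa-assign}.

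\textbf{Load.} Entry $i$ has a tag $(\stdCap\addr,\sz,j)$. Since $i$ is minimal, the forwarding alternative of \Cref{def:well-formed-buf} cannot apply, because it would require $j<i$ with $j\in\dom{\buf_n}$. Hence the memory alternative holds, giving $(v,t,c)=\stdReadRes{\memData[\addr,\addr+\sz),t_c,\memTag[\addr/2],\sz}$ together with $\stdCheckCap{\cdot}$. Rule~\ref{memory:read-mem} then yields Rule~\ref{isa:isa-load} with the same result.

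\textbf{Store.} Entry $i$ is $\store{(v',t,c')}{\stdCap\addr}{\sz}\specat{\varepsilon}$ with $(v',c')=\stdWriteRes{\sem{x}_{\reg_n},t_c,\sz}$ and a successful write-permission check. Rule~\ref{hw:commit-store} then performs exactly the write of Rule~\ref{memory:write-mem}, so Rule~\ref{isa:isa-store} applies.

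\textbf{Jump and branch.} Rule~\ref{hw:commit-assign} requires $\tg=\varepsilon$, and well-formedness then gives that the committed $\pc$ value equals $\sem{e}_{\reg_n}$ retagged $\untainted$, or $\sem{\pc\stdCapOff z}_{\reg_n}$ in the branch case. These match Rules~\ref{isa:isa-jmp} and \ref{isa:isa-beqz}. A small bookkeeping point arises here: Rule~\ref{hw:commit-assign} also writes $\sem{\stdNextPc\pc}$ before writing $x=\pc$. Because the later update of $\pc$ overrides the earlier one, the resulting register file is the ISA one.

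The main obstacle is the commit case, where we must make sure that the value recorded in the ROB at execute time is still correct at commit time. Memory and earlier entries may have changed in between: earlier stores committed, or a forwarding store retired. This is precisely what the well-formedness invariant tracks, and \Cref{lemma:hw-wf-invariant} has already done that work; in particular it re-derives the memory alternative once a forwarding store commits. Our plan is therefore to lean entirely on that lemma, and to verify carefully that, at $\pref i{\buf_n}=\emptyset$, each clause of \Cref{def:well-formed-buf} specializes verbatim to the premises of the matching ISA rule. This includes checking that determinism of $\stdReadMemInst{}$ makes the $\instr$ we case on the same instruction the ISA step fetches.
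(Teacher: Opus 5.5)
Your proposal is correct and follows essentially the same route as the paper: induction on $n$, where fetch and execute steps leave both $(\mem,\reg)$ and the commit count unchanged, and the commit case is discharged by specializing \Cref{def:well-formed-buf} at $i=\min\dom{\buf_n}$ to the matching ISA rule (using the memory alternative for loads and the later $\pc$ update overriding the earlier one for jumps and branches). The only difference is cosmetic: you carry HW and ISA well-formedness explicitly in the induction hypothesis, whereas the paper assumes the statement for all $i\le n$ and obtains well-formedness by repeatedly applying \Cref{lemma:isa-wf} and \Cref{lemma:hw-wf-invariant}.
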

\begin{proof}
We prove by induction on $n$.

For the base case when $n = 0$, the statement automatically holds since
\begin{align*}
  & f(\qty{C_0,\dots, C_{n-1}}) = f(\emptyset) = 0\\
  & C_0 = \qty{\mem, \reg, \bufinit, \muinit} \quad S_0 = \qty{\mem, \reg}.
\end{align*}

Suppose the statement holds for all $i \leq n$, i.e.,
\begin{equation*}
  \begin{aligned}
    & C_i = \cf{\mem_i, \reg_i, \buf_i, \mu_i} \\
    & S_{f(\qty{C_0, \dots, C_{i-1}})} = \cf{\mem_i, \reg_i}.
  \end{aligned}
\end{equation*}
Specifically, we denote $k = f(\qty{C_0, \dots, C_{n-1}})$ for convenience.

We aim to prove that the statement still holds for $n+1$, i.e.,
\begin{equation*}
  \begin{aligned}
    & C_{n+1} = \cf{\mem_{n+1}, \reg_{n+1}, \buf_{n+1}, \mu_{n+1}} \\
    & S_{k'} = \cf{\mem_{n+1}, \reg_{n+1}} \quad
    k' = f(\qty{C_0, \dots, C_{n}}).
  \end{aligned}
  \label{eq:func-corr-goal}
\end{equation*}

By applying \Cref{lemma:isa-wf} repeatedly, we can prove that $S_i$ is well-formed (i.e., $\wf{S_i}$) for $0\leq i \leq k$.
Then, by applying \Cref{lemma:hw-wf-invariant} repeatedly, we can know that $C_i$, $i \leq n$ is also well-formed.

By \ref{hw:step}, there should be
\begin{mathpar}
\inferrule{
  \mu' = \stdUpdateMu{\mu_n, \lowproj{\buf_n}} \\
  d = \stdNextMu{\mu'} \\
  \cf{\mem_n, \reg_n, \buf_n, \mu'} \sstep{d}{} \cf{\mem_{n+1}, \reg_{n+1}, \buf_{n+1}, \mu_{n+1}}
}{
  \cf{\mem_n, \reg_n, \buf_n, \mu_n} \sstep{}{} \cf{\mem_{n+1}, \reg_{n+1}, \buf_{n+1}, \mu_{n+1}}
}
\end{mathpar}
Consider the following cases for $d$.

\pgheading{Case $d = \dfetch$ or $d = \dexecute{i}$}
In this case,
\begin{equation*}
  k' = f(\qty{C_0, \dots, C_{n}}) = f(\qty{C_0, \dots, C_{n-1}}) = k,
\end{equation*}
so $S_{k'} = S_{k} = \cf{\mem_n, \reg_n}$.
Furthermore, by rules for fetch and execute steps, $\mem_{n+1}=\mem_{n}$ and $\reg_{n+1} = \reg_{n}$.
Thus, \eqref{eq:func-corr-goal} holds.

\pgheading{Case $d = \dexecute{i}$}
In this case,
\begin{equation*}
  k' = f(\qty{C_0, \dots, C_{n}}) = f(\qty{C_0, \dots, C_{n-1}}) = k.
\end{equation*}

\pgheading{Case $d = \dcommit$}
In this case,
\begin{equation*}
  k' = f(\qty{C_0, \dots, C_{n}}) = f(\qty{C_0, \dots, C_{n-1}}) = k.
\end{equation*}
So we need to prove that there exists $S_{k+1}$ such that
\begin{itemize}
  \item $S_k \step{o_{k+1}} S_{k+1}$ and
  \item $S_{k+1} = \cf{\reg_{n+1}, \mem_{n+1}}$.
\end{itemize}

By rules for commit steps, $\dom{\buf_n} \neq \emptyset$.
Let $i = \min\dom{\buf_n}$. Then, there should be $\reg_n = \apl(\pref i {\buf_n}, \reg_n)$.
Since $C_n$ is well-formed, by \Cref{def:well-formed-buf}, there must exist $\instr \in \Instrs$ such that
\begin{equation*}
  \instr = \stdReadMemInst{\mem_n, \sem{\pc}_{\reg_n}}.
\end{equation*}
Furthermore, by rules for commit steps, the following statements hold:
\begin{itemize}
  \item For $\instr \not \in \qty{\beqz{x}{\addr''}, \jmp{e}}$, they must be executed, so $\buf_n(i) = \instr'\specat{\tg}$ and $\instr' \neq \instr$, i.e., the first case in \Cref{def:well-formed-buf} does not hold when the HW configuration performs a commit step.
  \item For $\instr\in \qty{\beqz{x}{\addr''}, \jmp{e}}$, there must be $\buf_n(i) = \pc \gets (\addr', t, c)\specat{\varepsilon}$, i.e., the branch is resolved.
\end{itemize}
Thus, we only need to consider the following cases:

\pgheading{Subcase $\instr = x \gets e$}
Since $C_n$ is well-formed, by \Cref{def:well-formed-buf}, there must be $x\neq \pc$ and $\buf_n(i) = x \gets \sem{e}_{\reg_n}\specat{\varepsilon}$.
Then, by \ref{hw:commit-assign}, we have
\begin{align*}
  & \mem_{n+1} = \mem_n \\
  & \reg_{n+1} = \reg_n[\pc \mapsto \sem{\stdNextPc{\pc}}_{\reg_n}][x \mapsto \sem{e}_{\reg_n}].
\end{align*}
Let
\begin{equation*}
  S_{k+1} = \cf{\mem_n, \reg_n[\pc \mapsto \sem{\stdNextPc{\pc}}_{\reg_n}][x \mapsto \sem{e}_{\reg_n}]}.
\end{equation*}
By \ref{isa:isa-assign}, there should be $S_k \step{()} S_{k+1}$.
Since $x\neq \pc$, there should also be $S_{k+1} = \cf{\mem_{n+1}, \reg_{n+1}}$.

\pgheading{Subcase $\instr = \load{x}{e}{\sz}$}
Since $C_n$ is well-formed, by \Cref{def:well-formed-buf}, there must be
\begin{align*}
  & \buf_n(i) = x \gets (v, t, c) \specat{(\stdCap{\addr}, \sz, j)}\\
  & (\stdCap{\addr}, \_, \tcap) = \sem{e}_{\reg_n} \\
  & \stdCheckCap{\stdCap{\addr}, \sz, \readonly}.
\end{align*}
\Cref{def:well-formed-buf} also implies that $j < i$.
Recall that by \ref{hw:commit-assign}, $i = \min\dom{\buf_n}$, so there should be
\begin{align*}
  & (v, t, c) = \stdReadRes{\memData[\addr, \addr + \sz], t_c, \memTag[\addr/ 2], \sz} \\
  & \mem = (\memData, \memTag)\\
  & \stdCapMeta{\addr} = (\_, \_, \_, t_c).
\end{align*}
By \ref{hw:commit-assign}, there should be
\begin{align*}
  & \mem_{n+1} = \mem_n \\
  & \reg_{n+1} = \reg_n[\pc \mapsto \sem{\stdNextPc{\pc}}_{\reg_n}][x \mapsto (v, t, c)].
\end{align*}
By \ref{memory:read-mem}, the above statements also imply that $(v, t, c) = \stdReadMem{\mem, (\stdCap{\addr}, \_, \tcap),\sz}$.
Let
\begin{align*}
  & S_{k + 1} = \cf{\mem_n, \reg_n[\pc \mapsto \sem{\stdNextPc{\pc}}_{\reg_n}][x \mapsto (v, t, c)]}.
\end{align*}
Since $x\neq \pc$, there should be $S_k = \cf{\mem_{n+1}, \reg_{n+1}}$.
Furthermore, by \ref{isa:isa-load}, $S_k \step{o_{k+1}} S_{k+1}$.

\pgheading{Subcase $\instr = \store{x}{e}{\sz}$}
Since $C_n$ is well-formed, by \Cref{def:well-formed-buf}, there must be
\begin{align*}
  & \buf_n(i) = \store{(v', t, c')}{\stdCap{\addr}}{\sz}\specat{\varepsilon} \\
  & (\stdCap{\addr}, \_, \tcap) = \sem{e}_{\reg_n} \\
  & \stdCheckCap{\stdCap{\addr}, \sz, \readwrite} \\
  & (v, t, c) = \sem{x}_{\reg_n} \\
  & (v', c') = \stdWriteRes{(v, t, c), t_c, \sz} \\
  & \stdCapMeta{\addr} = (\_, \_, \_, t_c).
\end{align*}
Denote $\mem_n = (\memData, \memTag)$.
By \ref{hw:commit-store}, there should be
\begin{align*}
  & \mem_{n+1} = (\memData[[\addr, \addr + \sz) \mapsto v'], \memTag[[\addr/\szCap, \lceil(\addr + \sz)/\szCap\rceil ) \mapsto c']) \\
  & \reg_{n+1} = \reg_n[\pc \mapsto \sem{\stdNextPc{\pc}}_{\reg_n}]
\end{align*}
Let
\begin{align*}
  & S_{k+1} = (\stdWriteMem{\mem_n, \sem{e}_{\reg_n}, \sz, \sem{x}_{\reg_n}}, \reg_n[\pc \mapsto \sem{\stdNextPc{\pc}}_{\reg_n}])
\end{align*}
By \ref{memory:write-mem}, there should be
$S_{k+1} = \cf{\mem_{n+1}, \reg_{n+1}}$, and by \ref{isa:isa-store}, there should be $S_k \step{o_{k+1}} S_{k+1}$.

\pgheading{Subcase $\instr = \jmp{e}$}
According to our previous discussion, there should be $\buf_n(i) = \pc \gets (\addr', t, c) \specat{\varepsilon}$.
Since $C_n$ is well-formed, by \Cref{def:well-formed-buf}, there must be $t = \untainted$ and $\sem{e}_{\reg_n} = (\addr', \_, c)$.

By \ref{hw:commit-assign}, there should be
\begin{align*}
  & \mem_{n+1} = \mem_n \\
  & \reg_{n+1} = \reg_n[\pc \mapsto \sem{\stdNextPc{\pc}}_{\reg_n}][\pc \mapsto (\addr', \untainted, c)] = \reg_n[\pc \mapsto (\addr', \untainted, c)].
\end{align*}
Note that the last step holds because the second update on $\pc$ overwrites the first one.
Let
\begin{equation*}
  S_{k + 1} = \cf{\mem_n, \reg_n[\pc \mapsto (\addr', \untainted, c)]}.
\end{equation*}
Note that $S_{k+1} = \cf{\mem_{n+1}, \reg_{n+1}}$.
By \ref{isa:isa-jmp}, there should be $S_k \step{o_{k+1}} S_{k+1}$.

\pgheading{Subcase $\instr = \beqz{x}{\addr'}$}
According to our previous discussion, there should be $\buf_n(i) = \pc \gets (\addr', t, c) \specat{\varepsilon}$.
Since $C_n$ is well-formed, there must be
\begin{align*}
  & (v, \_, \_) = \sem{x}_{\reg_n}
  & & z = (v = 0) \;?\; \addr'' : 1
  & & (\addr', t, c) = \sem{\pc + z}_{\reg_n}.
\end{align*}

By \ref{hw:commit-assign}, there should be
\begin{align*}
  & \mem_{n+1} = \mem_n \\
  & \reg_{n+1} = \reg_n[\pc \mapsto \sem{\stdNextPc{\pc}}_{\reg_n}][\pc \mapsto (\addr', t, c)] = \reg_n[\pc \mapsto (\addr', t, c)].
\end{align*}
Note that the last step holds because the second update on $\pc$ overwrites the first one.
Let
\begin{equation*}
  S_{k + 1} = \cf{\mem_n, \reg_n[\pc \mapsto (\addr', t, c)]}.
\end{equation*}
Note that $S_{k+1} = \cf{\mem_{n+1}, \reg_{n+1}}$.
By \ref{isa:isa-beqz}, there should be $S_k \step{o_{k+1}} S_{k+1}$.
\end{proof}

\subsection{Relative Speculative Constant Time}
\label{sec:sec-proof}

\begin{definition}[Untainted Projection of Reorder Buffer]
  Operands of in-flight instructions in reorder buffer $\buf$ have three formats, and their untainted projection is defined as
\begin{equation*}
\begin{array}{rcl}
  \lowproj{e} & \defsym & e \\
  \lowproj{(v, t, c)} & \defsym &
  \begin{cases}
    (v, \untainted, c) & t  = \untainted \\
    \bot & t =  \tainted \\
  \end{cases} \\
  \lowproj{\bot} & \defsym & \bot \\
\end{array}
\end{equation*}
Untainted projection of instructions in $\buf$ is defined as
\begin{equation*}
\begin{array}{rcl}
  \lowproj{(\pc \gets E)} & \defsym & \pc \gets E\\
  \lowproj{(x \gets E)} & \defsym & x \gets \lowproj{E} \quad x \neq \pc \\
  \lowproj{(\load{X}{E}{\sz})} & \defsym & \load{X}{E}{\sz}\\
  \lowproj{(\store{X}{E}{\sz})} & \defsym & \store{\lowproj{X}}{E}{\sz}.
\end{array}
\end{equation*}
Untainted projection of $\buf$ satisfies the following properties:
\begin{itemize}
  \item $\dom{\lowproj{\buf}} = \dom{\buf}$
  \item $\forall i \in \dom{\buf},\;
  \buf(i) = \instr\specat{\tg} \Rightarrow \lowproj{\buf}(i) = \lowproj{\instr}\specat{\tg}$.
\end{itemize}
\label{def:low-proj-buf}
\end{definition}

\begin{definition}[Public Equivalence for HW Configurations]
HW configurations $C=(\mem, \reg, \buf, \mu)$, $C'=(\mem', \reg', \buf', \mu')$ are \emph{publicly equivalent}, i.e., $C\pubeq C'$ if
\begin{itemize}
  \item $\cf{\mem, \reg} \pubeq \cf{\mem', \reg'}$
  \item $\lowproj{\buf} = \lowproj{\buf'}$
  \item $\mu = \mu'$
\end{itemize}
\label{def:pub-eq}
\end{definition}

\begin{lemma}[$\apl$ Capability Monotonicity]
Let $C_0$ be the initial HW configuration and $C_0 \sstep{}{} C_1 \sstep{}{} \dots \sstep{}{} C_n$ be the execution trace.
Denote $C_n = \cf{\mem, \reg, \buf, \mu}$.
For all $i$, denote $\reg_i = \apl(\pref i \buf, \reg)$.
Then,
\begin{align*}
  \forall x,\; \reg_i(x) = ((\addr, (\perm, \rbase, \rend, \taint)), \_, \tcap) \Rightarrow &
  (\addr, (\perm, \rbase, \rend, \taint)) \in \rc{\reg, \mem} \land \\
  & [\rbase, \rend) \subseteq \taintAddr{\reg, \mem, \taint} \land \\
  & [\rbase, \rend) \subseteq \permAddr{\reg, \mem, \perm}.
\end{align*}
\label{lemma:reg-cap-mono}
\end{lemma}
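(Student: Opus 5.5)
The plan is to prove the first conjunct, membership in $\rc{\reg, \mem}$, and derive the other two from it. By \Cref{def:reach-mem}, any $(\addr, (\perm, \rbase, \rend, \taint)) \in \rc{\reg, \mem}$ contributes $[\rbase, \rend)$ to both $\taintAddr{\reg, \mem, \taint}$ and $\permAddr{\reg, \mem, \perm}$, so those inclusions are immediate.

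For membership I would prove a strengthened invariant $\mathcal I(C)$ by induction on $n$. The invariant says that every tagged capability stored in an evaluated ROB entry of $C=\cf{\mem,\reg,\buf,\mu}$ belongs to $\rc{\reg,\mem}$. This covers:
\begin{itemize}
\item values $(v,t,\tcap)$ in entries $x \gets (v,t,\tcap)\specat{\tg}$, including $\pc$ updates;
\item store values $\store{(v',t,\tcap)}{\_}{\szCap}$, after unpacking the taint;
\item the capabilities $\stdCap{\addr}$ recorded in store entries.
\end{itemize}
Given $\mathcal I(C)$, the lemma follows by an inner induction on $i$ using the definition of $\apl$. Each register of $\reg_i$ is one of three things:
\begin{itemize}
\item $\reg(x)$, which lies in $\rc{\reg,\mem}$ by Rule~\ref{rc:base};
\item $\bot$, for which the claim is vacuous;
\item a value taken from some evaluated entry, covered by $\mathcal I$;
\item an updated $\pc$ of the form $\sem{\stdNextPc\pc}$, which by Axiom~\eqref{ax:capop2} has metadata below that of the previous $\pc$ and is closed under Rule~\ref{rc:shrink}.
\end{itemize}
Note that \Cref{def:cap-mono} ignores the address, so address changes are harmless throughout.

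Preservation of $\mathcal I$ goes rule by rule, and every rule except commit-store leaves $\mem$ unchanged.
\begin{itemize}
\item \emph{Fetch.} The new $\pc$ entry keeps the metadata of $\sem{\pc}_{\aplsan(\buf,\reg)}$ and only replaces the address by the prediction. By the inner induction together with \Cref{lemma:sem-apl-aplsan-eq}, that metadata is reachable, so Rule~\ref{rc:shrink} applies.
\item \emph{Execute-assign, execute-store, jump/branch resolution.} Apply \Cref{lemma:exprmono} to the relevant register file $\apl(\pref i\buf,\reg)$. This yields a register holding a larger capability, which is reachable by the inner induction, and Rule~\ref{rc:shrink} closes the case.
\item \emph{Execute-load-mem.} The load capability is reachable and in bounds. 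A capability result comes from Rule~\ref{helpersys:read-cap}, so it lies in $\rc{\reg,\mem}$ by Rule~\ref{rc:ind}; the premises on alignment, tag and $\neq\stdMagic$ are exactly those of Rule~\ref{memory:read-mem}.
\item \emph{Execute-load-fwd.} The forwarded value comes from an evaluated store entry, so $\mathcal I$ covers it.
\item \emph{Squashing.} This only removes entries, so it preserves $\mathcal I$.
\end{itemize}

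The main obstacle is the two commit rules, because they change $\rc{\reg,\mem}$ itself and it may shrink.
\begin{itemize}
\item \emph{Commit-assign.} Committing $x\gets v$ removes the old value of $x$ as a root. The argument I would make is that every younger entry was evaluated against $\apl(\pref j\buf,\reg)$. In that register file $x$ already holds $v$ or a younger value, so no surviving entry can depend on the old value of $x$. To make this rigorous I would refine $\mathcal I$ to say that the capabilities of entry $j$ are reachable from the register file $\apl(\pref j\buf,\reg)$, with root set $\reg$ plus entries older than $j$, together with $\mem$. Commit moves the oldest entry into $\reg$, so this refined form is stable.
\item \emph{Commit-store.} I would use \Cref{lemma:storemono1} and \Cref{lemma:storemono2} in the reverse direction: the new memory cell holds either a non-capability or the committed store's capability, which was in the root set. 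For capabilities loaded from memory by younger loads, the well-formedness invariant (\Cref{def:well-formed-buf}, maintained by \Cref{lemma:hw-wf-invariant}) guarantees two things: such a load does not alias the committed store, or it was forwarded from it. In the non-aliasing case the loaded cell is unchanged, so Rule~\ref{rc:ind} still applies.
\end{itemize}
The ISA well-formedness needed for these lemmas is supplied along the trace by \Cref{lemma:functional-correctness} combined with \Cref{lemma:isa-wf}.
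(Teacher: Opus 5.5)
Your overall route matches the paper's, which is just an induction on $n$ using \Cref{lemma:apl-invariant}. That lemma says defined entries of $\apl(\pref i\buf,\reg)$ persist, which is what your observation that younger entries were evaluated against $\apl(\pref j\buf,\reg)$ relies on. Your fetch, execute and squash cases are fine. The gap is in the commit cases. You correctly single these out, and the paper's one-line proof does not spell them out either, but your repair does not work as stated.

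First, $\mathcal I$ is a pure membership statement, so it cannot be pushed through a commit: the derivation witnessing $c\in\rc{\reg,\mem}$ may pass through the overwritten register or memory cell. \Cref{lemma:storemono1} and \Cref{lemma:storemono2} only bound $\rc{\reg,\mem'}$ from above, so they cannot be used ``in reverse''. Well-formedness protects the cell read by a younger load, but says nothing about how that load's capability was shown reachable.

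Second, read literally as membership in $\rc{\apl(\pref j\buf,\reg),\mem}$, your refined invariant is false at reachable states because of store-to-load forwarding. Take $\reg(y)=(c_y,\untainted,\tcap)$ with nothing else reaching $c_y$, and $\reg(s)=c_s$, a read-write capability whose bounds contain only data cells. Fetch and execute, in order, $\store{y}{s}{2}$, then $y\gets 0$, then $\load{z}{s}{2}$. The load is served by \ref{hw:execute-load-fwd} and receives $c_y$. Then $c_y\in\rc{\reg,\mem}$ via $y$, as the lemma predicts. But $c_y\notin\rc{\apl(\pref j\buf,\reg),\mem}$ at the load's index $j$, since $y\mapsto 0$ there and $\mem$ does not yet contain the store. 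So the forwarding case, which you discharged with the old $\mathcal I$, fails for the refined one. If you instead read the roots as the union of $\reg$ with the older entries, commit-assign deletes the overwritten $\reg(x)$, and stability again needs exactly the missing argument.

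What works is to index memory as well. Let $R_j=\apl(\pref j\buf,\reg)$, and let $\mem_j$ be $\mem$ with the executed stores at indices below $j$ applied in order. The invariant is that every capability in entry $j$ falls into one of two cases:
\begin{itemize}
\item it is $\sqsubseteq$ some capability of $R_j$; this covers assignments, store operands and $\pc$ updates, via \Cref{lemma:exprmono};
\item it is what \ref{helpersys:read-cap} yields at an aligned cell of $\mem_j$ lying inside the bounds of a capability of $R_j$; this covers loads from memory and forwarded loads, via \Cref{def:well-formed-buf}.
\end{itemize}
These one-step witnesses are stable. When an older store executes, \ref{hw:execute-store-hazard} squashes every younger load whose value it would change. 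Commits leave $(R_j,\mem_j)$ unchanged for all surviving entries. Then \Cref{lemma:rcupd}, \Cref{lemma:storemono1}, \Cref{lemma:storemono2} and transitivity of reachability give $\rc{R_{j+1},\mem_{j+1}}\subseteq\rc{R_j,\mem_j}$. At $j=\min\dom{\buf}$ the right-hand side is $\rc{\reg,\mem}$, which yields the lemma.
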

\begin{proof}
  The proof goes by induction on $n$ and applying \Cref{lemma:apl-invariant}.
\end{proof}

\begin{lemma}[Expr Capability Monotonicity]
Let $C_0$ be the initial HW configuration and $C_0 \sstep{}{} C_1 \sstep{}{} \dots \sstep{}{} C_n$ be the execution trace.
Denote $C_n = \cf{\mem, \reg, \buf, \mu}$.
For all $i$, denote $\reg_i = \apl(\pref i \buf, \reg)$.
Then,
\begin{align*}
  \forall e,\; \sem{e}_{\reg_i} = ((\addr, (\perm, \rbase, \rend, \taint)), \_, \tcap) \Rightarrow &
  (\addr, (\perm, \rbase, \rend, \taint)) \in \rc{\reg, \mem} \land \\
  & [\rbase, \rend) \subseteq \taintAddr{\reg, \mem, \taint} \land \\
  & [\rbase, \rend) \subseteq \permAddr{\reg, \mem, \perm}.
\end{align*}
\label{lemma:exp-cap-mono}
\end{lemma}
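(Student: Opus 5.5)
The plan is to reduce the statement to \Cref{lemma:reg-cap-mono} by mirroring the proof of \Cref{lemma:exprmono}, with the partial register file $\reg_i = \apl(\pref i \buf, \reg)$ in place of an architectural one. First I will prove a partial-register-file version of \Cref{lemma:exprmono}. Its statement: if $\sem{e}_{\reg_i} = (\stdCap{\addr}, \_, \tcap)$, then some register $\vx$ satisfies $\reg_i(\vx) = (\stdCap{\addr'}, \_, \tcap)$ with $\stdCapMeta{\addr} \sqsubseteq \stdCapMeta{\addr'}$.

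The proof is by structural induction on $e$.
\begin{itemize}
\item Values are tagged $\tval$, so that case is vacuous.
\item For a register, take $\vx$ itself.
\item For an arithmetic operator, the output is tagged $\tval$, so the case is vacuous.
\item For a capability operator $\op(e_0, e_1)$, eagerness with respect to $\bot$ ensures both arguments evaluate to non-$\bot$ values. Axiom~\eqref{ax:capop1} forces exactly the first argument to be a capability, so the induction hypothesis applies to $e_0$. Axioms~\eqref{ax:capop2} and~\eqref{ax:capop3} then give that permissions and bounds only shrink and the inner taint is preserved, i.e.\ $\sqsubseteq$ in the sense of \Cref{def:cap-mono}. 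Transitivity of $\sqsubseteq$ closes the case.
\end{itemize}
The presence of $\bot$ in partial register files causes no trouble, because the hypothesis already says the result is a capability.

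Next I invoke \Cref{lemma:reg-cap-mono} on $\reg_i(\vx) = ((\addr', (\perm', \rbase', \rend', \taint')), \_, \tcap)$. This yields $\stdCap{\addr'} \in \rc{\reg, \mem}$, $[\rbase', \rend') \subseteq \taintAddr{\reg, \mem, \taint'}$ and $[\rbase', \rend') \subseteq \permAddr{\reg, \mem, \perm'}$. Since $\stdCap{\addr} \sqsubseteq \stdCap{\addr'}$, Rule~\ref{rc:shrink} gives $(\addr, (\perm, \rbase, \rend, \taint)) \in \rc{\reg, \mem}$. The remaining two conjuncts then follow directly from \Cref{def:reach-mem}. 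That definition takes the union of $[b,e)$ over reachable capabilities with the given taint (resp.\ permission), and the capability itself is now reachable with taint $\taint$ and permission $\perm$. This avoids any need to compare $\perm$ with $\perm'$ through the permission order.

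The only step requiring care is the expression-level monotonicity over partial register files. In particular, I must check that the case split in \Cref{lemma:exprmono} still holds when some registers are $\bot$. Eagerness of interpretations with respect to $\bot$ settles this. The rest is bookkeeping, since the hard work about buffer application (how $\apl$ interacts with executed loads returning reachable capabilities) lives in \Cref{lemma:reg-cap-mono}, which I assume.
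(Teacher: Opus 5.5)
Your proof is correct and takes the paper's route: the paper's one-line proof cites exactly \Cref{lemma:reg-cap-mono} and \Cref{lemma:exprmono}, and your closing steps via Rule~\ref{rc:shrink} and \Cref{def:reach-mem} simply spell out how they combine. Re-proving \Cref{lemma:exprmono} for partial register files is unnecessary, since it is already stated for arbitrary register files; still, your point that placing the capability itself in $\rc{\reg,\mem}$ avoids any comparison of $\perm$ with $\perm'$ is a useful detail the paper leaves implicit.
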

\begin{proof}
  The proof goes by applying \Cref{lemma:reg-cap-mono} and \Cref{lemma:exprmono}.
\end{proof}

\begin{lemma}[$\apl$ Public Equivalence]
Let $C = \cf{\mem, \reg, \buf, \mu}$ and $C'=\cf{\mem', \reg', \buf', \mu'}$ be two HW configurations such that $C \pubeq C'$.
For all $i$, denote $\reg_i = \apl(\pref i \buf, \reg)$, $\reg_i' = \apl(\pref i {\buf'}, \reg')$.
Then,
\begin{equation*}
  \forall x,\; \exists v\neq \bot,\; \reg_i(x) = v \Rightarrow \exists v'\neq\bot,\; \reg_i'(x) = v' \land \lowproj{v} = \lowproj{v'}.
\end{equation*}
\label{lemma:reg-pub-eq}
\end{lemma}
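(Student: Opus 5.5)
We prove the statement by induction on $i$, following the recursive definition of $\apl$ and unfolding one buffer entry at a time. The inductive claim is the statement itself for $\reg_i$ and $\reg_i'$: every register defined in $\reg_i$ is also defined in $\reg_i'$, and the two values agree after untainted projection.

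\emph{Base case.} When $\pref i \buf$ is empty, we have $\reg_i = \reg$ and $\reg_i' = \reg'$. Because $\lowproj{\buf} = \lowproj{\buf'}$, the domains coincide, $\dom{\buf} = \dom{\buf'}$, and hence so do the domains of the prefixes. The claim then follows from the register part of $\cf{\mem, \reg}\pubeq\cf{\mem', \reg'}$, namely $\lowproj{\reg(x)}=\lowproj{\reg'(x)}$ for every $x$. A subtlety arises when $\reg(x)\neq\bot$ is tainted. In that case we still need $\reg'(x)\neq\bot$, which requires the architectural register files to be total. For HW configurations reachable from an initial state this holds, since commits only write values. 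I would make this totality explicit, either as an assumption or via a small invariant.

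\emph{Inductive step.} Let $j=\max\dom{\pref{i+1}\buf}$ be the next entry in the domain. Because $\lowproj{\buf}(j)=\lowproj{\buf'}(j)$, the two entries $\buf(j)$ and $\buf'(j)$ are the same instruction kind with the same speculation tag and the same unevaluated expressions, and they differ only in tainted evaluated values. We then case on the entry kind, as in the definition of $\aplinst$:
\begin{itemize}
\item For \emph{unevaluated assignments and loads}, both sides map $x\mapsto\bot$, so there is nothing to show for $x$.
\item For an \emph{evaluated assignment} $x\gets w$ with $x\neq\pc$, the projection gives $\lowproj{w}=\lowproj{w'}$. Both values are non-$\bot$, so the claim holds for $x$.
\item For \emph{$\pc$ assignments}, the projection leaves the value untouched, so the written values are identical.
\item For \emph{stores}, only $\pc$ changes.
\end{itemize}
In every case the $\pc$ update $\sem{\stdNextPc{\pc}}$ must be handled as well. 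If $\reg_i(\pc)$ is untainted, then by the hypothesis it equals $\reg_i'(\pc)$, so the successors coincide. If it is tainted, both successors are non-$\bot$ and tainted, because the operator $\stdCapOff$ is monotone in taint, so their projections are both $\bot$.

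\emph{Main obstacle.} The delicate point is the propagation of $\pc$ and of definedness. The ISA rules and Rule~\ref{memory:read-mem-inst} force $\pc$ to be untainted in practice, but the lemma does not assume reachability. I would therefore first try to prove the stronger statement that $\reg_i(\pc)=\reg_i'(\pc)$ whenever it is untainted, and otherwise track definedness only. If a totality or untaintedness assumption is genuinely needed, I would import it from the well-formedness invariant (\Cref{lemma:hw-wf-invariant}).
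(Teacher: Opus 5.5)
Your proposal follows the paper's proof, which is an induction on $i$ using the definition of $\pubeq$. The base case comes from the register clause of public equivalence, the step is a case analysis on the entry kind via $\lowproj{\buf}=\lowproj{\buf'}$, and your care about totality and the $\pc$ successor fills in details the paper leaves implicit.

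One correction to that last point: the paper's $\stdCapOff$ is not actually taint-monotone, because its ``otherwise'' clause returns the untainted $(0,\untainted,\tval)$ on a tainted non-capability first argument. So the tainted-$\pc$ case is closed not by monotonicity but by your fallback: \Cref{def:well-formed-buf}, together with Rule~\ref{memory:read-mem-inst}, forces $\apl(\pref i \buf, \reg)(\pc)$ to be an untainted capability at every $i\in\dom{\buf}$.
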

\begin{proof}
  The proof goes by induction on $i$ and \Cref{def:pub-eq}.
\end{proof}

\begin{lemma}[Expr Public Equivalence]
Let $C = \cf{\mem, \reg, \buf, \mu}$ and $C'=\cf{\mem', \reg', \buf', \mu'}$ be two HW configurations such that $C \pubeq C'$.
For all $i$, denote $\reg_i = \apl(\pref i \buf, \reg)$, $\reg_i' = \apl(\pref i {\buf'}, \reg')$.
Then,
\begin{equation*}
  \forall e,\; \exists v\neq \bot,\; \sem{e}_{\reg_i} = v \Rightarrow \exists v'\neq\bot,\; \sem{e}_{\reg_i'} = v' \land \lowproj{v} = \lowproj{v'}.
\end{equation*}
\label{lemma:exp-pub-eq}
\end{lemma}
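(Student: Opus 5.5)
The plan is a structural induction on the expression $e$, with \Cref{lemma:reg-pub-eq} discharging the register case and the interpretation properties of operators (\Cref{sec:exprsem}) handling the composite case. Fix $i$ and write $\reg_i = \apl(\pref i \buf, \reg)$ and $\reg_i' = \apl(\pref i {\buf'}, \reg')$. I would first strengthen the statement slightly, since the bare conclusion ``$\lowproj{v} = \lowproj{v'}$'' is too weak to push through arbitrary operators. The claim to prove is that, whenever $\sem{e}_{\reg_i} = v \neq \bot$, we have $\sem{e}_{\reg_i'} = v' \neq \bot$, the taints of $v$ and $v'$ agree, and $\lowproj{v} = \lowproj{v'}$. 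In other words, $v = v'$ when the common taint is $\untainted$, and only the taint and definedness are shared when it is $\tainted$. The same refinement is needed for registers, and I expect \Cref{lemma:reg-pub-eq} to already yield it: $\lowproj{v} = \lowproj{v'}$ with both values defined forces equal taints, because an untainted value cannot be projected to $\bot$.

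The base cases are direct. For a constant $\val$, both sides evaluate to $(\val, \untainted, \tval)$. For a register $x$, the claim is exactly \Cref{lemma:reg-pub-eq}, together with the taint-agreement observation above.

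For the inductive case $e = \op(e_1,\dots,e_n)$, eagerness of $\overline\op$ with respect to $\bot$ gives $\sem{e_k}_{\reg_i} \neq \bot$ for every $k$. The induction hypothesis then makes every $\sem{e_k}_{\reg_i'}$ defined with matching taint and matching untainted projection. I would split on the taint of the result. If every argument is untainted, the argument tuples coincide, so the outputs are equal. If some argument is tainted, monotonicity of taints makes the output tainted on both sides, so both projections are $\bot$, and it only remains to show that $\sem{e}_{\reg_i'}$ is defined.

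That last requirement is the main obstacle. The axioms in \Cref{sec:exprsem} only rule out $\bot$ outputs on $\bot$ inputs; they do not say that $\overline\op$ returns a non-$\bot$ value on defined inputs. I would handle this by noting that the operators are total on non-$\bot$ arguments (as the definition of $\stdCapOff$ illustrates, with its ``otherwise'' branch), and state that as a standing assumption on interpretations.

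A second subtlety is that capability operators might inspect argument tags, and for a tainted argument the tag may differ between the two runs. This does not matter: the output is tainted on both sides, so its projection is $\bot$ regardless of which branch the operator takes. With totality in hand, the tainted case closes, completing the induction.
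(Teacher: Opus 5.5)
Your proposal is correct and follows the paper's route: the paper's proof is just ``apply \Cref{lemma:reg-pub-eq}'', i.e.\ a structural induction on $e$ with the register case as the base. You also usefully make explicit an assumption the paper leaves implicit, namely that operators return non-$\bot$ results on defined inputs, which eagerness alone does not provide. One caveat on your aside: the ``otherwise'' branch of $e_1 \stdCapOff e_2$ returns $(0, \untainted, \tval)$ even when an argument is tainted, so it illustrates totality but violates the taint-monotonicity axiom that your tainted case (and the lemma itself) relies on, because two runs may give a tainted argument different tags.
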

\begin{proof}
  The proof goes by applying \Cref{lemma:reg-pub-eq}.
\end{proof}

\begin{lemma}[Auxiliary lemma for $\aplsan$ equivalence]
  \label{lemma:aplsaneqaux}
  Let $\buf$ and $\buf'$ be buffers such that
  $\lowproj \buf = \lowproj{\buf'}$, and $\reg, \reg'$ be two partial register files such that $\lowproj \reg = \lowproj{\reg'}$. We have
  \[
    \lowproj{\apl(\buf, \reg)} = \lowproj {\apl(\buf', \reg')}.
  \]
\end{lemma}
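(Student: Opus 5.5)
We argue by induction on the size of $\dom{\buf}$, following the recursive definition of $\apl$. Since $\lowproj\buf = \lowproj{\buf'}$ forces $\dom{\buf} = \dom{\buf'}$ (\Cref{def:low-proj-buf}), both recursions unfold in lockstep.

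In the base case $\dom{\buf} = \emptyset$, we have $\apl(\buf,\reg) = \reg$ and $\apl(\buf',\reg') = \reg'$, and the claim is the hypothesis $\lowproj\reg = \lowproj{\reg'}$.

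In the inductive step, let $i = \min\dom{\buf} = \min\dom{\buf'}$. Then $\apl(\buf,\reg) = \apl(\buf\setminus i, \aplinst(\buf(i),\reg))$, and likewise for the primed side. Removing an index commutes with $\lowproj\cdot$, so $\lowproj{\buf\setminus i} = \lowproj{\buf'\setminus i}$. It therefore suffices to prove a one-step lemma:
\[
\lowproj{\aplinst(\buf(i),\reg)} = \lowproj{\aplinst(\buf'(i),\reg')}
\]
whenever $\lowproj{\buf(i)} = \lowproj{\buf'(i)}$ and $\lowproj\reg = \lowproj{\reg'}$. Given that lemma, the induction hypothesis applied to the two smaller buffers closes the case.

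The one-step lemma goes by case analysis on $\buf(i)$. Since $\lowproj\cdot$ preserves the instruction shape and the speculation tag, $\buf'(i)$ has the same form. The cases are as follows.

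\begin{itemize}
\item $\pc\gets E$ (a predicted or resolved branch): this entry is unprojected, so $E$ is literally equal on both sides, and $\pc$ receives the same value.
\item $x\gets(v,t,c)$ with $x\neq\pc$: the projections of the two triples coincide, hence so do the projected values of $x$ after the update.
\item $x\gets e$ and loads: $x$ is mapped to $\bot$ on both sides.
\item Stores: only $\pc$ changes, and registers other than $\pc$ are untouched, so their projections agree by hypothesis.
\end{itemize}

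In every case $\pc$ is updated to $\sem{\stdNextPc\pc}$, except in the $\pc\gets E$ case where it is overwritten. To handle this update we need
\[
\lowproj{\sem{\stdNextPc\pc}_\reg} = \lowproj{\sem{\stdNextPc\pc}_{\reg'}},
\]
and more generally that expression evaluation respects $\lowproj{\cdot}$-equivalence of register files. This follows by structural induction on expressions, using that operators are eager in $\bot$ and monotone in taint. If some input is tainted or $\bot$ on one side, the output is tainted or $\bot$, so its projection is $\bot$. Otherwise all inputs are untainted and literally equal on both sides, so the outputs coincide.

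The main obstacle is this projected-evaluation fact, specifically a mixed case: a register may be $\bot$ on one side but tainted (non-$\bot$) on the other. Both sides still project to $\bot$, but the operator interpretations may differ on these inputs. Monotonicity guarantees the tainted side yields a tainted or $\bot$ output, and eagerness guarantees the $\bot$ side yields $\bot$, so both outputs project to $\bot$. I would isolate this as a small auxiliary claim, $\lowproj{\sem e_\reg}$ depends only on $\lowproj\reg$, prove it first, and then the case analysis above becomes routine.
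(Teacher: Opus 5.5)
Your proof is correct and follows essentially the paper's route: induction on the size of $\dom{\buf}$, reducing to the one-step equality $\lowproj{\aplinst(\buf(i),\reg)} = \lowproj{\aplinst(\buf'(i),\reg')}$ for the oldest entry $i$, proved by case analysis on the shape of that entry. Your auxiliary fact that $\lowproj{\sem{e}_\reg}$ depends only on $\lowproj{\reg}$, which is needed for the $\pc$ increment, makes explicit a step the paper leaves implicit. Note, however, that this fact rests on the stated taint-monotonicity axiom, and the ``otherwise'' clause of the paper's explicit $\stdCapOff$ does not literally satisfy that axiom, since it returns $(0,\untainted,\tval)$ whatever the input taints.
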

\begin{proof}
  The proof is by induction on the size of the domain of $\buf$. If it
  is empty, by $\lowproj \buf = \lowproj{\buf'}$, the domain of
  $\buf'$ must also be empty.
  In the inductive case, by the inductive hypothesis, it suffices to observe that:
    \begin{multline*}
      \lowproj{\aplinst(\buf(\min \dom{\buf}), r)} =\\
      \lowproj{\aplinst(\buf'(\min \dom{\buf'}), r')}.
  \end{multline*}

  The conclusion follows by case analysis on $\min \dom{\buf}$.
\end{proof}
\begin{lemma}[Expr $\aplsan$ Equivalence]
Let $C = \cf{\mem, \reg, \buf, \mu}$ and $C'=\cf{\mem', \reg', \buf', \mu'}$ be two HW configurations such that $C \pubeq C'$.
For all $i$, denote $\reg_i = \aplsan(\pref i \buf, \reg)$, $\reg_i' = \aplsan(\pref i {\buf'}, \reg')$.
Then,
\begin{equation*}
  \forall e,\; \sem{e}_{\reg_i} = (\_, \untainted, \_) \Rightarrow \sem{e}_{\reg_i} = \sem{e}_{\reg_i'}.
\end{equation*}
\label{lemma:expr-aplsan-eq}
\end{lemma}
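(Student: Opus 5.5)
The plan is to reduce the statement to a register-file equality and then conclude by structural induction on $e$. The key intermediate claim is
\[
  \lowproj{\aplsan(\pref i \buf, \reg)} = \lowproj{\aplsan(\pref i {\buf'}, \reg')}.
\]
Recall that $\aplsan(\pref i\buf, \reg)$ is either $\reg$ (when $\dom{\pref i \buf}=\emptyset$) or $\lowproj{\apl(\pref i\buf,\reg)}$, and that $\lowproj\cdot$ is idempotent. I first note that $\lowproj\buf = \lowproj{\buf'}$ (from \Cref{def:pub-eq}) gives $\dom{\buf}=\dom{\buf'}$. Hence $\dom{\pref i\buf}=\emptyset$ if and only if $\dom{\pref i{\buf'}}=\emptyset$, so both sides fall into the same case of $\aplsan$.

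In the empty case we must show $\lowproj\reg = \lowproj{\reg'}$. This is exactly the register clause of $\cf{\mem,\reg}\pubeq\cf{\mem',\reg'}$ in \Cref{def:pub-eq-isa}. In the non-empty case, the untainted projection commutes with prefixing, so $\lowproj{\pref i\buf} = \pref i{\lowproj\buf} = \pref i{\lowproj{\buf'}} = \lowproj{\pref i{\buf'}}$. Together with $\lowproj\reg=\lowproj{\reg'}$, \Cref{lemma:aplsaneqaux} then yields $\lowproj{\apl(\pref i\buf,\reg)}=\lowproj{\apl(\pref i{\buf'},\reg')}$. This is the claim.

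For the induction, I strengthen the goal to: for all $e$, $\sem e_{\reg_i} = \lowproj{\sem e_{\reg_i}}$, and these values are equal in both configurations, where both register files are already untainted-projected. More precisely, I prove that for any two partial register files $\rho,\rho'$ with $\lowproj\rho=\lowproj{\rho'}$, if $\sem e_\rho=(v,\untainted,c)$ then $\sem e_{\rho'}=(v,\untainted,c)$. The cases are as follows.
\begin{itemize}
\item \emph{Values.} Immediate, since $\sem \val_\rho$ does not depend on $\rho$.
\item \emph{Registers.} $\rho(x)$ is untainted, so $\rho(x)=\lowproj{\rho(x)}=\lowproj{\rho'(x)}$. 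This forces $\rho'(x)$ to be untainted and equal to $\rho(x)$.
\item \emph{Operators $\op(e_1,\dots,e_n)$.} The output is untainted and, by eagerness, not $\bot$, so every argument is non-$\bot$. By taint-monotonicity of $\overline\op$, every argument is untainted. The induction hypothesis then makes all arguments equal across $\rho$ and $\rho'$, and so the outputs coincide.
\end{itemize}
I apply this with $\rho=\reg_i$ and $\rho'=\reg_i'$, using the intermediate claim to supply $\lowproj{\reg_i}=\lowproj{\reg_i'}$.

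The step I expect to need the most care is the operator case, specifically reading ``monotone with respect to taints'' as: the output taint is at least the join of the input taints. That reading guarantees that an untainted result has only untainted inputs. It also relies on expressions being evaluated purely denotationally, with no hidden dependence on $\mem$. A smaller point is checking that the case analysis in \Cref{lemma:aplsaneqaux} covers every kind of buffer entry. This matters because $\lowproj\cdot$ on buffers keeps the unevaluated expressions $e$ verbatim but blanks tainted evaluated operands, and $\aplinst$ then maps tainted values to entries that project to $\bot$ on both sides.
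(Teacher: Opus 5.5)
Your proof is correct and shares the paper's skeleton: split on whether $\dom{\pref i \buf}$ (equivalently $\dom{\pref i {\buf'}}$) is empty, and handle the non-empty case with \Cref{lemma:aplsaneqaux}. The difference is in how you finish.

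In the non-empty case, the paper notes that \Cref{lemma:aplsaneqaux} already gives literal equality $\reg_i = \reg_i'$ of the sanitized files, so the conclusion holds for every $e$ with no reasoning about expressions. Only the empty case needs an expression-level fact, which the paper takes from \Cref{lemma:exp-pub-eq}. It then uses that $\sem{e}_{\reg}$ is untainted, hence equal to its own projection.

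You instead reduce both cases to $\lowproj{\reg_i} = \lowproj{\reg_i'}$ and prove a self-contained structural induction: untainted results depend only on the untainted part of the register file. This makes explicit what the paper's one-line citation of \Cref{lemma:exp-pub-eq} leaves implicit, namely eagerness and taint-monotonicity of every $\overline{\op}$. In the non-empty case, though, your intermediate claim already is $\reg_i = \reg_i'$, so the induction is only needed for the empty prefix.

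Your concern about the operator case is well placed. The fallback branch of the paper's $\stdCapOff$ returns $(0, \untainted, \tval)$ regardless of input taints, so it violates the stated monotonicity, and with it the empty-prefix case actually fails. For example, let $x$ be tainted with tag $\tval$ in $\reg$ but tainted with tag $\tcap$ in $\reg'$, which public equivalence permits. Then $\sem{x \stdCapOff 0}_{\reg} = (0, \untainted, \tval)$, while $\sem{x \stdCapOff 0}_{\reg'}$ is a tainted capability. The paper's argument depends on the same axiom through the lemmas it cites, so this is a defect of the paper's setup rather than of your proof.
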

\begin{proof}
  Observe that by $C \pubeq C'$, we deduce that
  $\dom {\pref i {\buf'}} = \dom {\pref i {\buf}}$.
  The proof goes by cases on the size of those domains. If
  such size is 0, both buffers are empty, $\reg_i=\reg$, and
  $\reg_i' = \reg'$.
  Under these assumptions, we deduce that
  $\lowproj{\sem{e}_{\reg_i}} = \lowproj{\sem{e}_{\reg_i'}}$ from
  \Cref{lemma:exp-pub-eq}. As we have $\sem{e}_{\reg_i}\neq \bot$ and
  $\lowproj{\sem{e}_{\reg_i'}} = \sem{e}_{\reg_i'}$.

  If the domain of the buffer is not empty, we are required to establish
  \[
    \sem{e}_{\aplsan(\pref i {\buf}, \reg)} = \sem{e}_{\aplsan(\pref i {\buf'}, \reg')}.
  \]
  by definition of $\aplsan$, this reduces to proving
  \[
    \sem{e}_{\lowproj{\apl(\pref i {\buf}, \reg)}} = \sem{e}_{\lowproj{\apl(\pref i {\buf'}, \reg')}}.
  \]
  By assumption, we know that
  $\sem{e}_{\lowproj{\apl(\pref i {\buf}, \reg)}}$ is an untainted
  value, and therefore it cannot be $\bot$. Since by
  \Cref{lemma:aplsaneqaux}, we have
  \[
    \lowproj{\apl(\pref i {\buf}, \reg)} = \lowproj{\apl(\pref i
      {\buf}, \reg)},
  \]
  the conclusion is trivial.
\end{proof}

\again{thm:sct}
\begin{proof}
Let
\begin{align*}
  & S_0 \step{o_1} S_1 \step{o_2} S_2 \dots
  && S_0' \step{o_1'} S_1' \step{o_2'} S_2' \dots
\end{align*}
be the corresponding ISA execution traces.
By assumption, there should be $o_1o_2\dots = o_1'o_2'\dots$.
It suffices to prove that for all $n\in \mathbb{N}$, if one HW configuration successfully executes for $n$ steps, i.e., there exist $C_1, \dots C_n$ such that
\begin{equation*}
  C_0 \sstep{}{} C_1 \sstep{}{} C_2 \dots \sstep{}{}C_n
\end{equation*}
then the other one can also execute at least $n$ steps, i.e., there exist $C_1', \dots, C_n'$ such that
\begin{equation*}
  C_0' \sstep{}{} C_1' \sstep{}{} C_2' \dots \sstep{}{}C_n'
\end{equation*}
and
\begin{equation*}
  \forall k\leq n, C_k\pubeq C_k'.
\end{equation*}

We prove by induction on $n$.
For the base case where $n=0$, since $S_0\pubeq S_0'$ and $\dom{\bufinit}=\emptyset$, then by \Cref{def:pub-eq} and the definition of $C_0$ and $C_0'$, there should be $C_0 \pubeq C_0'$.

Assume the statement holds for $n$. We prove it also holds for $n+1$.
Suppose there exists $C_{n+1}$ such that $C_n\sstep{}{}C_{n+1}$.
We just need to prove that there exists $C_{n+1}'$ such that $C_n'\sstep{}{}C_{n+1}'$ and $C_{n+1}\pubeq C_{n+1}'$.

By \Cref{def:well-formed-buf}, since $C_0$ and $C_0'$ have empty reorder buffers, i.e., $\dom{\bufinit} = \emptyset$, then both $C_0$ and $C_0'$ are well-formed.
Then, by \Cref{lemma:hw-wf-invariant}, for all $k \leq n$, $C_k$ and $C_k'$ are well-formed.

Denote $C_n = \cf{\mem_n, \reg_n, \buf_n, \mu_n}$.
By \ref{hw:step}, there should be
\begin{mathpar}
\inferrule{
  \mu = \stdUpdateMu{\mu_n, \lowproj{\buf_n}}\\
  d = \stdNextMu{\mu}\\
  \cf{\mem_n, \reg_n, \buf_n, \mu} \sstep{d}{}
  C_{n+1}
}{
  \cf{\mem_n, \reg_n, \buf_n, \mu_n} \sstep{}{}
  C_{n+1}
}
\end{mathpar}

Denote $C_n' = \cf{\mem_n', \reg_n', \buf_n', \mu_n'}$. Since $C_n \pubeq C_n'$, then $\lowproj{\buf_n} = \lowproj{\buf_n'}$ and $\mu_n = \mu_n'$.
Denote $\mu' = \stdUpdateMu{\mu_n', \lowproj{\buf_n'}}$.
Thus, there should be $\mu = \mu'$ and $d=\stdNextMu{\mu} = \stdNextMu{\mu'}$.

It suffices to prove that there exists $C_{n+1}'$
such that
\begin{equation}
  \cf{\mem_n', \reg_n', \buf_n', \stdUpdateMu{\mu_n', \lowproj{\buf_n'}}} \sstep{d}{}
  C_{n+1}'
  \label{eq:next-state-goal}
\end{equation}
and
\begin{equation}
  C_{n+1} \pubeq C_{n+1}'
  \label{eq:pub-eq-goal}
\end{equation}

Consider the following cases for $d$.

\pgheading{Case $d = \dfetch$}
By \ref{hw:fetch-other} and \ref{hw:fetch-branch-predict-pc}, there exists $\instr$ such that
\begin{equation*}
  \instr = \stdReadMemInst{\mem_n, \sem{\pc}_{\aplsan(\buf_n, \reg_n)}}.
\end{equation*}
By \ref{memory:read-mem-inst}, $\sem{\pc}_{\aplsan(\buf_n, \reg_n)} = ((\addr, (\perm, \rbase, \rend, \untainted)), \untainted, \tcap)$.
By \Cref{lemma:expr-aplsan-eq}, we have $\sem{\pc}_{\aplsan(\buf_n, \reg_n)} = \sem{\pc}_{\aplsan(\buf_n', \reg_n')}$.
Furthermore, by \Cref{lemma:exp-cap-mono}, there should be $\sem{\pc}_{\aplsan(\buf_n, \reg_n)} \in \rc{\reg_n, \mem_n}$ and $\sem{\pc}_{\aplsan(\buf_n', \reg_n')} \in \rc{\reg_n', \mem_n'}$.
Then, by \Cref{lemma:read-mem-inst}, $\instr = \stdReadMemInst{\mem_n', \sem{\pc}_{\aplsan(\buf_n', \reg_n')}}$.
This indicates that $S_n'$ also satisfies the condition of $\sstep{d}{}$.

Let $i = \sup \dom{\buf_n} + 1 = \sup\dom{\buf_n'} + 1$ (implied by $C_n \pubeq C_n'$).
Consider the following cases:
\begin{itemize}
\item $\instr \not \in \qty{\beqz x \addr'', \jmp e}$:
Let
\begin{align*}
  C_{n+1} & = (\mem_n, \reg_n, \buf_n[i \mapsto \instr \specat{\varepsilon}], \stdUpdateMu{\mu, \sem{\pc}_{\aplsan(\buf_n, \reg_n)}}) \\
  C_{n+1}' & = (\mem_n', \reg_n', \buf_n'[i \mapsto \instr \specat{\varepsilon}], \stdUpdateMu{\mu', \sem{\pc}_{\aplsan(\buf_n', \reg_n')}}).
\end{align*}
By \ref{hw:fetch-other}, there should be $C_n' \sstep{}{} C_{n+1}'$ and $C_{n+1}\pubeq C_{n+1}'$.
\item $\instr \in \qty{\beqz x \addr'', \jmp e}$:
Denote
\begin{equation*}
  (\stdCap{\addr}, t, c) = \sem{\pc}_{\aplsan(\buf_n, \reg_n)} = \sem{\pc}_{\aplsan(\buf_n', \reg_n')}.
\end{equation*}
Let
\begin{align*}
  \buf_{n+1} & = \buf_n[i \mapsto \pc \gets ((\stdPredPc{\mu}, \stdCapMeta{\addr}), t, c)\specat{\stdCap{\addr}}] \\
  C_{n+1} & = (\mem_n, \reg_n, \buf_{n+1}, \stdUpdateMu{\mu, \sem{\pc}_{\aplsan(\buf_n, \reg_n)}}) \\
  \buf_{n+1}' & = \buf_n'[i \mapsto \pc \gets ((\stdPredPc{\mu'}, \stdCapMeta{\addr}), t, c)\specat{\stdCap{\addr}}] \\
  C_{n+1}' & = (\mem_n', \reg_n', \buf_{n+1}', \stdUpdateMu{\mu', \sem{\pc}_{\aplsan(\buf_n', \reg_n')}}).
\end{align*}
Note that $\mu = \mu'$.
By \ref{hw:fetch-branch-predict-pc}, there should be $C_n' \sstep{}{} C_{n+1}'$ and $C_{n+1}\pubeq C_{n+1}'$.
\end{itemize}

\pgheading{Case $d = \dexecute{i}$}
Consider the following cases:

\pgheading{Subcase $\buf_n(i) = x \gets e \specat{\varepsilon}$, $x\neq \pc$}
Since $\lowproj{\buf_n} = \lowproj{\buf_n'}$, there should be
$\buf_n'(i) = x \gets e \specat{\varepsilon} = \buf_n(i)$.
By \ref{hw:execute-assign}, there should be $\sem{e}_{\apl (\pref i {\buf_n}, \reg_n)} = (v, t, c)\neq \bot$.
By \Cref{lemma:exp-pub-eq}, 
this implies that there must exist $(v', t', c')$ such that $\sem{e}_{\apl (\pref i {\buf_n'}, \reg_n')} = (v', t', c') \neq \bot$, and $\lowproj{(v, t, c)} = \lowproj{(v', t', c')}$.
Let
\begin{align*}
  C_{n+1} & = \cf{\mem_n, \reg_n, \buf_n[i \mapsto x \gets (v, t, c)\specat{\varepsilon}], \mu} \\
  C_{n+1}' & = \cf{\mem_n', \reg_n', \buf_n'[i \mapsto x \gets (v', t', c')\specat{\varepsilon}], \mu'}.
\end{align*}
By \ref{hw:execute-assign}, there should be $C_n' \sstep{}{} C_{n+1}'$.
By \Cref{def:low-proj-buf}, $C_{n+1} \pubeq C_{n+1}'$.

\pgheading{Subcase $\buf_n(i) = \pc \gets \addr' \specat{\stdCap{\addr}}$}
By introspection of the rules for assignments, one among rules \ref{hw:execute-jmp-ok}, \ref{hw:execute-jmp-hazard}, \ref{hw:execute-beqz-ok} and \ref{hw:execute-beqz-hazard} was applied,
There should be $\stdReadMemInst{\mem_n, \sem{\pc}_{\aplsan(\pref i {\buf_n}, \reg_n)}} = \jmp{e}$ or $\beqz{x}{\addr}$. The two cases can be proved similarly, so we only prove the case for $\jmp{e}$ here and omit the proof for the other.

By \Cref{def:low-proj-buf,def:pub-eq}, $C_n \pubeq C_n'$ implies that $\buf_n(i) = \buf_n'(i) = \pc \gets \addr' \specat{\stdCap{\addr}}$.

By \ref{memory:read-mem-inst}, $\sem{\pc}_{\aplsan(\pref i {\buf_n}, \reg_n)} = ((\addr, (\perm, \rbase, \rend, \untainted)), \untainted, \tcap)$.
By \Cref{lemma:expr-aplsan-eq}, we have $\sem{\pc}_{\aplsan(\pref i {\buf_n}, \reg_n)} = \sem{\pc}_{\aplsan(\pref i {\buf_n'}, \reg_n')}$.
Furthermore, by \Cref{lemma:exp-cap-mono}, there should be $\sem{\pc}_{\aplsan(\pref i {\buf_n}, \reg_n)} \in \rc{\reg_n, \mem_n}$ and $\sem{\pc}_{\aplsan(\pref i {\buf_n'}, \reg_n')} \in \rc{\reg_n', \mem_n'}$.
Then, by \Cref{lemma:read-mem-inst},
\begin{align*}
  \jmp{e} & = \stdReadMemInst{\mem_n, \sem{\pc}_{\aplsan(\pref i {\buf_n}, \reg_n)}} \\
  & = \stdReadMemInst{\mem_n', \sem{\pc}_{\aplsan(\pref i {\buf_n'}, \reg_n')}}
\end{align*}

Next, denote
\begin{align*}
  & (\addr_0, t, c) = \sem{e}_{\aplsan(\pref i {\buf_n}, \reg_n)} &
  & (\addr_0', t', c') = \sem{e}_{\aplsan(\pref i {\buf_n'}, \reg_n')}.
\end{align*}
Both \ref{hw:execute-jmp-ok} and \ref{hw:execute-jmp-hazard} require the resolved jump target to be
a capability, so $c = \tcap$.
We aim to prove that $\addr_0 = \addr_0'$ and $c' = \tcap$ by discussing the taint status of the result.
\begin{itemize}
  \item $t = \untainted$: By \Cref{lemma:expr-aplsan-eq}, there should be $\sem{e}_{\aplsan(\pref i {\buf_n}, \reg_n)} = \sem{e}_{\aplsan(\pref i {\buf_n'}, \reg_n')}$, hence $\addr_0 = \addr_0'$ and $c' = c = \tcap$.
  \item $t = \tainted$: By the definition of $\aplsan$, this is only possible if $\dom{\pref i {\buf_n}} = \emptyset$. Then, there should be $\aplsan(\pref i {\buf_n}, \reg_n) = \reg_n$.
  Since $C_n \pubeq C_n'$, then there is also $\dom{\pref i {\buf_n'}} = \emptyset$, thereby $\aplsan(\pref i {\buf_n'}, \reg_n') = \reg_n'$.
  By \ref{isa:isa-jmp}, there should exist $\cf{\mem_{n+1}, \reg_{n+1}}$ such that
  \begin{equation*}
    \cf{\mem_n, \reg_n} \step{\stdBr{\addr_0}} \cf{\mem_{n+1}, \reg_{n+1}}.
  \end{equation*}
  Since $S_0 \equiv_{\mathit{ISA}}S_0'$, then there should also exist $\cf{\mem_{n+1}', \reg_{n+1}'}$ such that
  \begin{align*}
    & \cf{\mem_n', \reg_n'} \step{\stdBr{\addr_0'}} \cf{\mem_{n+1}', \reg_{n+1}'} \\
    & \stdBr{\addr_0} = \stdBr{\addr_0'}.
  \end{align*}
  Therefore, $\addr_0 = \addr_0'$.
  Moreover, \ref{isa:isa-jmp} also requires the jump target to be a capability, so the ISA step of
  $\cf{\mem_n', \reg_n'}$ gives $c' = \tcap$.
\end{itemize}

Since $\buf_n(i) = \buf_n'(i)$, the two executions agree on $\addr'$, and we have just
established that $\addr_0 = \addr_0'$ and $c = c' = \tcap$.
Hence the premises of \ref{hw:execute-jmp-ok} hold for $C_n$ if and only if they hold for $C_n'$,
and likewise for \ref{hw:execute-jmp-hazard}, so the same rule applies to $C_n$ and $C_n'$.
Consider the following cases:
\begin{itemize}
\item \ref{hw:execute-jmp-ok} applies, i.e., $\addr' = (\addr_0, \_, \tcap)$:
Let
\begin{align*}
  C_{n+1} & = \cf{\mem_n, \reg_n, \buf_n[i \mapsto \pc \gets \addr' \specat{\varepsilon}], \stdUpdateMu{\mu, \addr_0}} \\
  C_{n+1}' & = \cf{\mem_n', \reg_n', \buf_n'[i \mapsto \pc \gets \addr' \specat{\varepsilon}], \stdUpdateMu{\mu', \addr_0'}}.
\end{align*}
By \ref{hw:execute-jmp-ok}, $C_n' \sstep{}{} C_{n+1}'$.
By \Cref{def:pub-eq}, $C_{n+1} \pubeq C_{n+1}'$.
\item \ref{hw:execute-jmp-hazard} applies, i.e., $\addr' = (\addr_1, \_, \_)$ with $\addr_1 \neq \addr_0$:
Let
\begin{align*}
  C_{n+1} & = (\mem_n, \reg_n, \pref i {\buf_n}[i \mapsto \pc \gets (\addr_0, \untainted, \tcap) \specat{\varepsilon}], \stdUpdateMu{\mu, \addr_0}) \\
  C_{n+1}' & = (\mem_n', \reg_n', \pref i {\buf_n'}[i \mapsto \pc \gets (\addr_0, \untainted, \tcap) \specat{\varepsilon}], \stdUpdateMu{\mu', \addr_0'}).
\end{align*}
By \ref{hw:execute-jmp-hazard}, $C_n' \sstep{}{} C_{n+1}'$.
By \Cref{def:pub-eq}, $C_{n+1} \pubeq C_{n+1}'$.
\end{itemize}

\pgheading{Subcase $\buf_n(i) = \load{x}{e}{\sz}$, $x\neq \pc$}
Since $\lowproj{\buf_n} = \lowproj{\buf_n'}$, then $\buf_n'(i) = \load{x}{e}{\sz} = \buf_n(i)$.

By \ref{hw:execute-load-mem} and \ref{hw:execute-load-fwd}, there should be
\begin{align*}
  & ((\addr, (\perm, \rbase, \rend, t_c)), t_0, \tcap) = \sem{e}_{\aplsan(\pref i {\buf_n}, \reg_n)}\\
  & \stdCheckCap{(\addr, (\perm, \rbase, \rend, t_c)), \sz, \readonly} \qquad x \neq \pc\\
\end{align*}

Denote $\sem{e}_{\aplsan(\pref i {\buf_n'}, \reg_n')} = (\stdCap{\addr'}, \_, c')$.
We aim to prove that $\stdCap{\addr'} = (\addr, (\perm, \rbase, \rend, t_c))$ and $c' = \tcap$ by discussing the taint status of the result.
\begin{itemize}
  \item $t_0 = \untainted$: By \Cref{lemma:expr-aplsan-eq}, there should be $\sem{e}_{\aplsan(\pref i {\buf_n}, \reg_n)} = \sem{e}_{\aplsan(\pref i {\buf_n'}, \reg_n')}$, so the statement holds.
  \item $t_0 = \tainted$: By the definition of $\aplsan$, this is only possible if $\dom{\pref i {\buf_n}} = \emptyset$. Then, there should be $\aplsan(\pref i {\buf_n}, \reg_n) = \reg_n$.
  Since $C_n \pubeq C_n'$, then there is also $\dom{\pref i {\buf_n'}} = \emptyset$, thereby $\aplsan(\pref i {\buf_n'}, \reg_n') = \reg_n'$.

  Note that the premises of \ref{isa:isa-load} are satisfied:
  since $C_n$ is well-formed, there should be
  $\load{x}{e}{\sz} = \stdReadMemInst{\mem_n, \sem{\pc}_{\reg_n}}$; furthermore,
  $\aplsan(\pref i {\buf_n}, \reg_n) = \reg_n$ implies that
  $\sem{e}_{\reg_n} = ((\addr, (\perm, \rbase, \rend, t_c)), t_0, \tcap)$,
  so $\stdCheckCap{\sem{e}_{\reg_n}, \sz, \readonly}$ holds and, by \ref{memory:read-mem},
  $\stdReadMem{\mem_n, \sem{e}_{\reg_n}, \sz}$ is defined.
  Hence, by \ref{isa:isa-load}, there should exist $(\mem_{n+1}, \reg_{n+1})$ such that
  \begin{equation*}
    (\mem_n, \reg_n) \step{\stdLoad{(\addr, (\perm, \rbase, \rend, t_c))}} (\mem_{n+1}, \reg_{n+1}).
  \end{equation*}
  Since $S_0 \equiv_{\mathit{ISA}} S_0'$, then there should also exist $(\mem_{n+1}', \reg_{n+1}')$ such that
  \begin{align*}
    & (\mem_n', \reg_n') \step{\stdLoad{\stdCap{\addr'}}} (\mem_{n+1}', \reg_{n+1}') \\
    & \stdLoad{(\addr, (\perm, \rbase, \rend, t_c))} = \stdLoad{\stdCap{\addr'}}
  \end{align*}
  Thus, $\stdCap{\addr'} = (\addr, (\perm, \rbase, \rend, t_c))$.
  $c' = \tcap$ can also be implied from $(\mem_n', \reg_n') \step{\stdLoad{\stdCap{\addr'}}} (\mem_{n+1}', \reg_{n+1}')$ by \ref{isa:isa-load} and \ref{memory:read-mem}.
\end{itemize}

Next, we analyze the remaining part of rules to determine whether the load data comes from memory or from a previous store.
The key insight is to leverage $\lowproj{\buf_n} = \lowproj{\buf_n'}$.
Specifically, for all $j$, if $\buf_n(j) = \store{v_0}{\stdCap{\addr'}}{\sz}\specat{\tg}$, then $\lowproj{\buf_n} = \lowproj{\buf_n'}$ implies that $\buf_n'(j) = \store{v_0'}{\stdCap{\addr'}}{\sz}\specat{\tg}$ where $\lowproj{v_0} = \lowproj{v_0'}$.
Consider the following two cases:
\begin{itemize}
\item \ref{hw:execute-load-mem} is applied, i.e.,
\begin{multline*}
  \forall j < i,\; \buf_n(j) = \store{x'}{\stdCap{\addr'}}{\sz'}\specat{\tg}\\
    \Rightarrow [\addr, \addr + \sz) \cap [\addr', \addr' + \sz') = \emptyset.
\end{multline*}
According to the above analysis, the statement still holds if we replace $\buf_n$ with $\buf_n'$.

Let
\begin{align*}
  \mem_n & = (\memData, \memTag) &
  v_0 & = \memData[\addr, \addr + \sz] &
  c_0 & = \memTag[\addr / \szCap]\\
  \mem_n' & = (\memData', \memTag') &
  v_0' & = \memData'[\addr, \addr + \sz] &
  c_0' & = \memTag'[\addr / \szCap].
\end{align*}
Recall that $\stdCheckCap{((\addr, (\perm, \rbase, \rend, \taint_c)), \sz, \readonly)}$,
so $\sz = \szCap \Rightarrow \addr \% \szCap = 0$.
By \Cref{lemma:exp-cap-mono}, there should be 
\begin{align*}
  & [\rbase, \rend) \subseteq \taintAddr{\reg_n, \mem_n, t_c} \\
  & [\rbase, \rend) \subseteq \taintAddr{\reg_n', \mem_n', t_c}.
\end{align*}
By \ref{memory:check-cap}, there should be $[\addr, \addr + \sz) \subseteq [\rbase, \rend)$.
Thus,
\begin{equation*}
  [\addr, \addr+ \sz) \subseteq \taintAddr{\reg_n, \mem_n, t_c} \cap \taintAddr{\reg_n', \mem_n', t_c}.
\end{equation*}
Note that $C_n \pubeq C_n'$ implies that $\cf{\mem_n, \reg_n} \pubeq \cf{\mem_n', \reg_n'}$.
Thus, by \Cref{def:pub-eq-isa}, the above conditions imply that
\begin{align*}
\lowproj{\stdReadRes{v_0, t_c, c_0, \sz}} = \lowproj{\stdReadRes{v_0', t_c, c_0', \sz}}.
\end{align*}
Let
\begin{align*}
  \buf_{n+1} & = \buf_n[i \mapsto (x \gets \lowproj{\stdReadRes{v_0, t_c, c_0, \sz}}) \\
  & \qquad \qquad \qquad \specat{((\addr, (\perm, \rbase, \rend, t_c)), \sz, \bot)}]\\
  C_{n+1} & = (\mem_n, \reg_n, \buf_{n+1}, \stdUpdateMu{\mu, (\addr, (\perm, \rbase, \rend, t_c))})\\
  \buf_{n+1}' & = \buf_n'[i \mapsto (x \gets \lowproj{\stdReadRes{v_0', t_c, c_0', \sz}}) \\
  & \qquad \qquad \qquad \specat{((\addr, (\perm, \rbase, \rend, t_c)), \sz, \bot)}]\\
  C_{n+1}' & = (\mem_n', \reg_n', \buf_{n+1}', \stdUpdateMu{\mu', (\addr, (\perm, \rbase, \rend, t_c))}).
\end{align*}
By \ref{hw:execute-load-mem}, $C_n \sstep{}{} C_{n+1}$.
By \Cref{def:pub-eq}, $C_{n+1} \pubeq C_{n+1}'$.

\item \ref{hw:execute-load-fwd} is applied, i.e.,
\begin{align*}
  & j = \max \lbrace j < i: \buf_n(j) = \store{\_}{\stdCap{\addr''}}{\sz''}\specat{\tg} \\
  & \qquad \qquad \qquad \qquad \qquad \land [\addr, \addr + \sz) \cap [\addr'', \addr'' + \sz'') \neq \emptyset \rbrace \\
  & \buf_n(j) = \store{(v_0, t_0, c_0)}{\stdCap{\addr'}}{\sz'} \specat{\tg} \\
  & [\addr, \addr + \sz) = [\addr', \addr' + \sz')
  \qquad t_0 \sqsubseteq t_c.
\end{align*}

By the above analysis based on $\lowproj{\buf_n} = \lowproj{\buf_n'}$, similar statements hold when replacing $\buf_n$ with $\buf_n'$, i.e.,
\begin{align*}
  & j = \max \lbrace j < i: \buf_n'(j) = \store{\_}{\stdCap{\addr''}}{\sz''}\specat{\tg} \\
  & \qquad \qquad \qquad \qquad \qquad \land [\addr, \addr + \sz) \cap [\addr'', \addr'' + \sz'') \neq \emptyset \rbrace \\
  & \buf_n'(j) = \store{(v_0', t_0', c_0')}{\stdCap{\addr'}}{\sz} \specat{\tg} \\
  & t_0' \sqsubseteq t_c.
\end{align*}
Furthermore, since $C_n$ and $C_n'$ are well-formed and $C_n \pubeq C_n'$, there should be
\begin{align*}
  & \stdCheckCap{\stdCap{\addr'}, \sz', \readwrite} \\
  & \stdCapMeta{\addr'} = (\_, \_, \_, t_\text{store}) \\
  & \store{x}{e}{\sz} = \stdReadMemInst{\mem_n, \sem{\pc}_{\apl(\pref j {\buf_n}, \reg_n)}} \\
  & \qquad \qquad \qquad = \stdReadMemInst{\mem_n', \sem{\pc}_{\apl(\pref j {\buf_n'}, \reg_n')}} \\
  & (v_1, t_0, c_1) = \sem{x}_{\apl(\pref j {\buf_n}, \reg_n)} \\
  & (v_0, c_0) = \stdWriteRes{(v_1, t_0, c_1), t_\text{store}, \sz} \\
  & (v_1', t_0', c_1') = \sem{x}_{\apl(\pref j {\buf_n'}, \reg_n')} \\
  & (v_0', c_0') = \stdWriteRes{(v_1', t_0', c_1'), t_\text{store}, \sz} .
\end{align*}
Note that the load and the store capabilities are aliasing and both pass $\stdCheckCap{\cdot}$.
Following the proof idea for \Cref{lemma:addr-same-taint}, we can also derive that $\taint_c = \taint_\text{store}$.

Furthermore, by \Cref{lemma:exp-pub-eq}, $\lowproj{(v_1, t_0, c_1)} = \lowproj{(v_1', t_0', c_1')}$.
Then, $t_0 \sqsubseteq t_c$ implies that $\lowproj{(v_1, t_c, c_1)} = \lowproj{(v_1', t_c, c_1')}$.
By \Cref{lemma:read-write-lowproj}, there should be
\begin{equation*}
  \lowproj{\stdReadRes{v_0, t_c, c_0, \sz}} = \lowproj{\stdReadRes{v_0', t_c, c_0', \sz}}.
\end{equation*}

Let
\begin{align*}
  \buf_{n+1} & = \buf_n[i \mapsto (x \gets \lowproj{\stdReadRes{v_0, t_c, c_0, \sz}}) \\
  & \qquad \qquad \qquad \specat{((\addr, (\perm, \rbase, \rend, t_c)), \sz, \bot)}]\\
  C_{n+1} & = (\mem_n, \reg_n, \buf_{n+1}, \stdUpdateMu{\mu, (\addr, (\perm, \rbase, \rend, t_c))})\\
  \buf_{n+1}' & = \buf_n'[i \mapsto (x \gets \lowproj{\stdReadRes{v_0', t_c, c_0', \sz}}) \\
  & \qquad \qquad \qquad \specat{((\addr, (\perm, \rbase, \rend, t_c)), \sz, \bot)}]\\
  C_{n+1}' & = (\mem_n', \reg_n', \buf_{n+1}', \stdUpdateMu{\mu', (\addr, (\perm, \rbase, \rend, t_c))}).
\end{align*}
By \ref{hw:execute-load-mem}, $C_n \sstep{}{} C_{n+1}$.
By \Cref{def:pub-eq}, $C_{n+1} \pubeq C_{n+1}'$.

\end{itemize}

\pgheading{Subcase $\buf_n(i) = \store{x}{e}{\sz}$}
Since $\lowproj{\buf_n} = \lowproj{\buf_n'}$, then $\buf_n'(i) = \store{x}{e}{\sz} = \buf_n(i)$.

By \ref{hw:execute-store-ok} and \ref{hw:execute-store-hazard}, there should be
\begin{align*}
  & ((\addr, (\perm, \rbase, \rend, t_c)), t_0, \tcap) = \sem{e}_{\aplsan(\pref i {\buf_n}, \reg_n)}\\
  & \stdCheckCap{(\addr, (\perm, \rbase, \rend, t_c)), \sz, \readwrite} \\
  & (v, t, c) = \sem{x}_{\apl (\pref i {\buf_n}, \reg_n)} \\
  & (v_0, c_0) = \stdWriteRes{(v, t, c), t_c, \sz} \\
\end{align*}

By \Cref{lemma:exp-pub-eq}, there should exist $(v', t', c')$ such that $(v', t', c') = \sem{x}_{\apl (\pref i {\buf_n'}, \reg_n')}$ and $\lowproj{(v, t, c)} = \lowproj{(v', t', c')}$.

Denote $\sem{e}_{\aplsan(\pref i {\buf_n'}, \reg_n')} = (\stdCap{\addr''}, \_, c'')$.
We aim to prove that $\stdCap{\addr''} = (\addr, (\perm, \rbase, \rend, t_c))$ and $c'' = \tcap$ by discussing the taint status of the result.
\begin{itemize}
  \item $t_0 = \untainted$: By \Cref{lemma:expr-aplsan-eq}, there should be $\sem{e}_{\aplsan(\pref i {\buf_n}, \reg_n)} = \sem{e}_{\aplsan(\pref i {\buf_n'}, \reg_n')}$, so the statement holds.
  \item $t_0 = \tainted$: By the definition of $\aplsan$, this is only possible if $\dom{\pref i {\buf_n}} = \emptyset$. Then, there should be $\aplsan(\pref i {\buf_n}, \reg_n) = \reg_n$.
  Since $C_n \pubeq C_n'$, then there is also $\dom{\pref i {\buf_n'}} = \emptyset$, thereby $\aplsan(\pref i {\buf_n'}, \reg_n') = \reg_n'$.

  Note that the premises of \ref{isa:isa-store} are satisfied:
  since $C_n$ is well-formed, there should be
  $\store{x}{e}{\sz} = \stdReadMemInst{\mem_n, \sem{\pc}_{\reg_n}}$; furthermore,
  $\aplsan(\pref i {\buf_n}, \reg_n) = \reg_n$ implies that
  $\sem{e}_{\reg_n} = ((\addr, (\perm, \rbase, \rend, t_c)), t_0, \tcap)$,
  so $\stdCheckCap{\sem{e}_{\reg_n}, \sz, \readwrite}$ holds and, by \ref{memory:write-mem},
  $\stdWriteMem{\mem_n, \sem{e}_{\reg_n}, \sz, \sem{x}_{\reg_n}}$ is defined.
  Hence, by \ref{isa:isa-store}, there should exist $v_o$ and $(\mem_{n+1}, \reg_{n+1})$ such that
  \begin{equation*}
    (\mem_n, \reg_n) \step{\stdStore{v_o, (\addr, (\perm, \rbase, \rend, t_c))}} (\mem_{n+1}, \reg_{n+1}).
  \end{equation*}
  Since $S_0 \equiv_{\mathit{ISA}} S_0'$, then there should also exist $v_o'$ and $(\mem_{n+1}', \reg_{n+1}')$ such that
  \begin{align*}
    & (\mem_n', \reg_n') \step{\stdStore{v_o', \stdCap{\addr''}}} (\mem_{n+1}', \reg_{n+1}') \\
    & \stdStore{v_o, (\addr, (\perm, \rbase, \rend, t_c))} = \stdStore{v_o', \stdCap{\addr''}}.
  \end{align*}
  Thus, $\stdCap{\addr''} = (\addr, (\perm, \rbase, \rend, t_c))$.
  $c'' = \tcap$ can also be implied from $(\mem_n', \reg_n') \step{\stdStore{v_o', \stdCap{\addr''}}} (\mem_{n+1}', \reg_{n+1}')$ by \ref{isa:isa-store} and \ref{memory:write-mem}.
\end{itemize}

Let $(v_0', c_0') = \stdWriteRes{(v', t', c'), t_c, \sz}$.
By the definition of $\stdWriteRes{}$, there should also be $\lowproj{(v_0, t, c_0)} = \lowproj{(v_0', t', c_0')}$.

Next, we analyze the load/store aliasing checking in \ref{hw:execute-store-ok} and \ref{hw:execute-store-hazard}.
The key insight is to leverage $\lowproj{\buf_n} = \lowproj{\buf_n'}$.
Specifically, for all $j$, if 
\begin{equation*}
  \buf_n(j) = x \gets v \specat{(\stdCap{\addr'}, \sz', k)},
\end{equation*}
then $\lowproj{\buf_n} = \lowproj{\buf_n'}$ implies that 
\begin{align*}
  & \buf_n'(j) = x \gets v' \specat{(\stdCap{\addr'}, \sz', k)}\\
  & \lowproj{v} = \lowproj{v'}.
\end{align*}
Consider the following two cases:
\begin{itemize}
\item \ref{hw:execute-store-ok} is applied, i.e.,
\begin{multline*}
  \forall j > i, \buf_n(j) = x \gets v \specat{(\stdCap{\addr'}, \sz', k)} \land k \neq i \\
  \Rightarrow
  (k > i \lor [\addr, \addr + \sz) \cap [\addr', \addr + \sz') = \emptyset).
\end{multline*}
According to the above analysis, the statement still hold if we replace $\buf_n$ with $\buf_n'$.
Let
\begin{align*}
  \buf_{n+1} & = \buf_n[i \mapsto \store{(v_0, t, c_0)}{(\addr, (\perm, \rbase, \rend, t_c))}{\sz}]\\
  C_{n+1} & = \cf{\mem_n, \reg_n, \buf_{n+1}, \stdUpdateMu{\mu, (\addr, (\perm, \rbase, \rend, t_c))}}\\
  \buf_{n+1}' & = \buf_n'[i \mapsto \store{(v_0', t', c_0')}{(\addr, (\perm, \rbase, \rend, t_c))}{\sz}]\\
  C_{n+1}' & = \cf{\mem_n', \reg_n', \buf_{n+1}', \stdUpdateMu{\mu', (\addr, (\perm, \rbase, \rend, t_c))}}.
\end{align*}
By \ref{hw:execute-store-ok}, $C_n' \sstep{}{} C_{n+1}'$.
By \Cref{def:pub-eq}, $C_{n+1} \pubeq C_{n+1}'$.

\item \ref{hw:execute-store-hazard} is applied, i.e.,
\begin{multline*}
  j = \min \lbrace j > i: \buf_n(j) = x\gets v \specat{(\stdCap{\addr'}, \sz', k)} \\
  \land (
    k < i \land [\addr, \addr+\sz) \cap [\addr', \addr'+\sz') \neq \emptyset
  ) \rbrace
\end{multline*}
According to the above analysis, the right-hand side does not change if we replace $\buf_n$ with $\buf_n'$.
Let
\begin{align*}
  \buf_{n+1} & = \pref i {\buf_n}[i \mapsto \store{(v_0, t, c_0)}{(\addr, (\perm, \rbase, \rend, t_c))}{\sz}]\\
  C_{n+1} & = \cf{\mem_n, \reg_n, \buf_{n+1}, \stdUpdateMu{\mu, (\addr, (\perm, \rbase, \rend, t_c))}}\\
  \buf_{n+1}' & = \pref i {\buf_n'}[i \mapsto \store{(v_0', t', c_0')}{(\addr, (\perm, \rbase, \rend, t_c))}{\sz}]\\
  C_{n+1}' & = \cf{\mem_n', \reg_n', \buf_{n+1}', \stdUpdateMu{\mu', (\addr, (\perm, \rbase, \rend, t_c))}}.
\end{align*}
By \ref{hw:execute-store-hazard}, $C_n' \sstep{}{} C_{n+1}'$.
By \Cref{def:pub-eq}, $C_{n+1} \pubeq C_{n+1}'$.
\end{itemize}

\pgheading{Case $d = \dcommit$}
Let $i = \min \dom{\buf_n} = \min \dom{\buf_n'}$ (the equivalence is implied from $C_n \pubeq C_n'$).
Consider the following two cases:
\begin{itemize}
\item $\buf_n(i) = x \gets v \specat{\tg}$:
Since $C_n \pubeq C_n'$, then $\lowproj{\buf_n} = \lowproj{\buf_n'}$.
This implies that there must exist $v'$, $\tg'$ such that $\buf_n'(i) = x \gets v' \specat{\tg'}$ and $\lowproj{v} = \lowproj{v'}$.
Thus, there should be
\begin{align*}
  \mem_{n+1} & = \mem_n & \reg_{n+1} & = \reg_n[x \mapsto v] &
  C_{n+1} & = \cf{\mem_{n+1}, \reg_{n+1}, \buf_n \backslash i, \mu} \\
  \mem_{n+1}' & = \mem_n' & \reg_{n+1}' & = \reg_n'[x \mapsto v'] &
  C_{n+1}' & = \cf{\mem_{n+1}', \reg_{n+1}', \buf_n' \backslash i, \mu'},
\end{align*}
and $C_n \sstep{}{} C_{n+1}$, $C_n' \sstep{}{} C_{n+1}'$.
\item $\buf_n(i) = \store{(v, t, c)}{\stdCap{\addr}}{\sz}\specat{\varepsilon}$:
Since $C_n \pubeq C_n'$, then $\lowproj{\buf_n} = \lowproj{\buf_n'}$.
By the definition of untainted projection for reorder buffer and store instruction,
there must be $\buf_n'(i) = \store{(v', t', c')}{\stdCap{\addr}}{\sz}\specat{\varepsilon}$ for some $(v', t', c')$ where $\lowproj{(v, t, c)} = \lowproj{(v', t', c')}$.
Let $(\memData, \memTag) = \mem_n$, and $(\memData', \memTag') = \mem_n'$.
Thus, there should be
\begin{align*}
  \mem_{n+1} & = (\memData[[\addr, \addr + \sz) \mapsto v], \memTag[\addr / 2 \mapsto c]) \\
  \reg_{n+1} & = \reg_n \qquad
  C_{n+1} = (\mem_{n+1}, \reg_{n+1}, \buf_n \backslash i, \stdUpdateMu{\mu, \stdCap{\addr}}) \\
  \mem_{n+1}' & = (\memData'[[\addr, \addr + \sz) \mapsto v'], \memTag[\addr / 2 \mapsto c']) \\
  \reg_{n+1}' & = \reg_n' \qquad
  C_{n+1}' = \cf{\mem_{n+1}', \reg_{n+1}', \buf_n' \backslash i, \stdUpdateMu{\mu', \stdCap{\addr}}},
\end{align*}
and $C_n \sstep{}{} C_{n+1}$, $C_n' \sstep{}{} C_{n+1}'$.
\end{itemize}

By \Cref{lemma:functional-correctness}, for both cases, there should be
\begin{align*}
  S_{f(\qty{C_0, C_1, \dots, C_{n}})} & = \cf{\mem_{n+1}, \reg_{n+1}} \\
  S_{f(\qty{C_0', C_1', \dots, C_{n}'})}' & = \cf{\mem_{n+1}', \reg_{n+1}'} \\
\end{align*}
where $f$ is defined in \Cref{lemma:functional-correctness}.
By the definition of $f$, since for all $k\leq n$, $C_k \pubeq C_k'$, 
then $f(\qty{C_0, C_1, \dots, C_{n}}) = f(\qty{C_0', C_1', \dots, C_{n}'})$.
Furthermore, by \Cref{lemma:isa-pubeq}, $S_{f(\qty{C_0, C_1, \dots, C_{n}})} \pubeq S_{f(\qty{C_0, C_1, \dots, C_{n}})}'$.
Thus, $\cf{\mem_{n+1}, \reg_{n+1}} \pubeq \cf{\mem_{n+1}', \reg_{n+1}'}$.

Furthermore, note that $\lowproj{\buf_n} = \lowproj{\buf_n'}$ and $\mu = \mu'$. Therefore, for both cases, there should be $C_{n+1} \pubeq C_{n+1}'$.
\end{proof}

\fi

\end{document}
\endinput